\newcommand{\anca}[2][]{\todo[size=\scriptsize,color=blue!30,#1]{A. #2}}
\newcommand{\gabriele}[2][]{\todo[size=\scriptsize,color=orange!30,#1]{G. #2}}
\newcommand{\olivier}[2][]{\todo[size=\scriptsize,color=green!30,#1]{O. #2}}
\newcommand{\felix}[2][]{\todo[size=\scriptsize,color=red!30,#1]{F. #2}}
\newcommand{\reviewOne}[2][]{\todo[size=\scriptsize,color=brown!30,#1]{R1. #2}}
\newcommand{\reviewTwo}[2][]{\todo[size=\scriptsize,color=cyan!30,#1]{R2. #2}}
\newcommand{\cmax}{{\boldsymbol{C}}} 
\newcommand{\hmax}{{\boldsymbol{H}}} 
\newcommand{\hmaxsq}{{\hmax^2}} 
\newcommand{\emax}{{\boldsymbol{E}}} 
\newcommand{\bound}{\boldsymbol{B}} 
\renewcommand{\cT}{T}
\renewcommand{\Tt}{\typeout{Use $\cT$ instead of $\Tt$}}
\renewcommand{\cA}{A}
\renewcommand{\cB}{B}
\renewcommand{\cC}{C}
\renewcommand{\cD}{D}
\renewcommand{\Aa}{\typeout{Use $\cA$ instead of $\Aa$}}
\renewcommand{\Bb}{\typeout{Use $\cB$ instead of $\Bb$}}
\renewcommand{\Cc}{\typeout{Use $\cC$ instead of $\Cc$}}
\renewcommand{\Dd}{\typeout{Use $\cD$ instead of $\Dd$}}
\newcommand{\PR}[1]{{\normalfont\bfseries P#1}}
\let\oldvdash\vdash
\let\olddashv\dashv
\renewcommand{\vdash}{\mathop{\oldvdash}}
\renewcommand{\dashv}{\mathop{\olddashv}}
\newcommand{\Left}{\ensuremath{\mathsf{left}}}
\newcommand{\Right}{\ensuremath{\mathsf{right}}}
\newcommand{\tr}[1]{\ensuremath{\mathsf{tr}}(#1)}
\newcommand{\an}[1]{\ensuremath{\mathsf{an}}(#1)}
\newcommand{\out}[1]{\ensuremath{\mathsf{out}}(#1)}
\newcommand{\outb}[1]{\ensuremath{\mathsf{out}}\big(#1\big)}
\newcommand{\block}[1]{\ensuremath{\mathsf{block}}(#1)}
\renewcommand{\dom}{\ensuremath{\mathsf{dom}}}
\newcommand{\pump}{\ensuremath{\mathsf{pump}}}
\newcommand{\lbinom}[2]{\left(\begin{array}{l}#1 \\
      #2\end{array}\right)}
\newcommand{\LL}{{\ensuremath{\mathsf{LL}}}}
\renewcommand{\RR}{{\ensuremath{\mathsf{RR}}}}
\newcommand{\LR}{{\ensuremath{\mathsf{LR}}}}
\newcommand{\RL}{{\ensuremath{\mathsf{RL}}}}
\let\oldlhd\lhd 
\let\oldrhd\rhd
\renewcommand{\lhd}{{\oldlhd}} 
\renewcommand{\rhd}{{\oldrhd}}
\newcommand{\crossrel}{\ensuremath{\mathrel{\text{\sf S}}}}
\renewcommand{\simeq}{\crossrel^*}
\newcommand{\lesstime}{\mathrel{\lhd}}
\newcommand{\leqtime}{\mathrel{\unlhd}}
\newcommand{\geqtime}{\mathrel{\unrhd}}
\newcommand{\lesspair}{\sqsubset}
\providecommand{\underbracket}[2][]{\underbrace{#2}}
\providecommand{\overbracket}[2][]{\overbrace{#2}}
\def\shortrightarrowfill@{\arrowfill@\relbar\relbar\shortrightarrow}
\def\shortleftarrowfill@{\arrowfill@\shortleftarrow\relbar\relbar}
\def\shortleftrightarrowfill@{\arrowfill@\leftrightarrow}
\newcommand{\ort}{\mathpalette{\overarrow@\shortrightarrowfill@}}
\newcommand{\olft}{\mathpalette{\overarrow@\shortleftarrowfill@}}
\newcommand{\olftrt}{\mathpalette{\overarrow@\shortleftrightarrowfill@}}
\newcommand{\fixed@sra}{$\vrule height 2\fontdimen22\textfont2 width 0pt\shortrightarrow$}
\newcommand{\upperleft}{{\mspace{-2mu}\text{\rotatebox[origin=c]{\numexpr135}{\fixed@sra}}\mspace{-2mu}}}
\newcommand{\lowerright}{{\mspace{-2mu}\text{\rotatebox[origin=c]{\numexpr315}{\fixed@sra}}\mspace{-2mu}}}
\newcommand{\leftshort}{{\shortleftarrow}}
\newcommand{\rightshort}{{\shortrightarrow}}
\let\pmb\boldsymbol
\newcommand\noparbreak{\par\nobreak\@afterheading} 
\keywords{Regular word transductions, Two-way transducers, Sweeping transducers, One-way definability}
\begin{document}
\title{One-way definability of two-way word transducers}
\titlecomment{{\lsuper*}The results presented here were published
  in~\cite{bgmp15} and \cite{bgmp17}. The work was partially supported by the projects ExStream  (ANR-13-JS02-0010)
          and DeLTA (ANR-16-CE40-0007).}
\author[F.~Baschenis]{F\'elix Baschenis\rsuper{1}}	
\address{\lsuper{1}Department of Mathematics and Informatics, University Bremen}	
\author[O.~Gauwin]{Olivier Gauwin\rsuper{2}}	
\address{\lsuper{2}LaBRI, University of Bordeaux, CNRS}	
\author[A.~Muscholl]{Anca Muscholl\rsuper{2}}	
\address{\vskip-7pt} 
\author[G.~Puppis]{Gabriele Puppis\rsuper{3}}	
\address{\lsuper{3}CNRS, LaBRI}	
\thanks{}	

\begin{abstract}
Functional transductions realized by two-way transducers 
(or, equally, by streaming transducers or MSO transductions) 
are the natural and standard notion of ``regular'' mappings from words to words.
It was shown in 2013 that it is decidable if such a 
transduction can be implemented by some one-way transducer, 
but the given algorithm has non-elementary complexity. 
We provide an algorithm of different flavor solving the 
above question, that has doubly exponential space complexity. In the
special case of sweeping transducers the complexity is one exponential
less.  We also show how to
construct an equivalent one-way transducer, whenever
it exists, in doubly or triply exponential time, again depending on
whether the input transducer is sweeping or two-way. In the sweeping
case our construction is shown to be optimal. 


\end{abstract}

\maketitle

\reviewOne[inline]{One can deplore that said figures are not always present to help understand some notation heavy proofs, but as it stands, the paper is nicely illustrated.}%

\reviewOne[inline]{The fact that some long, intricate proofs do not benefit the same degree of structure and illustrations is the only blemish.}%

\reviewOne[inline]{It is announced that those results are already published in two other papers [3,4]. If some results are new to this version, it should be made explicit. If not, disregard this comment.
\olivier[inline]{done, added paragraph in intro}%
}%

\section{Introduction}\label{sec:introduction}

Since the early times of computer science, transducers have been
identified as a fundamental computational model. Numerous
fields of computer science are ultimately concerned with
transformations, ranging from databases to image
processing, and an important challenge is to perform transformations with
low costs, whenever possible. 

The most basic form of transformations is obtained using devices that
process an input with finite memory
and produce outputs during the processing. Such
devices are called finite-state transducers. Word-to-word finite-state
transducers were considered in very early work in formal language
theory~\cite{sch61,ahu69,eilenberg1974automata}, and it was soon clear that they are much more
challenging than finite-state word acceptors (classical finite-state
automata). One essential difference between transducers and automata
over words is that the capability to process the
input in both directions strictly increases the expressive power in the case
of transducers, whereas this does not for
automata~\cite{RS59,she59}. In other words, two-way word transducers
are strictly more expressive than one-way word transducers.

We consider in this paper  functional transducers, that compute
functions from words to words. Two-way word transducers capture very
nicely the notion of regularity in this setting. Regular word
functions, in other words,  functions computed by functional two-way transducers%
\footnote{We know from \cite{EH98,de2013uniformisation} that deterministic 
          and non-deterministic functional two-way transducers have the 
          same expressive power, though the non-deterministic variants are 
          usually more succinct.}, 
inherit many of the characterizations and algorithmic properties of the robust
class of regular languages. Engelfriet and Hoogeboom~\cite{EH98} 
showed that monadic second-order definable graph transductions,
restricted to words, are equivalent to two-way transducers --- this
justifies the 
notation ``regular'' word functions, in the spirit of classical
results in automata theory and logic by B\"uchi, Elgot, Rabin and
others. Recently, Alur and Cern\'{y}~\cite{AlurC10} proposed an enhanced
version of one-way transducers called streaming transducers, and
showed that they are equivalent to the two previous models. A streaming
transducer processes the input word from left to right, and stores
(partial) output words in some given write-only registers, that are
updated through concatenation and constant updates.

\reviewOne[inline]{Few reasons are given for a specific focus towards two-way transducers as opposed to equivalent models.
  \olivier[inline]{I answered to the editor that the main reason is that we are more familiar with it, so it seems natural. But we shouldn't mention it here.}
}%
\anca[inline]{I agree not to say it here, but I would not say this
  is the real reason - and I changed the answer. The real reason (for
  me) is that pumping SST is much less intuitive.}%

Two-way transducers raise challenging questions about resource
requirements. One crucial resource is the number of times the
transducer needs to re-process the input word. In particular, the case
where the input can be processed in a single pass, from left to right,
is very attractive as it corresponds to the setting of \emph{streaming},
where the (possibly very large) inputs do not need to be stored in order
to be processed. The~\emph{one-way definability} of a functional
two-way transducer, that is, the question 
whether the transducer is equivalent to some one-way transducer, was
considered quite recently: \cite{fgrs13} shows that one-way
definability of string transducers is a decidable property. However, the decision procedure
of~\cite{fgrs13} has non-elementary complexity, which raises the
question about the intrinsic complexity of this problem.

In this paper we provide an algorithm of elementary complexity solving
the one-way definability problem. 
Our decision algorithm has single or doubly exponential space complexity,
depending on whether the input transducer is sweeping (namely, it performs reversals
only at the extremities of the input word) or genuinely two-way.
We also describe an algorithm that constructs an equivalent one-way
transducer, whenever it exists, in doubly or triply exponential time,
again depending on whether the input transducer is sweeping or
two-way.   For the
construction of an equivalent one-way transducer we obtain a doubly
exponential lower bound, which is tight for sweeping transducers. Note that for the decision problem,
the best lower bound known is only polynomial space~\cite{fgrs13}.
\reviewOne[inline]{The subclass of sweeping transducers is central to
  the paper: not only is the algorithm of better complexity in this
  particular case, it is also used as an easy case to prepare for the
  general proof. However, beyond this property, it is unclear whether
  this class has other interesting properties, or has garnered
  interest in the part. This information should, ideally, be made
  available in the introduction.  \olivier[inline]{added a paragraph
    below.}%
}%
\reviewOne[inline]{Were deterministic two-way word transducers
  considered at all (Introduction)? If not, do you plan to consider
  them (Conclusion)?  \olivier[inline]{added the footnote in intro,
    and a paragraph in conclusion}%
  \felix[inline]{I think the question is "do you have a better
    complexity if the given transducer is deterministic?" like
    usually. We maybe also need to add something like : One can notice
    that if we want to decide definability by \emph{deterministic}
    one-way transducers, an efficient procedure is known since
    Choffrut78 \olivier[inline]{you can complete the conclusion if you
      want. To me it is sufficient.}%
  }%
}%

\olivier{added this paragraph}%
\gabriele{Changed a bit (without the reviewer comment the paragraph was a bit isolated)}%
\anca{changed a bit more}%
Our initial interest in sweeping transducers was the fact that they
provide a simpler setting for characterizing one-way
definability. Later it turned out that sweeping transducers enjoy interesting and
useful connections with streaming transducers: they have the same
expressiveness as streaming transducers where concatenation of
registers is disallowed. The connection goes even further, since the number of
sweeps corresponds exactly to the number of
registers~\cite{bgmp16}. The results of this paper were refined
in~\cite{bgmp16}, and used to determine the minimal number of
registers required by functional streaming transducers without
register concatenation.


\subsection*{Related work.} 
Besides the papers mentioned above, there are
several recent results around the expressivity and the resources of
two-way transducers, or 
equivalently, streaming transducers. 
First-order definable transductions were shown to be equivalent to
transductions defined by aperiodic
streaming transducers~\cite{FiliotKT14} and to aperiodic two-way
transducers~\cite{CartonDartois15}. An effective characterization of
aperiodicity for one-way transducers was obtained in~\cite{FGL16}.

Register minimization for right-appending deterministic streaming
transducers 
was shown to be
decidable in~\cite{DRT16}. An algebraic characterization of (not necessarily
functional) two-way transducers over unary alphabets was 
provided in~\cite{CG14mfcs}. It was shown that in this case 
sweeping transducers have the same expressivity. 
The expressivity of non-deterministic input-unary
or output-unary two-way
transducers was investigated in~\cite{Gui15}. 

In \cite{Smith14} a pumping lemma for two-way transducers is proposed,
and used to investigate properties of the output languages of  two-way
transducers. In this paper we also rely on pumping arguments over runs
of two-way transducers, but we require loops of a particular form,
that allows to identify periodicities in the output.
\olivier{TODO ref Ismael to be improved before resubmission}%

The present paper unifies results on one-way definability obtained
in~\cite{bgmp15} and~\cite{bgmp17}. 
Compared to the conference versions, some combinatorial proofs have
\gabriele{I have removed the reference to Saarela}%
been simplified,
and the complexity of the procedure presented in~\cite{bgmp15} has been 
improved by one exponential. 
\anca{please check. I deleted the sentence on the
      under-approximation, because I don't find it understandable at this point.}%
\olivier{added this paragraph}%


\subsection*{Overview.} 
Section~\ref{sec:preliminaries} introduces basic
notations for two-way and sweeping transducers, and Section~\ref{sec:overview}
states the main result and provides a roadmap of the proofs. For
better readability our paper
is divided in two parts: in the first part we consider the easier case of
sweeping transducers, and in the second part the general case. The two
proofs are similar, but the general case is more involved since we
need to deal with special loops (see
Section~\ref{sec:loops-twoway}).  Since the high-level ideas of the
proofs are the same and the sweeping case illustrates them in a
simpler way, the proof in that setting is a preparation for the
general case.  Both proofs have
the same structure: first we introduce some combinatorial arguments
(see Section~\ref{sec:combinatorics-sweeping} for the sweeping case
and Section~\ref{sec:combinatorics-twoway} for the general case), then we provide the
characterization of one-way definability (see
Section~\ref{sec:characterization-sweeping} for the sweeping case and
Section~\ref{sec:characterization-twoway} for the general
case). Finally, Section~\ref{sec:complexity} establishes the complexity
of our algorithms.



\section{Preliminaries}\label{sec:preliminaries}

We start with some basic notations and definitions for two-way
automata and transducers. We assume that every input word $w=a_1\cdots a_n$ 
has two special delimiting symbols $a_1 = \vdash$
and $a_n = \dashv$ that do not occur elsewhere: $a_i \notin \{\vdash,\dashv\}$ 
for all $i=2,\dots,n-1$. 

A \emph{two-way automaton} is a tuple 
$\cA=(Q,\Sigma,\vdash,\dashv,\Delta,I,F)$, where 
\begin{itemize}
  \item $Q$ is a finite set of states, 
  \item $\Sigma$ is a finite alphabet (including $\vdash, \dashv$), 
  \item $\Delta \subseteq Q \times \Sigma \times Q \times \set{\Left,\Right}$ 
        is a transition relation, 
  \item $I,F\subseteq Q$ are sets of initial and final states, respectively.
\end{itemize}
By convention, left transitions on $\vdash$ are not
allowed; on the other hand, right transitions on $\dashv$ are allowed, but, as we will see, 
they will necessarily appear as last transitions of successful runs.
A \emph{configuration} of $\cA$ has the form $u\,q\,v$, 
with $uv \in \vdash\,  \S^* \, \dashv$ 
and $q \in Q$. A configuration $u\,q\,v$ represents the 
situation where the current state of $\cA$ is $q$ and its head reads the first 
symbol of $v$ (on input $uv$). If $(q,a,q',\Right) \in \Delta$,
then there is a transition from any configuration of the form
$u\,q\,av$ to the configuration $ua\,q'\,v$; we denote such a transition
by $u\,q\,av \trans{a,\Right} ua\,q'\,v$. 
Similarly, if $(q,a,q',\Left) \in \Delta$, 
then there is a transition from any configuration of the form
$ub\,q\,av$ to the configuration $u\,q'\,bav$, 
denoted as $ub\,q\,av \trans{a,\Left} u\,q'\,bav$.
A \emph{run} on $w$ 
is a sequence of transitions. 
It is \emph{successful} if it starts in an initial configuration 
$q\, w$, with $q\in I$, and ends in a final configuration $w\,q'$,
with $q' \in F$ --- note that this latter configuration does not allow 
additional transitions. The \emph{language} of $\cA$ is the set of words
that admit a successful run of $\cA$.

The definition of \emph{two-way transducers} is similar to that of two-way automata,
with the only difference that now there is an additional output alphabet $\Gamma$ and the transition 
relation is a finite subset of $Q \times \Sigma \times \Gamma^* \times  Q \times \{\Left,\Right\}$,
which associates an output over $\Gamma$ with each transition of the underlying two-way automaton.
For a two-way transducer $\cT=(Q,\Sigma,\vdash,\dashv,\Gamma,\Delta,I,F)$,
we have a transition of the form $ub\,q\,av \trans{a,d \, \mid w} u'\,q'\,v'$, outputting $w$, 
whenever $(q,a,w,q',d)\in\Delta$ and either $u'=uba ~\wedge~ v'=v$ or 
$u'=u ~\wedge~ v'=bav$, depending on whether $d=\Right$ or $d=\Left$.
The \emph{output} associated with a  run 
$\rho = u_1\,q_1\,v_1 \trans{a_1,d_1 \mid w_1}<\qquad> \dots
\trans{a_n,d_n \mid w_n}<\qquad> u_{n+1}\,q_{n+1}\,v_{n+1}$
of $\cT$ is the word $\out{\rho} = w_1\cdots w_n$. A transducer $\cT$
defines a relation $\sL(\cT)$ 
\anca{I was wondering if we ever use this notation, but 
      I found it on page 10... Maybe it is the only place...}%
consisting of all pairs $(u,w)$ such that
$w=\out{\rho}$, for some successful run $\rho$ on $u$. 
The \emph{domain} of $\cT$, denoted $\dom(\cT)$, 
is the set of input words that have a successful run. 
For transducers $\cT,\cT'$, we write $\cT' \subseteq \cT$ 
to mean that $\dom(\cT') \subseteq \dom(\cT)$ and the transductions 
computed by $\cT,\cT'$ coincide on $\dom(\cT)$.


A transducer is called \emph{one-way} if it does not contain transition
rules of the form $(q,a,w,q',\Left)$. It is called \emph{sweeping} if
it can perform reversals only at the borders of the input word.
A transducer that is equivalent to some one-way
(resp.~sweeping) transducer is called \emph{one-way definable}
(resp.~\emph{sweeping definable}).

The \emph{size} of a transducer takes into
account both the state space and the transition relation, and thus
includes the length of the output of each transition.

\medskip
\subsection*{Crossing sequences.}
The first notion that we use throughout the paper is that of crossing sequence. 
We follow the convenient presentation from \cite{HU79}, which appeals to a
graphical representation of runs of a two-way transducer, where each configuration
is seen as a point (location) in a two-dimensional space. 
Let $u=a_1\cdots a_n$ be an input word (recall that $a_1=\vdash$ and $a_n=\dashv$) 
and let $\rho$ be a run of a two-way automaton (or transducer) on $u$.
The \emph{positions} of $\rho$ are the numbers from $0$ to $n$, corresponding
to ``cuts'' between two consecutive letters of the input. For example, 
position $0$ is just before the first letter $a_1$,
position $n$ is just after the last letter $a_n$,
and any other position $x$, with $1\le x<n$, is between 
the letters $a_x$ and $a_{x+1}$.
We will denote by $u[x_1,x_2]$ the factor of $u$ between 
the positions $x_1$ and $x_2$ (both included).
\olivier{added}%

\gabriele{Please check this reformulation in terms of configurations}%
Each configuration $u\,q\,v$ of a two-way run $\r$
has a specific position associated with it. For technical reasons 
we need to distinguish leftward and rightward transitions. If the configuration $u\,q\,v$ 
is the target of a rightward transition 
then the position associated with $u\,q\,v$ is  $x=|u|$.  
The same definition also applies when $u\,q\,v$ is the initial configuration,
for which we have $u=\emptystr$ and $x=|u|=0$.
Otherwise, if $u\,q\,v$ is the target of a leftward transition 
then the position associated with $u\,q\,v$ is $x=|u|+1$. Note that in
both cases, the letter read by the transition leading to $u\,q\,v$ is
$a_x$. 
\anca{rewritten a bit}%
A \emph{location} of $\rho$ is any pair $(x,y)$, 
where $x$ is the position of some configuration of $\rho$ and
$y$ is any non-negative integer for which there are at least 
$y+1$ configurations in $\rho$ with the same position $x$.
The second component $y$ of a location is called \emph{level}.
For example, in Figure \ref{fig:run} we represent a possible
run of a two-way automaton together with its locations $(0,0)$,
$(1,0)$, $(2,0)$, $(2,1)$, etc. 
Each location is naturally associated with a configuration, and thus a state. 
Formally, we say that $q$ is the \emph{state at location $\ell=(x,y)$} in $\rho$, 
and we denote this by writing $\rho(\ell)=q$, 
if the $(y+1)$-th configuration of $\rho$ with position $x$ has state $q$.
Finally, we define the \emph{crossing sequence} of $\rho$ at position $x$ 
as the tuple $\rho|x=(q_0,\dots,q_h)$, where the $q_y$'s are all the states 
at locations of the form $(x,y)$, for $y=0,\dots,h$.

\begin{figure}[!t]
\centering
\scalebox{0.85}{
\begin{tikzpicture}[->,>=stealth',shorten >=1pt,auto,node distance=3.2cm,semithick,xscale=2,yscale=1]

  \node[state,minimum size=2mm,fill=gray!50] (A)       at (0,2)             {$q_0  $};
  \node[state,minimum size=2mm,fill=gray!50] (B)   [right of=A]                 {$  q_1  $};
  \node[state,minimum size=2mm,fill=gray!50] (C)   [right of=B]                 {$  q_2   $};
  \node[state,minimum size=2mm,fill=gray!50] (D)   [above=1cm of C]                 {$  q_3  $};
   \node[state,minimum size=2mm,fill=gray!50] (E)   [left of=D]                 {$ q_4   $};
 
  \node[state,minimum size=2mm,fill=gray!50] (F)   [above=1cm of E]                 {$  q_5    $};
  \node[state,minimum size=2mm,fill=gray!50] (G)   [right of=F]                 {$ q_6    $};
   \node[state,minimum size=2mm,fill=gray!50] (H)   [right of=G]                 {$ q_7  $};
   \node[state,minimum size=2mm,fill=gray!50] (I) [right of=H] { $q_8$} ;
 \coordinate[below=1cm of A] (d1);   
   
  \path (A) edge              node {\small $a_1 , \Right $ } (B) ;
   \path (B) edge              node {\small $a_2 , \Right $ } (C) ;
    \path (C) edge [out=0,in=0]       node [swap] {\small $a_3 , \Left $ } (D) ;
    \path (D) edge              node [above] {\small $a_2 , \Left $ } (E) ;
     \path (E) edge [out=180,in=180]       node  {\small $a_1 , \Right $ } (F) ;
       \path (F) edge              node {\small $a_2 , \Right $ } (G) ;
         \path (G) edge              node {\small $a_3 , \Right $ } (H) ;
          \path (H) edge              node {\small $a_4 , \Right $ } (I) ;
    
    \node[state,rectangle,minimum size=3mm] (X) at (0.8,-0.6) {$a_1$};
    \node[state,rectangle,minimum size=3mm] (Y) [right of=X] {$a_2$};
    \node[state,rectangle,minimum size=3mm] (Z) [right of=Y] {$a_3$};
    \node[state,rectangle,minimum size=3mm] (W) [right of=Z] {$a_4$};
    
    \node at (-1.5,-0.7) {Input word:};
    \node at (-1.5,0.5) {Positions:};
    \node at (-1.5,2) {Run:};
    
    \node (M) at (0,0.5) {$0$};
    \node (N) [right of=M] {$1$};
    \node (O) [right of=N] {$2$};
    \node (P) [right of=O] {$3$};
    \node (Q) [right of=P] {$4$};

  \node (AA) [above=0cm of A] {$(0,0)$} ;
  \node (AB) [right of=AA] {$(1,0)$};
  \node (AC) [right of=AB] {$(2,0)$};
  \node (AD) [above=0cm of D] {$(2,1)$};
  \node (AE) [left of=AD] {$(1,1)$};
  \node (AF) [above=0cm of F] {$(1,2)$};
  \node (AG) [right of=AF] {$(2,2)$};
  \node (AH) [right of=AG] {$(3,0)$};
  \node (AI) [right of=AH] {$(4,0)$};
\end{tikzpicture}
}
\caption{Graphical presentation of a run by means of crossing sequences.}\label{fig:run}
\end{figure}
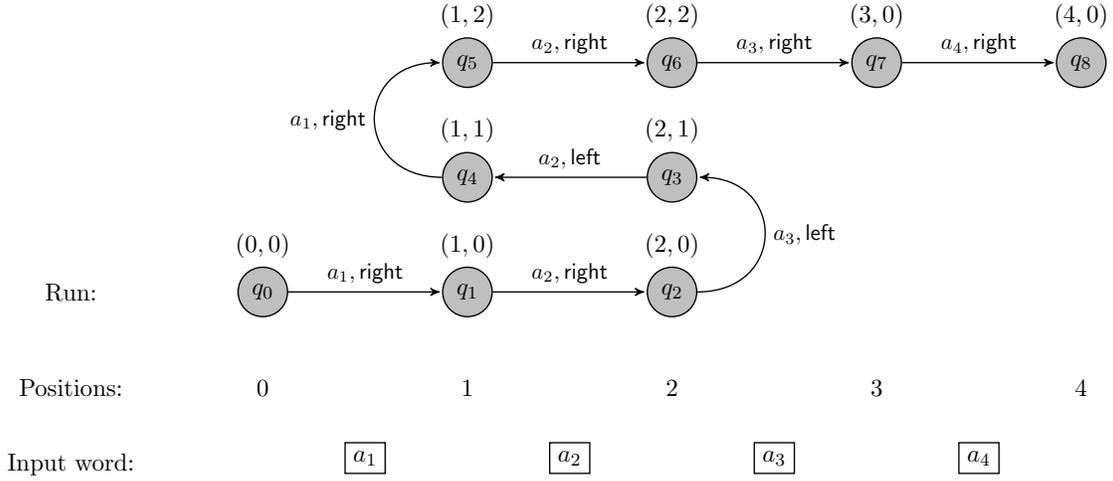


As shown in Figure~\ref{fig:run}, a two-way run can be represented 
as an path between locations annotated by the associated states.
We observe in particular that if a location $(x,y)$ is the target
of a rightward transition, then this transition has read the symbol $a_x$;
similarly, if $(x,y)$ is the target of a leftward transition, then 
the transition has read the symbol $a_{x+1}$.  
We also observe that, in any successful run $\rho$, every crossing 
sequence has odd length and every rightward (resp.~leftward) transition 
reaches a location with even (resp.~odd) level.
In particular, we can identify four types of transitions between locations, 
depending on the parities of the levels (the reader may refer again to 
Figure~\ref{fig:run}):
\begin{center}
\begin{tikzpicture}[baseline=0,scale=0.9]
  \draw (-1,1.75) node (node1) {$\phantom{\!+\!1}~(x,2y)$};
  \draw (3,1.75) node (node2) {$(x\!+\!1,2y')~\phantom{\!+\!1}$};
  \draw (-1,0) node (node3) {$(x,2y\!+\!1)$};
  \draw (3,0) node (node4) {$(x\!+\!1,2y'\!+\!1)$};
  \draw (9,1.75) node (node5) {$\phantom{\!+\!1}~(x,2y)$};
  \draw (9,2.5) node (node6) {$(x,2y\!+\!1)$};
  \draw (9,0) node (node7) {$(x,2y\!+\!1)$};
  \draw (9,0.75) node (node8) {$(x,2y\!+\!2)$};
  \draw (node1) edge [->] node [above=-0.05, scale=0.9] {\small $a_{x+1},\Right$} (node2);
  \draw (node4) edge [->] node [above=-0.05, scale=0.9] {\small $a_{x+1},\Left$} (node3);
  \draw (node5.east) edge [->, out=0, in=0, looseness=2] 
        node [right=0, scale=0.9] {\small $a_{x+1},\Left$} (node6.east);
  \draw (node7.west) edge [->, out=180, in=180, looseness=2] 
        node [left=0, scale=0.9] {\small $a_x,\Right$} (node8.west);
\end{tikzpicture}
\vspace{1mm}
\end{center}
Hereafter, we will identify runs with the corresponding annotated paths between locations. 

It is also convenient to define a total order $\leqtime$ on the locations of a run $\rho$ 
by letting $\ell_1 \leqtime \ell_2$ if $\ell_2$ is reachable from $\ell_1$ by following the 
path described by $\rho$ --- the order $\leqtime$ on locations is called 
\emph{run order}. 
Given two locations $\ell_1 \leqtime \ell_2$ of a run $\rho$, we write $\rho[\ell_1,\ell_2]$ 
for the factor of the run that starts in $\ell_1$ and ends in $\ell_2$. Note that the latter 
is also a run and hence the notation $\outb{\rho[\ell_1,\ell_2]}$ is permitted.
\gabriele{Parity of levels needs to be preserved}%
We will often reason with factors of runs up to isomorphism, that is, modulo 
shifting the coordinates of their locations while preserving the parity of the levels.
Of course, when the last location of (a factor of) a run $\rho_1$ 
coincides with the first location of (another factor of) a run $\rho_2$, 
then $\rho_1$ and $\rho_2$ can be concatenated to form a longer run,
denoted by $\rho_1 \rho_2$.
\olivier{added, see review 2 below}%
\gabriele{rephrased a bit, and I think corrected a problem with negative numbers.
          But do we really need to say this?}%
This operation can be performed even if the two locations, 
say $(x_1,y_1)$ and $(x_2,y_2)$, are different, provided that
$y_1=y_2\bmod 2$: in this case it suffices to shift the positions
(resp.~levels) of the locations of the first run $\rho_1$ by $x_2$ (resp.~$y_2$)
and, similarly, the positions (resp.~levels) of the locations of the second
run $\rho_2$ by $x_1$ (resp.~$y_1$).
\felix{it's not really enough, with our current definition the first location of $\rho_2$ is $(x,0)$ is $x$ is the position of the cut. So we cannot even concatenate $\rho_1$ and $\rho_2$. I have an alternate paragraph where the locations of the subruns use the levels of the original run.}%
\gabriele{I don't understand your comment Felix. Does it still apply?}%

\reviewTwo[inline]{
There seems to be some problems in you definition of crossing sequences.

First, according to your definition, the state at location $(x,y)$ is the state reached by the $(y+1)$-th transition that crosses x. In order for this to be consistent with Figure 1, you should swap u and u' in your definition of a position crossed by a transition.
\olivier[inline]{Correct, there was a mistake there! Check my fix above.}%

Second, since, following your definition, the state at a given location always is the target of a transition, the first state of the run has no location (it should be added explicitly).
\olivier[inline]{Correct, also fixed}%

Third, not allowing more freedom in the second components of the locations seems to cause some problems while considering subruns.
Let me explain what I mean with an example.
Let us suppose that we split the run $\rho$ of your Figure 1 into two runs $\rho_1$ and $\rho_2$, where $\rho_2$ is composed of the last three transitions.
Then the locations of the states of the run $\rho_1$ (considered as a whole run) are equal to the locations of the same states considered as part of $\rho$.
However, the locations of the states of the run $\rho_2$ (considered as a whole run) are (1,0), (2,0), (3,0), (4,0), whereas in $\rho$, the locations of the corresponding states are  (1,2), (2,2), (3,0), (4,0).
As a consequence, $\rho_1$ and $\rho_2$ can not be concatenated to reform $\rho$, since to do so we need that “the last location of [\dots] $\rho_1$ coincides with the first location of [\dots] $\rho_2$” (p.5 l.15).
\olivier[inline]{Added a sentence to explicit}%

Finally (and this one is not a real problem), you introduce precise notions, but sometimes do not use them in the following parts of the paper: for example, p.4, “never uses a left transition from position x” corresponds to “no transition crosses position x”. Why not use the latter ?
\olivier[inline]{``no left transition on $\vdash$'' comes before the definition of ``crosses'' and is more explicit, so I would not change.}%
}%
\anca{there is no ``crosses'' anymore, right?}%

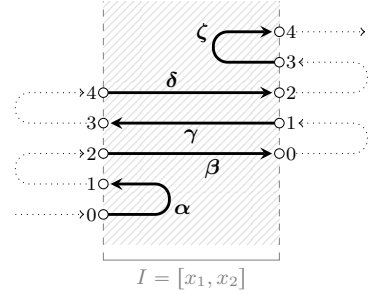
\begin{wrapfigure}{r}{5.1cm}
\centering
\scalebox{0.9}{
\begin{tikzpicture}[baseline=0, inner sep=0, outer sep=0, minimum size=0pt, xscale=0.65, yscale=0.45]
  \tikzstyle{dot} = [draw, circle, fill=white, minimum size=4pt]
  \tikzstyle{fulldot} = [draw, circle, fill=black, minimum size=4pt]
  \tikzstyle{factor} = [->, shorten >=1pt, dotted, rounded corners=6]
  \tikzstyle{fullfactor} = [->, >=stealth, shorten >=1pt, very thick, rounded corners=6]

  \fill [pattern=north east lines, pattern color=gray!25]
        (4,-1) rectangle (8,7);
  \draw [dashed, thin, gray] (4,-1) -- (4,7);
  \draw [dashed, thin, gray] (8,-1) -- (8,7);
  \draw [gray] (4,-1.25) -- (4,-1.5) -- (8,-1.5) -- (8,-1.25);
  \draw [gray] (6,-1.75) node [below] {\footnotesize $I=[x_1,x_2]$};

  \draw (2,0) node (node0) {};
  \draw (4,0) node [dot] (node1) {};
  \draw (node1) node [left=0.1] (node1') {\scriptsize $0$};
  \draw (5.5,0) node (node2) {};
  \draw (5.5,1) node (node3) {};
  \draw (4,1) node [dot] (node4) {};
  \draw (node4) node [left=0.1] (node4') {\scriptsize $1$};
  \draw (2,1) node (node5) {};
  \draw (2,2) node (node6) {};
  \draw (4,2) node [dot] (node7) {};
  \draw (node7) node [left=0.1] (node7') {\scriptsize $2$};
  \draw (8,2) node [dot] (node8) {};
  \draw (node8) node [right=0.1] (node8') {\scriptsize $0$};
  \draw (10,2) node (node9) {};
  \draw (10,3) node (node10) {};
  \draw (8,3) node [dot] (node11) {};
  \draw (node11) node [right=0.1] (node11') {\scriptsize $1$};
  \draw (4,3) node [dot] (node12) {};
  \draw (node12) node [left=0.1] (node12') {\scriptsize $3$};
  \draw (2,3) node (node13) {};
  \draw (2,4) node (node14) {};
  \draw (4,4) node [dot] (node15) {};
  \draw (node15) node [left=0.1] (node15') {\scriptsize $4$};
  \draw (8,4) node [dot] (node16) {};
  \draw (node16) node [right=0.1] (node16') {\scriptsize $2$};
  \draw (10,4) node (node17) {};
  \draw (10,5) node (node18) {};
  \draw (8,5) node [dot] (node19) {};
  \draw (node19) node [right=0.1] (node19') {\scriptsize $3$};
  \draw (6.5,5) node (node20) {};
  \draw (6.5,6) node (node21) {};
  \draw (8,6) node [dot] (node22) {};
  \draw (node22) node [right=0.1] (node22') {\scriptsize $4$};
  \draw (10,6) node (node23) {};

  \draw [factor] (node0) -- (node1');
  \draw [fullfactor] (node1) -- (node2.center) -- node [below right=1mm] {\footnotesize $\pmb{\alpha}$}
                     (node3.center) -- (node4); 
  \draw [factor] (node4') -- (node5.center) -- (node6.center) -- (node7'); 
  \draw [fullfactor] (node7) -- node [below right=1.25mm] {\footnotesize $~~\pmb{\beta}$} (node8);
  \draw [factor] (node8') -- (node9.center) -- (node10.center) -- (node11');
  \draw [fullfactor] (node11) -- node [below=1mm] {\footnotesize $\pmb{\gamma}$} (node12);
  \draw [factor] (node12') -- (node13.center) -- (node14.center) -- (node15');
  \draw [fullfactor] (node15) -- node [above left=1.25mm] {\footnotesize $\pmb{\delta}~~$} (node16);
  \draw [factor] (node16') -- (node17.center) -- (node18.center) -- (node19');
  \draw [fullfactor] (node19) -- (node20.center) -- node [above left=1mm] {\footnotesize $\pmb{\zeta}$}
                     (node21.center) -- (node22);
  \draw [factor] (node22') -- (node23);
\end{tikzpicture}
}
\caption{Intercepted factors.}\label{fig:intercepted-factors}

\end{wrapfigure}
%
\medskip
\subsection*{Intercepted factors.}
For simplicity, we will denote by $\omega$ 
the maximal position of the input word.
We will consider \emph{intervals of positions} of 
the form $I=[x_1,x_2]$, with $0 \le x_1<x_2 \le \omega$. 
The \emph{containment} relation $\subseteq$ on intervals is 
defined expected, as $[x_3,x_4] \subseteq [x_1,x_2]$ if $x_1\le x_3 < x_4\le x_2$.
%
%
A \emph{factor} of a run $\rho$ is a contiguous subsequence of $\rho$. 
A factor of $\rho$ \emph{intercepted} by an interval 
$I=[x_1,x_2]$ is a maximal factor  that visits only 
positions $x\in I$, and never uses a left transition from 
position $x_1$ or a right transition from position $x_2$.  
Figure~\ref{fig:intercepted-factors} on the right shows 
the factors $\alpha,\beta,\gamma,\delta,\zeta$ intercepted
by an interval $I$.
The numbers that annotate the endpoints of the factors 
represent their levels.

\medskip
\subsection*{Functionality.}
We say that a transducer is \emph{functional}  (equivalently, one-valued, or single-valued)
if for each input $u$, at most one output $w$ can be
produced by any possible successful run on $u$.
Of course, every deterministic transducer is functional, while the opposite 
implication fails in general. 
To the best of our knowledge determining the precise complexity of the determinization 
of a two-way transducer (whenever an equivalent deterministic one-way transducer exists)
is still open. From classical bounds on determinization of finite automata, we only know
that the size of a determinized transducer may be exponential in the worst-case. 
One solution to this question, which is probably
not the most efficient one,  is to
check one-way definability: if an equivalent one-way transducer is
constructed, one can check in $\ptime$ if it can be
determinized~\cite{cho77,weberklemm95,bealcarton02}. 

The following result,  proven in Section \ref{sec:complexity}, 
is the  reason to consider only functional transducers:

{
\renewcommand{\thethm}{\ref{prop:undecidability}}
\begin{prop}[]
The one-way definability problem for \emph{non-functional} 
sweeping transducers is undecidable.
\end{prop}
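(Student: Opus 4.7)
\medskip\noindent\textbf{Proof plan.}
My plan is to reduce from a classical undecidable problem for non-deterministic one-way finite-state transducers (NFTs), such as the inclusion problem $L(\cT_1) \subseteq L(\cT_2)$, which is known to be undecidable (equivalent to Post's Correspondence Problem via standard encodings due to Griffiths and Ibarra).

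Given two one-way NFTs $\cT_1$ and $\cT_2$ over common input alphabet $\Sigma$ and output alphabet $\Gamma$, I would construct in polynomial time a non-functional sweeping transducer $\cT^\star$ as the non-deterministic union of two sub-transducers. The first sub-transducer is one-way and simulates $\cT_2$ directly, contributing all pairs of $L(\cT_2)$. The second sub-transducer is genuinely sweeping: on input $w$, it sweeps left-to-right simulating $\cT_1$ and emitting the corresponding output $v'$, then sweeps right-to-left (emitting nothing), and finally sweeps left-to-right once more, emitting the letters of $w$ one by one. This second branch contributes all pairs of the form $(w,\, v'\cdot w)$ with $(w,v')\in L(\cT_1)$, and the trailing copy of $w$ is the classical witness of non-rationality, since no one-way NFT can memorise an unbounded $w$ in order to replay it after emitting~$v'$.

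The characterisation to establish is then: $\cT^\star$ is one-way definable if and only if the sweeping branch is absorbed by the one-way branch, i.e.~$\{(w,\, v'\cdot w) : (w,v')\in L(\cT_1)\} \subseteq L(\cT_2)$. The forward direction is immediate, as absorption forces the overall relation to equal the rational relation $L(\cT_2)$. By suitably encoding the source inclusion instance into $\cT_1$ and $\cT_2$ (for example, by pre-pending a marker ensuring that the sweeping pairs and the $\cT_2$-pairs live on disjoint syntactic shapes except where absorption is required), the absorption condition is equivalent to the original (undecidable) inclusion $L(\cT_1)\subseteq L(\cT_2)$.

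The main obstacle is the converse direction: if some pair $(w,\, v'\cdot w)$ escapes $L(\cT_2)$, one must show that no one-way NFT realises $L(\cT^\star)$. The argument would use the fact that the image of any one-way NFT is a regular language, and pump on an assumed one-way realisation to exhibit infinitely many outputs of the form $v'w$ with unbounded $|w|$; this forces the image to contain a non-regular ``square-like'' sublanguage, a contradiction. The delicate point is ruling out pathological situations in which $L(\cT_2)$ fortuitously covers the uncovered sweeping pairs \emph{rationally} without covering them all individually. Setting up the reduction so that the residue of uncovered sweeping pairs is always inherently non-rational whenever the source instance has a negative answer is the crux of the proof.
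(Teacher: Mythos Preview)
Your high-level plan---reduce from an undecidable problem for one-way NFTs and build a sweeping transducer whose one-way definability mirrors the source instance---is reasonable, and it differs from the paper's direct PCP reduction. But the proposal has a genuine gap exactly where you yourself flag it as ``the crux'': proving that $\cT^\star$ is \emph{not} one-way definable when absorption fails.

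Your suggested argument via non-regularity of the image does not go through. The output language of $L(\cT^\star)$ is the union of the (regular) image of $L(\cT_2)$ with $\{v'w : (w,v')\in L(\cT_1)\}$; even when the latter is non-regular in isolation, $L(\cT_2)$ may contribute enough pairs to make the union regular without absorbing every individual sweeping pair. More fundamentally, a relation can fail to be rational while having a regular image, so image-regularity is too coarse an invariant here. And the residual difficulty you name---that a single unabsorbed pair $(w_0,v'_0 w_0)$ must somehow generate an \emph{infinite} family of unabsorbed pairs with a non-rational structure---is left open: pumping $(w_0,v'_0)$ inside $\cT_1$ yields new sweeping pairs, but nothing prevents $L(\cT_2)$ from absorbing all of those while missing the original one. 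The vague mention of ``pre-pending a marker'' does not resolve this; you still have to argue that \emph{no} one-way transducer (possibly unrelated to $\cT_2$) realises the union.

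The paper sidesteps this by a direct PCP reduction that bakes the required infinite obstruction into the relation itself. It builds a sweeping-computable relation $B_\tau$ of ``bad encodings'' of a PCP instance $\tau$, arranged so that when $\tau$ has a solution $w$ (with $u=f(w)=g(w)$), the pair $\alpha_{n,m}=(w^n u^m,\; w^n \#^{m|u|})$ lies in $B_\tau$ iff $n\neq m$. This diagonal-avoidance pattern over a two-parameter grid is then shown to be non-rational by a concrete pumping argument on a hypothetical one-way realiser: for each $(n,m)$ with $n\neq m$ one finds a loop consuming some $w^{h(n,m)}$ and outputting the same word, and then a common multiple of the finitely many values $h(n,m)$ forces a pair with $n=m$ into the relation, a contradiction. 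Engineering the construction so that the negative case automatically yields such a two-parameter family with a forbidden diagonal is precisely the missing ingredient in your sketch.
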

}

Unless otherwise stated, hereafter we tacitly assume that all transducers 
are functional. Note that functionality is a decidable property, as shown below. 
The proof of this result is similar to the decidability proof for the equivalence problem
of deterministic two-way transducers~\cite{Gurari80}, as it reduces the functionality
problem to the reachability problem of a $1$-counter automaton of exponential size. 
A matching $\pspace$ lower bound follows by a reduction of the emptiness problem 
for the intersection of finite-state automata \cite{Kozen77}.

\begin{prop}\label{prop:functionality-pspace}
Functionality of two-way transducers can be decided in polynomial
space. This problem is $\pspace$-hard already for sweeping transducers.
\end{prop}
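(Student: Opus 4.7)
The plan is to handle the two directions of the statement separately.

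\smallskip
For the upper bound, I would follow Gurari's approach to the equivalence problem for deterministic two-way transducers~\cite{Gurari80}, and reduce non-functionality of $\cT$ to the reachability problem in a non-deterministic one-counter automaton $\cC$ of exponential size in $|\cT|$. Each state of $\cC$ would encode a pair of crossing sequences describing two simultaneous successful runs of $\cT$ on the same input word, and the counter would track the signed difference between the lengths of the two outputs produced so far. Non-functionality would then be certified in one of two ways: either both simulated runs accept with outputs of different lengths (the counter being non-zero at the end), or the two outputs have the same length but disagree at some position. For the second case, $\cC$ would non-deterministically guess the mismatch position and the two conflicting output letters, and then use the counter as a clock to align the two output symbols at the chosen position. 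A standard pumping argument on pairs of runs lets one restrict to crossing sequences of length polynomial in $|\cT|$, so the overall state space of $\cC$ remains exponential; combined with the fact that reachability in a non-deterministic one-counter automaton is decidable in non-deterministic logarithmic space in its state space, this yields a polynomial-space algorithm.

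\smallskip
For the lower bound, I would reduce from the intersection non-emptiness problem for finite-state automata, proven $\pspace$-complete by Kozen~\cite{Kozen77}. Given automata $A_1,\dots,A_k$ over $\Sigma$, the idea is to construct a sweeping transducer $\cT$ with output alphabet $\{a,b\}$ that, on input $u\in\,\vdash\Sigma^*\dashv$, first non-deterministically outputs a letter $c\in\{a,b\}$ on the initial transition from $\vdash$ and then performs $k$ left-to-right sweeps, with the $i$-th sweep simulating $A_i$ on $u$. By construction, the domain of $\cT$ is exactly $\bigcap_i L(A_i)$, and on every input in this set $\cT$ admits precisely two successful runs, producing outputs $a$ and $b$; hence $\cT$ is non-functional iff the intersection is non-empty.

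\smallskip
The main obstacle will be the delicate pumping argument needed to bound the crossing sequences in a pair of runs that witnesses non-functionality: naively shortening one of the runs may alter its output, so the pumping must be carried out simultaneously on both runs while preserving an output discrepancy between them. Once this is in place, the construction of $\cC$ and the analysis of the counter-based mismatch detection follow standard lines.
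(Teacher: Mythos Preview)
Your proposal is correct and matches the paper's approach. The paper itself gives no detailed proof of this proposition, only a two-sentence hint: it points to Gurari's reduction of the equivalence problem for deterministic two-way transducers to reachability in a one-counter automaton of exponential size, and to Kozen's intersection-emptiness problem for the lower bound. Your proposal faithfully fleshes out both hints---the one-counter construction tracking output-length differences and a guessed mismatch position for the upper bound, and the sweeping transducer that non-deterministically outputs $a$ or $b$ before checking membership in each $A_i$ for the lower bound.

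Two small remarks. First, your phrase ``$k$ left-to-right sweeps'' is slightly loose for a sweeping transducer, which alternates directions; you presumably mean $2k-1$ passes with the right-to-left ones idle, which keeps the size polynomial. Second, you are right to flag the pumping argument as the delicate step: one cannot simply normalize each run independently, since that may change an output and destroy the discrepancy. The clean way around this is to pump \emph{jointly} on input intervals where the pair of crossing sequences repeats, and to observe that the one-counter automaton only needs to certify either a length difference or a single mismatched output position---both of which survive under a suitable choice of pump count. This is exactly the subtlety you anticipate, and once handled the rest is routine.
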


A (successful) run of a two-way transducer is called \emph{normalized}
if it never visits two locations with the same position, the same
state, and both either at even or at odd level. It is easy to see that
if a successful run $\rho$ of a \emph{functional} transducer visits two locations
$\ell_1=(x,y)$ and $\ell_2=(x,y')$ with the same state
$\rho(\ell_1)=\rho(\ell_2)$ and with $y=y' \mod 2$,  then the output
produced by $\rho$ between $\ell_1$ and $\ell_2$ is empty:
otherwise, by repeating the non-empty factor $\rho[\ell_1,\ell_2]$, we would
 contradict functionality.  So, by deleting the factor
 $\rho[\ell_1,\ell_2]$ we obtain a successful run that produces the
 same output. Iterating this operation leads to an equivalent,
 normalized run.

 Normalized runs are interesting because their crossing sequences have
 bounded length (at most $\hmax = 2|Q|-1$). Throughout the paper we
 will implicitly assume that successful runs are normalized. The
 latter property can be easily checked on crossing sequences.


\section{One-way definability: overview}\label{sec:overview}

In this section we state our main result, which is the existence of an
elementary algorithm for checking whether a two-way transducer is
equivalent to some one-way transducer. We call such transducers
\emph{one-way definable}. Before stating our result, we
start with  a few examples illustrating the reasons that may prevent a
transducer to be one-way definable.

\begin{exa}\label{ex:one-way-definability}
We consider two-way transducers that accept any input $u$ 
from a given regular language $R$ and produce as output the word $u\,u$. 
We will argue how, depending on $R$, these transducers may or may not be one-way definable.
\begin{enumerate}
\item If $R=(a+b)^*$, then there is no equivalent one-way transducer, 
      as the output language is not regular. 
      If $R$ is finite, however, then the transduction mapping $u\in R$ to $u\,u$ 
      can be implemented by a one-way transducer that stores the input $u$ 
      (this requires at least as many states as the cardinality of $R$),
      and outputs $u\,u$ at the end of the computation.
\item A special case of transduction with finite domain is obtained from the language
      $R_n = \{ a_0 \, w_0 \, \cdots a_{2^n-1} \, w_{2^n-1} \::\: a_i\in\{a,b\} \}$,
      where $n$ is a fixed natural number, the input alphabet is $\{a,b,0,1\}$, 
      and each $w_i$ is the binary encoding of the index $i=0,\dots,2^n-1$ 
      (hence $w_i\in\{0,1\}^n$).
      According to Proposition~\ref{prop:lower-bound} below, the transduction 
      mapping $u\in R_n$ to $u\,u$ can be implemented by a two-way transducer 
      of size $\cO(n^2)$, but every equivalent one-way transducer 
      has size (at least) doubly exponential in $n$.
\item Consider now the periodic language $R=(abc)^*$. 
      The function that maps $u\in R$ to $u\,u$ can be easily implemented by a 
      one-way transducer: it suffices to output alternatively $ab$, $ca$, $bc$
      for each input letter, while checking that the input is in $R$.
\end{enumerate}
\end{exa}

\begin{exa}\label{ex:running}
We consider now a slightly more complicated transduction
that is defined on input words of the form $u_1 \:\#\: \cdots \:\#\: u_n$,
where each factor $u_i$ is over the alphabet $\Sigma=\{a,b,c\}$. 
The associated output has the form
$w_1 \:\#\: \cdots \:\#\: w_n$, where each $w_i$
is either $u_i \: u_i$ or just $u_i$, depending on whether or not
$u_i\in (abc)^*$ and $u_{i+1}$ has even length, with $u_{n+1}=\emptystr$
by convention.

\noindent
The natural way to implement this transduction is by means
of a two-way transducer that performs multiple passes 
on the factors of the input:
a first left-to-right pass is performed on 
$u_i \,\#\, u_{i+1}$ to produce the first copy of $u_i$ 
and to check whether $u_i\in (abc)^*$ and $|u_{i+1}|$ is even; if so,
a second pass on $u_i$ is performed to produce 
another copy of $u_i$.

\noindent
Observe however that the above transduction can also be implemented by a one-way
transducer, using non-determinism: when entering a factor $u_i$, the transducer
guesses whether or not $u_i\in (abc)^*$ and $|u_{i+1}|$ is even; 
depending on this it outputs either $(abc\,abc)^{\frac{|u_i|}{3}}$ 
or $u_i$, and checks that the guess is correct while proceeding to
read the input.
\end{exa}

\noindent

\medskip

The  main result of our paper is an elementary algorithm that decides
whether a functional transducer is one-way definable:

\begin{thm}\label{thm:main}
There is an algorithm that takes as input a functional two-way
transducer $\cT$ and outputs in $3\exptime$ a \emph{one-way} transducer 
$\cT'$ satisfying the following properties:
\begin{enumerate}
  \item $\cT'\subseteq\cT$,
  \item $\dom(\cT')=\dom(\cT)$ if and only if $\cT$ is one-way definable.
  \item $\dom(\cT')=\dom(\cT)$ can be checked in $2\expspace$.
\end{enumerate}
Moreover, if $\cT$ is a sweeping transducer, then $\cT'$ can be
constructed in $2\exptime$
and $\dom(\cT')=\dom(\cT)$ is decidable in $\expspace$.
\end{thm}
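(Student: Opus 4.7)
\medskip
\noindent\textbf{Proof plan.}
The plan is to characterise one-way definability through a periodicity condition on the outputs produced along intervals of input positions, and then to turn this characterisation into an effective construction of $\cT'$ together with its complexity analysis. The three items of the theorem will then follow from a single construction for $\cT'$ plus an equivalence test on domains.

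First I would develop the combinatorial core. Fix an interval $I=[x_1,x_2]$ of input positions and consider, for any successful run $\rho$ of $\cT$, the list of factors of $\rho$ intercepted by $I$, ordered so that their outputs appear contiguously inside $\out{\rho}$. The key combinatorial lemma to prove is: if $\cT$ is one-way definable, then this concatenation of outputs is a factor of a periodic word whose period is polynomial in $|\cT|$. One proves it by a pumping argument, flanking $I$ with two compatible loops of $\rho$ and iterating them: the resulting family of outputs must be producible by some equivalent one-way transducer, and a Fine--Wilf style argument then forces short periodicity. In the sweeping case these loops live at the two endpoints of the input and can be read off crossing sequences. In the general two-way case one must also accommodate nested loops inside $I$, the ``special loops'' of Section~\ref{sec:loops-twoway}, which is what costs one extra level of combinatorial complexity.

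Given this combinatorial lemma, I would define $\cT'$ as a nondeterministic one-way transducer whose states guess, position by position, a crossing sequence of $\cT$ together with a period and a current phase encoding the output contribution of the intercepted factors that cross that position. On each input letter, $\cT'$ emits the portion of the periodic word corresponding to the factors whose rightmost position is the current one, while verifying local consistency with the transitions of $\cT$. By construction $\cT'\subseteq\cT$, giving item~(1). The characterisation lemma then reads: $\dom(\cT')=\dom(\cT)$ iff every successful run of $\cT$ satisfies the periodicity condition, which by the combinatorial lemma is equivalent to one-way definability, giving item~(2). The size of $\cT'$ is dominated by the number of crossing sequences, which is singly exponential for sweeping transducers (crossings have length at most $\hmax$) and doubly exponential for general two-way transducers, times a polynomial factor for period and phase. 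This yields the $2\exptime$ and $3\exptime$ bounds of item~(1).

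For item~(3), the test $\dom(\cT')=\dom(\cT)$ reduces to emptiness of an automaton recognising $\dom(\cT)\setminus\dom(\cT')$, built from the two-way automaton underlying $\cT$ together with a complement of the one-way automaton underlying $\cT'$; the product can be searched on the fly in space logarithmic in its size, giving $\expspace$ for sweeping and $2\expspace$ for general $\cT$. The hard part, I expect, is the combinatorial lemma in the two-way setting: nested loops inside $I$ can interact in ways the sweeping case does not exhibit, and bounding the resulting period while keeping all constants elementary is the delicate step that separates this approach from the non-elementary procedure of~\cite{fgrs13}.
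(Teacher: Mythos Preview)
Your plan has the right overall shape---periodicity from pumping, then a one-way simulation, then a domain-equality check---but three concrete gaps separate it from a working proof.

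\medskip
\textbf{Periodicity is not universal over intervals.}
Your key lemma, as stated, is false: it is \emph{not} the case that for every interval $I$ the concatenation of outputs of factors intercepted by $I$ is (almost) periodic. A one-way definable transducer can copy a non-periodic word once from left to right; over any interval $I$ the intercepted output is then an arbitrary factor of the input. What the paper shows is that periodicity holds only at \emph{inversions}: pairs of (idempotent) loop anchor points $\ell_1\lesstime\ell_2$ whose positions satisfy $x_1>x_2$, i.e.\ places where the output order is reversed relative to the input order. Away from inversions the output is \emph{not} periodic but can be reorganised into ``diagonals'': left-to-right runs interrupted by bounded-output detours. The construction of $\cT'$ therefore rests on a \emph{decomposition} of each run into $\bound$-blocks (almost-periodic output) and $\bound$-diagonals (essentially one-way output), not on a single global periodicity lemma. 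Your ``period plus phase'' per position does not capture the diagonal pieces.

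\medskip
\textbf{The period bound is not polynomial.}
You write that the period is ``polynomial in $|\cT|$''. It is not. In the sweeping case $\bound=\cmax|Q|^{\hmax}+1$ is singly exponential; in the two-way case $\bound=\cmax\hmax(2^{3\emax}+4)+4\cmax$ with $\emax=(2|Q|)^{2\hmax}$, hence doubly exponential. The size of $\cT'$ is then exponential in $\bound$ (one must store words of length up to $\bound$), which is what gives $2\exptime$ versus $3\exptime$.

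\medskip
\textbf{The extra exponential in the two-way case is mislocated.}
You attribute the gap between sweeping and two-way to ``the number of crossing sequences \ldots\ doubly exponential for general two-way transducers''. Crossing sequences have length at most $\hmax=2|Q|-1$ in \emph{both} settings, so their count is singly exponential in both. The extra exponential comes from the need for \emph{idempotent} loops in the two-way case: to find them one works in the effect semigroup (flows plus crossing sequences) and applies Simon's factorization forest theorem, which injects the size $\emax$ of that semigroup---already singly exponential---into the exponent of $\bound$. ``Nested loops inside $I$'' is not the mechanism; the mechanism is that arbitrary loops of a two-way run do not pump cleanly, so one must pass to idempotents, and guaranteeing their existence within bounded output is a Ramsey-type argument.
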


\begin{rem}
The transducer $\cT'$ constructed in the above theorem is
in a certain sense maximal: for every 
$v \in \dom(\cT) ~\setminus~ \dom(\cT')$ 
and every one-way transducer $\cT''$ with 
$\dom(\cT') \subseteq \dom(\cT'') \subseteq \dom(\cT)$ there exists 
some witness input $v'$ obtained from $v$ such that 
$v' \in \dom(\cT) ~\setminus~ \dom(\cT'')$. We will make this more
precise at the end of Section~\ref{sec:characterization-twoway}.
\end{rem}

We also provide a two-exponential lower bound for the size of the equivalent transducer. 
As the lower bound is achieved by a sweeping transduction (even a deterministic one), 
this gives a tight lower bound on the size of any one-way transducer equivalent to 
some sweeping transducer.

\begin{prop}\label{prop:lower-bound} 
\gabriele{I have added the fact that the sweeping transducers are even deterministic}%
There is a family $(f_n)_{n\in\bbN}$ of transductions such that
\begin{enumerate}
  \item $f_n$ can be implemented by a deterministic sweeping transducer of size $\cO(n^2)$,
  \item $f_n$ can be implemented by a one-way transducer,
  \item every one-way transducer that implements $f_n$ 
        has size $\Omega(2^{2^n})$. 
\end{enumerate}
\end{prop}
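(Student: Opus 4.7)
I would instantiate the family $(f_n)_{n\in\bbN}$ by the transduction of Example~\ref{ex:one-way-definability}(2). Concretely, the domain is $R_n = \{a_0 w_0 a_1 w_1 \cdots a_{2^n-1} w_{2^n-1} \::\: a_i\in\{a,b\}\}$, where each $w_i\in\{0,1\}^n$ is the binary encoding of $i$, and $f_n(u)=u\,u$. Observe that $|R_n|=2^{2^n}$ and every $u\in R_n$ has length $m = \Theta(n\cdot 2^n)$.

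For item~(1), I would construct a deterministic sweeping transducer performing $\cO(n)$ sweeps, each of size $\cO(n)$. An initial left-to-right sweep outputs the first copy of $u$ and verifies that $w_0=0^n$. Then $n$ further sweeps verify that the $w_i$ form the consecutive binary enumeration of $\{0,\dots,2^n-1\}$: the $j$-th such sweep checks that bit $j$ of every $w_{i+1}$ is correctly obtained from bit $j$ of $w_i$ and the carry coming from bits of index smaller than $j$ (a counter modulo $n+1$ for the position inside a block together with a single carry bit suffices). A final left-to-right sweep outputs the second copy of $u$, giving total size $\cO(n^2)$. For item~(2), a deterministic one-way transducer can store the tuple $(a_0,\dots,a_{2^n-1})\in\{a,b\}^{2^n}$ in its state (using $2^{2^n}$ states), output the first copy of $u$ during reading, and on the $\dashv$-transition emit the entire second copy of $u$ as a single string of length $m$ computed from the stored tuple.

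For the lower bound in item~(3), the plan is to show that any deterministic one-way transducer $\cT'$ computing $f_n$ satisfies $|Q|\geq 2^{2^n}$, yielding size $\Omega(2^{2^n})$. The key preliminary lemma I would establish is that a one-way transition cannot emit a letter whose position in the output depends on unread input; this implies that the second copy of $u$ may only start being emitted at the last reading transition (at position $m$) or at the $\dashv$-transition itself. I would then prove injectivity of the map $u\mapsto q_u$, where $q_u$ is the state of $\cT'$ just before the transition reading the last letter $u_m$. Indeed, suppose $q_u = q_{u'}$ for distinct $u,u'\in R_n$; since $u_m$ is determined by the format of $R_n$ (so $u_m = u'_m$), determinism forces the transitions taken from $q_u$ at position $m$ and at $\dashv$ to be identical in both runs and to produce the same output strings, so the outputs emitted through step $m-1$ have the same length $\ell\leq m-1$ and are the prefixes of length $\ell$ of $uu$ and $u'u'$, respectively; but then $uu$ and $u'u'$ coincide on their suffix of length $2m-\ell > m$, which forces the second copies to match and hence $u=u'$, a contradiction. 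The main obstacle I anticipate is extending the argument to nondeterministic functional one-way transducers, which I would handle by fixing an accepting run per input and reasoning about the pair (last reading transition, $\dashv$-transition) used in it, combined with a counting argument on the outputs of these transitions and a Kraft-style inequality relating the number of distinct $\dashv$-outputs to the number of classes of $R_n$ sharing a common suffix.
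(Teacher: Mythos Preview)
Your constructions for items~(1) and~(2) are essentially the paper's. The issue is item~(3).

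In your deterministic argument there is a concrete gap. The last $n$ symbols of every $u\in R_n$ are the fixed string $w_{2^n-1}$, so once the first $m-n$ letters of $u$ have been read, the whole word $u$ is already determined. In particular, after step $m-1$ your preliminary lemma gives no bound whatsoever: every position of $uu$ is determined by $u_1\cdots u_{m-1}$, hence nothing prevents a deterministic transducer from having already emitted all of $uu$. Your inequality $\ell\le m-1$ is therefore unjustified, and the case $\ell>m$ (equivalently, common suffix of length $<m$) is not ruled out. The argument is repairable by moving the cut to just before the last \emph{free} letter $a_{2^n-1}$ (position $m-n$): there your lemma does bound the emitted prefix by $m-n-1<m$, and injectivity of $u\mapsto (q_u,a_{2^n-1})$ still yields $|Q|\ge 2^{2^n-1}$.

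More importantly, the statement concerns arbitrary (functional, non-deterministic) one-way transducers, and your treatment of that case is only a sketch; the ``Kraft-style'' idea is not developed enough to evaluate. The paper bypasses the deterministic/non-deterministic distinction entirely with a different argument: for any one-way transducer $\cT'$ of size $N$, the \emph{output language} $\{w:\exists u\ (u,w)\in\sL(\cT')\}$ is recognized by a non-deterministic automaton of size $\cO(N)$ (guess an input letter-by-letter and match $w$ against the concatenation of transition outputs). One then applies the Glaister--Shallit fooling-set lower bound for NFAs: with the $2^{2^n}$ pairs $(u,u)$, $u\in R_n$, any NFA accepting all $uu$ and rejecting all $uv$ with $u\neq v$ needs at least $2^{2^n}$ states. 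This gives the $\Omega(2^{2^n})$ bound on $N$ directly, without any case analysis on determinism and without tracking how much of the output has been emitted.
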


\begin{proof}
The family of transformations is precisely the one described in 
Example~\ref{ex:one-way-definability}~(2), where $f_n$ maps inputs of the form
$u = a_0 \, w_0 \, \cdots \, a_{2^n-1} ~ w_{2^n-1}$ to outputs of the form $u\,u$,
where $a_i\in\{a,b\}$ and $w_i\in\{0,1\}^n$ is the binary encoding of $i$.
A deterministic sweeping transducer implementing $f_n$ first checks that the 
binary encodings $w_i$, for $i=0,\dots,2^n-1$, are correct. 
This can be done with $n$ passes: 
the $j$-th pass uses $\cO(n)$ states to check the correctness of 
the $j$-th bits of the binary encodings.
Then, the sweeping transducer performs two additional passes to 
copy the input twice. Overall, the sweeping transducer has size $\cO(n^2)$. 

As already mentioned, every one-way transducer that implements $f_n$ needs 
to remember input words $u$ of exponential length in order to output $u\,u$, which
roughly requires doubly exponentially many states. 
A more formal argument providing a lower bound for the size of a one-way
transducer implementing $f_n$ goes as follows. 

First of all, one observes that given a one-way transducer $\cT$, 
the language of its outputs, 
i.e., $L^{\text{out}}_\cT = \{ w \::\: (u,w)\in\sL(\cT) \text{ for
  some } u\}$ 
is regular. More precisely, if $\cT$ has size $N$, then the language 
$L^{\text{out}}_\cT$ is recognized by an automaton of size linear in $N$.
Indeed, while parsing $w$, the automaton can guess an input word $u$
and a run on $u$, 
together with a factorization of $w$ in which the $i$-th 
factor corresponds to the output of the transition 
on the $i$-th letter of $u$. Basically, this requires 
storing as control states the transition rules of $\cT$ and the 
suffixes of outputs.

Now, suppose that the function $f_n$ is implemented by a one-way 
transducer $\cT$ of size $N$. The language
$L^{\text{out}}_\cT = \{ u\,u \::\: u\in\dom(f_n) \}$ is then
recognized by an automaton of size $\cO(N)$.
Finally, we recall a result from \cite{GlaisterShallit96}, which shows that,
given a sequence of pairs of words $(u_i,v_i)$, for $i=1,\dots,M$,
every non-deterministic automaton that separates the language 
$\{u_i\,v_i \::\: 1\le i\le M\}$ from the language 
$\{u_i\,u_j \::\: 1\le i\neq j\le M\}$ must have at least $M$ states.
By applying this result to our language $L^{\text{out}}_\cT$, where 
$u_i=v_i$ for all $i=1,\dots,M=2^{2^n}$, we get that $N$ must be at
least linear in $M$, and hence $N \in \Omega(2^{2^n})$.
\end{proof}

\bigskip
\gabriele{improved a bit, but we may do better}%
The proof of Theorem~\ref{thm:main} will be developed in the next sections. 
The main idea is to decompose a run of the two-way transducer $\cT$ 
into factors that can be easily simulated in a one-way manner. We defer the 
formal definition of such a decomposition to Section \ref{sec:characterization-sweeping}, 
while here we refer to it simply as a ``$\bound$-decomposition'', where 
$\bound$ is a suitable number computed from $\cT$.
The reader can refer to Figure~\ref{fig:decomposition-sweeping} on page
\pageref{fig:decomposition-sweeping}, which provides some intuitive
account of a $\bound$-decomposition for a sweeping run.
Roughly speaking, 
each factor 
of a $\bound$-decomposition either already looks like a run of a 
one-way transducer (e.g.~the factors $D_1$ and $D_2$ of Figure~\ref{fig:decomposition-sweeping}),
or it produces a periodic output, where the period is bounded by $\bound$
(e.g.~the factor between $\ell_1$ and $\ell_2$). 
Identifying factors that look like runs of one-way transducers is rather easy.
On the other hand, to identify factors with periodic outputs we rely on
a notion of ``inversion'' of a run. Again, we defer the formal definition
and the important combinatorial properties of inversions 
to Section \ref{sec:combinatorics-sweeping}. 
The reader can refer to Figure \ref{fig:inversion-sweeping} 
on page \pageref{fig:inversion-sweeping} 
for an example of an inversion of a run of a sweeping transducer.
Intuitively, this is a portion of run that is potentially difficult to simulate
in a one-way manner, due to existence of long factors of the output that are 
generated following the opposite order of the input.
Finally, the complexity of the decision procedure in Theorem~\ref{thm:main}
is analyzed in Section~\ref{sec:complexity}.

\reviewOne[inline]{As neither inversions nor blocks are defined by Page 8, the presence of Theorem 3.6 is not as enlightening as one might hope. If the notion of inversion and their relevance is made quite clear at a glance of Page 11, the definition of run decomposition is not as intuitive. Hence, Theorem 3.6 cannot be appreciated before Page 17. It might be a good idea to give an intuition about inversion and diagonal/blocks before 3.6 (this solution being preferable, but made challenging by how involved the notion of decomposition seems to be in a first read).
  \felix[inline]{added the references to the corresponding figures}%
  \olivier[inline]{and Gabriele added some intuitions on decompositions}%
}%

\subsection*{Roadmap.} 
In order to provide a roadmap of our proofs, we state below the equivalence 
between the key properties related to one-way definability, inversions of runs,
and existence of decompositions:

\begin{thm}\label{thm:main2}
Given a functional two-way transducer $\cT$, 
an integer $\bound$ can be computed such that the following are equivalent:
\begin{itemize}
  \item[\PR1)] $\cT$ is one-way definable,
  \item[\PR2)] for every successful run of $\cT$ and every inversion in it, 
               the output produced amid
                the inversion has period at most $\bound$,
  \item[\PR3)] every input has a successful run of $\cT$ that admits a $\bound$-decomposition. 
\end{itemize}
\end{thm}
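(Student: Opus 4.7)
\medskip
\noindent
\textbf{Proof plan.}
The natural strategy is to prove the cyclic chain of implications
\PR1 $\Rightarrow$ \PR2 $\Rightarrow$ \PR3 $\Rightarrow$ \PR1, with the value $\bound$
obtained as the maximum of the bounds produced along the way; since finitely many
bounds arise, each polynomial or exponential in $|\cT|$, a uniform $\bound$ can indeed
be extracted. Roughly, \PR1 $\Rightarrow$ \PR2 is a pumping/necessary-condition step,
\PR2 $\Rightarrow$ \PR3 is a purely combinatorial step on runs, and \PR3 $\Rightarrow$ \PR1
is a one-way construction. I expect the combinatorial step to be the main obstacle.

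For \PR1 $\Rightarrow$ \PR2, the plan is to argue contrapositively. Assume that some
successful run $\rho$ contains an inversion whose intercepted output cannot be given
a period bounded by $\bound$. Intuitively, an inversion is a portion of run where
a non-trivial block of output is produced by transitions that traverse positions in the
``wrong'' direction (right-to-left), so that a one-way transducer would need to predict it.
By iterating (pumping) a suitable loop around the inversion on $\rho$, one gets an
infinite family of inputs on which $\cT$ outputs longer and longer copies of some
essentially aperiodic factor. A Glaister--Shallit style fooling argument (analogous
to the one used in the proof of Proposition~\ref{prop:lower-bound}) then shows that
no one-way transducer, regardless of its size, can realize all these pairs,
contradicting \PR1. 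The precise choice of $\bound$ here comes from the maximum
period realizable by outputs of loops of $\cT$, which is at most singly exponential
in $|\cT|$.

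For \PR2 $\Rightarrow$ \PR3, I would take an arbitrary successful normalized run
$\rho$ and construct a $\bound$-decomposition of it by induction on the run order.
The machinery of Section~\ref{sec:combinatorics-sweeping} (for the sweeping case)
and Section~\ref{sec:combinatorics-twoway} (for the general case) will be used to
locate inversions inside $\rho$ and, thanks to \PR2, to replace them by blocks that
produce a bounded-period output; the remaining portions of $\rho$ are already of
the ``one-way-like'' kind. The delicate point is to make sure that the decomposition
is indeed finite and that consecutive factors fit together, i.e.\ that neighbouring
blocks share compatible crossing sequences and periods. This is where I expect the
main technical difficulty: controlling the interaction between overlapping inversions
(in the two-way case, loops can nest in subtle ways, cf.\ Section~\ref{sec:loops-twoway}),
and ensuring that the bound $\bound$ that comes out is independent of the input length.

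Finally, for \PR3 $\Rightarrow$ \PR1, I would build a one-way transducer $\cT'$ that,
while scanning the input left-to-right, non-deterministically guesses a
$\bound$-decomposition of a successful run of $\cT$ on its input. For each factor of
the decomposition, $\cT'$ either simulates a one-way-like factor directly, or it
guesses the period $p\le\bound$ and an offset, outputs the corresponding periodic
word on-the-fly, and verifies in its finite control that the guesses match the actual
run of $\cT$; the bound on the period ensures only finitely many choices are needed.
Correctness follows from \PR3 (every input admits such a decomposition, so $\cT'$ has
the same domain) together with functionality of $\cT$ (so the output produced is
uniquely determined). This construction is also the one that will underlie
Theorem~\ref{thm:main}, and its cost matches the complexities announced there.
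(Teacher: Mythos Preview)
Your \PR3 $\Rightarrow$ \PR1 sketch is essentially what the paper does (Proposition~\ref{prop:construction-twoway}), and your identification of \PR2 $\Rightarrow$ \PR3 as the main combinatorial work is accurate, though the paper's mechanism is not an induction on run order but an equivalence relation $\simeq$ grouping locations lying inside a common inversion; its classes become the blocks, and what is left is shown to be diagonals (Lemmas~\ref{lem:overlapping}, \ref{lem:bounding-box-twoway}, \ref{lem:diagonal-twoway}).

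The genuine gap is in your \PR1 $\Rightarrow$ \PR2. A Glaister--Shallit fooling argument does not go through here. Pumping a single loop around an inversion produces outputs of the form $\text{prefix}\cdot v^n\cdot\text{suffix}$, which are trivially periodic with period $|v|$; you do not obtain ``longer and longer aperiodic factors'' from one loop, so no fooling family arises. The paper's argument is of a completely different nature and crucially exploits what an inversion \emph{is}: the anchor $\ell_1$ precedes $\ell_2$ in run order but lies to the \emph{right} of $\ell_2$ in position order. One pumps \emph{both} loops $L_1,L_2$ independently by $n_1,n_2$. On the two-way side the output contains a factor $v_1^{n_1}\cdots v_3^{n_2}$ (with $v_1=\out{\tr{\ell_1}}$, $v_3=\out{\tr{\ell_2}}$), whereas any equivalent one-way transducer, being forced to read $L_2$ before $L_1$, produces $w_1^{n_2}\cdots w_3^{n_1}$ with the exponents \emph{swapped}. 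Equating the two sides for all $n_1,n_2$ and applying Fine--Wilf (Lemma~\ref{lem:oneway-vs-twoway}) yields that the output between $\ell_1$ and $\ell_2$ has period dividing $\gcd(|v_1|,|v_3|)$. This order-swap in a two-parameter word equation is the key idea; without it there is no mechanism forcing periodicity.

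A second point: the paper actually proves a \emph{stronger} form of \PR2, namely that the period divides both $|\out{\tr{\ell_1}}|$ and $|\out{\tr{\ell_2}}|$ (Propositions~\ref{prop:periodicity-sweeping} and~\ref{prop:periodicity-twoway}). This divisibility is essential in \PR2 $\Rightarrow$ \PR3, specifically in Lemma~\ref{lem:overlapping}, to merge chains of overlapping inversions into a single block with a common small period. Your plan, even if it established the bare bound $p\le\bound$, would still be missing the ingredient needed for the next implication.
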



As the notions of inversion and $\bound$-decomposition are simpler to formalize
for sweeping transducers, we will first prove the theorem assuming that $T$ is 
a sweeping transducer; we will focus later on unrestricted two-way transducers.
Specifically, in Section~\ref{sec:combinatorics-sweeping} we introduce the basic combinatorics
on words and the key notion of inversion for a run of a sweeping transducer, and we prove the 
implication \PR1 $\Rightarrow$ \PR2.
In Section~\ref{sec:characterization-sweeping} we define $\bound$-decompositions of runs of 
sweeping transducers, prove the implication \PR2 $\Rightarrow$ \PR3, and sketch a proof of 
\PR3 $\Rightarrow$ \PR1 (as a matter of fact, this latter implication can be proved in a way 
that is independent of whether $\cT$ is sweeping or not, which explains why we 
only sketch the proof in the sweeping case).
Section~\ref{sec:loops-twoway} lays down the appropriate definitions
concerning loops of  two-way transducers, and analyzes in detail the effect of
pumping special idempotent loops.
In Section~\ref{sec:combinatorics-twoway} we further develop the combinatorial arguments
that are used to prove the implication \PR1 $\Rightarrow$ \PR2 in the general case.
Finally, in Section~\ref{sec:characterization-twoway} we prove the implications 
\PR2 $\Rightarrow$ \PR3 $\Rightarrow$ \PR1 in the general setting, 
and show how to decide the condition $\dom(\cT')=\dom(\cT)$ of Theorem \ref{thm:main}.


\section{Basic combinatorics for sweeping transducers}\label{sec:combinatorics-sweeping}

We fix for the rest of the section a functional \emph{sweeping transducer} $\cT$,
an input word $u$, and a (normalized) successful run $\rho$ of $\cT$ on $u$.

\medskip
\subsection*{Pumping loops.}
Loops turn out to be a basic concept for characterizing one-way definability. 
Formally, a \emph{loop} of $\rho$ is an interval $L=[x_1,x_2]$ such that $\rho|x_1=\rho|x_2$,
namely, with the same crossing sequences at the extremities.
The run $\rho$ can be pumped at any loop $L=[x_1,x_2]$, and this gives rise
to new runs with iterated factors. Below we study precisely the shape of 
these pumped runs.

\begin{defi}[anchor point, trace]
  Given a loop $L$ and a location $\ell$ of $\rho$, we say that $\ell$ is an 
\emph{anchor point in $L$} if $\ell$ is the first location of some factor 
of $\rho$ that is intercepted by $L$; 
this factor is then denoted%
\footnote{This is a slight abuse of notation, since the factor $\tr{\ell}$ 
          is not determined by $\ell$ alone, but requires also the knowledge of the loop $L$,
          which is usually clear from the context.} 
as $\tr{\ell}$ and called the \emph{trace of $\ell$}.
\end{defi}

Observe that a loop can have at most $\hmax = 2|Q|-1$ anchor points, since
we consider only normalized runs.

Given a loop $L$ of $\rho$ and a number $n\in\bbN$, we can replicate $n$ times 
the factor $u[x_1,x_2]$ of the input, obtaining a new input of the form
\begin{equation}\label{eq:pumped-word}
  \pump_L^{n+1}(u) ~=~ u[1,x_1]\cdot \big(u[x_1+1,x_2]\big)^{n+1} \cdot u[x_2+1,|u|].
\end{equation}
\reviewTwo[inline]{The notation $u[x_1,x_2]$ is not defined.\olivier[inline]{fixed}}
Similarly, we can replicate $n$ times the intercepted factors $\tr{\ell}$ of $\rho$, 
for all anchor points $\ell$ of $L$. In this way we obtain a successful run on $\pump_L^{n+1}(u)$ 
that is of the form
\begin{equation}\label{eq:pumped-run}
  \pump_L^{n+1}(\rho) ~=~ \rho_0 ~ \tr{\ell_1}^n ~ \rho_1 ~ \dots ~ \rho_{k-1} ~ \tr{\ell_k}^n ~ \rho_k
\end{equation}
where $\ell_1\leqtime\dots\leqtime\ell_k$ are all the anchor points in $L$ 
(listed according to the run order $\leqtime$), $\rho_0$ is the prefix of $\rho$ 
ending at $\ell_1$,
$\rho_k$ is the suffix of $\rho$ 
starting at $\ell_k$, and for all $i=1,\dots,k-1$, 
$\rho_i$ is the factor of $\rho$ between 
$\ell_i$ and $\ell_{i+1}$.
Note that $\pump_L^1(\rho)$ coincides with the original run $\rho$. As a matter of fact,
one could define in a similar way the run $\pump_L^0(\rho)$ obtained from removing the loop
$L$ from $\rho$. However, we do not need this, and we will always parametrize the operation 
$\pump_L$ by a positive number $n+1$.

An example of a pumped run $\pump_{L_1}^3(\rho)$ is given in Figure \ref{fig:pumping-sweeping}, 
together with the indication of the anchor points $\ell_i$ and the intercepted factors $\tr{\ell_i}$.

\begin{figure}[!t]
\centering
\begin{tikzpicture}[baseline=0, inner sep=0, outer sep=0, minimum size=0pt, xscale=0.4, yscale=0.75]
  \tikzstyle{dot} = [draw, circle, fill=white, minimum size=4pt]
  \tikzstyle{fulldot} = [draw, circle, fill=black, minimum size=4pt]
  \tikzstyle{factor} = [->, shorten >=1pt, dotted, rounded corners=10]
  \tikzstyle{fullfactor} = [->, >=stealth, shorten >=1pt, very thick, rounded corners=10]

\begin{scope}
  \fill [pattern=north east lines, pattern color=gray!25] (3,-1) rectangle (9,3);
  \draw [dashed, thin, gray] (3,-1) -- (3,3);
  \draw [dashed, thin, gray] (9,-1) -- (9,3);
  \draw [gray] (3,-1.25) -- (3,-1.5) -- (9,-1.5) -- (9,-1.25);
  \draw [gray] (6,-1.75) node [below] {\footnotesize $L$};

  \draw (0,0) node (node0) {};
  \draw (3,0) node [dot, minimum size=13pt] (node1) {$\ell_1$};
  \draw (9,0) node (node2) {};
  \draw (11,0) node (node3) {};

  \draw (11,1) node (node4) {};
  \draw (9,1) node [dot, minimum size=13pt] (node5) {$\ell_2$};
  \draw (3,1) node (node6) {};
  \draw (1,1) node (node7) {};

  \draw (1,2) node (node8) {};
  \draw (3,2) node [dot, minimum size=13pt] (node9) {$\ell_3$};
  \draw (9,2) node (node10) {};
  \draw (12,2) node (node11) {};

  \draw [factor] (node0) -- (node1);
  \draw [fullfactor] (node1) edge 
        node [midway, minimum size=10pt, rectangle, draw, fill=white, rounded corners=0, inner sep=1pt] 
             {\small $~\tr{\ell_1}~$} (node2);
  \draw [factor] (node2) -- (node3.center) -- (node4.center) -- (node5); 
  \draw [fullfactor] (node5) edge 
        node [midway, minimum size=10pt, rectangle, draw, fill=white, rounded corners=0, inner sep=1pt] 
             {\small $~\tr{\ell_2}~$} (node6);
  \draw [factor] (node6) -- (node7.center) -- (node8.center) -- (node9); 
  \draw [fullfactor] (node9) edge 
        node [midway, minimum size=10pt, rectangle, draw, fill=white, rounded corners=0, inner sep=1pt] 
             {\small $~\tr{\ell_3}~$} (node10);
  \draw [factor] (node10) -- (node11); 
 
\end{scope}

\begin{scope}[xshift=14cm]
  \fill [pattern=north east lines, pattern color=gray!25] (3,-1) rectangle (21,3);
  \draw [dashed, thin, gray] (3,-1) -- (3,3);
  \draw [dashed, thin, gray] (9,-1) -- (9,3);
  \draw [dashed, thin, gray] (15,-1) -- (15,3);
  \draw [dashed, thin, gray] (21,-1) -- (21,3);
  \draw [gray] (3,-1.25) -- (3,-1.5) -- (9,-1.5) -- (9,-1.25);
  \draw [gray] (9,-1.25) -- (9,-1.5) -- (15,-1.5) -- (15,-1.25);
  \draw [gray] (15,-1.25) -- (15,-1.5) -- (21,-1.5) -- (21,-1.25);
  \draw [gray] (6,-1.75) node [below] {\footnotesize 1st copy of $L$};
  \draw [gray] (12,-1.75) node [below] {\footnotesize 2nd copy of $L$};
  \draw [gray] (18,-1.75) node [below] {\footnotesize 3rd copy of $L$};

  \draw (0,0) node (node0) {};
  \draw (3,0) node [dot, minimum size=13pt] (node1) {$\ell_1$};
  \draw (9,0) node (node2) {};
  \draw (15,0) node (node3) {};
  \draw (21,0) node (node4) {};
  \draw (23,0) node (node5) {};

  \draw (23,1) node (node6) {};
  \draw (21,1) node [dot, minimum size=13pt] (node7) {$\ell_2$};
  \draw (15,1) node (node8) {};
  \draw (9,1) node (node9) {};
  \draw (3,1) node (node10) {};
  \draw (1,1) node (node11) {};

  \draw (1,2) node (node12) {};
  \draw (3,2) node [dot, minimum size=13pt] (node13) {$\ell_3$};
  \draw (9,2) node (node14) {};
  \draw (15,2) node (node15) {};
  \draw (21,2) node (node16) {};
  \draw (24,2) node (node17) {};

  \draw [factor] (node0) -- (node1);
  \draw [fullfactor] (node1) edge 
        node [midway, minimum size=10pt, rectangle, draw, fill=white, rounded corners=0, inner sep=1pt] 
             {\small $~\tr{\ell_1}~$} (node2);
  \draw [fullfactor] (node2) edge 
        node [midway, minimum size=10pt, rectangle, draw, fill=white, rounded corners=0, inner sep=1pt] 
             {\small $~\tr{\ell_1}~$} (node3);
  \draw [fullfactor] (node3) edge 
        node [midway, minimum size=10pt, rectangle, draw, fill=white, rounded corners=0, inner sep=1pt] 
             {\small $~\tr{\ell_1}~$} (node4);
  \draw [factor] (node4) -- (node5.center) -- (node6.center) -- (node7); 
  \draw [fullfactor] (node7) edge 
        node [midway, minimum size=10pt, rectangle, draw, fill=white, rounded corners=0, inner sep=1pt] 
             {\small $~\tr{\ell_2}~$} (node8);
  \draw [fullfactor] (node8) edge 
        node [midway, minimum size=10pt, rectangle, draw, fill=white, rounded corners=0, inner sep=1pt] 
             {\small $~\tr{\ell_2}~$} (node9);
  \draw [fullfactor] (node9) edge 
        node [midway, minimum size=10pt, rectangle, draw, fill=white, rounded corners=0, inner sep=1pt] 
             {\small $~\tr{\ell_2}~$} (node10);
  \draw [factor] (node10) -- (node11.center) -- (node12.center) -- (node13); 
  \draw [fullfactor] (node13) edge 
        node [midway, minimum size=10pt, rectangle, draw, fill=white, rounded corners=0, inner sep=1pt] 
             {\small $~\tr{\ell_3}~$} (node14);
  \draw [fullfactor] (node14) edge 
        node [midway, minimum size=10pt, rectangle, draw, fill=white, rounded corners=0, inner sep=1pt] 
             {\small $~\tr{\ell_3}~$} (node15);
  \draw [fullfactor] (node15) edge 
        node [midway, minimum size=10pt, rectangle, draw, fill=white, rounded corners=0, inner sep=1pt] 
             {\small $~\tr{\ell_3}~$} (node16);
  \draw [factor] (node16) -- (node17); 
 
\end{scope}

\end{tikzpicture}
\caption{A loop $L$ with $3$ anchor points, and the result of pumping.}\label{fig:pumping-sweeping}
\end{figure}
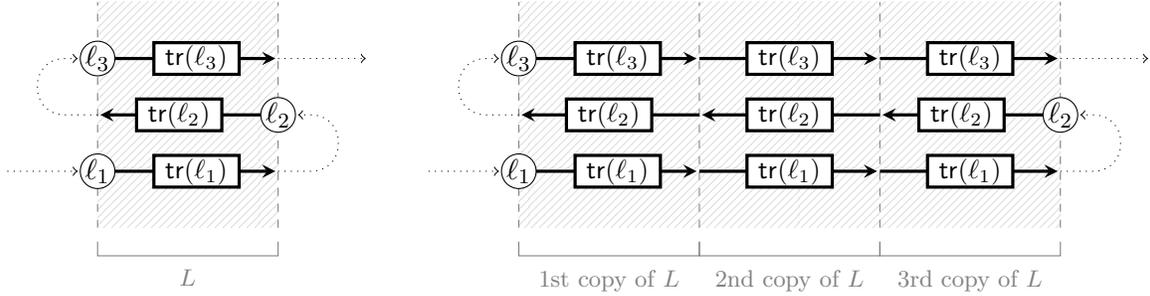

\medskip
\subsection*{Output minimality.}
We are interested into factors of the run $\rho$ that lie on a single level
and that contribute to the final output, but in a minimal way, in the sense 
that is formalized by the following definition:

\begin{defi}\label{def:output-minimal-sweeping}
Consider a factor $\a=\rho[\ell,\ell']$ of $\rho$.
We say that $\a$ is \emph{output-minimal} if 
 $\ell=(x,y)$ and $\ell'=(x',y)$, and all loops $L \subsetneq
 [x,x']$ 
produce empty output at level $y$.
\end{defi}

From now on, we set the constant $\bound = \cmax |Q|^\hmax +1$,
where $\cmax$ is the capacity of the transducer, that is, the 
maximal length of an output produced on a single transition (recall
that  $|Q|^\hmax$ is the maximal number of crossing sequences). 
As shown below, $\bound$ bounds the length of 
the output produced by an output-minimal factor:

\begin{lem}\label{lem:output-minimal-sweeping}
For all output-minimal factors $\alpha$, 
$|\out{\alpha}| \le \bound$.
\end{lem}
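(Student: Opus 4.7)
The plan is to bound the number of \emph{productive} transitions in $\alpha$ (those that output a non-empty word) via a pigeonhole argument on crossing sequences, and then multiply by the per-transition capacity $\cmax$.

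First, I would exploit the sweeping assumption: the factor $\alpha$, lying at a single level $y$, consists of transitions all moving in the same direction. Without loss of generality suppose $\alpha$ moves rightward, so $x<x'$ and the transitions of $\alpha$ originate at positions $x,x+1,\dots,x'-1$. Each transition outputs at most $\cmax$ letters, so letting $S\subseteq[x,x'-1]$ be the set of positions whose transition produces non-empty output, we immediately get $|\out{\alpha}|\le \cmax\cdot|S|$.

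Next I would show that $|S|\le|Q|^{\hmax}$. Suppose for contradiction that $|S|>|Q|^{\hmax}$. Since $\rho$ is normalized, every crossing sequence has length at most $\hmax$, so there are at most $|Q|^{\hmax}$ possible crossing sequences. By pigeonhole, there exist $p_1<p_2$ in $S$ with $\rho|p_1=\rho|p_2$; then $L=[p_1,p_2]$ is a loop of $\rho$, and since $p_1\ge x$ and $p_2\le x'-1<x'$ we have $L\subsetneq[x,x']$. Output-minimality of $\alpha$ then forces the trace of $L$ at level $y$ to produce the empty word. In the sweeping setting, this trace starts at the anchor point $(p_1,y)$ and ends at $(p_2,y)$, and it coincides with the sub-factor of $\alpha$ between those two locations, because at level $y$ the transducer moves only rightward and so the maximal factor intercepted by $L$ at level $y$ is forced to traverse $(p_1,y),(p_1{+}1,y),\dots,(p_2,y)$. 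But this sub-factor includes the productive transition at position $p_1$, whose output is non-empty, contradicting the empty-output claim.

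Combining the two steps, $|\out{\alpha}|\le\cmax\cdot|Q|^{\hmax}<\cmax\,|Q|^{\hmax}+1=\bound$, as required. The main point to handle carefully is the identification of the trace of the sub-loop $L$ at level $y$ with the sub-factor of $\alpha$ between positions $p_1$ and $p_2$: this is what lets output-minimality be invoked to yield the contradiction, and it is precisely where the sweeping discipline (one direction of travel per level) is essential; the corresponding step in the general two-way case will, unsurprisingly, be more delicate.
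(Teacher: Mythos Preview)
Your proof is correct and follows essentially the same approach as the paper: both count productive source positions, apply pigeonhole on the at most $|Q|^{\hmax}$ crossing sequences to find a loop $L\subsetneq[x,x']$ with non-empty output at level $y$, and contradict output-minimality. Your version is in fact slightly tidier, since by taking $S\subseteq[x,x'-1]$ you automatically get $L\subsetneq[x,x']$ without needing to exclude both endpoints as the paper does.
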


\begin{proof}
Suppose by contradiction that $|\out{\alpha}| > \bound$, with
$\a=\rho[\ell,\ell']$, $\ell=(x,y)$ and $\ell=(x',y)$.

Let $X$ be the set of all positions $x''$, with $\min(x,x') < x'' < \max(x,x')$,
that are sources of transitions of $\alpha$ that produce non-empty output. 
Clearly, the total number of letters produced by the transitions that depart
from locations in $X\times\{y\}$ is strictly larger than $\bound-1$.
Moreover, since each transition emits at most $\cmax$ symbols, we have $|X| > \frac{\bound-1}{\cmax} = |Q|^\hmax$.
Now, recall that crossing sequences are sequences of states of length at most $\hmax$.
Since $|X|$ is larger than the number of crossing sequences, $X$ contains two positions
$x_1<x_2$ such that $\rho|x_1=\rho|x_2$. In particular, $L=[x_1,x_2]$
is a loop strictly between  $x,x'$
with non-empty output on level $y$. 
This shows that $\rho[\ell,\ell']$ is not output-minimal.
\end{proof}

\medskip
\subsection*{Inversions and periodicity.}
Next, we define the crucial notion of inversion. Intuitively, an inversion
in a run identifies a part of the run that is potentially difficult to
simulate in a one-way manner because the order of generating the
output is reversed w.r.t.~the input. Inversions arise naturally in transducers 
that reverse arbitrarily long portions of the input, as well as in transducers
that produce copies of arbitrarily long portions of the input. 

\label{page-def-inversion}

\begin{defi}\label{def:inversion-sweeping}
An \emph{inversion} of the run $\rho$ is a tuple $(L_1,\ell_1,L_2,\ell_2)$ such that
\begin{enumerate}
\item $L_1,L_2$ are loops of $\r$,
  \item $\ell_1=(x_1,y_1)$ and $\ell_2=(x_2,y_2)$ 
        are anchor points of $L_1$ and $L_2$, respectively,
  \item $\ell_1 \lesstime \ell_2$ and $x_1 > x_2$ 
        \par\noindent
        (namely, $\ell_2$ follows $\ell_1$ in the run, 
         but the position of $\ell_2$ precedes the position of $\ell_1$),
  \item for both $i=1$ and $i=2$, $\out{\tr{\ell_i}}\neq\emptystr$ and $\tr{\ell_i}$ is output-minimal.
\end{enumerate}
\end{defi}

\noindent
The left hand-side of Figure~\ref{fig:inversion-sweeping} gives an example of an inversion, 
assuming that the outputs $v_1=\tr{\ell_1}$ and $v_2=\tr{\ell_2}$ are non-empty
and the intercepted factors are output-minimal.

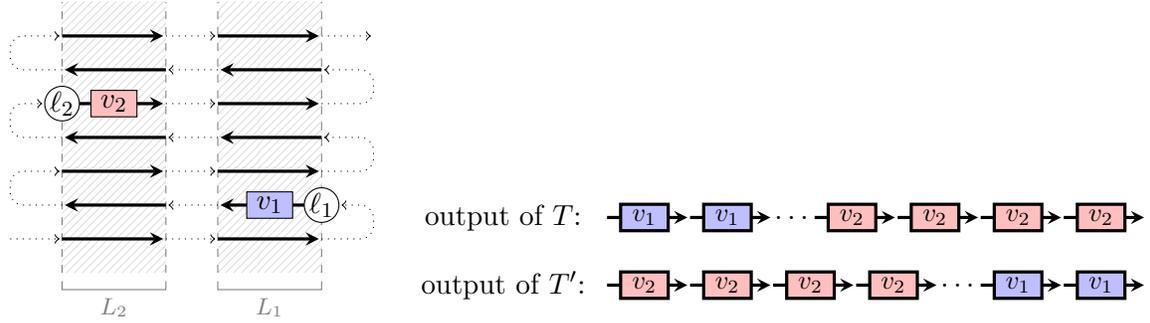
\begin{figure}[!t]
\centering
\begin{tikzpicture}[baseline=0, inner sep=0, outer sep=0, minimum size=0pt, xscale=0.345, yscale=0.45]
  \tikzstyle{dot} = [draw, circle, fill=white, minimum size=4pt]
  \tikzstyle{fulldot} = [draw, circle, fill=black, minimum size=4pt]
  \tikzstyle{factor} = [->, shorten >=1pt, dotted, rounded corners=5]
  \tikzstyle{fullfactor} = [->, >=stealth, shorten >=1pt, very thick, rounded corners=5]

\begin{scope}
  \fill [pattern=north east lines, pattern color=gray!25]
        (4,-1) rectangle (8,7);
  \fill [pattern=north east lines, pattern color=gray!25]
        (10,-1) rectangle (14,7);
  \draw [dashed, thin, gray] (4,-1) -- (4,7);
  \draw [dashed, thin, gray] (8,-1) -- (8,7);
  \draw [dashed, thin, gray] (10,-1) -- (10,7);
  \draw [dashed, thin, gray] (14,-1) -- (14,7);
  \draw [gray] (4,-1.25) -- (4,-1.5) -- (8,-1.5) -- (8,-1.25);
  \draw [gray] (10,-1.25) -- (10,-1.5) -- (14,-1.5) -- (14,-1.25);
  \draw [gray] (6,-1.75) node [below] {\footnotesize $L_2$};
  \draw [gray] (12,-1.75) node [below] {\footnotesize $L_1$};

  \draw (2,0) node (node0) {};
  \draw (4,0) node (node1) {};
  \draw (8,0) node (node2) {};
  \draw (10,0) node (node3) {};
  \draw (14,0) node (node4) {};
  \draw (16,0) node (node5) {};
  \draw (16,1) node (node6) {};
  \draw (14,1) node [dot, minimum size=13pt] (node7) {$\ell_1$};
  \draw (10,1) node (node8) {};
  \draw (8,1) node (node9) {};
  \draw (4,1) node (node10) {};
  \draw (2,1) node (node11) {};
  \draw (2,2) node (node12) {};
  \draw (4,2) node (node13) {};

  \draw (8,2) node (node14) {};
  \draw (10,2) node (node15) {};
  \draw (14,2) node (node16) {};
  \draw (16,2) node (node17) {};
  \draw (16,3) node (node18) {};
  \draw (14,3) node (node19) {};
  \draw (10,3) node (node20) {};
  \draw (8,3) node (node21) {};
  \draw (4,3) node (node22) {};
  \draw (2,3) node (node23) {};
  \draw (2,4) node (node24) {};
  \draw (4,4) node [dot, minimum size=13pt] (node25) {$\ell_2$};

  \draw (8,4) node  (node26) {};
  \draw (10,4) node (node27) {};
  \draw (14,4) node (node28) {};
  \draw (16,4) node (node29) {};
  \draw (16,5) node (node30) {};
  \draw (14,5) node (node31) {};
  \draw (10,5) node (node32) {};
  \draw (8,5) node (node33) {};
  \draw (4,5) node (node34) {};
  \draw (2,5) node (node35) {};
  \draw (2,6) node (node36) {};
  \draw (4,6) node (node37) {};
  
  \draw (8,6) node (node38) {};
  \draw (10,6) node (node39) {};
  \draw (14,6) node (node40) {};
  \draw (16,6) node (node41) {};

  \draw [factor] (node0) -- (node1);
  \draw [fullfactor] (node1) -- (node2);
  \draw [factor] (node2) -- (node3);
  \draw [fullfactor] (node3) -- (node4);
  \draw [factor] (node4) -- (node5.center) -- (node6.center) -- (node7); 
  \draw [fullfactor] (node7) -- (node8);
  \draw [factor] (node8) -- (node9);
  \draw [fullfactor] (node9) -- (node10);
  \draw [factor] (node10) -- (node11.center) -- (node12.center) -- (node13); 

  \draw [fullfactor] (node13) -- (node14);
  \draw [factor] (node14) -- (node15);
  \draw [fullfactor] (node15) -- (node16);
  \draw [factor] (node16) -- (node17.center) -- (node18.center) -- (node19); 
  \draw [fullfactor] (node19) -- (node20);
  \draw [factor] (node20) -- (node21);
  \draw [fullfactor] (node21) -- (node22);
  \draw [factor] (node22) -- (node23.center) -- (node24.center) -- (node25); 

  \draw [fullfactor] (node25) -- (node26);
  \draw [factor] (node26) -- (node27);
  \draw [fullfactor] (node27) -- (node28);
  \draw [factor] (node28) -- (node29.center) -- (node30.center) -- (node31); 
  \draw [fullfactor] (node31) -- (node32);
  \draw [factor] (node32) -- (node33);
  \draw [fullfactor] (node33) -- (node34);
  \draw [factor] (node34) -- (node35.center) -- (node36.center) -- (node37); 

  \draw [fullfactor] (node37) -- (node38);
  \draw [factor] (node38) -- (node39);
  \draw [fullfactor] (node39) -- (node40);
  \draw [factor] (node40) -- (node41);

   \draw (6,4) node [dot, minimum size=10pt, rectangle, fill=red!25] {$~v_2~$};
    \draw (12,1) node [dot, minimum size=10pt, rectangle, fill=blue!25] {$~v_1~$};
\end{scope}
\begin{scope}[xshift=25cm,yshift=18,scale=0.4]
  \draw (-10,0) node {output of $\cT$:};
  
  \draw (0,0) edge [fullfactor] 
        node [pos=0.45, minimum size=10pt, rectangle, draw, fill=white, rounded corners=0, inner sep=1pt, fill=blue!25] 
             {\small $~v_1~$} (8,0);
  \draw (8,0) edge [fullfactor] 
        node [pos=0.45, minimum size=10pt, rectangle, draw, fill=white, rounded corners=0, inner sep=1pt, fill=blue!25] 
             {\small $~v_1~$} (16,0);
  \draw (18,0) node {$\dots$};
  \draw (20,0) edge [fullfactor] 
        node [pos=0.45, minimum size=10pt, rectangle, draw, fill=white, rounded corners=0, inner sep=1pt, fill=red!25] 
             {\small $~v_2~$} (28,0);
  \draw (28,0) edge [fullfactor] 
        node [pos=0.45, minimum size=10pt, rectangle, draw, fill=white, rounded corners=0, inner sep=1pt, fill=red!25] 
             {\small $~v_2~$} (36,0);
  \draw (36,0) edge [fullfactor] 
        node [pos=0.45, minimum size=10pt, rectangle, draw, fill=white, rounded corners=0, inner sep=1pt, fill=red!25] 
             {\small $~v_2~$} (44,0);
  \draw (44,0) edge [fullfactor] 
        node [pos=0.45, minimum size=10pt, rectangle, draw, fill=white, rounded corners=0, inner sep=1pt, fill=red!25] 
             {\small $~v_2~$} (52,0);

  \draw (-10,-5) node {output of $\cT'$:};
  
  \draw (0,-5) edge [fullfactor] 
        node [pos=0.45, minimum size=10pt, rectangle, draw, fill=white, rounded corners=0, inner sep=1pt, fill=red!25] 
             {\small $~v_2~$} (8,-5);
  \draw (8,-5) edge [fullfactor] 
        node [pos=0.45, minimum size=10pt, rectangle, draw, fill=white, rounded corners=0, inner sep=1pt, fill=red!25] 
             {\small $~v_2~$} (16,-5);
  \draw (16,-5) edge [fullfactor] 
        node [pos=0.45, minimum size=10pt, rectangle, draw, fill=white, rounded corners=0, inner sep=1pt, fill=red!25] 
             {\small $~v_2~$} (24,-5);
  \draw (24,-5) edge [fullfactor] 
        node [pos=0.45, minimum size=10pt, rectangle, draw, fill=white, rounded corners=0, inner sep=1pt, fill=red!25] 
             {\small $~v_2~$} (32,-5);
  \draw (34,-5) node {$\dots$};
  \draw (36,-5) edge [fullfactor] 
        node [pos=0.45, minimum size=10pt, rectangle, draw, fill=white, rounded corners=0, inner sep=1pt, fill=blue!25] 
             {\small $~v_1~$} (44,-5);
  \draw (44,-5) edge [fullfactor] 
        node [pos=0.45, minimum size=10pt, rectangle, draw, fill=white, rounded corners=0, inner sep=1pt, fill=blue!25] 
             {\small $~v_1~$} (52,-5);
\end{scope}

\end{tikzpicture}
\caption{An inversion and the effect of pumping in an equivalent one-way transducer $\cT'$.}
\label{fig:inversion-sweeping}
\end{figure}

The rest of the section is devoted to prove the implication \PR1 $\Rightarrow$ \PR2
of Theorem \ref{thm:main2}.
We recall that a word $w=a_1 \cdots a_n$ has \emph{period} $p$ if for every $1\le i\le |w|-p$, 
we have $a_i = a_{i+p}$. For example, the word $abc \, abc \, ab$ has period $3$.

We remark that, thanks to Lemma \ref{lem:output-minimal-sweeping}, 
for every inversion $(L_1,\ell_1,L_2,\ell_2)$, the outputs 
$\out{\tr{\ell_1}}$ and $\out{\tr{\ell_2}}$ have length at most $\bound$.
By pairing this with the assumption that the transducer $\cT$ is one-way definable,
and by using some classical word combinatorics, we show that the output 
produced between the anchor points of every inversion has period that divides
the lengths of $\out{\tr{\ell_1}}$ and $\out{\tr{\ell_2}}$. In particular, this
period is at most $\bound$.
The proposition below shows a slightly stronger periodicity property, 
which refers to the output produced between the anchor points $\ell_1,\ell_2$
of an inversion, but extended on both sides with the words $\out{\tr{\ell_1}}$ and $\out{\tr{\ell_2}}$.
We will exploit this stronger periodicity property later, when dealing with 
overlapping portions of the run delimited by different inversions 
(cf.~Lemma \ref{lem:overlapping}).

\begin{prop}\label{prop:periodicity-sweeping}
If $\cT$ is one-way definable, then the following property \PR2 holds:
\begin{quote}
  For all inversions
$(L_1,\ell_1,L_2,\ell_2)$ of $\rho$, the period $p$ of the word 
\[
  \out{\tr{\ell_1}} ~ \out{\rho[\ell_1,\ell_2]} ~ \out{\tr{\ell_2}}
\]
divides both $|\out{\tr{\ell_1}}|$ and
$|\out{\tr{\ell_2}}|$. Moreover, $p \le \bound$.
\end{quote}
\end{prop}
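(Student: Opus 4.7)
The plan is to leverage the assumed equivalent one-way transducer $\cT'$ together with pumping arguments to extract a combinatorial periodicity via a Fine-Wilf--style argument. After reducing to the case where $L_1$ and $L_2$ are disjoint (say $L_2 = [p_2,q_2]$ lies strictly to the left of $L_1 = [p_1,q_1]$), I would consider the doubly pumped input $u_{n,m} = \pump_{L_1}^{m+1}\bigl(\pump_{L_2}^{n+1}(u)\bigr)$ and the corresponding successful run $\rho_{n,m}$ of $\cT$. Writing $v_i = \out{\tr{\ell_i}}$, since $\ell_1 \lesstime \ell_2$, the output produced by $\cT$ on $u_{n,m}$ contains, as contiguous blocks in this order, $v_1^{m+1}$ (the pumped contribution of $\ell_1$) followed later by $v_2^{n+1}$ (the pumped contribution of $\ell_2$), so that the output factors as $W_{n,m} = A_{n,m}\cdot v_1^{m+1}\cdot B_{n,m}\cdot v_2^{n+1}\cdot C_{n,m}$. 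By Lemma~\ref{lem:output-minimal-sweeping} and output-minimality of $\tr{\ell_1},\tr{\ell_2}$, one already has $|v_1|,|v_2|\le \bound$.

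Next, I would analyze $\cT'$ on the same input $u_{n,m}$. Because $\cT'$ reads left-to-right and $L_2$ sits strictly to the left of $L_1$, $\cT'$ crosses the pumped $L_2$-region before the pumped $L_1$-region. A standard pigeonhole argument on the states of $\cT'$ applied to the $n+1$ copies of $u[p_2+1,q_2]$ shows that, for $n$ larger than the state count of $\cT'$, the output produced by $\cT'$ while reading that region has the form $P_2\cdot R_2^{\,k}\cdot S_2$ with $|P_2|,|S_2|$ bounded independently of $n,m$, and with $|R_2|$ a constant that, by comparing total output lengths, must equal $|v_2|$. A symmetric statement gives a word $R_1$ of length $|v_1|$ for the pumped $L_1$-region.

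I would then match the two descriptions of the output. The $v_1^{m+1}$ block lies at a position depending only on $m$, while the $R_2^{\,k}$ region of $\cT'$'s output starts at a position of bounded size. Hence, for $n$ and $m$ sufficiently large, the $R_2^{\,k}$ region strictly contains an arbitrarily long portion of $v_1^{m+1}$. On this overlap the word $W_{n,m}$ admits both period $|v_1|$ (from the factor $v_1^{m+1}$) and period $|v_2|$ (from the factor $R_2^{\,k}$); once the overlap has length at least $|v_1|+|v_2|$, the Fine-Wilf theorem yields a period $p$ dividing $\gcd(|v_1|,|v_2|)$. Performing the symmetric analysis at the right end, where the $R_1^{\,k}$ region of $\cT'$ overlaps $v_2^{n+1}$, produces the same common period, and chaining the two periodic regions through the bounded middle factor $B_{n,m}$ forces the entire word $v_1^{m+1}\cdot B_{n,m}\cdot v_2^{n+1}$ to be $p$-periodic.

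Finally, since $p$ divides both $|v_1|$ and $|v_2|$, removing the $m$ and $n$ extra copies of the loops preserves the period; the resulting word is $\out{\tr{\ell_1}}\cdot\out{\rho[\ell_1,\ell_2]}\cdot\out{\tr{\ell_2}}$, and the bound $p\le \bound$ follows at once from $p\le |v_2|\le \bound$. The hardest part will be handling the case where $L_1$ or $L_2$ possess additional anchor points beyond $\ell_1,\ell_2$, so that pumping inserts further blocks in the output which interleave with $v_1^{m+1}$ or $v_2^{n+1}$; those extra blocks have to be accounted for when identifying the segment of $W_{n,m}$ emitted by $\cT'$ over each pumped region. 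A secondary technicality is to choose $m,n$ large enough---polynomial in $|Q|$ and in the state count of $\cT'$---to guarantee that the Fine-Wilf overlap exceeds $|v_1|+|v_2|$.
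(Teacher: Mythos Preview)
Your overall strategy---pump both loops, compare the two-way output against the one-way output, and extract periodicity via Fine--Wilf---is the same as the paper's. However, the execution has a real gap at the step you yourself flag as ``the hardest part''.

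The claim that $|R_2|=|v_2|$ is not correct in general. A sweeping loop $L_2$ has one anchor point \emph{per level}, and pumping $L_2$ replicates the traces at \emph{all} of them, not just at $\ell_2$. Hence the growth of $|W_{n,m}|$ in $n$ equals $\sum_{\ell}|\out{\tr{\ell}}|$ summed over all anchor points $\ell$ of $L_2$, and the period $|R_2|$ you find on the $\cT'$ side (possibly after multiplying by the number of $L_2$-copies spanned by one $\cT'$-loop) matches this sum, not $|v_2|$. The same remark applies to $R_1$ and $|v_1|$.

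This also breaks your overlap argument. The prefix $A_{n,m}$ preceding the block $v_1^{m+1}$ generally contains pumped traces from \emph{earlier} anchor points of both $L_1$ and $L_2$, so $|A_{n,m}|$ depends on both $n$ and $m$. Likewise $B_{n,m}$ and $C_{n,m}$ carry such dependencies. It is therefore not clear that the $R_2^{\,k}$ region of the $\cT'$-output overlaps $v_1^{m+1}$ at all, let alone by at least $|v_1|+|v_2|$; the positions of both regions drift with $n$ and $m$ in ways you have not controlled.

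The paper handles exactly this difficulty by a different mechanism. It first makes $\cT'$ unambiguous, then finds loops $L'_1,L'_2$ (each a union of several copies of $L_1,L_2$) that are simultaneously loops of the $\cT$-run and of the $\cT'$-run. Pumping these common loops yields two explicit expressions for the same output word: one from $\cT'$ of the shape $w_0\,w_1^{n_2}\,w_2\,w_3^{n_1}\,w_4$, and one from $\cT$ of the shape $v_0^{(n_1,n_2)}\,v_1^{n_1}\,v_2^{(n_1,n_2)}\,v_3^{n_2}\,v_4^{(n_1,n_2)}$, where the notation $v^{(n_1,n_2)}$ absorbs precisely the contributions of the other anchor points that you were worried about. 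A dedicated combinatorial lemma (Lemma~\ref{lem:oneway-vs-twoway}) then performs the Fine--Wilf argument on this word equation and yields period $\gcd(|v_1|,|v_3|)$; finally Lemma~\ref{lem:periods} is used to transfer the periodicity from large parameter values back to $n_1=n_2=1$. Your proposal would need an analogue of this bookkeeping to go through.
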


The above proposition 
thus formalizes the implication \PR1 $\Rightarrow$ \PR2 of
Theorem \ref{thm:main2}.
Its proof relies on a few combinatorial results.
The first one is Fine and Wilf's theorem~\cite{Lothaire97}.
In short, this theorem says that, whenever two periodic
words $w_1,w_2$ share a sufficiently long factor, then they 
have the same period.
Here we use a slightly stronger variant of Fine and Wilf's theorem,
which additionally shows how to align a common factor of the two words $w_1,w_2$ 
so as to form a third word containing a prefix of $w_1$ and a suffix of $w_2$. 
This variant of Fine-Wilf's theorem will be particularly useful in the proof of Lemma~\ref{lem:overlapping}, while for all other applications the classical
statement suffices.
\olivier{This last sentence is outdated, right?}%
\gabriele{This refers to the variant of Fine-Wilf, which still used in the referenced lemma.
          But I rephrased so as to make it clear. Let me know if it makes more sense now.}%

\begin{thm}[Fine-Wilf's theorem]\label{thm:fine-wilf}
If $w_1 = w'_1\,w\:w''_1$ has period $p_1$, 
$w_2 = w'_2\,w\,w''_2$ has  period $p_2$, and 
the common factor $w$ has length at least $p_1+p_2-\gcd(p_1,p_2)$, 
then $w_1$, $w_2$, and $w_3 = w'_1\,w\,w''_2$ have period $\gcd(p_1,p_2)$.
\end{thm}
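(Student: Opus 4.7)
The plan is to reduce the variant to the classical Fine--Wilf theorem (applied to the common factor $w$), and then use two extension steps: one that lifts a period from a long factor to the whole word, and one that glues two words sharing a sufficiently long factor.

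\medskip\noindent
\textbf{Step 1: period on the common factor.} Since $w$ is a factor of $w_1$, which has period $p_1$, the word $w$ itself has period $p_1$; similarly $w$ has period $p_2$. The hypothesis $|w| \ge p_1 + p_2 - \gcd(p_1,p_2)$ places us exactly in the regime of the classical Fine--Wilf theorem, which yields that $w$ has period $d = \gcd(p_1,p_2)$.

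\medskip\noindent
\textbf{Step 2: extend the period from $w$ to $w_1$ and $w_2$.} I claim that $w_1$ has period $d$; the argument for $w_2$ is symmetric. Since $d \mid p_2$, we have $|w| \ge p_1 + p_2 - d \ge p_1$, so we may pick a window of $p_1$ consecutive positions inside the occurrence of $w$ in $w_1$, say $w_1[b{+}1..b{+}p_1]$. Every character of $w_1$ is determined, via the period $p_1$, by its representative in this window. To show period $d$ on $w_1$, it suffices to verify, for every $1 \le j \le p_1$, that $w_1[b{+}j]$ equals the character at position $b + ((j-1+d) \bmod p_1) + 1$. When $j + d \le p_1$ this is immediate from the period $d$ of $w$. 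Otherwise $j + d > p_1$, and the target position is $b + j + d - p_1$, still inside $w$; since $d \mid p_1$, the shift $p_1 - d$ is a multiple of $d$, so iterating the period $d$ on $w$ (which is legal because both positions lie in $w$ and $|w| \ge p_1$) yields the required equality.

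\medskip\noindent
\textbf{Step 3: assemble $w_3$.} We now know that $w_1 = w'_1\,w\,w''_1$ and $w_2 = w'_2\,w\,w''_2$ both have period $d$, hence so do the prefix $w'_1\,w$ of $w_1$ and the suffix $w\,w''_2$ of $w_2$. Consider $w_3 = w'_1\,w\,w''_2$ and any position $i$ with $i + d \le |w_3|$. If $i + d$ lies inside $w'_1\,w$, the periodicity of $w'_1\,w$ gives $w_3[i] = w_3[i+d]$. Otherwise $i + d$ lies in $w''_2$; then $i$ lies in $w\,w''_2$ (this uses $|w| \ge d$, which follows from $|w| \ge p_1 \ge d$), and the periodicity of $w\,w''_2$ gives the equality. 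Hence $w_3$ has period $d$, completing the proof.

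\medskip\noindent
The only step with any real content is Step~2; it is a standard ``extension'' lemma for periods, and the rest is a clean reduction to the classical Fine--Wilf statement plus an overlap argument for the gluing construction of $w_3$. The only point demanding care is to ensure that the inequalities $|w| \ge p_1$ and $|w| \ge d$ (both derived from the hypothesis $|w| \ge p_1 + p_2 - d$ together with $d \mid p_1$, $d \mid p_2$) are actually available wherever used.
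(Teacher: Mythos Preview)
The paper does not prove this theorem: it is stated as a cited result (attributed to Lothaire) and used as a black box throughout, so there is no ``paper's own proof'' to compare against.

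Your argument is correct. Step~1 is the classical Fine--Wilf applied to $w$. Step~2 is the standard extension lemma: since $|w|\ge p_1$, the word $w_1$ is determined by any length-$p_1$ window, and the period $d$ of $w$ (together with $d\mid p_1$) propagates to all of $w_1$; the case split on $j+d\le p_1$ versus $j+d>p_1$ is handled cleanly, and the chain of $d$-steps in the second case stays inside the window. Step~3 is a routine overlap argument, and you correctly note that $|w|\ge d$ is what makes the two cases exhaustive. The closing paragraph rightly isolates Step~2 as the only place where something non-trivial happens.

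One small presentational remark: in Step~2 you implicitly use that $|w_1|\ge p_1$ (so that characters of $w_1$ are determined by residues mod $p_1$); this follows from $|w_1|\ge |w|\ge p_1$, but it would not hurt to say so explicitly.
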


The second combinatorial result required in our proof concerns periods of words
with iterated factors, like those that arise from considering outputs of pumped runs,
and it is formalized precisely by the lemma below.
To improve readability, we often highlight the 
important iterations of factors inside a word.

%

\begin{lem}\label{lem:periods}
Assume that $v_0 \: \pmb{v_1^n} \: v_2 \: \cdots \: v_{k-1} \: \pmb{v_k^n} \: v_{k+1}$ 
has period $p$ for some $n >p$. 
Then  $v_0 \: \pmb{{v_1}^{n_1}} \: v_2 \: \cdots \: v_{k-1} \: \pmb{{v_k}^{n_k}} \: v_{k+1}$  
has period $p$ for all $n_1,\ldots,n_k \in \Nat$.
\end{lem}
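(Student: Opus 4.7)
My plan is to reduce to the case of a single iterated block and then combine Theorem~\ref{thm:fine-wilf} with an infinite-periodic-extension argument. Write $W_n := v_0 \, v_1^n \, v_2 \cdots v_{k-1}\, v_k^n \, v_{k+1}$ and assume it has period $p$ with $n > p$. I will change the exponents one block at a time, first $v_1^n \mapsto v_1^{n_1}$, then $v_2^n \mapsto v_2^{n_2}$, and so on. After each such step, the remaining unchanged blocks still carry exponent $n > p$, so the whole statement reduces to the following single-block claim: if $W = A \, v^n \, B$ has period $p$ and $n > p$, then $A \, v^m \, B$ has period $p$ for every $m \in \Nat$.

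For the single-block claim, I first apply Theorem~\ref{thm:fine-wilf} to the factor $v^n$: it has period $|v|$ by construction and inherits period $p$ from $W$. Since $n > p$ and $|v| \ge 1$, we have $n|v| \ge p + |v|$, which is amply enough for Fine--Wilf, and we conclude that $v^n$ has the shorter period $p' := \gcd(p, |v|)$. Because $p'$ divides $|v|$, the word $v$ itself has period $p'$; therefore $v^m$ has period $p'$ for every $m \in \Nat$.

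Now let $V$ denote the length-$p$ prefix of $W$ and consider the periodic infinite word $V^\omega$. Since $|W| > p$, the assumption that $W$ has period $p$ implies that $W$ is a prefix of $V^\omega$. The same Fine--Wilf argument applied to $v^n$ inside $W$ forces the minimal period of $V$ to divide $p'$; in particular, $V^\omega$ is invariant under every shift by a multiple of $|v|$. A position-by-position check then shows that $A \, v^m \, B$ is also a prefix of the same $V^\omega$: the prefix $A$ is unchanged; positions inside the modified block $v^m$ match $V^\omega$ because $v$ has period $p' \mid p$; and the positions of the trailing $B$ are shifted by $(m-n)|v|$ relative to their positions in $W$, a multiple of $|v|$ under which $V^\omega$ is invariant. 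Hence $A \, v^m \, B$ has period $p$, and iterating this single-block argument over the $k$ blocks yields the full statement.

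The main technical point is establishing the shift-invariance of $V^\omega$ by $|v|$, which relies on correctly combining the two coexisting periods $p$ and $|v|$ on the long factor $v^n$. Degenerate situations---an empty $v_i$, $n_i = 0$ which removes a block entirely, or intermediate words of length at most $p$---are absorbed by the same periodic-extension viewpoint, since any insertion or removal of copies of $v$ always corresponds to a shift by a multiple of $|v|$.
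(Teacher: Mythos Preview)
Your proof is correct and takes essentially the same route as the paper: reduce to a single block, apply Theorem~\ref{thm:fine-wilf} to obtain that the whole word $W$ has period $p'=\gcd(p,|v|)$, and then use $p'\mid|v|$ to conclude that inserting or removing copies of $v$ preserves the period. The paper states this last step in one line (``since $|v_i|$ is a multiple of $r$, changing the number of repetitions of $v_i$ does not affect the period $r$''), whereas you unfold it via the infinite extension $V^\omega$; this is the same argument made explicit. One small slip in wording: ``the minimal period of $V$ divides $p'$'' need not hold in general --- what you actually need, and what the strong Fine--Wilf applied to $W$ and $v^n$ gives you, is simply that $p'$ is a period of $V$, so $V=U^{p/p'}$ and hence $V^\omega$ is shift-invariant by any multiple of $|v|$.
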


\begin{proof}
Assume that 
$w=v_0 \: \pmb{v_1^n} \: v_2 \: \cdots \: v_{k-1} \: \pmb{v_k^n} \: v_{k+1}$ 
has period $p$, and that $n >p$. 
Consider an arbitrary factor $v_i^p$ of $w$. Since $v_i^p$ has periods $p$ and $|v_i|$, 
it has also period $r=\gcd(p,|v_i|)$.
By Fine-Wilf (Theorem \ref{thm:fine-wilf}), 
we know that $w$ has period $r$ as well.
Moreover, since the length of $v_i$ is multiple of $r$,
changing the number of repetitions of $v_i$ inside $w$ does
not affect the period $r$ of $w$. 
Since $v_i$ was chosen arbitrarily, this means that, for all $n_1,\dots,n_k\in\bbN$,
$v_0 \: \pmb{{v_1}^{n_1}} \: v_2 \: \cdots \: v_{k-1} \: \pmb{{v_k}^{n_k}} \: v_{k+1}$    
has period $r$, and hence period $p$ as well.
\end{proof}

Recall that our goal is to show that the output produced amid every 
inversion has period bounded by $\bound$.
The general idea is to pump the loops of the inversion and compare the outputs 
of the two-way transducer $\cT$ with those of an equivalent one-way 
transducer $\cT'$. 
The comparison leads to an equation between words with 
iterated factors, where the iterations are parametrized by two unknowns 
$n_1,n_2$ that occur in opposite order in the left, respectively right 
hand-side of the equation. 
Our third and last combinatorial result considers a word equation of 
this precise form, and derives from it a periodicity property.
\reviewOne[inline]{In Corollary 4.8, the definition of $v^{(n_1,n_2)}$ is hard to process, especially when introduced in the middle of a proof. The notion is used later in the part: it warrants a proper definition and/or clear example either in the preliminaries or at the beginning of Part 4.
  \felix[inline]{I don't understand this remark}
  \olivier[inline]{answered in review1-answers.txt: the best place is here.}
}
\gabriele[inline]{This reviewer commend has led to a number of corrections, and the removal of Saarela tool}
For the sake of brevity, we use the notation $v^{(n_1,n_2)}$
to represent words with factors iterated $n_1$ or $n_2$ times,
namely, words of the form 
$v_0 \: v_1^{n_{i_1}} \: v_2 \: \cdots \: v_{k-1} \: v_k^{n_{i_k}} \: v_{k+1}$,
where the $v_0,v_1,v_2,\dots,v_{k-1},v_k,v_{k+1}$ are fixed words (possibly empty)
and each index among $i_1,\dots,i_k$ is either $1$ or $2$.

\begin{lem}\label{lem:oneway-vs-twoway}
Consider a word equation of the form
\[
  v_0^{(n_1,n_2)} \: \pmb{v_1^{n_1}} \: v_2^{(n_1,n_2)} \: \pmb{v_3^{n_2}} \: v_4^{(n_1,n_2)}
  ~=~
  w_0 \: \pmb{w_1^{n_2}} \: w_2 \: \pmb{w_3^{n_1}} \: w_4
\]
where $n_1,n_2$ are the unknowns and $v_1,v_3$ are non-empty words.
If the above equation holds for all $n_1,n_2\in\bbN$, 
then 
\[
  \pmb{v_1} ~ \pmb{v_1^{n_1}} ~ v_2^{(n_1,n_2)} ~ \pmb{v_3^{n_2}} ~ \pmb{v_3}
\]
has period $\gcd(|v_1|,|v_3|)$ for all $n_1,n_2\in\bbN$.
\end{lem}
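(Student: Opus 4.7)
The strategy is to use the assumption that the word equation holds \emph{for all} $n_1,n_2$ to extract two long periodic regions inside the LHS (one around the $v_1^{n_1}$ block, one around the $v_3^{n_2}$ block) via Fine--Wilf (Theorem~\ref{thm:fine-wilf}), and then to glue them on their common overlap.

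\textbf{Step 1: length bookkeeping.}
Each of $v_0^{(n_1,n_2)}, v_2^{(n_1,n_2)}, v_4^{(n_1,n_2)}$ has length that is a linear polynomial in $n_1,n_2$; let $c_j^{(i)}$ denote the $n_i$-coefficient of the length of $v_{2j}^{(n_1,n_2)}$. Matching $|\mathrm{LHS}|=|\mathrm{RHS}|$ as polynomials in $n_1,n_2$ yields
\[
 |w_3| \;=\; |v_1| + C_1 \quad\text{and}\quad |w_1| \;=\; |v_3| + C_2
\]
with $C_i = c_0^{(i)}+c_2^{(i)}+c_4^{(i)} \ge 0$. In particular $|v_1|\le|w_3|$ and $|v_3|\le|w_1|$.

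\textbf{Step 2: locate $v_1^{n_1}$ and $v_3^{n_2}$ inside the RHS.}
For large $n_1,n_2$, the starting position of $v_1^{n_1}$ on the LHS is $c_0^{(1)}n_1+c_0^{(2)}n_2+O(1)$, and it has length $|v_1|n_1$; the only RHS factors whose length grows with the parameters are $w_1^{n_2}$ and $w_3^{n_1}$. By choosing the ratio $n_1/n_2$ appropriately, the block $v_1^{n_1}$ lies entirely inside one of $w_1^{n_2}$ or $w_3^{n_1}$ (if the relevant $C$ is zero one may need $n_1/n_2$ in a precise range, but the padding by one extra $v_1$ and one extra $v_3$ in the target word lets us take $n_1,n_2$ anywhere we need). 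Thus the content of $v_1^{n_1}$, seen in the RHS, is a long factor of $w_j^\omega$ for some $j\in\{1,3\}$; it has period $|v_1|$ from the LHS viewpoint and period $|w_j|$ from the RHS viewpoint. By Fine--Wilf, as soon as $n_1$ is large enough compared to $|v_1|+|w_j|$, this common factor has period $d_1:=\gcd(|v_1|,|w_j|)$, which divides $|v_1|$. A symmetric argument applied to $v_3^{n_2}$ produces a period $d_3$ dividing $|v_3|$.

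\textbf{Step 3: extend the periodicity and glue.}
The periodic region around $v_1^{n_1}$ is really a factor of the whole LHS, forced by position-wise equality with a long factor of $w_j^\omega$ on the RHS; prepending $v_1$ (which itself has period $d_1$) keeps the period, and the periodicity propagates into the adjacent portion of $Q:=v_2^{(n_1,n_2)}$ for as long as the RHS image stays inside the same $w_j$-block. Symmetrically for $v_3^{n_2}$ with its right-padding $v_3$. Now $|Q|$ grows linearly in $n_1,n_2$, while the distances from the ends of the two periodic regions to $Q$'s interior are bounded by fixed constants coming from $|w_0|,|w_2|,|w_4|$. Hence, by taking $n_1,n_2$ sufficiently large, the two periodic regions overlap on a factor of length $\ge |v_1|+|v_3|$. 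A final application of Fine--Wilf on this overlap (with periods $d_1$ and $d_3$) shows that the union of the two regions, which is exactly $v_1\,v_1^{n_1}\,Q\,v_3^{n_2}\,v_3$, has period $\gcd(d_1,d_3)$, and this divides $\gcd(|v_1|,|v_3|)$. Invoking Lemma~\ref{lem:periods} transfers the conclusion from the ``large'' $n_1,n_2$ to all values.

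\textbf{Main obstacle.} The delicate point is the bookkeeping in Step~2--3: one must pick $(n_1,n_2)$ so that (i)~$v_1^{n_1}$ sits cleanly inside one of $w_1^{n_2}, w_3^{n_1}$, (ii)~the same holds for $v_3^{n_2}$, and (iii)~the two resulting periodic regions overlap on a factor of length at least $d_1+d_3$. The interplay between the unknown constants $c_j^{(i)}$ and the ``reversal of the order of the exponents'' ($n_1$ before $n_2$ on the LHS, but $n_2$ before $n_1$ on the RHS) is what makes the combinatorics non-trivial and forces the period to divide both $|v_1|$ and $|v_3|$ rather than only one of them.
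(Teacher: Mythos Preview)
Your high-level plan (extract two periodic regions via Fine--Wilf, glue them on an overlap, then use Lemma~\ref{lem:periods} to pass from large parameters to all parameters) is the same as the paper's. But the way you set up Steps~2--3 leaves the decisive overlap claim unjustified, and in fact it can fail as stated.

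The problem is this. You propose to place $v_1^{n_1}$ inside one RHS block and $v_3^{n_2}$ inside another, and then let the two periodic regions ``propagate into $Q$ for as long as the RHS image stays inside the same $w_j$-block''. If $v_1^{n_1}$ lands in $w_1^{n_2}$ and $v_3^{n_2}$ lands in $w_3^{n_1}$ (which is exactly what your ratio argument in Step~2 tends to produce), then region~1 is the LHS range mapping to $w_1^{n_2}$ and region~2 is the range mapping to $w_3^{n_1}$; these are separated by the image of $w_2$ and do \emph{not} overlap, regardless of how large $n_1,n_2$ are. So your sentence ``the distances from the ends of the two periodic regions to $Q$'s interior are bounded by fixed constants coming from $|w_0|,|w_2|,|w_4|$'' is not correct in this configuration: the gap is exactly $|w_2|$. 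You flag this as the ``Main obstacle'' but do not resolve it.

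The paper sidesteps the difficulty by \emph{not} trying to place both $v$-blocks simultaneously. It exploits the reversed order of the exponents on the RHS asymmetrically: first fix $n_1$ large enough that $|v_1^{n_1}|>|w_0|+|w_1|$, and let $n_2\to\infty$. Then $w_1^{n_2}$ has length $|\mathrm{LHS}|-O(1)$, so on the LHS side it reaches from inside $v_1^{n_1}$, across $v_2^{(n_1,n_2)}$, deep into $v_3^{n_2}$. One Fine--Wilf between this $|w_1|$-periodic prefix and the $|v_3|$-periodic suffix $v_3^{n_2+1}$ shows that $v_1\,v_2^{(n_1,n_2)}\,v_3^{n_2+1}$ has period $|v_3|$; Lemma~\ref{lem:periods} then removes the ``$n_2$ large'' restriction. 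A symmetric pass (fix $n_2$ large, let $n_1\to\infty$, so that now $w_3^{n_1}$ dominates) gives period $|v_1|$ on $v_1^{n_1+1}\,v_2^{(n_1,n_2)}\,v_3$. These two factors genuinely overlap on $v_1\,v_2^{(n_1,n_2)}\,v_3$, of length at least $|v_1|+|v_3|$, and a final Fine--Wilf yields period $\gcd(|v_1|,|v_3|)$. The point is that the two periodic pieces being glued are a \emph{prefix} and a \emph{suffix} of the target word, obtained from two different asymptotic regimes, rather than two ``interior'' regions coming from a single choice of $(n_1,n_2)$.
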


\begin{proof}
The idea of the proof is to let the parameters $n_1,n_2$ of the equation
grow independently, and apply Fine and Wilf's theorem (Theorem \ref{thm:fine-wilf}) 
a certain number of times to establish periodicities in overlapping factors of the 
considered words.

We begin by fixing $n_1$ large enough so that the factor
$\pmb{v_1^{n_1}}$ of the left hand-side of the equation becomes 
longer than $|w_0|+|w_1|$ (this is possible because $v_1$ is non-empty).
Now, if we let $n_2$ grow arbitrarily large, we see that the length of 
the periodic word $\pmb{w_1^{n_2}}$ is almost equal to the length of 
the left hand-side term
$v_0^{(n_1,n_2)} ~ \pmb{v_1^{n_1}} ~ v_2^{(n_1,n_2)} ~ \pmb{v_3^{n_2}} ~ v_4^{(n_1,n_2)}$:
indeed, the difference in length is given by the constant 
$|w_0| + |w_2| + n_1\cdot |w_3| + |w_4|$.
In particular, this implies that $\pmb{w_1^{n_2}}$ 
covers arbitrarily long prefixes of 
$\pmb{v_1} ~ v_2^{(n_1,n_2)} ~ \pmb{v_3^{n_2+1}}$,
which in its turn contains long repetitions of the word $v_3$.
Hence, by Theorem \ref{thm:fine-wilf},
the word 
$\pmb{v_1} ~ v_2^{(n_1,n_2)} ~ \pmb{v_3^{n_2+1}}$ 
has period $|v_3|$.

We remark that the periodicity shown so far holds for a large enough 
$n_1$ and for all but finitely many $n_2$, where the threshold for
$n_2$ depends on $n_1$: once $n_1$ is fixed, $n_2$ needs to be larger
than $f(n_1)$, for a suitable function $f$.  
In fact, by using 
\gabriele{Saarela has been replaced by the new lemma here}
Lemma \ref{lem:periods},
with $n_1$ fixed and $n=n_2$ large enough, 
we deduce that the periodicity holds for large enough $n_1\in\bbN$ and 
for all $n_2\in\bbN$.  

We could also apply a symmetric reasoning: we choose 
$n_2$ large enough and let $n_1$ grow arbitrarily large. Doing so, we 
prove that for a large enough $n_2$ and for all but finitely many $n_1$, 
the word $\pmb{v_1^{n_1+1}} ~ v_2^{(n_1,n_2)} ~ \pmb{v_3}$ is periodic
with period $|v_1|$. As before, with the help of 
\gabriele{same here for Saarela...}
Lemma \ref{lem:periods},
this can be strengthened to hold for large enough $n_2\in\bbN$ and for all $n_1\in\bbN$.

Putting together the results proven so far, we get that for all but finitely many $n_1,n_2$,
\[
  \rightward{ \underbracket[0.5pt]{ \phantom{ \pmb{v_1^{n_1}} \cdot \pmb{v_1} ~\cdot~
                                                   v_2^{(n_1,n_2)} ~\cdot~ \pmb{v_3} } }%
                                 _{\text{period } |v_1|} }
  \pmb{v_1^{n_1}} \cdot
  \overbracket[0.5pt]{ \pmb{v_1} ~\cdot~ v_2^{(n_1,n_2)} ~\cdot~ 
                       \pmb{v_3} \cdot \pmb{v_3^{n_2}} }%
                    ^{\text{period } |v_3|}
  .
\]
Finally, we observe that the prefix 
$\pmb{v_1^{n_1+1}}\cdot v_2^{(n_1,n_2)}\cdot \pmb{v_3}$ 
and the suffix
$\pmb{v_1}\cdot v_2^{(n_1,n_2)}\cdot\pmb{v_3^{n_2+1}}$ 
share a common factor of length at least $|v_1|+|v_3|$. 
By Theorem~\ref{thm:fine-wilf}, 
we derive that $\pmb{v_1^{n_1+1}}\cdot v_2^{(n_1,n_2)}\cdot
\pmb{v_3^{n_2+1}}$ has period $\gcd(|v_1|,|v_3|)$ for all but finitely
many $n_1,n_2$.
Finally, by using again 
\gabriele{...and here}
Lemma \ref{lem:periods},
we conclude that the periodicity holds for all $n_1,n_2\in\bbN$.
\end{proof}

\medskip
We are now ready to prove the implication \PR1 $\Rightarrow$ \PR2:

\begin{proof}[Proof of Proposition~\ref{prop:periodicity-sweeping}]
Let $\cT'$ be a one-way transducer equivalent to $\cT$, and consider 
an inversion $(L_1,\ell_1,L_2,\ell_2)$ of the successful run $\rho$ of $\cT$
on input $u$. 
The reader may refer to Figure \ref{fig:inversion-sweeping} 
to get basic intuition about the proof technique.
For simplicity, we assume that the loops $L_1$ and $L_2$ are disjoint,
as shown in the figure. If this were not the case, we would have 
at least $\max(L_1) > \min(L_2)$, since the anchor point $\ell_1$ 
is strictly to the right of the anchor point $\ell_2$.
We could then consider the pumped run $\pump_{L_1}^k(\rho)$ for a
large enough $k>1$ in such a way that the rightmost copy of $L_1$ 
turns out to be disjoint from and strictly to the right of $L_2$. 
We could thus reason as we do below, by replacing everywhere 
(except in the final part of the proof, cf.~{\em Transferring periodicity to the original run})
the run $\rho$ with the pumped run $\pump_{L_1}^k(\rho)$, and the formal 
parameter $m_1$ with $m_1+k$. 

\smallskip\noindent
{\em Inducing loops in $T'$.}~
We begin by pumping the run $\rho$ and the underlying input $u$, 
on the loops $L_1$ and $L_2$, in order to induce new loops $L'_1$ 
and $L'_2$ that are also loops in a successful run of $\cT'$.
Assuming that $L_1$ is strictly to the right of $L_2$, we define for all numbers $m_1,m_2\in\bbN$:
\[
\begin{array}{rcl}
  u^{(m_1,m_2)}    &=& \pump_{L_1}^{m_1+1}(\pump_{L_2}^{m_2+1}(u))  \\[1ex]
  \rho^{(m_1,m_2)} &=& \pump_{L_1}^{m_1+1}(\pump_{L_2}^{m_2+1}(\rho)).
\end{array}
\]
In the pumped run $\rho^{(m_1,m_2)}$, we identify the positions that mark the 
endpoints of the occurrences of $L_1,L_2$. More precisely, if $L_1=[x_1,x_2]$ 
and $L_2=[x_3,x_4]$, with $x_1>x_4$, then the sets of these positions are 
\[
\begin{array}{rcl}
   X_2^{(m_1,m_2)} &=& \big\{ x_3 + i(x_4-x_3) ~:~ 0\le i\le m_2+1 \big\} \\[1ex]
   X_1^{(m_1,m_2)} &=& \big\{ x_1 + j(x_2-x_1) + m_2(x_4-x_3) ~:~ 0\le j\le m_1+1 \big\}. 
\end{array}
\]

\smallskip\noindent
{\em Periodicity of outputs of pumped runs.}~
We use now the fact that $\cT'$ is a one-way transducer equivalent to $\cT$.
\reviewTwo[inline]{you fix $\lambda^{m_1,m_2}$ to be a particular run of T’ on the input $u^{m_1,m_2}$, and then, on line 10, you state that for some particular $m_1$ and $m_2$, this run can be obtained by pumping $\lambda^{k_0,k_0}$. This does not seem true in general: since $T$ is not deterministic, we might just have chosen a different run.}%
\gabriele[inline]{Corrected}%
We first recall (see, for instance, \cite{eilenberg1974automata,berstel2013transductions})
that every functional one-way transducer can be made unambiguous, namely,
can be transformed into an equivalent one-way transducer that admits at 
most one successful run on each input.
This means that, without loss of generality, we can assume that $\cT'$ too
is unambiguous, and hence it admits exactly one successful run, say 
$\lambda^{(m_1,m_2)}$, on each input $u^{(m_1,m_2)}$.

Since $\cT'$ has finitely many states, we can find, for a large enough number $k_0$ 
 two positions $x'_1<x'_2$, both in $X_1^{(k_0,k_0)}$, such that $L'_1=[x'_1,x'_2]$
is a loop of $\lambda^{(k_0,k_0)}$. Similarly, we can find two positions 
$x'_3<x'_4$, both in $X_2^{(k_0,k_0)}$, such that $L'_2=[x'_3,x'_4]$ is a 
loop of $\lambda^{(k_0,k_0)}$.
By construction $L'_1$ (resp.~$L'_2$) consists of $k_1\leq k_0$
(resp.~$k_2\leq k_0$) copies of $L_1$ (resp.~$L_2$), and 
hence $L'_1,L'_2$ are also loops of $\rho^{(k_0,k_0)}$.
In particular, this implies that for all $n_1,n_2\in\bbN$:
\[
\begin{array}{rcl}
  \pump_{L'_1}^{n_1+1}(\pump_{L'_2}^{n_2+1}(u^{(k_0,k_0)})) 
  &=& u^{(f(n_1),g(n_2))} \\[1ex]
  \pump_{L'_1}^{n_1+1}(\pump_{L'_2}^{n_2+1}(\rho^{(k_0,k_0)})) 
  &=& \rho^{(f(n_1),g(n_2))} \\[1ex]
  \pump_{L'_1}^{n_1+1}(\pump_{L'_2}^{n_2+1}(\lambda^{(k_0,k_0)})) 
  &=& \lambda^{(f(n_1),g(n_2))}.
\end{array}
\]
where $f(n_1)=k_1 n_1+k_0$ and $g(n_2)=k_2 n_2+k_0$. 

Now recall that $\rho^{(f(n_1),g(n_2))}$ and $\lambda^{(f(n_1),g(n_2))}$ are 
runs of $\cT$ and $\cT'$ on the same word $u^{(f(n_1),g(n_2))}$, and they 
produce the same output. Let us denote this output by $w^{(f(n_1),g(n_2))}$. 
Below, we show two possible factorizations of $w^{(f(n_1),g(n_2))}$
based on the shapes of the pumped runs $\lambda^{(f(n_1),g(n_2))}$ 
and $\rho^{(f(n_1),g(n_2))}$.
For the first factorization, we recall that $L'_2$ precedes $L'_1$, 
according to the ordering of positions, and that the run 
$\lambda^{(f(n_1),g(n_2))}$ is one-way (in particular loops have only one anchor point and one trace). We thus obtain:
\reviewTwo[inline]{eq. 4.3 The exponents should be $n_2 + 1$ and $n_1 + 1$. (same for eq. 4.4)}
\felix[inline]{Yes but not sure it makes sense to keep the +1 everywhere then}
\gabriele[inline]{We agreed to change the two itemized lists below and replace `right border'
          by `left border' everywhere. I have implemented this...}
\begin{equation}\label{eq:one-way}
  w^{(f(n_1),g(n_2))} ~=~ w_0 ~ \pmb{w_1^{n_2}} ~ w_2 ~ \pmb{w_3^{n_1}} ~ w_4
\end{equation}
where
\begin{itemize}
 \item $w_0$ is the output produced by the prefix of $\lambda^{(k_0,k_0)}$ 
        ending at the only anchor point of $L'_2$,
  \item $\pmb{w_1}$ is the trace of $L'_2$, 
   \item $w_2$ is the output produced by the factor of $\lambda^{(k_0,k_0)}$ 
        between the anchor points of $L'_2$ and $L'_1$, 
  \item $\pmb{w_3}$ is the trace of $L'_1$,
  \item $w_4$ is the output produced by the suffix of
        $\lambda^{(k_0,k_0)}$ starting at the anchor point of $L'_1$. Hence, $w_0 ~ w_2 ~ w_4$ is the output of $\lambda^{(k_0,k_0)}$.
\end{itemize}
\felix{added the last sentence to avoid the $+1$ or not confusion. In addition to the anchor point/trace usage that we talked about}
For the second factorization, we consider $L'_1$ and $L'_2$ as loops of $\rho^{(k_0,k_0)}$.
We recall that $\ell_1,\ell_2$ are anchor points of the loops $L_1,L_2$ of $\rho$, and that
there are corresponding copies of these anchor points in the pumped
run $\rho^{(f(n_1),g(n_2))}$. 
We define $\ell'_1$ (resp.~$\ell'_2$) to be the first (resp.~last)
location in $\rho^{(f(n_1),g(n_2))}$ 
that corresponds to $\ell_1$ (resp.~$\ell_2$) and that is an anchor point of a copy of $L'_1$ (resp.~$L'_2$).
For example, if $\ell_1=(x_1,y_1)$, with $y_1$ even, then $\ell'_1=\big(x_1+f(n_2)(x_4-x_3),y_1\big)$.
Thanks to Equation \ref{eq:pumped-run} we know that the output produced by 
$\rho^{(f(n_1),g(n_2))}$ is of the form 
\begin{equation}\label{eq:two-way}
  w^{(f(n_1),g(n_2))} ~=~ 
  v_0^{(n_1,n_2)} ~ \pmb{v_1^{n_1}} ~ v_2^{(n_1,n_2)} ~ \pmb{v_3^{n_2}} ~ v_4^{(n_1,n_2)}
\end{equation}
where 
\begin{itemize}
  \item $\pmb{v_1}=\out{\tr{\ell'_1}}$, where $\ell'_1$ is seen as an anchor point in a copy of $L'_1$,
  \item $\pmb{v_3}=\out{\tr{\ell'_2}}$, where $\ell'_2$ is seen as an anchor point in a copy of $L'_2$ 
        \par\noindent
        (note that the words $v_1,v_3$ depend on $k_0$, but not on $n_1,n_2$), 
  \item $v_0^{(n_1,n_2)}$ is the output produced by the prefix of $\rho^{(f(n_1),g(n_2))}$ 
        that ends at $\ell_1'$ 
        (this word may depend on the parameters $n_1,n_2$ since the loops
        $L'_1,L'_2$ may be traversed several times before reaching the first occurrence of $\tr{\ell'_1}$),
  \item $v_2^{(n_1,n_2)}$ is the output produced by the factor of $\rho^{(f(n_1),g(n_2))}$ 
        that starts at $\ell'_1$ 
        and ends at $\ell'_2$, 
  \item $v_4^{(n_1,n_2)}$ is the output produced by the suffix of $\rho^{(f(n_1),g(n_2))}$ 
        that starts at $\ell'_2$. 
\end{itemize}

Putting together Equations~(\ref{eq:one-way}) and (\ref{eq:two-way}), we get
\begin{equation}\label{eq:one-way-vs-two-way}
  v_0^{(n_1,n_2)} ~ \pmb{v_1^{n_1}} ~ v_2^{(n_1,n_2)} ~ \pmb{v_3^{n_2}} ~ v_4^{(n_1,n_2)}
  ~~=~~ 
  w_0 ~ \pmb{w_1^{n_2}} ~ w_2 ~ \pmb{w_3^{n_1}} ~ w_4 .
\end{equation}
Recall that the  definition of inversions (Definition
\ref{def:inversion-sweeping}) states 
that the words $v_1,v_3$ are non-empty.
This allows us to apply Lemma \ref{lem:oneway-vs-twoway}, which shows that the word 
$\pmb{v_1} ~ \pmb{v_1^{n_1}} ~ v_2^{(n_1,n_2)} ~ \pmb{v_3^{n_2}} ~ \pmb{v_3}$ 
has period $p=\gcd(|v_1|,|v_3|)$, for all $n_1,n_2\in\bbN$.

Note that the latter period $p$ still depends on $\cT'$, since 
the words $v_1,v_3$ were obtained from loops 
$L'_1,L'_2$ on the run $\lambda^{(k_0,k_0)}$ of $\cT'$. 
However, because each loop $L'_i$ consists of $k_i$ copies of the original loop $L_i$, 
we also know that $v_1=(\out{\tr{\ell_1}})^{k_1}$ and $v_3=(\out{\tr{\ell_2}})^{k_2}$.
By Theorem \ref{thm:fine-wilf}, this implies that for all $n_1,n_2\in\bbN$, the word
\[
  \big(\out{\tr{\ell_1}}\big) ~
  \big(\out{\tr{\ell_1}}\big)^{k_1 n_1} ~ 
  v_2^{(n_1,n_2)} ~ 
  \big(\out{\tr{\ell_2}}\big)^{k_2 n_2} ~
  \big(\out{\tr{\ell_2}}\big)
\]
has a period that divides $|\out{\tr{\ell_1}}|$ and $|\out{\tr{\ell_2}}|$.

\smallskip\noindent
{\em Transferring periodicity to the original run.}~
The last part of the proof amounts at showing a similar periodicity property 
for the output produced by the original run $\rho$.
By construction, the iterated factors inside $v_2^{(n_1,n_2)}$ in the previous
word are all of the form $v^{k_1 n_1 + k_0}$ or $v^{k_2 n_2 + k_0}$, 
for some words $v$. By taking out the constant factors $v^{k_0}$ from the 
latter repetitions, we can write $v_2^{(n_1,n_2)}$ as a word with iterated 
factors of the form $v^{k_1 n_1}$ or $v^{k_2 n_2}$, namely, as ${v'_2}^{(k_1 n_1, k_2 n_2)}$.
So the word
\[
  \big(\out{\tr{\ell_1}}\big) ~
  \big(\out{\tr{\ell_1}}\big)^{k'_1} ~ 
  {v'_2}^{(k'_1, k'_2)} ~ 
  \big(\out{\tr{\ell_2}}\big)^{k'_2} ~
  \big(\out{\tr{\ell_2}}\big)
\]
is periodic, with period that divides $|\out{\tr{\ell_1}}|$ and $|\out{\tr{\ell_2}}|$,
for all $k'_1 \in \{k_1 n \::\: n\in\bbN\}$ and all $k'_2 \in \{k_2 n \::\: n\in\bbN\}$.
We now apply 
\gabriele{Also here Saarela has been replaced by the new lemma}
Lemma \ref{lem:periods}, 
once with $n=k'_1$ and once with $n=k'_2$,
to conclude that the latter periodicity property holds also for $k'_1=1$ and $k'_2=1$.
This shows that the word
\[
  \out{\tr{\ell_1}} ~ \out{\rho[\ell_1,\ell_2]} ~ \out{\tr{\ell_2}}
\]
is periodic, with period that divides $|\out{\tr{\ell_1}}|$ and $|\out{\tr{\ell_2}}|$.
\end{proof}


\section{One-way definability in the sweeping case}%
\label{sec:characterization-sweeping}

\reviewOne[inline]{From Part 5 onwards, a major point is that an output you can bound is an output you can guess, and a periodic output is an output you don't have to guess, as you can produce it "out of order". On your Figures (and any you might care to add during potential revisions), having a specific visual representation (if possible coherent throughout the paper) to differentiate path sections with empty, bounded, or periodic output will make your point way more intelligible.
}%
\felix[inline]{so this is just about uniformizing the figure for sweeping and for two way ?}%

In the previous section we have shown the implication \PR1 $\Rightarrow$ \PR2 for a functional sweeping 
transducer $\cT$. Here we close the cycle by proving the implications \PR2 $\Rightarrow$ \PR3 and \PR3 $\Rightarrow$ \PR1.
In particular, we show how to derive the existence of successful runs admitting a $\bound$-decomposition
and construct a one-way transducer $\cT'$ that simulates $\cT$ on those runs.
This will basically prove Theorem~\ref{thm:main2} in the sweeping case.

\medskip
\subsection*{Run decomposition.}
We begin by giving the definition of $\bound$-decomposition for a run $\rho$ of $\cT$.
Intuitively,  a $\bound$-decomposition of $\rho$ identifies factors of $\rho$ that can be easily simulated 
in a one-way manner. The definition below describes precisely the shape of these factors.

First we need to  recall the notion of almost periodicity: 
a word $w$ is \emph{almost periodic with bound $p$} if $w=w_0~w_1~w_2$ 
for some words $w_0,w_2$ of length at most $p$ and some word $w_1$ 
of period at most $p$. 

We need to work with subsequences of the run $\rho$ that are induced by 
particular sets of locations, not necessarily consecutive. 
Recall that $\rho[\ell,\ell']$ denotes the factor of $\rho$ 
delimited by two locations $\ell\leqtime\ell'$. Similarly, given any 
set $Z$ of locations, we denote by $\rho|Z$ the subsequence of $\rho$ 
induced by $Z$. Note that, even though $\rho|Z$ might not be a valid
run 
\label{rhoZ}
of the transducer, we can still refer to the number of transitions in it 
and to the size of the produced output $\out{\rho|Z}$. 
Formally, a transition in $\rho|Z$ is a transition from some $\ell$ to $\ell'$, 
where both $\ell,\ell'$ belong to $Z$. The output $\out{\rho|Z}$ is the 
concatenation of the outputs of the transitions of $\rho| Z$ 
(according to the order given by $\rho$).

\reviewOne[inline]{Fig. 5 is moderately informative... Figure 17 and 18 are a more inspiring illustration of the intuition behind diagonal and blocks. Their "restricted" counterpart could be simpler to represent.}%
\gabriele[inline]{done}%

\begin{defi}\label{def:factors-sweeping}
Consider a factor $\rho[\ell,\ell ']$ of a run $\rho$ of $\cT$, 
where $\ell=(x,y)$, $\ell'=(x',y')$ are two locations with $x \le x'$.
We call $\rho[\ell,\ell']$
\begin{itemize}
  \medskip
  \item \parbox[t]{\dimexpr\textwidth-\leftmargin}{%
        \vspace{-2.75mm}
        \begin{wrapfigure}{r}{7.5cm}
        \end{wrapfigure} 
        a \emph{floor} if $y=y'$ is even, namely, if $\rho[\ell,\ell']$ 
        lies entirely on the same level and is rightward oriented; 
        }\noparbreak
  \bigskip
  \bigskip
  \item \parbox[t]{\dimexpr\textwidth-\leftmargin}{%
        \vspace{-2.75mm}
        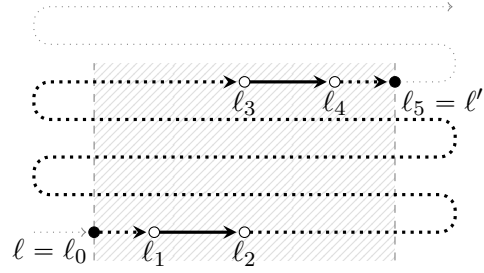
\begin{wrapfigure}{r}{7.5cm}
        \vspace{-22mm}
\centering
\begin{tikzpicture}[baseline=0, inner sep=0, outer sep=0, minimum size=0pt, xscale=0.4, yscale=0.5]
\begin{scope} 
  \tikzstyle{dot} = [draw, circle, fill=white, minimum size=4pt]
  \tikzstyle{fulldot} = [draw, circle, fill=black, minimum size=4pt]
  \tikzstyle{grayfactor} = [->, shorten >=1pt, rounded corners=6, gray, thin, dotted]
  \tikzstyle{factor} = [->, shorten >=1pt, rounded corners=6]
  \tikzstyle{dotfactor} = [->, shorten >=1pt, dotted, rounded corners=6]
  \tikzstyle{fullfactor} = [->, >=stealth, shorten >=1pt, very thick, rounded corners=6]
  \tikzstyle{dotfullfactor} = [->, >=stealth, shorten >=1pt, dotted, very thick, rounded corners=6]

  \fill [pattern=north east lines, pattern color=gray!25]
        (4,-0.75) rectangle (14,4.5);
  \draw [dashed, thin, gray] (4,-0.75) -- (4,4.5);
  \draw [dashed, thin, gray] (14,-0.75) -- (14,4.5);
 
  \draw (2,0) node (node0) {} ;
  \draw (4,0) node [fulldot] (node1) {};
  \draw (6,0) node [dot] (node2) {};
  \draw (9,0) node [dot] (node3) {};
  \draw (16,0) node (node4) {};
  \draw (16,1) node (node5) {};
  \draw (2,1) node (node6) {};
  \draw (2,2) node (node7) {};
  \draw (16,2) node (node8) {};
  \draw (16,3) node (node9) {};
  \draw (2,3) node (node10) {};
  \draw (2,4) node (node11) {};
  \draw (9,4) node [dot] (node12) {};
  \draw (12,4) node [dot] (node13) {};
  \draw (14,4) node [fulldot] (node14) {};
  \draw (16,4) node (node15) {};
  \draw (16,5) node (node16) {};
  \draw (2,5) node (node17) {};
  \draw (2,6) node (node18) {};
  \draw (16,6) node (node19) {};

  \draw [grayfactor] (node0) -- (node1) ;
  \draw [dotfullfactor] (node1) -- (node2);
  \draw [fullfactor] (node2) -- (node3);
  \draw [dotfullfactor] (node3) -- (node4.center) -- (node5.center) -- (node6.center) -- 
                        (node7.center) -- (node8.center) -- (node9.center) -- (node10.center) -- 
                        (node11.center) -- (node12);
  \draw [fullfactor] (node12) -- (node13);
  \draw [dotfullfactor] (node13) -- (node14);
  \draw [grayfactor] (node14) -- (node15.center) -- (node16.center) -- (node17.center) -- 
                     (node18.center) -- (node19);
  
  \draw (node1) node [below left = 1.2mm] {$\ell=\ell_0$};
  \draw (node2) node [below = 1.2mm] {$\ell_1$};
  \draw (node3) node [below = 1.2mm] {$\ell_2$};
  \draw (node12) node [below = 1.2mm] {$\ell_3$};
  \draw (node13) node [below = 1.2mm] {$\ell_4$};
  \draw (node14) node [below right = 1.2mm] {$\ell_5=\ell'$};
\end{scope}
\end{tikzpicture}
\vspace{-2mm}
\caption{A diagonal}\label{fig:diagonal-sweeping}

        \vspace{-4mm}
        \end{wrapfigure} 
        a \emph{$\bound$-diagonal} if 
        there is a sequence 
        $\ell = \ell_0 \leqtime \ell_1 \leqtime \dots \leqtime \ell_{2n+1} = \ell'$, 
        where each $\rho[\ell_{2i+1},\ell_{2i+2}]$ is a floor,
        each $\rho[\ell_{2i}, \ell_{2i+1}]$ produces an output 
        of length at most $2\hmax\bound$,
        and the position of each $\ell_{2i}$ 
        is to the left of the position of $\ell_{2i+1}$;
        }
  \bigskip
  \smallskip
  \item \parbox[t]{\dimexpr\textwidth-\leftmargin}{%
        \vspace{-2.75mm}
        \begin{wrapfigure}{r}{7.5cm}
        \vspace{-10mm}
\centering
\begin{tikzpicture}[baseline=0, inner sep=0, outer sep=0, minimum size=0pt, xscale=0.4, yscale=0.5]
\begin{scope} 
  \tikzstyle{dot} = [draw, circle, fill=white, minimum size=4pt]
  \tikzstyle{fulldot} = [draw, circle, fill=black, minimum size=4pt]
  \tikzstyle{grayfactor} = [->, shorten >=1pt, rounded corners=6, gray, thin, dotted]
  \tikzstyle{factor} = [->, shorten >=1pt, rounded corners=6]
  \tikzstyle{dotfactor} = [->, shorten >=1pt, dotted, rounded corners=6]
  \tikzstyle{fullfactor} = [->, >=stealth, shorten >=1pt, very thick, rounded corners=6]
  \tikzstyle{dotfullfactor} = [->, >=stealth, shorten >=1pt, dotted, very thick, rounded corners=6]

  \fill [pattern=north east lines, pattern color=gray!25]
        (6,-0.75) rectangle (12,2.5);
  \draw [dashed, thin, gray] (6,-0.75) -- (6,2.5);
  \draw [dashed, thin, gray] (12,-0.75) -- (12,2.5);
 
  \draw (2,0) node (node0) {} ;
  \draw (6,0) node [fulldot] (node1) {};
  \draw (12,0) node (node2) {};
  \draw (16,0) node (node3) {};
  \draw (16,1) node (node4) {};
  \draw (12,1) node (node5) {};
  \draw (6,1) node (node6) {};
  \draw (2,1) node (node7) {};
  \draw (2,2) node (node8) {};
  \draw (6,2) node (node9) {};
  \draw (12,2) node [fulldot] (node10) {};
  \draw (16,2) node (node11) {};
  \draw (16,3) node (node12) {};
  \draw (12,3) node (node13) {};
  \draw (6,3) node (node14) {};
  \draw (2,3) node (node15) {};
  \draw (2,4) node (node16) {};
  \draw (6,4) node (node17) {};
  \draw (12,4) node (node18) {};
  \draw (16,4) node (node19) {};

  \draw [grayfactor] (node0) -- (node1) ;
  \draw [fullfactor] (node1) -- (node2.center); 
  \draw [dotfullfactor] (node2.center) -- (node3.center) -- (node4.center) -- (node5.center); 
  \draw [fullfactor] (node5.center) -- (node6.center); 
  \draw [dotfullfactor] (node6.center) -- (node7.center) -- (node8.center) -- (node9.center); 
  \draw [fullfactor] (node9.center) -- (node10); 
  \draw [grayfactor] (node10) -- (node11.center) -- (node12.center) -- (node13.center) --
                     (node14.center) -- (node15.center) -- (node16.center) -- (node17.center) --
                     (node18.center) -- (node19.center);
  
  \draw (node1) node [below left = 1.2mm] {$\ell~$};
  \draw (node10) node [above right = 1.2mm] {$~\ell'$};
\end{scope}
\end{tikzpicture}
\vspace{-2mm}
\caption{A \rightward{block.}\phantom{diagonal.}}\label{fig:block-sweeping}

        \vspace{-5mm}
        \end{wrapfigure} 
        a \emph{$\bound$-block} if 
        the output produced by $\rho[\ell,\ell']$ is almost periodic with bound $2\bound$, 
        and the output produced by the subsequence $\rho|Z$, 
        where $Z=[\ell,\ell'] ~\setminus~ \big([x,x']\times[y,y']\big)$,
        has length at most $2\hmax\bound$.
        }
        \vspace{8mm}
\end{itemize}
\end{defi}

%
%
Before continuing we give some intuition on the above definitions. 
The simplest concept is that of floor, which is a rightward oriented factor of a run. 
Diagonals are sequences of consecutive floors interleaved by factors that 
produce small (bounded) outputs. We see an example of a diagonal 
in Figure \ref{fig:diagonal-sweeping}, where we marked the important
locations and highlighted with thick arrows the two floors that form 
the diagonal. The factors of the diagonal that are not floors are 
represented instead by dotted arrows. 
The third notion is that of a block.
An important constraint in the definition of a block 
is that the produce output must be almost periodic, with small enough bound.
In Figure \ref{fig:block-sweeping}, the periodic output is represented by the 
thick arrows, either solid or dotted, that go from location $\ell$ to
location $\ell'$.
In addition, the block must satisfy a constraint on the length of 
the output produced by the subsequence $\rho|Z$, where 
$Z=[\ell,\ell'] ~\setminus~ \big([x,x']\times[y,y']\big)$.
The latter set $Z$ consists of location that are either to the left
of the position of $\ell$ or to the right of the position of $\ell'$.
For example, in Figure \ref{fig:block-sweeping} the set $Z$ coincides
with the area outside the hatched rectangle. Accordingly, the portion 
of the subsequence $\rho|Z$ is represented by the dotted bold arrows.

Diagonals and blocks are used as key objects to derive a notion of
decomposition for a run of a sweeping transducer. We formalize this
notion below.

\begin{defi}\label{def:decomposition-sweeping}
A \emph{$\bound$-decomposition} of a run $\rho$ of $\cT$ is a factorization 
$\prod_i\,\rho[\ell_i,\ell_{i+1}]$ of $\rho$ into $\bound$-diagonals and $\bound$-blocks.
\end{defi}

\reviewOne[inline]{Fig.6 does not look like a fully factorized run: what happens for example, between the end of $D_1$ and $l_1$? Can a concrete example be devised? This works rather well in Part 8.}%
\gabriele{Improved the figure and the explanation}%
Figure~\ref{fig:decomposition-sweeping} gives an example of such a
decomposition. 
\anca{dans Fig 7, les aretes de $B_2$ vont dans la
  mauvaise direction}%
\gabriele{I have corrected this. Now the figure is a bit more regular (perhaps too much), 
          but at least conveys the correct information!}%
Each factor is either a diagonal $D_i$ or a block $B_i$.
Inside each factor we highlight by thick arrows the portions 
of the run that can be simulated by a one-way transducer, 
either because they are produced from left to right (thus forming diagonals) 
or because they are periodic (thus forming blocks).
We also recall from Definition \ref{def:factors-sweeping} 
that most of the output is produced inside the 
hatched rectangles, since the output produced by a diagonal 
or a block outside the corresponding blue or red hatched rectangle 
has length at most $2\hmaxsq\bound$.
Finally, we observe that the locations delimiting the factors of the 
decomposition are arranged following the natural order of positions and levels. 
All together, this means that the output produced by a run that enjoys
a decomposition can be simulated in a one-way manner. 

\begin{figure}[t]
\centering
\begin{tikzpicture}[baseline=0, inner sep=0, outer sep=0, minimum size=0pt, scale=0.5]
  \tikzstyle{dot} = [draw, circle, fill=white, minimum size=4pt]
  \tikzstyle{fulldot} = [draw, circle, fill=black, minimum size=4pt]
  \tikzstyle{grayfactor} = [->, shorten >=1pt, rounded corners=6, gray, thin, dotted]
  \tikzstyle{factor} = [->, shorten >=1pt, rounded corners=6]
  \tikzstyle{dotfactor} = [->, shorten >=1pt, dotted, rounded corners=6]
  \tikzstyle{fullfactor} = [->, >=stealth, shorten >=1pt, very thick, rounded corners=6]
  \tikzstyle{dotfullfactor} = [->, >=stealth, shorten >=1pt, dotted, very thick, rounded corners=6]

  \draw (0,0) node (node0) {};
  \draw (30,0) node (node1) {};
  \draw (30,1) node (node2) {};
  \draw (0,1) node (node3) {};
  \draw (0,2) node (node4) {};
  \draw (30,2) node (node5) {};
  \draw (30,3) node (node6) {};
  \draw (0,3) node (node7) {};
  \draw (0,4) node (node8) {};
  \draw (30,4) node (node9) {};
  \draw (30,5) node (node10) {};
  \draw (0,5) node (node11) {};
  \draw (0,6) node (node12) {};
  \draw (30,6) node (node13) {};
  \draw (30,7) node (node14) {};
  \draw (0,7) node (node15) {};
  \draw (0,8) node (node16) {};
  \draw (30,8) node (node17) {};

  \draw [grayfactor] (node0) -- (node1.center) -- (node2.center) -- 
                            (node3.center) -- (node4.center) -- 
                            (node5.center) -- (node6.center) -- 
                            (node7.center) -- (node8.center) -- 
                            (node9.center) -- (node10.center) -- 
                            (node11.center) -- (node12.center) -- 
                            (node13.center) -- (node14.center) -- 
                            (node15.center) -- (node16.center) -- (node17);

  \fill [pattern=north east lines, pattern color=blue!18] (0,-0.5) rectangle (8,2.5) ;
  \fill [pattern=north east lines, pattern color=red!20] (8,1.5) rectangle (14,4.5);
  \fill [pattern=north east lines, pattern color=blue!18] (14,3.5) rectangle (23,6.5);
  \fill [pattern=north east lines, pattern color=red!20] (23,5.5) rectangle (29,8.5);

  \draw (0,0) node [fulldot] (l0) {};
  \draw (l0) node [below=2mm] {$\ell_0$};
  \draw (3,0) node [dot] (m1) {};
  \draw (4,2) node [dot] (m2) {};
  \draw (7,2) node [dot] (m3) {};
  \draw (8,2) node [fulldot] (l1) {};
  \draw (l1) node [below=2mm] {$\ell_1$};
  \draw (14,4) node [fulldot] (l2) {};
  \draw (l2) node [above=2mm] {$\ell_2$};
  \draw (15,4) node [dot] (m4) {};
  \draw (18,4) node [dot] (m5) {};
  \draw (19,6) node [dot] (m6) {};
  \draw (22,6) node [dot] (m7) {};
  \draw (23,6) node [fulldot] (l3) {};
  \draw (l3) node [below=2mm] {$\ell_3$};
  \draw (29,8) node [fulldot] (l4) {};
  \draw (l4) node [above=2mm] {$\ell_4$};

  \draw [fullfactor] (l0) -- (m1);
  \draw [fullfactor] (m2) -- (m3);

  \draw [fullfactor] (l1) -- (14,2);
  \draw [fullfactor] (14,3) -- (8,3);
  \draw [fullfactor] (8,4) -- (l2);

  \draw [fullfactor] (m4) -- (m5);
  \draw [fullfactor] (m6) -- (m7);

  \draw [fullfactor] (l3) -- (29,6);
  \draw [fullfactor] (29,7) -- (23,7);
  \draw [fullfactor] (23,8) -- (l4);

  \draw (3.5,0.5) node [above=0.3mm, rectangle, fill=white, minimum size=14pt] {\small $D_1$};
  \draw (18.5,5.5) node [below=0.3mm, rectangle, fill=white, minimum size=14pt] {\small $D_2$};
  \draw (11,3) node [rectangle, fill=white, minimum size=14pt] {$B_1$};
  \draw (26,7) node [rectangle, fill=white, minimum size=14pt] {$B_2$};
\end{tikzpicture}
\caption{Decomposition of a run into diagonals and blocks.}\label{fig:decomposition-sweeping}
\end{figure}
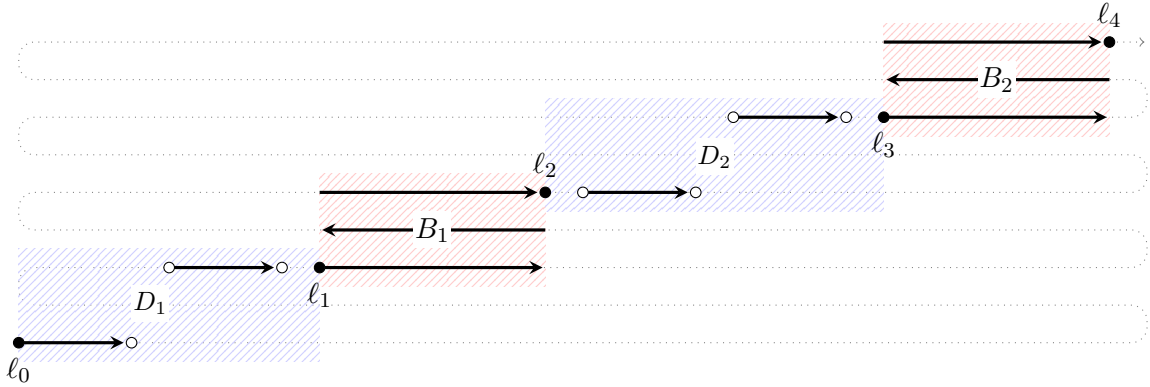


\medskip
\reviewOne[inline]{Part 5: In "From periodicity of inversions to existence of decompositions", the general proof structure in general and transition between lemmas in particular could be made more visually and conceptually explicit, as the global idea of the proof is difficult to grasp at first. A mini "road map" as made in Part 3 for the whole of Part 4 and 5 might be another way to help the reader through these rather intricate proofs.}%

\subsection*{From periodicity of inversions to existence of decompositions.}
Now that we set up the definition of $\bound$-decomposition, we turn 
towards proving the implication \PR2 $\Rightarrow$ \PR3 of Theorem \ref{thm:main2}.
In fact, we will prove a slightly stronger result than \PR2 $\Rightarrow$ \PR3,
which is stated further below.
Formally, when we say that a run \emph{$\rho$ satisfies \PR2} we mean 
that for every inversion $(L_1,\ell_1,L_2,\ell_2)$ of $\rho$, the word
$\out{\tr{\ell_1}} ~ \out{\rho[\ell_1,\ell_2]} ~ \out{\tr{\ell_2}}$
has period $\gcd(|\out{\tr{\ell_1}}|, |\out{\tr{\ell_2}}|) \le \bound$.
We aim at proving that every run that satisfies \PR2 enjoys a decomposition,
independently of whether other runs do or do not satisfy \PR2:

\begin{prop}\label{prop:decomposition-sweeping}
If $\rho$ is a run of $\cT$ that satisfies \PR2, then $\rho$ admits a $\bound$-decomposition.
\end{prop}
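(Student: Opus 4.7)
The plan is to use the inversions of $\rho$, or rather their absence, to mark the boundaries between blocks and diagonals. For each even level $y$, the run $\rho$ performs exactly one rightward pass, and on each odd level one leftward pass; along each pass I would highlight the output-minimal factors that emit non-empty output. By Lemma~\ref{lem:output-minimal-sweeping}, each such factor emits at most $\bound$ letters. I call an anchor point $\ell$ of a loop \emph{heavy} when its trace $\tr{\ell}$ is output-minimal and produces non-empty output. The key observation is that, given two consecutive heavy anchor points $\ell_i \lesstime \ell_j$, either their positions are arranged monotonically with run order (in which case they will be packaged inside a diagonal) or they are in reverse order, so that together with their loops $L_i, L_j$ they form an inversion $(L_i,\ell_i,L_j,\ell_j)$ and \PR2 furnishes a short period for the output produced in between.

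I would first group the heavy anchor points into maximal clusters connected by overlapping inversions. For a single inversion $(L_1,\ell_1,L_2,\ell_2)$, property \PR2 says that the word $\out{\tr{\ell_1}}\,\out{\rho[\ell_1,\ell_2]}\,\out{\tr{\ell_2}}$ has period $p \le \bound$ dividing both $|\out{\tr{\ell_1}}|$ and $|\out{\tr{\ell_2}}|$. When two inversions of a cluster share output, the common factor has length at least the maximum of the two relevant trace lengths, and Fine--Wilf (Theorem~\ref{thm:fine-wilf}) forces the two periods to coincide with their gcd, still at most $\bound$. Iterating across all inversions of a cluster, the whole output associated with the cluster has a single period $\le \bound$. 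The minimal rectangle $[x,x']\times[y,y']$ enclosing the cluster's anchor points then plays the role of the block's rectangle: the output produced by $\rho$ inside it is almost periodic with bound $2\bound$, and the output produced by $\rho|Z$ on locations outside the rectangle, where $Z=[\ell,\ell']\setminus\bigl([x,x']\times[y,y']\bigr)$, collects contributions from at most $\hmax$ levels, each bounded by $2\bound$, hence is below $2\hmax\bound$, as required by the block definition.

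Between two consecutive clusters, the surviving heavy anchor points are, by maximality of the clusters, arranged so that run order agrees with position order, and this is exactly the pattern of a $\bound$-diagonal: one takes the even-indexed factors $\rho[\ell_{2i+1},\ell_{2i+2}]$ to be floors that contain the heavy windows on each level, while the odd-indexed factors $\rho[\ell_{2i},\ell_{2i+1}]$ traverse between levels and, containing no heavy anchor point, carry only bounded output (again $\le 2\hmax\bound$ by a level-counting argument). Alternating diagonals and blocks along run order assembles the required $\bound$-decomposition of $\rho$. The main technical obstacle I anticipate is the cluster step: inversions can nest or interlock in intricate ways, and showing that the periods arising from different inversions of a single cluster merge into one common period $\le \bound$ that simultaneously divides every trace length involved will require careful repeated use of Fine--Wilf, plus the normalization bound $\hmax$ on crossing sequences to keep both the spatial extent of clusters and the output produced by $\rho|Z$ outside their rectangles under the prescribed thresholds.
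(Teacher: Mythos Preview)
Your approach is exactly the paper's: your ``clusters connected by overlapping inversions'' are the $\simeq$-equivalence classes of Definition~\ref{def:crossrel}, your rectangle construction is Definition~\ref{def:bounding-box-sweeping}, the period-merging step is Lemma~\ref{lem:overlapping}, and the diagonals between clusters are Lemma~\ref{lem:diagonal-sweeping}.

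There is one genuine gap, precisely at the step you flag as hardest. You assert that when two overlapping inversions share output, ``the common factor has length at least the maximum of the two relevant trace lengths'', which would let Fine--Wilf fire. This is not justified: for two consecutive maximal inversions $(L_0,\ell_0,L_1,\ell_1)$ and $(L_2,\ell_2,L_3,\ell_3)$ with $\ell_0 \leqtime \ell_2 \leqtime \ell_1 \leqtime \ell_3$, the shared run segment is $\rho[\ell_2,\ell_1]$ and its output can be arbitrarily short. The paper's key observation (inside the proof of Lemma~\ref{lem:overlapping}) is that $(L_2,\ell_2,L_1,\ell_1)$ is \emph{itself} an inversion: the position of $\ell_1$ must lie strictly left of that of $\ell_2$, since otherwise $(L_0,\ell_0,L_3,\ell_3)$ would be an inversion strictly containing both, contradicting maximality. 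Property~\PR2 applied to this third inversion gives a period dividing both $|\out{\tr{\ell_1}}|$ and $|\out{\tr{\ell_2}}|$ on the word $\out{\tr{\ell_2}} \cdot \out{\rho[\ell_2,\ell_1]} \cdot \out{\tr{\ell_1}}$, and this is what furnishes the overlap Fine--Wilf needs. The same ``discover a new inversion, derive a contradiction'' device is also what underpins the per-level bound on output outside the block rectangle (Lemma~\ref{lem:bounding-box-sweeping}): if some level outside the rectangle emits more than $\bound$, Lemma~\ref{lem:output-minimal-sweeping} yields a heavy anchor point there, which together with an anchor point of the cluster forms a new inversion extending the cluster --- contradicting that the rectangle already encloses all of the cluster's anchors. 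Your level-counting alone does not supply this.
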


\felix{a few lines for the roadmap asked by the reviewer, but I don't think they're useful at all}%
\gabriele{I think it is good}%
Let us fix a run $\rho$ of $\cT$ and assume that it satisfies \PR2. 
To show that $\rho$ admits a $\bound$-decomposition, we will identify
the blocks of the decomposition as equivalence classes of a suitable 
relation based on inversions (cf.~Definition \ref{def:crossrel}). 
Then, we will use combinatorial arguments (notably, 
Lemmas \ref{lem:output-minimal-sweeping} and \ref{lem:overlapping})
to prove that the constructed blocks satisfy the desired properties. 
Finally, we will show how the resulting equivalence classes form all
the necessary blocks of the decomposition, in the sense that the factors 
in between those classes are diagonals.

We begin by introducing the equivalence relation by means of which we can
then identify the blocks of a decomposition of $\rho$.

\begin{defi}\label{def:crossrel}
Let $\crossrel$ be the relation that pairs every two locations $\ell,\ell'$ of $\rho$ 
whenever there is an inversion $(L_1,\ell_1,L_2,\ell_2)$ of $\rho$ such that 
$\ell_1 \leqtime \ell,\ell' \leqtime \ell_2$, namely, whenever $\ell$ and $\ell'$ 
occur within the same inversion. 
Let $\simeq$ be the reflexive and transitive closure of $\crossrel$.
\end{defi}

It is easy to see that every equivalence class of $\simeq$ is a convex 
subset with respect to the run order on locations of $\rho$. 
Moreover, every \emph{non-singleton} equivalence class of $\simeq$ is a 
union of a series of inversions that are two-by-two overlapping. 
One can refer to Figure~\ref{fig:overlapping}
for an intuitive account of what we mean by two-by-two overlapping: the thick 
arrows represent factors of the run that lie entire inside an $\simeq$-equivalence class,
each inversion is identified by a pair of consecutive anchor points with the same 
color. According to the run order, between every pair of anchor points with the 
same color, there is at least one anchor point of different color: this means that 
the inversions corresponding to the two colors are overlapping.

\reviewOne[inline]{Fig.7 might benefit from some $L_1$, $L_2$ appearing, to illustrate the point made in Claim 5.6
  \felix[inline]{I don't agree}%
  \olivier[inline]{I also disagree. Argued in review1-answer.txt.}%
}%
Formally, we say that an inversion $(L_1,\ell_1,L_2,\ell_2)$ \emph{covers} 
a location $\ell$ when $\ell_1 \leqtime \ell \leqtime \ell_2$. We say that 
two inversions $(L_1,\ell_1,L_2,\ell_2)$ and $(L_3,\ell_3,L_4,\ell_4)$
are \emph{overlapping} if $(L_1,\ell_1,L_2,\ell_2)$ covers $\ell_3$ 
and $(L_3,\ell_3,L_4,\ell_4)$ covers $\ell_2$ (or the other way around).

\begin{figure}[!t]
\centering
\begin{tikzpicture}[baseline=0, inner sep=0, outer sep=0, minimum size=0pt, scale=0.6, yscale=0.9]
  \tikzstyle{dot} = [draw, circle, fill=white, minimum size=4pt]
  \tikzstyle{fulldot} = [draw, circle, fill=black, minimum size=4pt]
  \tikzstyle{grayfactor} = [->, shorten >=1pt, rounded corners=6, gray, thin, dotted]
  \tikzstyle{factor} = [->, shorten >=1pt, rounded corners=6]
  \tikzstyle{dotfactor} = [->, shorten >=1pt, dotted, rounded corners=6]
  \tikzstyle{fullfactor} = [->, >=stealth, shorten >=1pt, very thick, rounded corners=6]
  \tikzstyle{dotfullfactor} = [->, >=stealth, shorten >=1pt, dotted, very thick, rounded corners=6]

  \draw [draw=red!25, line width=10pt, double distance = 0.5cm, line cap=round] (2,-1) -- (0,1);
  \draw [draw=blue!25, line width=10pt, double distance = 0.5cm, line cap=round] (6,0) -- (4,3);
  \draw [draw=green!50!black!25, line width=10pt, double distance = 0.5cm, line cap=round] (10,2) -- (8,5);
  \draw [draw=orange!25, line width=10pt, double distance = 0.5cm, line cap=round] (18,6) -- (16,8);


  \draw (-2,-2) node (left0) {};
  \draw (20,-2) node (right0) {};
  \draw (20,-1) node (right1) {};
  \draw (2,-1) node [dot, draw, minimum size=6pt, red] (l1) {};
  \draw (-2,-1) node (left1) {};
  \draw (-2,0) node (left2) {};
  \draw (6,0) node [dot, draw, minimum size=6pt, blue] (l3) {};
  \draw (20,0) node (right2) {};
  \draw (20,1) node (right3) {};
  \draw (0,1) node [dot, draw, minimum size=6pt, red] (l4) {};
  \draw (-2,1) node (left3) {};
  \draw (-2,2) node (left4) {};
  \draw (10,2) node [dot, draw, minimum size=6pt, green!50!black] (l5) {};
  \draw (20,2) node (right4) {};
  \draw (20,3) node (right5) {};
  \draw (4,3) node [dot, draw, minimum size=6pt, blue] (l6) {};
  \draw (-2,3) node (left5) {};
  \draw (-2,4) node (left6) {};
  \draw (20,4) node (right6) {};
  \draw (20,5) node (right7) {};
  \draw (8,5) node [dot, draw, minimum size=6pt, green!50!black] (l8) {};two
  \draw (-2,5) node (left7) {};
  \draw (-2,6) node (left8) {};
  \draw (18,6) node [dot, draw, minimum size=6pt, orange] (ln-3) {};
  \draw (20,6) node (right8) {};
  \draw (20,7) node (right9) {};
  \draw (-2,7) node (left9) {};
  \draw (-2,8) node (left10) {};
  \draw (16,8) node [dot, draw, minimum size=6pt, orange] (ln) {};
  \draw (20,8) node (lend) {};

  \draw (l1) node [below right=1.25mm] {$~\ell_0$};
  \draw (l3) node [below right=1.25mm] {$~\ell_2$};
  \draw (l4) node [below left=0.875mm] {$\ell_1~$};
  \draw (l5) node [below right=1.25mm] {$~\ell_4$};
  \draw (l6) node [below left=0.875mm] {$\ell_3~$};
  \draw (l8) node [below left=0.875mm] {$\ell_5~$};
  \draw (ln-3) node [below right=1.2mm] {$~\ell_{2k}$}; 
  \draw (ln) node [above=2mm] {$\ell_{2k+1}$}; 

  \draw [grayfactor] (left0) -- (right0.center) -- (right1.center) -- (2.5,-1);
  \draw [fullfactor] (l1) -- (left1.center) -- (left2.center) -- (l3);
  \draw [fullfactor] (l3) -- (right2.center) -- (right3.center) -- (l4);
  \draw [fullfactor] (l4) -- (left3.center) -- (left4.center) -- (l5);
  \draw [fullfactor] (l5) -- (right4.center) -- (right5.center) -- (l6);
  \draw [fullfactor] (l6) -- (left5.center) -- (left6.center) -- 
                     (right6.center) -- (right7.center) -- (l8);
  \draw [fullfactor,dashed] (l8) -- (left7.center) -- (left8.center) -- (ln-3);
  \draw [fullfactor] (ln-3) -- (right8.center) -- (right9.center) -- 
                     (left9.center) -- (left10.center) -- (ln);
  \draw [grayfactor] (ln) -- (lend);
\end{tikzpicture}
\caption{A non-singleton $\simeq$-equivalence class seen as a series of overlapping inversions.}\label{fig:overlapping}
\end{figure}
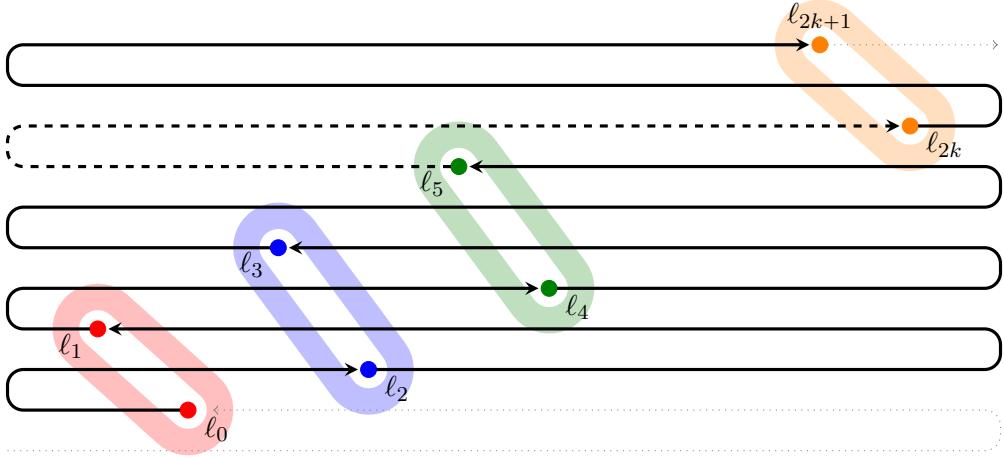


The next lemma uses the fact that $\rho$ satisfies \PR2 to deduce that 
the output produced inside every $\simeq$-equivalence class has
period at most $\bound$. Note that the proof below does not exploit
the fact that the transducer is sweeping. 

\begin{lem}\label{lem:overlapping}
If $\rho$ satisfies \PR2 and $\ell\leqtime\ell'$ are two locations of $\rho$ 
such that $\ell \simeq \ell'$, then the output $\out{\rho[\ell,\ell']}$ 
produced between these locations has period at most $\bound$.
\end{lem}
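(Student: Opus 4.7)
The plan is to combine the periodicity guaranteed by Proposition~\ref{prop:periodicity-sweeping} along the chain of inversions connecting $\ell$ to $\ell'$, by iterating Fine and Wilf's theorem.

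First, I would unfold the definition of $\simeq$ as the reflexive--transitive closure of $\crossrel$ to obtain locations $\ell = m_0 \leqtime m_1 \leqtime \dots \leqtime m_k = \ell'$ such that every consecutive pair $m_i,m_{i+1}$ lies inside a common inversion $\iota_i=(L_1^i,\ell_1^i,L_2^i,\ell_2^i)$, namely $\ell_1^i \leqtime m_i, m_{i+1} \leqtime \ell_2^i$. In particular, the inversions $\iota_0,\dots,\iota_{k-1}$ overlap consecutively in the run order, matching the picture of Figure~\ref{fig:overlapping}, so it suffices to show that the larger factor $\out{\rho[\ell_1^0,\ell_2^{k-1}]}$ has period at most $\bound$.

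Second, for each inversion $\iota_i$, Property~\PR2 yields a period $p_i \le \bound$ of the word $W_i = \out{\tr{\ell_1^i}}\,\out{\rho[\ell_1^i,\ell_2^i]}\,\out{\tr{\ell_2^i}}$, with $p_i$ dividing both trace lengths $|\out{\tr{\ell_1^i}}|$ and $|\out{\tr{\ell_2^i}}|$. Since each trace $\tr{\ell_j^i}$ is the initial factor of $\rho$ starting at $\ell_j^i$ and its output length is a multiple of $p_i$, the actual run-output factor between $\ell_1^i$ and the endpoint of $\tr{\ell_2^i}$ in $\rho$ inherits the period $p_i$; this gives a concrete periodic zone in the run output for every inversion in the chain.

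Third, I would iteratively merge two consecutive periodic zones using the splicing form of Fine--Wilf (Theorem~\ref{thm:fine-wilf}). For overlapping inversions $\iota_i$ and $\iota_{i+1}$, the intersection of their zones in the run output contains the entire trace $\tr{\ell_2^i}$, whose output has length at least $p_i$. When the overlap reaches length $p_i+p_{i+1}-\gcd(p_i,p_{i+1})$, Fine--Wilf merges the two zones into one of period $\gcd(p_i,p_{i+1}) \le \bound$. Iterating along the chain collapses everything into a single periodic zone of period at most $\bound$ that contains $\out{\rho[\ell,\ell']}$ as a factor, proving the claim.

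I expect the main obstacle to be controlling the length of the overlap in the third step: the threshold $p_i+p_{i+1}-\gcd(p_i,p_{i+1})$ can reach $2\bound$, whereas the trace buffer alone only ensures $|\out{\tr{\ell_2^i}}| \ge p_i$. Overcoming this will require a careful choice of chain and a tight exploitation of the divisibility relations $p_i \mid |\out{\tr{\ell_j^i}}|$, together with the splicing form of Theorem~\ref{thm:fine-wilf}, so that the adjacent trace $\tr{\ell_1^{i+1}}$ can also be brought into the overlap region and both traces contribute simultaneously to reach the Fine--Wilf threshold.
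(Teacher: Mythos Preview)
Your overall strategy—a chain of overlapping inversions merged via iterated Fine--Wilf—is exactly the paper's, and you have correctly located the crux: the overlap between consecutive periodic zones is a priori only of length $\ge |\out{\tr{\ell_2^i}}|$, short of the threshold $p_i+p_{i+1}-\gcd(p_i,p_{i+1})$. But the remedies you propose do not close this gap. The trace $\tr{\ell_1^{i+1}}$ starts at $\ell_1^{i+1}$ and stays inside the positions of $L_1^{i+1}$; nothing forces it to end, in the run order, before the endpoint of $\tr{\ell_2^i}$, so there is no reason it lies inside the overlap region at all. Divisibility and the splicing form of Theorem~\ref{thm:fine-wilf} cannot by themselves manufacture the missing length, and no ``careful choice of chain'' does so without a further structural observation.

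The paper supplies that observation. It first takes each inversion in the chain to be \emph{maximal} (not strictly covered by another inversion). Maximality then forces the position of $\ell_2^i$ to lie strictly to the left of the position of $\ell_1^{i+1}$: if not, one checks that $(L_1^i,\ell_1^i,L_2^{i+1},\ell_2^{i+1})$ is an inversion, contradicting maximality of $\iota_i$. Together with $\ell_1^{i+1}\leqtime\ell_2^i$, this means that $(L_1^{i+1},\ell_1^{i+1},L_2^i,\ell_2^i)$ is \emph{itself an inversion}, and \PR2 applied to it yields a bridging word $\out{\tr{\ell_1^{i+1}}}\,\out{\rho[\ell_1^{i+1},\ell_2^i]}\,\out{\tr{\ell_2^i}}$ whose period divides both $|\out{\tr{\ell_1^{i+1}}}|$ and $|\out{\tr{\ell_2^i}}|$. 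Because the bridge period already divides $|\out{\tr{\ell_2^i}}|$, the Fine--Wilf threshold for merging the accumulated zone with the bridge collapses to $|\out{\tr{\ell_2^i}}|$, which is met; a symmetric argument then merges the bridge with zone $i+1$ using the common factor $\out{\tr{\ell_1^{i+1}}}$. This cross-inversion step is the ingredient your proposal is missing.
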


\begin{proof}
The claim for $\ell=\ell'$ holds trivially, so we assume that $\ell\lesstime\ell'$. 
Since the $\simeq$-equivalence class that contains $\ell,\ell'$ is non-singleton,
we know that there is a series of inversions
\[
  (L_0,\ell_0,L_1,\ell_1) \quad
  (L_2,\ell_2,L_3,\ell_3) \quad
  \dots\dots \quad
  (L_{2k},\ell_{2k},L_{2k+1},\ell_{2k+1})
\]
that are two-by-two overlapping and such that 
$\ell_0\leqtime\ell\lesstime\ell'\leqtime\ell_{2k+1}$. 
Without loss of generality, we can assume that every 
inversion $(L_{2i},\ell_{2i},L_{2i+1},\ell_{2i+1})$ is \emph{maximal} 
in the following sense: there is no other inversion 
$(\tilde L,\tilde\ell,\tilde L',\tilde\ell') \neq (L_{2i},\ell_{2i},L_{2i+1},\ell_{2i+1})$ 
such that $\tilde\ell \leqtime \ell_{2i} \leqtime \ell_{2i+1} \leqtime \tilde\ell'$.

For the sake of brevity, let $v_i = \out{\tr{\ell_i}}$ and $p_i = | v_i |$. 
Since $\rho$ satisfies \PR2 (recall Proposition~\ref{prop:periodicity-sweeping}), we know that, for all $i=0,\dots,n$, the word
\[
  v_{2i} ~ \out{\rho[\ell_{2i},\ell_{2i+1}]} ~ v_{2i+1}
\]
has period that divides both $p_{2i}$ and $p_{2i+1}$ and is at most $\bound$.
In order to show that the period of $\out{\rho[\ell,\ell']}$ is also bounded
by $\bound$, it suffices to prove the following claim by induction on $i$:

\reviewOne[inline]{Claim 5.6, end of page 18: making explicit what words you wish to use Fine-Wilf on might make for a more readable proof.
\olivier[inline]{answered in review1-answer.txt}%
}%

\begin{clm}
For all $i=0,\dots,k$, the word
$\outb{\rho[\ell_0,\ell_{2i+1}]} \: v_{2i+1}$ 
has period at most $\bound$ that divides $p_{2i+1}$.
\end{clm}

The base case $i=0$ follows immediately from our hypothesis,
since $(L_0,\ell_0,L_1,\ell_1)$ is an inversion.
For the inductive step, we assume that the claim holds for $i<k$, 
and we prove it for $i+1$. First of all, we factorize our word as follows:
\[
  \outb{\rho[\ell_0,\ell_{2i+3}]} ~ v_{2i+3} 
  ~=~
  \rightward{ \underbracket[0.5pt]{ \phantom{ 
                                              \outb{\rho[\ell_0,\ell_{2i+2}]} ~ 
                                              \outb{\rho[\ell_{2i+2},\ell_{2i+1}]} ~ } }%
                                 _{\text{period } p_{2i+1}} }
  \outb{\rho[\ell_0,\ell_{2i+2}]}
  \overbracket[0.5pt]{ ~ \outb{\rho[\ell_{2i+2},\ell_{2i+1}]} ~
                       \outb{\rho[\ell_{2i+1},\ell_{2i+3}]} ~
                       v_{2i+3} }%
                    ^{\text{periods } p_{2i+2} \text{ and } p_{2i+3}}
  .
\]
By the inductive hypothesis, the output produced between $\ell_0$ 
and $\ell_{2i+1}$, extended to the right with $v_{2i+1}$, 
has period that divides $p_{2i+1}$. 
Moreover, because $\rho$ satisfies \PR2 and 
$(L_{2i+2},\ell_{2i+2},L_{2i+3},\ell_{2i+3})$ is an inversion, 
the output produced between the locations 
$\ell_{2i+2}$ and $\ell_{2i+3}$, 
extended to the left with $v_{2i+2}$ and to the 
right with $v_{2i+3}$, has period that divides both $p_{2i+2}$ and $p_{2i+3}$.
Note that this is not yet sufficient for applying Fine-Wilf's theorem, 
since the common factor $\outb{\rho[\ell_{2i+2},\ell_{2i+1}]}$ 
might be too short (possibly just equal to $v_{2i+2}$).
The key argument here is to prove that the interval $[\ell_{2i+2},\ell_{2i+1}]$ 
is covered by an inversion
which is different from those that we considered above, namely,
i.e.~$(L_{2i+2},\ell_{2i+2},L_{2i+1},\ell_{2i+1})$. For example,
$[\ell_2,\ell_1]$ in
Figure~\ref{fig:overlapping}  is covered by the inversion
$(L_2,\ell_2,L_1,\ell_1)$.

For this, we have to prove that the anchor points $\ell_{2i+2}$ and $\ell_{2i+1}$ 
are correctly ordered w.r.t.~$\leqtime$ and the ordering of positions
(recall Definition~\ref{def:inversion-sweeping}).
First, we observe that $\ell_{2i+2} \leqtime \ell_{2i+1}$, since
$(L_{2i},\ell_{2i},L_{2i+1},\ell_{2i+1})$ and $(L_{2i+2},\ell_{2i+2},L_{2i+3},\ell_{2i+3})$ 
are overlapping inversions.
Next, we prove that the position of $\ell_{2i+1}$ is strictly to the left of 
the position of $\ell_{2i+2}$. By way of contradiction, suppose that this is 
not the case, namely, $\ell_{2i+1}=(x_{2i+1},y_{2i+1})$, $\ell_{2i+2}=(x_{2i+2},y_{2i+2})$, 
and $x_{2i+1} \ge x_{2i+2}$. Because $(L_{2i},\ell_{2i},L_{2i+1},\ell_{2i+1})$ and
$(L_{2i+2},\ell_{2i+2},L_{2i+3},\ell_{2i+3})$ are inversions, 
we know that $\ell_{2i+3}$ is strictly to the left of $\ell_{2i+2}$ 
and $\ell_{2i+1}$ is strictly to the left of $\ell_{2i}$. 
This implies that $\ell_{2i+3}$ is strictly to the left of $\ell_{2i}$,
and hence $(L_{2i},\ell_{2i},L_{2i+3},\ell_{2i+3})$ is also an inversion.
Moreover, recall that $\ell_{2i} \leqtime \ell_{2i+2} \leqtime \ell_{2i+1} \leqtime \ell_{2i+3}$.
This contradicts the maximality of $(L_{2i},\ell_{2i},L_{2i+1},\ell_{2i+1})$,
which we assumed at the beginning of the proof.
Therefore, we must conclude that $\ell_{2i+1}$ is strictly to the left of $\ell_{2i+2}$.

Now that we know that $\ell_{2i+2} \leqtime \ell_{2i+1}$ and that $\ell_{2i+1}$ is to the left of $\ell_{2i+2}$,
we derive the existence of the inversion $(L_{2i+2},\ell_{2i+2},L_{2i+1},\ell_{2i+1})$.
Again, because $\rho$ satisfies \PR2, we know that the word
$v_{2i+2} ~ \out{\rho[\ell_{2i+2},\ell_{2i+1}]} ~ v_{2i+1}$ has period at most $\bound$
that divides $p_{2i+2}$ and $p_{2i+1}$.
Summing up, we have:
\begin{enumerate}
  \item $w_1 ~=~ \outb{\rho[\ell_0,\ell_{2i+1}]} ~ v_{2i+1}$ has period $p_{2i+1}$,
  \item $w_2 ~=~ v_{2i+2} ~ \outb{\rho[\ell_{2i+2},\ell_{2i+1}]} ~ v_{2i+1}$ 
        has period $p = \gcd(p_{2i+2},p_{2i+1})$, 
  \item $w_3 ~=~ v_{2i+2} ~ \outb{\rho[\ell_{2i+2},\ell_{2i+3}]} ~ v_{2i+3}$
        has period $p' = \gcd(p_{2i+2},p_{2i+3})$.
\end{enumerate}
We are now ready to exploit our stronger variant of Fine-Wilf's theorem, 
that is, Theorem~\ref{thm:fine-wilf}. 

We begin with  (1) and (2) above.
Let $w = \outb{\rho[\ell_{2i+2},\ell_{2i+1}]}~ v_{2i+1}$ be the common suffix of
$w_1$ and $w_2$. First note that since $p$ divides $p_{2i+2}$, the
word $w$ is also a prefix of $w_2$, thus we can write
$w_2=w\,w'_2$. Second, note that the length of $w$ is at least
$|v_{2i+1}| = p_{2i+1} = p_{2i+1} + p - \gcd(p_{2i+1},p)$. We can
apply now Theorem~\ref{thm:fine-wilf} to $w_1=w'_1\,w$ and
$w_2=w\,w'_2$ and obtain:
\begin{enumerate}
  \setcounter{enumi}{3}
  \item $w_4 ~=~ w'_1 \: w \: w'_2 
             ~=~ \outb{\rho[\ell_0,\ell_{2i+2}]} ~ v_{2i+2} ~ \outb{\rho[\ell_{2i+2},\ell_{2i+1}]} ~ v_{2i+1}$ 
        has period $p$.
\end{enumerate}

We apply next Theorem~\ref{thm:fine-wilf} to (2) and (3), namely, to
the words $w_2$ and $w_3$ with $v_{2i+2}$ as common factor. It is not
difficult to check that $|v_{2i+2}|=p_{2i+2} \ge p+p'-p''$ with
$p''=\gcd(p,p')$, using the definitions of $p$ and $p'$: we can
write $p_{2i+2}=p''rq=p''r'q'$ with $p=p''r$ and $p'=p''r'$. It suffices  to check
that $p''rq+p''r'q' \ge 2(p''r + p''r'-p'')$, hence that $rq+r'q' \ge 2r+2r'-2$. This
is clear if $\min(q,q')>1$. Otherwise the inequality $p_{2i+2} \ge
p+p'-p''$ follows easily because $p=p''$ or $p'=p''$ holds. Hence we obtain
that $w_2$ and $w_3$ have both period $p''$.

Applying once more Theorem~\ref{thm:fine-wilf} to $w_3$ and $w_4$ with
$v_{2i+2}$ as common factor, yields period $p''$ for the word
\[
w_5 ~=~ \outb{\rho[\ell_0,\ell_{2i+2}]} ~ v_{2i+2} ~ 
                 \outb{\rho[\ell_{2i+2},\ell_{2i+3}]} ~ v_{2i+3}
\]

Finally, the periodicity is not affected when we remove
factors of length multiple than the period. In particular, 
by removing the factor $v_{2i+2}$ from $w_5$, 
we obtain the desired word
$\outb{\rho[\ell_0,\ell_{2i+3}]} ~ v_{2i+3}$, whose period
$p''$ divides $p_{2i+3}$. This proves the claim for the inductive step,
and completes the proof of the proposition.
\end{proof}

\medskip
The $\simeq$-classes considered so far cannot be directly used to
define the blocks in  the desired decomposition of $\rho$, since the $x$-coordinates of their 
endpoints might not be in the appropriate order. The next definition
takes care of this, by enlarging the $\simeq$-classes according to
$x$-coordinates of the anchor points in the equivalence class. 

\bigskip\noindent
\begin{minipage}[l]{\textwidth-5.8cm}
\begin{defi}\label{def:bounding-box-sweeping}
Consider a non-singleton $\simeq$-equivalence class $K=[\ell,\ell']$. 
Let $\an{K}$
be the restriction of $K$ to the anchor points occurring in some inversion, 
and $X_{\an{K}} = \{x \::\: \exists y\: (x,y)\in \an{K}\}$ 
be the projection of $\an{K}$ on positions.
We define $\block{K}=[\tilde\ell,\tilde\ell']$, where 
\begin{itemize}
  \item $\tilde\ell$ is the latest location $(\tilde x,\tilde y) \leqtime \ell$ 
        such that $\tilde x = \min(X_{\an{K}})$, 
  \item $\tilde\ell'$ is the earliest location $(\tilde x,\tilde y) \geqtime \ell'$ 
        such that $\tilde x = \max(X_{\an{K}})$
\end{itemize}  
(note that the locations $\tilde\ell,\tilde\ell'$ exist since $\ell,\ell'$ 
are anchor points in some inversion).
\end{defi}
\end{minipage}
\begin{minipage}[r]{5.7cm}
\centering
\scalebox{0.9}{
\begin{tikzpicture}[baseline=0, inner sep=0, outer sep=0, minimum size=0pt, scale=0.52, yscale=1.2]
  \tikzstyle{dot} = [draw, circle, fill=white, minimum size=4pt]
  \tikzstyle{fulldot} = [draw, circle, fill=black, minimum size=4pt]
  \tikzstyle{grayfactor} = [->, shorten >=1pt, rounded corners=8, gray, thin, dotted]
  \tikzstyle{factor} = [->, shorten >=1pt, rounded corners=8]
  \tikzstyle{dotfactor} = [->, shorten >=1pt, dotted, rounded corners=8]
  \tikzstyle{fullfactor} = [->, >=stealth, shorten >=1pt, very thick, rounded corners=8]
  \tikzstyle{dotfullfactor} = [->, >=stealth, shorten >=1pt, dotted, very thick, rounded corners=8]

  \fill [pattern=north east lines, pattern color=gray!25]
        (2,-0.325) rectangle (8,4.325);
  \draw [dashed, thin, gray] (2,-1.25) -- (2,5);
  \draw [dashed, thin, gray] (8,-1.25) -- (8,5);
  \draw (2,-1.5) node [below] {\footnotesize $\min(X_{\an{K}})$};
  \draw (8,-1.5) node [below] {\footnotesize $\max(X_{\an{K}})$};

  \draw (0,0) node (node0) {};
  \draw (2,0) node [dot] (node1) {};
  \draw (8,0) node (node2) {};
  \draw (10,0) node (node3) {};
  \draw (10,1) node (node4) {};
  \draw (8,1) node (node5) {};
  \draw (4,1) node [fulldot] (node6) {};
  \draw (2,1) node (node7) {};
  \draw (0,1) node (node8) {};
  \draw (0,2) node (node9) {};
  \draw (2,2) node (node10) {};
  \draw (8,2) node (node11) {};
  \draw (10,2) node (node12) {};
  \draw (10,3) node (node13) {};
  \draw (8,3) node (node14) {};
  \draw (6,3) node [fulldot] (node15) {};
  \draw (2,3) node (node16) {};
  \draw (0,3) node (node17) {};
  \draw (0,4) node (node18) {};
  \draw (2,4) node (node19) {};
  \draw (8,4) node [dot] (node20) {};
  \draw (10,4) node (node21) {};

  \draw [grayfactor] (node0) -- (node1); 
  \draw [dotfullfactor] (node1) -- (node2); 
  \draw [dotfullfactor] (node2) -- (node3.center) -- (node4.center) -- (node5); 
  \draw [dotfullfactor] (node5) -- (node6); 
  \draw [fullfactor] (node6) -- (node7.center) -- (node8.center) -- (node9.center) -- (node10.center)
                  -- (node11.center) -- (node12.center) -- (node13.center) -- (node14.center) -- (node15); 
  \draw [dotfullfactor] (node15) -- (node16); 
  \draw [dotfullfactor] (node16) -- (node17.center) -- (node18.center) -- (node19); 
  \draw [dotfullfactor] (node19) -- (node20); 
  \draw [grayfactor] (node20) -- (node21); 
  
  \draw (node1) node [below left = 2mm] {$\tilde\ell\ $};
  \draw (node6) node [below = 2mm] {$\ell$};
  \draw (node15) node [above = 2mm] {$\ell'$};
  \draw (node20) node [above right = 2mm] {$\ \tilde\ell'$};
\end{tikzpicture}
}

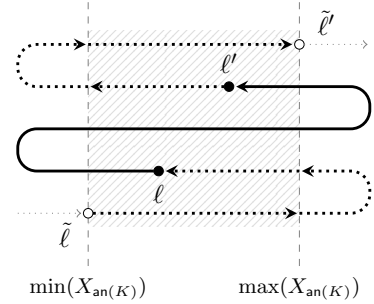
\captionof{figure}{Block construction.\label{fig:block-construction-sweeping}}
\end{minipage}

\smallskip

\begin{lem}\label{lem:bounding-box-sweeping}
If $K$ is a non-singleton $\simeq$-equivalence class, 
then $\rho|\block{K}$ is a $\bound$-block.
\end{lem}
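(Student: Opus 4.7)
The plan is to verify the two conditions of Definition~\ref{def:factors-sweeping}: first, that $\out{\rho[\tilde\ell,\tilde\ell']}$ is almost periodic with bound $2\bound$; second, that the output of the subsequence $\rho|Z$, where $Z = [\tilde\ell,\tilde\ell'] \setminus ([\tilde x,\tilde x']\times[\tilde y,\tilde y'])$, has length at most $2\hmax\bound$. By Lemma~\ref{lem:overlapping} the middle segment $\out{\rho[\ell,\ell']}$ already has period at most $\bound$, so it will suffice to show that each of the two extensions $\out{\rho[\tilde\ell,\ell]}$ and $\out{\rho[\ell',\tilde\ell']}$ has length at most $\bound$.

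The crucial structural observation in the sweeping case is that $\tilde\ell$ and $\ell$ must lie on the same pass (same level $\tilde y = y$): if the run switched passes between $\tilde\ell$ and $\ell$, it would necessarily re-cross position $\tilde x = \min(X_{\an{K}})$ at an intermediate level, contradicting the choice of $\tilde\ell$ as the latest location $\leqtime \ell$ at position $\tilde x$. Moreover, since $\ell \in \an{K}$ has position at least $\tilde x$, this common pass must be rightward. So $\rho[\tilde\ell,\ell]$ is a single-level rightward factor, and the plan is to show it is output-minimal. Otherwise, the pigeonhole argument behind Lemma~\ref{lem:output-minimal-sweeping} yields a loop $L^* \subsetneq [\tilde x, x]$ at level $\tilde y$ with non-empty output, which can be chosen minimal so that its anchor $\ell^* = (x^*,\tilde y)$ has output-minimal trace and $x^* > \tilde x$. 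Since $\tilde x \in X_{\an{K}}$ there exists an anchor $\ell^\diamond \in \an{K}$ at position $\tilde x$, and one directly checks that $(L^*,\ell^*,L^\diamond,\ell^\diamond)$ satisfies the conditions of Definition~\ref{def:inversion-sweeping}, forcing $\ell^*$ into $K$ --- a contradiction, since $\ell^* \lesstime \ell$ and $\ell$ is the first location of $K$. Lemma~\ref{lem:output-minimal-sweeping} then bounds $|\out{\rho[\tilde\ell,\ell]}|$ by $\bound$, and the symmetric argument handles $\rho[\ell',\tilde\ell']$.

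For the second condition, the monotonicity of levels along a sweeping run ensures that every location in $[\tilde\ell,\tilde\ell']$ has level in $[\tilde y,\tilde y']$, so $Z$ consists only of locations at positions outside $[\tilde x,\tilde x']$. These form at most two single-level excursions per level, one to the left of $\tilde x$ and one to the right of $\tilde x'$. An entirely analogous argument bounds each such excursion's output by $\bound$: otherwise a minimal output-producing sub-loop yields an anchor $\ell^*$ at a position outside $[\tilde x,\tilde x']$, which can be paired with $\ell$ (for left excursions) or $\ell'$ (for right excursions) to form a new inversion, contradicting the extremality of $\tilde x$ or $\tilde x'$ in $X_{\an{K}}$. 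Summing over at most $\hmax$ levels and two excursions per level yields the desired bound $2\hmax\bound$. The main obstacle is the case analysis required to construct the contradictory inversion at each occurrence of the argument: one must verify the run-order and position-order conditions of Definition~\ref{def:inversion-sweeping} and always find a compatible partner anchor of $K$, which relies on the structural observation that no inversion can lie on a single pass, so $K$ is guaranteed to contain anchor points at several distinct levels.
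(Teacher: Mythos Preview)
Your overall strategy matches the paper's: bound $|\out{\rho[\tilde\ell,\ell]}|$ and $|\out{\rho[\ell',\tilde\ell']}|$ via output-minimality, then handle the excursions of $Z$ by producing a contradictory inversion with an anchor in $\an{K}$. However, your ``crucial structural observation'' is false, and this is a genuine gap.

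You claim that $\tilde\ell$ and $\ell$ must lie on the same level, arguing that a pass switch would force a re-crossing of position $\tilde x$. But consider $\ell$ at an \emph{odd} level $y$ with position $x>\tilde x$ (this is perfectly possible: e.g.\ an inversion can have both anchors on a single leftward pass). On level $y$ the run moves right-to-left and has not yet reached $\tilde x$ when it visits $\ell$; the latest crossing of $\tilde x$ before $\ell$ is therefore $(\tilde x,\,y-1)$, one level below. So $\rho[\tilde\ell,\ell]$ may span \emph{two} levels: a rightward piece $\alpha$ on level $\tilde y$ followed by a leftward piece $\beta$ on level $\tilde y+1$. In this case $\rho[\tilde\ell,\ell]$ is not a single-level factor, Definition~\ref{def:output-minimal-sweeping} does not apply to it, and the bound $\bound$ you claim is unattainable. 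The correct bound is $2\bound$, obtained (as the paper does) by assuming $|\out{\rho[\tilde\ell,\ell]}|>2\bound$, so that one of $\alpha,\beta$ has output exceeding $\bound$, and then running your output-minimality-plus-inversion argument on that single-level piece. The block definition's bound $2\bound$ is calibrated precisely to absorb this.

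A second, minor point: your closing remark that ``no inversion can lie on a single pass'' is also false (again, two anchors on the same odd level already form an inversion), but this observation is not actually needed. For a left excursion, any $\ell^*$ with position $<\tilde x$ and $\ell^*\in[\tilde\ell,\tilde\ell']$ automatically satisfies $\ell\lesstime\ell^*$, since $\rho[\tilde\ell,\ell]$ stays to the right of $\tilde x$; pairing $\ell^*$ with $\ell$ (not with some anchor ``at another level'') already yields the contradictory inversion, exactly as the paper does.
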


\begin{proof}
Consider a non-singleton $\simeq$-class $K=[\ell,\ell']$ and let $\an{K}$, $X_{\an{K}}$, and 
$\block{K}=[\tilde\ell,\tilde\ell']$ 
be as in Definition \ref{def:bounding-box-sweeping}.
The reader can refer to Figure \ref{fig:block-construction-sweeping}
to quickly recall the notation.
We need to verify that $\rho[\tilde\ell,\tilde\ell']$ is a $\bound$-block 
(cf.~Definition \ref{def:factors-sweeping}), namely, that:
\begin{itemize}
  \item $\tilde\ell=(\tilde x,\tilde y)$, $\tilde\ell'=(\tilde x',\tilde y')$,
        with $\tilde x \le \tilde x'$,
  \item the output produced by $\rho[\tilde\ell,\tilde\ell']$ 
        is almost periodic with bound $2\bound$,
  \item the output produced by the subsequence $\rho|Z$, 
        where $Z=[\tilde\ell,\tilde\ell'] ~\setminus~ 
                 \big([\tilde x,\tilde x']\times[\tilde y,\tilde y']\big)$,
        has length at most $2\hmax\bound$.
\end{itemize}
The first condition $\tilde x \le \tilde x'$ follows immediately from 
the definition of $\tilde x$ and $\tilde x'$ as $\min(X_{\an{K}})$ 
and $\max(X_{\an{K}})$, respectively.

Next, we prove that the output produced by the factor 
$\rho[\tilde\ell,\tilde\ell']$ is almost periodic with bound $2\bound$.
By Definition \ref{def:bounding-box-sweeping}, we have 
$\tilde\ell \leqtime \ell \lesstime \ell' \leqtime \tilde\ell'$,
and by Lemma \ref{lem:overlapping} we know that $\out{\rho[\ell,\ell']}$ 
is periodic with period at most $\bound$ ($\le 2\bound$). So it suffices to show that
the lengths of the words $\out{\rho[\tilde\ell,\ell]}$ and 
$\out{\rho[\ell',\tilde\ell']}$ are at most $2\bound$. 
We shall focus on the former word, as the arguments for 
the latter are similar. 

First, we note that the factor $\rho[\tilde\ell,\ell]$ 
lies entirely to the right of position $\tilde x$, and 
in particular, it starts at an even level $\tilde y$. This follows
from the definition of $\tilde\ell$, and whether $\ell$ itself is at
an odd/even level. In particular, the location $\ell$ is either at the same level as $\tilde\ell$,
or just one level above. 

Now, suppose, by way of contradiction, that $|\out{\rho[\tilde\ell,\ell]}| > 2\bound$. 
We head towards a contradiction by finding a location $\ell'' \lesstime \ell$ 
that is $\simeq$-equivalent to the first location $\ell$ of the $\simeq$-equivalence 
class $K$. 
Since the location $\ell$ is either at the same level as $\tilde\ell$,
or just above it, 
the factor $\rho[\tilde\ell,\ell]$ is of the form $\alpha\,\beta$,
where $\alpha$ is rightward factor lying on the same level as $\tilde\ell$ 
and $\beta$ is either empty or a leftward factor on the next level.
Moreover, since $|\out{\rho[\tilde\ell,\ell]}| > 2\bound$, we know that
either $|\out{\alpha}| > \bound$ or $|\out{\beta}| > \bound$.
Thus,  Lemma \ref{lem:output-minimal-sweeping} 
says that one of the two factors $\alpha,\beta$ is not output-minimal.
In particular, there is a loop $L_1$, strictly to the right of $\tilde x$, 
that intercepts a subfactor $\gamma$ of $\rho[\tilde\ell,\ell]$,
with $\out{\gamma}$ non-empty and output-minimal.

Let $\ell''$ be the first location of the factor $\gamma$. 
Clearly, $\ell''$ is an anchor point of $L$ and $\out{\tr{\ell''}}\neq\emptystr$.
Further recall that $\tilde x=\min(X_{\an{K}})$ is the leftmost position of 
locations in the class $K=[\ell,\ell']$ that are also anchor points of inversions. 
In particular, there is a loop $L_2$ with some anchor point
$\ell''_2=(\tilde x,y''_2)\in \an{K}$, and such that $\tr{\ell''}$ is
non-empty and output-minimal. 
Since $\ell'' \lesstime \ell \leqtime \ell''_2$ 
and the position of $\ell''$ is to the right of the position of $\ell''_2$, 
we know that $(L_1,\ell'',L_2,\ell''_2)$ is also an inversion, 
and hence $\ell'' \simeq \ell''_2 \simeq \ell$.
But since $\ell'' \lesstime \ell$, we get a contradiction with the 
assumption that $\ell$ is the first location of a $\simeq$-class. 
In this way we have shown that $|\out{\rho[\ell_1,\ell]}| \le 2\bound$. 

It remains to show that the output produced 
by the subsequence $\rho|Z$, where
$Z=[\tilde\ell,\tilde\ell'] ~\setminus~ \big([\tilde x,\tilde x']\times[\tilde y,\tilde y']\big)$,
has length at most $2\hmax\bound$.
For this it suffices to prove that 
$|\out{\alpha}| \le \bound$ for every factor $\alpha$ of
$\rho[\tilde\ell, \tilde\ell']$ that lies 
at a single level and either to the left of $\tilde x$ or to the right of $\tilde x'$.
By symmetry, we consider only one of the two types of factors.
Suppose, by way of contradiction, that there is a factor $\alpha$
at level $y''$, to the left of $\tilde x$, 
and such that $|\out{\alpha}| > \bound$.
By Lemma \ref{lem:output-minimal-sweeping} we know that
$\alpha$ is not output-minimal, so there is some loop 
$L_2$ strictly to the left of $\tilde x$ that intercepts an
output-minimal subfactor 
$\beta$ of $\alpha$ with non-empty output.
Let $\ell''$ be the first location of $\beta$. We know that
$\tilde\ell \lesstime \ell'' \leqtime \tilde\ell'$. Since the level
$\tilde y$ is even, this means that the level of $\ell''$ is strictly
greater than $\tilde y$. Since we also know that $\ell$ 
is an anchor point of some inversion, we can take a suitable loop $L_1$ 
with anchor point $\ell$ and obtain that $(L_1,\ell,L_2,\ell'')$ is an 
inversion, so $\ell'' \simeq \ell$. But
this contradicts the fact that $\tilde x$ is the leftmost position of
$\an{K}$.
We thus conclude that $|\out{\alpha}| \le \bound$, and this 
completes the proof that $\rho|\block{K}$ is a $\bound$-block.
\end{proof}

The next lemma shows that blocks do not overlap along the input axis: 

\begin{lem}\label{lem:consecutive-blocks-sweeping}
Suppose that $K_1$ and $K_2$ are two different non-singleton $\simeq$-classes
such that $\ell \lesstime \ell'$ for all $\ell \in K_1$ and $\ell' \in K_2$.
Let $\block{K_1}=[\ell_1,\ell_2]$ and $\block{K_2}=[\ell_3,\ell_4]$,
with $\ell_2=(x_2,y_2)$ and $\ell_3=(x_3,y_3)$. 
Then $x_2 < x_3$.
\end{lem}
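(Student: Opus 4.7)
The plan is to argue by contradiction. Suppose $x_2 \ge x_3$. Recall from Definition~\ref{def:bounding-box-sweeping} that $x_2 = \max(X_{\an{K_1}})$ and $x_3 = \min(X_{\an{K_2}})$, so we can pick anchor points $\ell_a \in \an{K_1}$ at position $x_2$ and $\ell_b \in \an{K_2}$ at position $x_3$ witnessing these extrema. By the very definition of $\an{\cdot}$, each of $\ell_a, \ell_b$ appears in some inversion of $\rho$, so each is an anchor point of a loop --- call them $L_a$ and $L_b$ --- with $\tr{\ell_a}$ and $\tr{\ell_b}$ non-empty and output-minimal.

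The hypothesis that every location of $K_1$ strictly precedes every location of $K_2$ in the run order gives $\ell_a \lesstime \ell_b$. In the main case $x_2 > x_3$, the four clauses of Definition~\ref{def:inversion-sweeping} are all satisfied by the quadruple $(L_a, \ell_a, L_b, \ell_b)$: $L_a, L_b$ are loops of $\rho$, $\ell_a, \ell_b$ are anchor points of them, the run-order/position inversion $\ell_a \lesstime \ell_b$ with $x_2 > x_3$ holds, and non-emptiness together with output-minimality of $\tr{\ell_a}, \tr{\ell_b}$ is already in hand. Thus $(L_a, \ell_a, L_b, \ell_b)$ is an inversion of $\rho$, and by Definition~\ref{def:crossrel} we conclude $\ell_a \crossrel \ell_b$, whence $\ell_a \simeq \ell_b$ and $K_1 = K_2$ --- contradicting the hypothesis that they are distinct $\simeq$-classes.

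The delicate point, and the main obstacle of the proof, is to rule out the boundary case $x_2 = x_3$, where the strict position inequality required by an inversion is not immediately available. The way to handle it is to exploit the extremal roles of $\ell_a$ and $\ell_b$: the maximality of $x_2$ in $X_{\an{K_1}}$ forces $\ell_a$ to be the \emph{first} anchor of its witnessing inversion (otherwise the partner would itself belong to $\an{K_1}$ at a position strictly greater than $x_2$), yielding a partner $\ell'_a \in \an{K_1}$ with $\ell_a \lesstime \ell'_a$ and position strictly less than $x_2$; symmetrically, the minimality of $x_3$ forces $\ell_b$ to be the \emph{second} anchor of its witnessing inversion, with partner $\ell''_b \in \an{K_2}$ satisfying $\ell''_b \lesstime \ell_b$ and position strictly greater than $x_3 = x_2$. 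A short geometric analysis of how the loops $L_a, L_b$ and the partner loops must fit around the shared position $x_2 = x_3$ in a sweeping run then produces a genuine inversion bridging a location of $K_1$ with a location of $K_2$, reducing this case to the main one; the rest of the proof is the direct inversion argument of the previous paragraph.
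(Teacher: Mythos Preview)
Your main case $x_2 > x_3$ is correct and is exactly the paper's argument: pick anchors realizing the extrema, form the bridging inversion, conclude $K_1=K_2$.

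The gap is in the boundary case $x_2 = x_3$. Your setup there is sound --- maximality of $x_2$ indeed forces $\ell_a$ to be a \emph{first} anchor (otherwise its partner in $\an{K_1}$ would sit strictly to the right of $x_2$), and dually $\ell_b$ must be a \emph{second} anchor. But the promised ``short geometric analysis'' is never carried out, and it cannot be completed along the line you sketch. With your four anchors one has the run order $\ell_a \lesstime \ell'_a \lesstime \ell''_b \lesstime \ell_b$, while every $K_1$-anchor sits at a position $\le x_2$ and every $K_2$-anchor at a position $\ge x_3 = x_2$. Definition~\ref{def:inversion-sweeping} requires the earlier anchor to have a \emph{strictly larger} position, so all four $K_1$--$K_2$ pairings fail. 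Appealing to ``how the loops fit around the shared position'' does not manufacture new usable anchors either, since an inversion also demands non-empty, output-minimal trace, which you only know at the four locations already named. So the boundary case is left open.

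For comparison, the paper's own proof does not single out the equality case at all: it simply writes ``since $\ell \leqtime \ell'''$ and $x_2 \ge x_3$, we know that $(L,\ell,L''',\ell''')$ is also an inversion'', passing silently over the strict inequality required by Definition~\ref{def:inversion-sweeping}. You are being more scrupulous than the paper in isolating the case --- but you have not closed it, and the sketch you give does not lead to a closing argument.
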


\begin{proof}
Suppose by contradiction that $K_1$ and $K_2$ are as in the
statement, but $x_2 \ge x_3$. 
By Definition \ref{def:bounding-box-sweeping}, 
$x_2=\max(X_{\an{K_1}})$ and $x_3=\min(X_{\an{K_2}})$.
This implies the existence of some inversions
$(L,\ell,L',\ell')$ and $(L'',\ell'',L''',\ell''')$ 
such that $\ell=(x_2,y)$ and $\ell'''=(x_3,y''')$.
Moreover, since $\ell \leqtime \ell'''$ and $x_2 \ge x_3$,
we know that $(L,\ell,L''',\ell''')$ is also an inversion,
thus implying that $K_1=K_2$.
\end{proof}

For the sake of brevity, we call \emph{$\simeq$-block} any 
factor of the form $\rho|\block{K}$ that is obtained by applying 
Definition~\ref{def:bounding-box-sweeping} to a non-singleton $\simeq$-class $K$.
The results obtained so far imply that every location covered by an 
inversion is also covered by an $\simeq$-block (Lemma \ref{lem:bounding-box-sweeping}), 
and that the order of occurrence of $\simeq$-blocks is the same as the order of positions 
(Lemma \ref{lem:consecutive-blocks-sweeping}).
So the $\simeq$-blocks can be used as factors for the $\bound$-decomposition
of $\rho$ we are looking for. Below, we show that the remaining 
factors of $\rho$, which do not overlap the $\simeq$-blocks, are $\bound$-diagonals. 
This will complete the construction of a $\bound$-decomposition of $\rho$.

Formally, we say that a factor $\rho[\ell,\ell']$ 
\emph{overlaps} another factor $\rho[\ell'',\ell''']$ if 
$[\ell,\ell'] \:\cap\: [\ell'',\ell'''] \neq \emptyset$, 
$\ell' \neq \ell''$, and $\ell \neq \ell'''$.

\begin{lem}\label{lem:diagonal-sweeping}
Let $\rho[\ell,\ell']$ be a 
factor of $\rho$,
with $\ell=(x,y)$, $\ell'=(x',y')$, and $x\le x'$,
that does not overlap any $\simeq$-block.
Then $\rho[\ell,\ell']$ is a $\bound$-diagonal.
\end{lem}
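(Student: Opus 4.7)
The plan is to exhibit a $\bound$-diagonal decomposition $\ell = \ell_0 \leqtime \ell_1 \leqtime \dots \leqtime \ell_{2n+1} = \ell'$ of $\rho[\ell,\ell']$ built around the ``significant'' anchor points it contains, where an anchor point $m$ of a loop is called \emph{significant} if $\tr{m}$ is output-minimal and $\out{\tr{m}} \neq \emptystr$. The key observation is that the hypothesis forbids the existence of inversions with both anchors strictly interior to $\rho[\ell,\ell']$: by Lemma~\ref{lem:bounding-box-sweeping} applied to the $\simeq$-class $K$ that such an inversion would generate, the block $\rho|\block{K}$ would overlap $\rho[\ell,\ell']$. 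Consequently, for any two significant anchors $m_1 \lesstime m_2$ strictly inside $\rho[\ell,\ell']$ their positions must satisfy $x_1 \le x_2$, so these anchors form a monotone rightward-and-upward staircase along the run.

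From this monotonicity I would derive two output bounds. First, any single-level factor $\alpha$ inside $\rho[\ell,\ell']$ containing no significant anchor is itself output-minimal, since otherwise a non-empty output-minimal sub-factor of $\alpha$ would supply a significant anchor at the corresponding level; by Lemma~\ref{lem:output-minimal-sweeping} this yields $|\out{\alpha}| \le \bound$. Second, any leftward sweep contains \emph{at most one} significant anchor, because two such anchors at the same odd level would automatically form an inversion (the later in run order sits at a smaller position since the sweep is leftward). Splitting such a sweep $\beta$ at its unique significant anchor, if any, into three single-level pieces and applying the first bound shows $|\out{\beta}| \le 3\bound$.

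Now I would define the diagonal as follows: for each even level $2k$ traversed by $\rho[\ell,\ell']$ whose rightward portion $\sigma_k$ contains at least one significant anchor, the corresponding floor runs from the first to the last significant anchor of $\sigma_k$; other even levels contribute no floor. Every gap $\rho[\ell_{2i},\ell_{2i+1}]$ between consecutive floors is then a concatenation of the trailing one-level piece of the preceding rightward sweep, all intermediate leftward and rightward sweeps without floors, and the leading one-level piece of the next rightward sweep. Each piece has output at most $3\bound$ by the above bounds, and since $\rho$ is normalized and sweeping, a gap contains at most $\hmax$ such pieces; distinguishing odd-level pieces from even-level pieces yields the required gap bound $2\hmax\bound$. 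The position condition $x_{2i} \le x_{2i+1}$ between consecutive floor endpoints follows immediately from the staircase property applied to the significant anchors $\ell_{2i}$ and $\ell_{2i+1}$.

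The step I expect to be most delicate concerns the two boundary gaps $\rho[\ell_0,\ell_1]$ and $\rho[\ell_{2n},\ell_{2n+1}]$, where I need to verify that the position of $\ell = \ell_0$ does not exceed that of the first significant anchor $\ell_1$ (and symmetrically on the right). The location $\ell$ is either the initial location of $\rho$, in which case its position is $0$ and the claim is trivial, or it is the ending location $\tilde\ell'_{K_{\text{prev}}}$ of a preceding $\simeq$-block $K_{\text{prev}}$. In the latter case Definition~\ref{def:bounding-box-sweeping} supplies some anchor $\ell^c \in \an{K_{\text{prev}}}$ at the same position as $\ell$, and any significant anchor of $\rho[\ell,\ell']$ at a strictly smaller position would form an inversion with $\ell^c$, placing it inside $K_{\text{prev}}$ and contradicting that it lies beyond the block's end. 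This boundary analysis is the only part of the argument that has to inspect the structure of $\rho$ outside the interior of $\rho[\ell,\ell']$, and it completes the verification that $\rho[\ell,\ell']$ is a $\bound$-diagonal.
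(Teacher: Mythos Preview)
Your proof is correct and follows essentially the same approach as the paper: both define a set of distinguished anchors (the paper's set $A$, your ``significant'' anchors), derive their positional monotonicity from the absence of inversions, declare floors between the first and last such anchor on each even level, and bound the inter-floor output level-by-level via Lemma~\ref{lem:output-minimal-sweeping}. Two minor points where you are actually more careful than the paper: you explicitly require output-minimality for significant anchors, which is needed to invoke Definition~\ref{def:inversion-sweeping} in the monotonicity step (the paper's $A$ omits this and relies on it tacitly); and you spell out the boundary argument that $x_{\ell_0}\le x_{\ell_1}$, which the paper waves through with ``thanks to the previous arguments'' even though $\ell_0=\ell$ need not lie in $A$. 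Your odd-level bound of $3\bound$ is a slight overcount (splitting at one anchor gives two pieces, so $2\bound$ suffices, as the paper obtains), but either constant yields the required $2\hmax\bound$ gap bound.
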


\begin{proof}
Consider a factor $\rho[\ell,\ell']$, with $\ell=(x,y)$, $\ell'=(x',y')$, 
and $x\le x'$, that does not overlap any $\simeq$-block.
We will focus on locations $\ell''$ with $\ell \leqtime \ell''
\leqtime \ell'$ 
that are anchor points of some loop with $\out{\tr{\ell''}}\neq\emptystr$. 
We denote by $A$ the set of all such locations.

First, we show that the locations in $A$ are monotonic
\reviewOne[inline]{Monotonic$\to$Monotonous?
  \olivier[inline]{we keep monotonic}%
}%
w.r.t.~the position order. Formally, 
we prove that for all $\ell_1,\ell_2\in A$, if $\ell_1=(x_1,y_1) \leqtime \ell_2=(x_2,y_2)$, then
$x_1 \le x_2$. Suppose that this were not the case, namely, that $A$ contained two anchor points
$\ell_1=(x_1,y_1)$ and $\ell_2=(x_2,y_2)$ with $\ell_1 \lesstime \ell_2$ and $x_1 > x_2$.
Let $L_1,L_2$ be the loops of $\ell_1,\ell_2$, respectively, and recall that 
$\out{\tr{\ell_1}},\out{\tr{\ell_2}}\neq\emptystr$. This means that $(L_1,\ell_1,L_2,\ell_2)$
is an inversion, and hence $\ell_1 \simeq \ell_2$. But this contradicts the hypothesis that 
$\rho[\ell,\ell']$ does not overlap any $\simeq$-block.

Next, we identify the floors of our diagonal. Let $y_0,y_1,\dots,y_{n-1}$ be all the {\sl even}
levels that have locations in $A$. For each $i=0,\dots,n-1$, let $\ell_{2i+1}$ (resp.~$\ell_{2i+2}$)
be the first (resp.~last) anchor point of $A$ at level $y_i$. Further let $\ell_0=\ell$ and $\ell_{2n+1}=\ell'$.
Clearly, each factor $\rho[\ell_{2i+1},\ell_{2i+2}]$ is a floor. Moreover, thanks to the previous 
arguments, each location $\ell_{2i}$ is to the left of the location $\ell_{2i+1}$.

It remains to prove that each factor $\rho[\ell_{2i},\ell_{2i+1}]$ produces an 
output of length at most $2\hmax\bound$. By construction, $A$ contains no anchor 
point at an even level and strictly between $\ell_{2i}$ and $\ell_{2i+1}$.
By Lemma \ref{lem:output-minimal-sweeping} this means that the outputs produced
by subfactors of $\rho[\ell_{2i},\ell_{2i+1}]$ that lie entirely at an {\sl even} 
level have length at most $\bound$.
Let us now consider the subfactors $\alpha$ of $\rho[\ell_{2i},\ell_{2i+1}]$ 
that lie entirely at an {\sl odd} level, and let us prove that they produce outputs
of length at most $2\bound$. Suppose that this is not the case, namely, that
$|\out{\alpha}| > 2\bound$. In this case we show that an inversion
would exist at this level. Formally, we can find two locations $\ell'' \lesstime \ell'''$ 
in $\alpha$ such that the prefix of $\alpha$ that ends at location $\ell''$ and the suffix
of $\alpha$ that starts at location $\ell'''$ produce outputs of
length greater than $\bound$. 
By Lemma \ref{lem:output-minimal-sweeping}, 
those two factors would not be output-minimal, 
and hence $\alpha$ would contain disjoint loops $L_1,L_2$ with anchor points $\ell''_1,\ell''_2$
forming an inversion $(L_1,\ell''_1,L_2,\ell''_2)$. But this would imply that $\ell''_1,\ell''_2$
belong to the same non-singleton $\simeq$-equivalence class, which contradicts the hypothesis
that $\rho[\ell,\ell']$ does not overlap any $\simeq$-block.
We must conclude that the subfactors of $\rho[\ell_{2i},\ell_{2i+1}]$
produce outputs of length at most $2\bound$.

Overall, this shows that the output produced by each factor $\rho[\ell_{2i},\ell_{2i+1}]$
has length at most $2\hmax\bound$.
\end{proof}

\smallskip
We have just shown how to construct a $\bound$-decomposition of the run $\rho$
that satisfies \PR2. This proves Proposition \ref{prop:decomposition-sweeping},
as well as the implication \PR2 $\Rightarrow $ \PR3 of Theorem \ref{thm:main2}.

\medskip
\subsection*{From existence of decompositions to an equivalent one-way transducer.}
We now focus on the last implication \PR3 $\Rightarrow $ \PR1 of Theorem \ref{thm:main2}.
More precisely, we show how to construct a one-way transducer $\cT'$ that simulates
the outputs produced by the successful runs of $\cT$ that admit $\bound$-decompositions.
In particular, $\cT'$ turns out to be equivalent to $\cT$ when $\cT$ is one-way definable.
Here we will only give a proof sketch of this construction (as there is no real difference between the sweeping and two-way cases) assuming that $\cT$ is a sweeping
transducer; a fully detailed construction of $\cT'$ from an arbitrary two-way transducer $\cT$ 
will be given in Section~\ref{sec:characterization-twoway} (Proposition \ref{prop:construction-twoway}), 
together with a procedure for deciding one-way definability of $\cT$ (Proposition \ref{prop:complexity}).

\reviewOne[inline]{The structure of Prop 8.6 is all but indistinguishable visually/ This is made all the more aggravating by the fact that there is no corresponding proof in the sweeping case (merely the intuition sketched on Page 23.
  \felix[inline]{added a sentence}
  \olivier[inline]{answered in review1-answer.txt}%
}%

\begin{prop}\label{prop:construction-sweeping}
Given a functional sweeping transducer $\cT$ 
a one-way transducer $\cT'$ can be constructed  in
$2\exptime$ such that the following hold:
\begin{enumerate}
  \item $\cT' \subseteq \cT$,
  \item $\dom(\cT')$ contains all words that induce successful runs of $\cT$ 
        admitting $\bound$-de\-com\-po\-si\-tions.
\end{enumerate}
In particular, $\cT'$ is equivalent to $\cT$ iff $\cT$ is one-way definable.
\end{prop}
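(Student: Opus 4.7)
The plan is to construct $\cT'$ as a nondeterministic one-way transducer that, while reading the input from left to right, guesses a $\bound$-decomposition of a successful run $\rho$ of $\cT$ together with its output, and produces the latter on the fly. I would design the state space of $\cT'$ to encode: (i)~a crossing sequence of $\cT$ at the current position (of length at most $\hmax = 2|Q|-1$); (ii)~a mode flag distinguishing whether the factor of the decomposition currently being simulated is a $\bound$-diagonal or a $\bound$-block; (iii)~for a block, a guessed period word of length at most $2\bound$, the current phase modulo it, together with bounded prefix/suffix buffers of length $O(\hmax\bound)$ accounting for the output produced outside the bounding box; (iv)~for a diagonal, the index of the floor currently being traversed and an output buffer of length at most $2\hmax\bound$ holding the short connective outputs not yet emitted.

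On each input letter $a$, $\cT'$ would guess the next crossing sequence, verifying local compatibility with $a$ and the transitions of $\cT$ in the classical crossing-sequence style, and emit the corresponding portion of output. Inside a diagonal, the floors contribute output position by position in the natural left-to-right order, while the bounded outputs of the connective subfactors are accumulated in the buffer and flushed at the synchronization points marking the switch between consecutive floors. Inside a block, $\cT'$ emits the appropriate chunk of the guessed period at each position (possibly preceded by the prefix or followed by the suffix), with the number of letters per position determined by the current crossing sequence. Transitions between consecutive factors of the decomposition would be guessed nondeterministically and validated by matching crossing sequences and checking that the relevant buffers have been fully flushed before the mode switch.

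Correctness follows in two directions. Any accepting run of $\cT'$ encodes by construction a successful run of $\cT$ admitting a $\bound$-decomposition whose output matches exactly what $\cT'$ emits, yielding property~(1). Conversely, given any successful run of $\cT$ with a $\bound$-decomposition, $\cT'$ can nondeterministically guess this decomposition together with the associated periods, buffers, and crossing sequences, and thereby accept the input, yielding property~(2). The final ``in particular'' claim then follows by combining the already-established implications $\PR1 \Rightarrow \PR2 \Rightarrow \PR3$ of Theorem~\ref{thm:main2} (Propositions~\ref{prop:periodicity-sweeping} and~\ref{prop:decomposition-sweeping}) with property~(2): if $\cT$ is one-way definable, every input of $\dom(\cT)$ induces a successful run with a $\bound$-decomposition and hence lies in $\dom(\cT')$, so $\cT' \equiv \cT$; the converse direction is immediate since $\cT'$ is one-way by construction.

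For complexity, the dominant contribution to the state space is the guessing of periods and buffers, whose words have length $O(\bound) = O(\cmax\,|Q|^{\hmax})$, itself exponential in the size of $\cT$; over the output alphabet $\Gamma$ this gives doubly exponentially many states, and the transition relation can be generated in $2\exptime$. The main obstacle will be the careful orchestration at the boundaries between consecutive factors of the decomposition, in particular ensuring that the on-the-fly output of $\cT'$ is emitted in the concatenated order dictated by the two-way time order of $\rho$; for blocks this relies crucially on the fact that the almost-periodic structure of the output inside the bounding box, combined with the bounded contribution outside it, makes it possible to produce the correct letters at each position using only the stored period phase and bounded buffers.
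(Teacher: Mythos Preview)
Your proposal is correct and follows essentially the same approach as the paper's proof sketch: guess the run via crossing sequences together with a $\bound$-decomposition, output floors directly and buffer the bounded connective outputs in diagonal mode, and in block mode guess the period word plus bounded prefix/suffix and emit the right number of letters per position as dictated by the crossing sequences. Your description is in fact slightly more detailed than the paper's sketch here (the paper defers the full construction to the two-way case in Proposition~\ref{prop:construction-twoway}), and your complexity analysis matches.
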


\begin{proof}[Proof sketch.]
Given an input word $u$, the one-way transducer $\cT'$ needs to guess a successful run $\rho$ of $\cT$ on $u$
that admits a $\bound$-decomposition. This can be done by guessing the crossing sequences of $\rho$ at each
position, 
together with a 
sequence of locations $\ell_i$ that identify the factors of a $\bound$-decomposition of $\rho$.
To check the correctness of the decomposition, $\cT'$ also needs to guess a bounded amount of information 
(words of bounded length) to reconstruct the outputs produced by the $\bound$-diagonals 
and the $\bound$-blocks. For example, while scanning a factor of the input underlying a diagonal, $\cT'$ 
can easily reproduce the outputs of the floors and the guessed outputs of factors between them. 
In a similar way, while scanning a factor of the input underlying a block, $\cT'$ can simulate 
the almost periodic output by guessing its repeating pattern and the bounded prefix and suffix
of it, and by emitting the correct amount of letters, as it is done in Example \ref{ex:running}.
In particular, one can verify that the capacity of $\cT'$ is linear in $\hmax\bound$.
Moreover, because the guessed objects are of size linear in $\hmax\bound$ and $\hmax\bound$ is a 
simple exponential in the size of $\cT$, the size of the one-way transducer $\cT'$ has
doubly exponential size in that of $\cT$.
\end{proof}


\section{The structure of two-way loops}\label{sec:loops-twoway}

While loop pumping in a sweeping transducer is rather simple, 
we need a much better understanding when it comes to pump loops of
unrestricted two-way transducers.
This section is precisely devoted to untangling the structure of two-way loops.
We will focus on specific loops, called idempotent, 
that generate repetitions with a ``nice shape'', very similar 
to loops of sweeping transducers.

We fix throughout this section a functional two-way transducer $\cT$, 
an input word $u$, and a (normalized) successful run $\rho$ of $\cT$ on $u$. 
As usual, $\hmax=2|Q|- 1$ is the maximal length of a crossing sequence 
of $\rho$, and $\cmax$ is the maximal number of letters output by a single
transition. 

\reviewTwo[inline]{
I have some questions concerning the definition of the flows and effects.

Initially you fix a transducer T, a word u and a run $\rho$ of T on u, and you define the flow corresponding to an interval I of $\rho$. On line 10, you say that “we consider effects that are not necessarily associated with intervals of specific runs”. Then you define the set of all flows, and all effects.
If I understand correctly, a flow has to correspond to an interval I, whereas an effect is a triple containing a flow (corresponding to an interval I) and two crossing sequences (that might not correspond to I).

My first question is: wouldn’t it be more natural to have either
1. also flows that do not correspond to an interval,
2. or only effects that correspond to intervals?
I realise that in case 1, defining the flows in a way for Lemma 6.5 to still hold seems complicated, but I was wondering if you had considered it.

Second, I was wondering if it wouldn’t be easier to fix the set of nodes of the flows to $\{0,\dots , 2|Q| - 2 \}$.
This would allow a smoother definition of $G \cdot G'$ (l.25): as of now, the set of nodes of $G \cdot G'$ is not stated explicitly. If I am not mistaken, it is the set of nodes that are part of G and admit an outgoing edge, or are part of $G'$ and admit an incoming edge (which is not that complicated, but wouldn't be needed at all if we always have all the nodes).
\olivier[inline]{addressed in review2-answer.txt}%
}%

\medskip
\subsection*{Flows and effects.}
We start by analyzing the shape of factors of $\rho$ intercepted
by an interval $I=[x_1,x_2]$. 
We identify four types of factors $\alpha$ intercepted by $I$ 
depending on the first location $(x,y)$ and the last location
$(x',y')$:
\begin{itemize}
\item $\alpha$ is an $\LL$-factor if $x=x'=x_1$,
\item $\alpha$ is an $\RR$-factor if $x=x'=x_2$,
\item $\alpha$ is an $\LR$-factor if $x=x_1$ and $x'=x_2$, 
\item $\alpha$ is an $\RL$-factor if $x=x_2 $ and $x'=x_1$.
\end{itemize}
In Figure \ref{fig:intercepted-factors} we see that $\a$ is an
$\LL$-factor, $\b,\delta$ are $\LR$-factors, $\z$ is an $\RR$-factor,
and $\g$ is an $\RL$-factor.

\begin{defi}\label{def:flow} 
Let $I = [x_1,x_2]$ be an interval of $\rho$ and $h_i$ the length of
the crossing sequence $\rho|x_i$, for both $i=1$ and $i=2$.

The \emph{flow} $F_I$ of $I$ is the directed graph with set of nodes
$\set{0,\dots,\max(h_1,h_2)-1}$ and set of edges consisting of all
$(y,y')$ such that there is a factor of $\rho$ intercepted by $I$ 
that starts at location $(x_i,y)$ and ends at location $(x_j,y')$,
for $i,j\in\{1,2\}$.

The \emph{effect} $E_I$ of $I$ is the triple $(F_I,c_1,c_2)$, 
where $c_i=\r|x_i$ is the crossing sequence at $x_i$.
\end{defi}

\noindent
For example, the interval $I$ of Figure \ref{fig:intercepted-factors} 
has the flow graph $0\mapsto 1\mapsto 3\mapsto 4\mapsto 2\mapsto 0$.

It is easy to see that every node of a flow $F_I$ has at most one
incoming and at most one outgoing edge. More precisely, if $y<h_1$ is
even, then it has one outgoing edge (corresponding to an $\LR$- or $\LL$-factor
intercepted by $I$), and if it is odd it has one incoming edge (corresponding
to an $\RL$- or $\LL$-factor intercepted by $I$). Similarly, if $y<h_2$ is 
even, then it has one incoming edge (corresponding to an $\LR$- or $\RR$-factor), 
and if it is odd it has one outgoing edge (corresponding to an $\RL$- or
$\RR$-factor). 

In the following we consider  effects that are not necessarily
associated with intervals of specific runs. The definition of such effects
should be clear: they are triples consisting of a graph (called flow)
and two crossing sequences of lengths $h_1,h_2 \le \hmax$, with sets of 
nodes of the form $\{0,\ldots,\max(h_1,h_2)-1\}$, 
that satisfy the in/out-degree properties stated above. 
It is convenient to distinguish the edges in a flow based on the
parity of the source and target nodes. Formally, we partition any 
flow $F$ into the following subgraphs:
\begin{itemize}
  \item $F_\LR$ consists of all edges of $F$ between pairs of even nodes,
  \item $F_\RL$ consists of all edges of $F$ between pairs of odd nodes,
  \item $F_\LL$ consists of all edges of $F$ from an even node to an odd node,
  \item $F_\RR$ consists of all edges of $F$ from an odd node to an even node.
\end{itemize}

We denote by $\cF$ (resp.~$\cE$) the set of all flows (resp.~effects)
augmented with a dummy element $\bot$. We equip both sets $\cF$ and $\cE$ with 
a semigroup structure, where the corresponding products $\circ$ and $\odot$ are 
defined below (similar definitions appear in \cite{Birget1990}). 
Later we will use the semigroup structure to identify the \emph{idempotent loops},
that play a crucial role in our characterization of one-way definability. 

\begin{defi}\label{def:product} 
For two graphs $G,G'$, we denote by $G\cdot G'$ the graph with edges of 
the form $(y,y'')$ such that $(y,y')$ is an edge of $G$ and $(y',y'')$ is an
edge of $G'$, for some node $y'$ that belongs
to both $G$ and $G'$. 
Similarly, we denote by $G^*$ the graph with edges $(y,y')$ 
such that there exists a (possibly empty) path in $G$ from $y$ to $y'$.

The product of two flows $F,F'$ is the unique flow $F\circ F'$ (if it exists) such that:
\begin{itemize}
\item $(F\circ F')_\LR = F_\LR \cdot (F'_\LL \cdot F_\RR)^* \cdot F'_\LR$,
\item $(F\circ F')_\RL = F'_\RL \cdot (F_\RR \cdot F'_\LL)^* \cdot F_\RL$,
\item $(F\circ F')_\LL = F_\LL ~\cup~  F_\LR \cdot (F'_\LL \cdot F_\RR)^* \cdot F'_\LL \cdot F_\RL$,
\item $(F\circ F')_\RR = F'_\RR ~\cup~  F'_\RL \cdot (F_\RR \cdot F'_\LL)^* \cdot F_\RR \cdot F'_\LR$.
\end{itemize}
If no flow $F\circ F'$ exists with the above properties, 
then we let $F\circ F'=\bot$.

The product of two effects $E=(F,c_1,c_2)$ and $E'=(F',c'_1,c'_2)$ is either the effect
$E\odot E' = (F\circ F',c_1,c'_2)$ or the dummy element $\bot$, depending on whether 
$F\circ F'\neq \bot$ and $c_2=c'_1$.
\end{defi}

\noindent
For example, let $F$ be the flow of interval $I$ 
in Figure \ref{fig:intercepted-factors}. Then 
$(F \circ F)_\LL=\set{0\mapsto 1, 2\mapsto 3}$, 
$(F \circ F)_\RR=\set{1\mapsto 2, 3\mapsto 4}$, and
$(F \circ F)_\LR=\set{4\mapsto 0}$ 
--- one can quickly verify this with the 
help of Figure \ref{fig:pumping-twoway}.

It is also easy to see that $(\cF,\circ)$ and $(\cE,\odot)$ are finite semigroups, and that
for every run $\r$ and every pair of consecutive intervals $I=[x_1,x_2]$ and $J=[x_2,x_3]$ of $\r$,
$F_{I\cup J} = F_I \circ F_J$ and $E_{I\cup J} = E_I \odot E_J$.
In particular, the function $E$ that associates each interval $I$ of $\rho$ with the 
corresponding effect $E_I$ can be seen as a semigroup homomorphism. 


\medskip
\subsection*{Loops and components.}
Recall that a loop is an interval $L=[x_1,x_2]$ 
with the same crossing sequences at $x_1$ and $x_2$. 
We will follow techniques similar to those presented in Section \ref{sec:combinatorics-sweeping}
to show that the outputs generated in non left-to-right manner are essentially periodic.
However, differently from the sweeping case, we will consider only special types 
of loops:

\begin{defi}\label{def:idempotent} 
A loop $L$ is \emph{idempotent} if $E_L = E_L \odot E_L$ and $E_L\neq\bot$.
\end{defi}

\noindent
For example, the interval $I$ of Figure \ref{fig:intercepted-factors} is a loop,
if one assumes that the crossing sequences at the borders of $I$ are the same. 
By comparing with Figure \ref{fig:pumping-twoway}, it is easy to see that $I$ 
is not idempotent. On the other hand, the loop consisting of 2 copies of $I$ 
is idempotent. 

\input{pumping-twoway}

As usual, given a loop $L=[x_1,x_2]$ and a number $n\in\bbN$, we can 
introduce $n$ new copies of $L$ and connect the intercepted factors 
in the obvious way.
This results in a new run $\pump_L^{n+1}(\rho)$ on the word $\pump_L^{n+1}(u)$.
Figure \ref{fig:pumping-twoway} shows how to do this for $n=1$ and $n=2$. 
Below, we analyze in detail the shape of the pumped run $\pump_L^{n+1}(\rho)$ 
(and the produced output as well) when $L$ is an {\sl idempotent} loop. 
We will focus on idempotent loops because pumping non-idempotent loops 
may induce permutations of factors that are difficult to handle.  
For example, if we consider again the non-idempotent loop $I$ to the 
left of Figure \ref{fig:pumping-twoway}, the factor of the run between 
$\beta$ and $\gamma$ (to the right of $I$, highlighted in red) precedes 
the factor between $\gamma$ and $\delta$ (to the left of $I$, again in red), 
but this ordering is reversed when a new copy of $I$ is added.

When pumping a loop $L$, subsets of factors intercepted by $L$ are glued
together to form factors intercepted by the replication of $L$. 
The notion of component introduced below identifies 
groups of factors that are glued together.

\begin{defi}\label{def:component} 
A \emph{component} of a loop $L$ is any strongly 
connected component of its flow $F_L$ 
(note that this is also a cycle, since 
every node in it has in/out-degree $1$). 

Given a component $C$, we denote by 
$\min(C)$ (resp.~$\max(C)$) the minimum (resp.~maximum) 
node in $C$.
We say that $C$ is \emph{left-to-right} (resp.~\emph{right-to-left}) 
if $\min(C)$ is even (resp., odd).

An \emph{$(L,C)$-factor} is a factor of the run that is 
intercepted by $L$ and that corresponds to an edge of $C$.
\end{defi}

\noindent
We will usually list the $(L,C)$-factors based on their order of occurrence in the run.
For example, the loop $I$ of Figure \ref{fig:pumping-twoway} contains
a single component $C=0\mapsto 1\mapsto 3\mapsto 4\mapsto 2\mapsto 0$ 
which is left-to-right. 
Another example is given in Figure \ref{fig:many-components}, where the 
loop $L$ has three components $C_1,C_2,C_3$ (colored in blue, red, 
and green, respectively):
$\a_1,\a_2,\a_3$ are the $(L,C_1)$-factors, $\b_1,\b_2,\b_3$ are
the $(L,C_2)$-factors, and $\g_1$ is the unique $(L,C_3)$-factor.

\input{many-components}

\medskip
Below, we show that the levels of each component of a loop (not necessarily idempotent)
form an interval.

\begin{lem}\label{lem:component}
Let $C$ be a component of a loop $L=[x_1,x_2]$. 
The nodes of $C$ are precisely the levels in the interval $[\min(C),\max(C)]$. 
Moreover, if $C$ is left-to-right (resp.~right-to-left), then $\max(C)$
is the smallest level $\ge \min(C)$ 
such that between $(x_1,\min(C))$ and $(x_2,\max(C))$ (resp.~$(x_2,\min(C))$ and $(x_1,\max(C))$)
there are equally many $\LL$-factors and $\RR$-factors intercepted by $L$.
%
\end{lem}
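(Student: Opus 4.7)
The plan is to prove the lemma by combining two structural properties of the factors intercepted by a loop $L$: the temporal monotonicity of same-side factors, and the non-crossing property of the factor curves (from the fact that $\rho$ is normalized, hence no location is repeated). For monotonicity: since the locations $(x_i,0),(x_i,1),\dots$ at position $x_i$ are numbered according to the time of visit, any $\LL$-factor from $(x_1,y)$ to $(x_1,y')$ must satisfy $y<y'$, and symmetrically any $\RR$-factor from $(x_2,y)$ to $(x_2,y')$ satisfies $y<y'$. Together with the disjointness of factors (as subsequences of a simple path in the plane of positions and levels), this forces the $\LL$-factors to form a laminar (nested) arc family on the left boundary of the strip $[x_1,x_2]\times\bbN$, and the $\RR$-factors similarly on the right boundary. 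The $\LR$- and $\RL$-factors form a non-crossing matching between the endpoints left unenclosed by these two laminar families.

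For claim (a), my plan is to argue by strong induction on $\max(C)-\min(C)$ that every intermediate level lies in $C$. The key observation is that if $C$ traverses an $\LL$-factor from $(x_1,y)$ to $(x_1,y')$ with $y+1<y'$, the arc splits the strip into two parts, and the non-crossing property forbids any $\LR$- or $\RL$-factor from having one endpoint strictly inside $(y,y')$ on the left and the other outside. Consequently, even levels in $(y,y')$ have their out-edges routed to odd levels in the same range via nested $\LL$-factors, and by the symmetric argument on the right, the out-edges from odd levels in $(y,y')$ weave back through nested $\RR$-factors into the enclosing cycle. I would formalize this by isolating the sub-flow on the enclosed levels, applying the inductive hypothesis to conclude it consists of cycles covering contiguous sub-intervals, and then showing that each such sub-cycle is connected to the enclosing $\LL$- (or $\RR$-) factor through the cross factors, hence merges into the same component $C$.

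For claim (b), I would trace the portion of $\rho$ starting from the location $(x_1,\min(C))$ in the run order. As time progresses, new visits to $x_1$ and $x_2$ occur at successively higher levels; an even-level visit at $x_i$ is a start of an intercepted factor and an odd-level visit at $x_i$ is an end (the roles being swapped between $x_1$ and $x_2$). For each candidate $y\ge\min(C)$, let $\#\LL(y)$ and $\#\RR(y)$ denote the numbers of $\LL$- and $\RR$-factors wholly contained in the run portion between $(x_1,\min(C))$ and $(x_2,y)$. Writing down the linear equations relating $\#\LL(y),\#\RR(y),\#\LR(y),\#\RL(y)$ to the counts of visits at $x_1$ and $x_2$ in this window (and using the parity alternation of levels), the condition $\#\LL(y)=\#\RR(y)$ turns out to be equivalent to saying the run portion closes up a full cycle of the flow restricted to $[\min(C),y]$. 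By part (a), the minimum such $y$ is exactly $\max(C)$: any smaller $y$ with this balance would force a sub-cycle on $[\min(C),y]$ strictly inside $C$, contradicting that $C$ is a single component.

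The main obstacle will be the rigorous formalization of the planarity argument underlying part (a): separating the "enclosed" and "outside" regions defined by a given $\LL$- or $\RR$-arc, and tracking precisely how the four types of factors interact across this boundary, requires careful case analysis to ensure the enclosed sub-flow truly merges with the enclosing cycle rather than forming an independent component.
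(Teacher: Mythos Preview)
Your proposal for part (a) rests on a premise that does not hold: you plan to recurse on $\LL$-factors (and symmetrically $\RR$-factors) from level $y$ to level $y'$ with $y+1<y'$, whose arcs would ``enclose'' intermediate levels. But every $\LL$-edge in the flow $F_L$ is of the form $y\to y+1$. An $\LL$-factor runs from one crossing of the boundary $x_1$ to the very next one, and since the level at $x_1$ is nothing but a visit counter, the start and end levels are consecutive. The paper states this as ``easy to see'' and uses it throughout the appendix proof. Consequently the laminar family you describe is degenerate---pairwise disjoint unit arcs with nothing enclosed---and your inductive step is vacuous. Even under a reading of the definitions that allowed an $\LL$-factor to span several levels, two such arcs still could not \emph{nest}: the inner one would have to start at a location $(x_1,y'')$ already visited internally by the outer one, which is impossible since distinct intercepted factors are disjoint in time.

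The actual difficulty, which your plan does not touch, lies in the $\LR$- and $\RL$-edges: these are the only edges that can jump across many levels, and the content of the lemma is precisely that such a jump never straddles a level belonging to a different component. The paper's argument first pins down the exact target of every $\LR$- and $\RL$-edge in terms of the ordered lists of their source levels (the $i$-th $\LR$-edge lands one above the source of the $(i{-}1)$-th $\RL$-edge, and the $i$-th $\RL$-edge lands one above the source of the $i$-th $\LR$-edge). It then partitions the levels at the zeros of the running $\LL$/$\RR$ balance $d_z$---which is essentially your counting idea for part (b)---and, by induction on $i$, constructs an ``incremental path'' from the source of the $i$-th $\LR$-edge to the level just above it, covering a whole sub-interval along the way; these paths chain into a single cycle covering the full interval between consecutive zeros of $d_z$. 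Your non-crossing intuition is correct (the intercepted factors are time-ordered and hence pairwise non-overlapping), but that alone does not force components to be intervals; the explicit edge formulas and the inductive path-building are where the real work happens.
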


\begin{proof}[Proof idea]
The proof of this lemma is rather technical and deferred to
Appendix~\ref{app:proof-component}, since the lemma is not at the core of the proof of the main
result. Let us first note that with the definition of $\max(C)$ stated
in the lemma it is rather easy to see that the interval
$[\min(C),\max(C)]$ is a union of cycles (i.e., components). This
can be shown by arguing that every node in $[\min(C),\max(C)]$ has
in-degree and out-degree one. What is much less obvious is that
$[\min(C),\max(C)]$ is connected, thus consists of a single cycle.

\begin{figure}[!t]
\centering
\begin{tikzpicture}[baseline=0, inner sep=0, outer sep=0, minimum size=0pt, scale=0.32]
  \tikzstyle{dot} = [draw, circle, fill=white, minimum size=4pt]
  \tikzstyle{fulldot} = [draw, circle, fill=black, minimum size=4pt]
  \tikzstyle{grayfactor} = [->, shorten >=1pt, rounded corners=6, gray, thin, dotted]
  \tikzstyle{factor} = [->, shorten >=1pt, rounded corners=6]
  \tikzstyle{dotfactor} = [->, shorten >=1pt, dotted, rounded corners=6]
  \tikzstyle{fullfactor} = [->, >=stealth, shorten >=1pt, very thick, rounded corners=6]
  \tikzstyle{dotfullfactor} = [->, >=stealth, shorten >=1pt, dotted, very thick, rounded corners=6]

\begin{scope}[yscale=1.3]
  \fill [pattern=north east lines, pattern color=gray!25]
        (0,-2) rectangle (6,11);
  \draw [dashed, thin, gray] (0,-2) -- (0,11);
  \draw [dashed, thin, gray] (6,-2) -- (6,11);
  \draw [gray] (0,-2.25) -- (0,-2.5) -- (6,-2.5) -- (6,-2.25);
  \draw [gray] (3,-2.75) node [below] {\footnotesize $L$};
 
  \draw (3,-0.8) node () {$\vdots$};
  \draw (3,10.5) node () {$\vdots$};

  \draw (0,0) node [dot] (node1) {};
  \draw (2,0) node (node2) {};
  \draw (2,1) node (node3) {};
  \draw (0,1) node [dot] (node4) {};

  \draw (0,2) node [dot] (node5) {};
  \draw (2,2) node (node6) {};
  \draw (2,3) node (node7) {};
  \draw (0,3) node [dot] (node8) {};

  \draw (0,4) node [dot] (node9) {};
  \draw (2,4) node (node10) {};
  \draw (2,5) node (node11) {};
  \draw (0,5) node [dot] (node12) {};

  \draw (0,6) node [fulldot] (node13) {};
  \draw (6,6) node [fulldot] (node14) {};

  \draw (6,7) node [dot] (node15) {};
  \draw (4,7) node (node16) {};
  \draw (4,8) node (node17) {};
  \draw (6,8) node [dot] (node18) {};


  \draw (6,9) node [fulldot] (node23) {};
  \draw (0,9) node [fulldot] (node24) {};

  \draw (node13) node [left = 2mm, gray] {\footnotesize $\ort y_i$};
  \draw (node14) node [right = 2mm, gray] {\footnotesize $\olft y_{i-1}+1$};
  \draw (node23) node [right = 2mm, gray] {\footnotesize $\olft y_i$};
  \draw (node24) node [left = 2mm, gray] {\footnotesize $\ort y_i+1$};

  \draw [fullfactor] (node1) -- (node2.center) -- (node3.center) -- (node4); 
  \draw [fullfactor] (node5) -- (node6.center) -- (node7.center) -- (node8); 
  \draw [fullfactor, dashed, gray] (node9) -- (node10.center) -- (node11.center) -- (node12); 

  \draw [fullfactor, blue] (node13) -- (node14);

  \draw [fullfactor] (node15) -- (node16.center) -- (node17.center) -- (node18); 

  \draw [fullfactor, red] (node23) -- (node24);
\end{scope}

\begin{scope}[yscale=1.5, xshift=20cm, yshift=-0.3cm]
  \draw [dashed, thin, gray] (0,-0.5) -- (0,8.5);
  \draw [gray] (0,-2) node [below] {\footnotesize $F_L$};
    
  \draw (0,-1) node () {$\vdots$};
  \draw (0,9.45) node () {$\vdots$};

  \draw (0,0) node [dot] (node1) {};
  \draw (2,0) node (node2) {};
  \draw (2,1) node (node3) {};
  
  \draw (0,1) node [dot] (node4) {};
  \draw (-2,1) node (node5) {};
  \draw (-2,2) node (node6) {};
  
  \draw (0,2) node [dot] (node7) {};
  \draw (2,2) node (node8) {};
  \draw (2,3) node (node9) {};

  \draw (0,4.65) node [dot] (node10) {};
  \draw (2,4.65) node (node11) {};
  \draw (2,5.65) node (node12) {};

  \draw (0,5.65) node [dot] (node13) {};

  \draw (0,3) node [fulldot] (node16) {};
  \draw (-3.5,3) node (node17) {};
  \draw (-3.5,8) node (node18) {};
  \draw (-2,8) node (node19) {};
  \draw (0,8) node [fulldot] (node20) {};

  \draw (0,7) node [fulldot] (node21) {};
  \draw (4,7) node (node22) {};
  \draw (4,-1) node (node23) {};
  \draw (2,-1) node (node24) {};
  \draw (0,0) node [fulldot] (node25) {};

  \draw (node1) node [left = 1mm, gray] {\footnotesize $\olft y_{i-1}+1$};
  \draw (node16) node [above right = 1.5mm, gray] {\footnotesize $\olft y_i$};
  \draw (node21) node [left = 1mm, gray] {\footnotesize $\ort y_i$};
  \draw (node20) node [right = 1mm, gray] {\footnotesize $\ort y_i+1$};

  \draw [fullfactor] (node1) -- (node2.center) -- (node3.center) -- (node4); 
  \draw [fullfactor] (node4) -- (node5.center) -- (node6.center) -- (node7); 
  \draw [fullfactor] (node7) -- (node8.center) -- (node9.center) -- (node16); 
  \draw [fullfactor, dashed, gray] (node10) -- (node11.center) -- (node12.center) -- (node13); 
  \draw [fullfactor, red] (node16) -- (node17.center) -- 
                          (node18.center) -- (node19.center) -- (node20);
  \draw [fullfactor, blue] (node21) -- (node22.center) -- 
                           (node23.center) -- (node24.center) -- (node25);  
\end{scope}
\end{tikzpicture}
\caption{Some factors intercepted by $L$ and the corresponding edges in the flow.}%
\label{fig:edges} 
\end{figure}
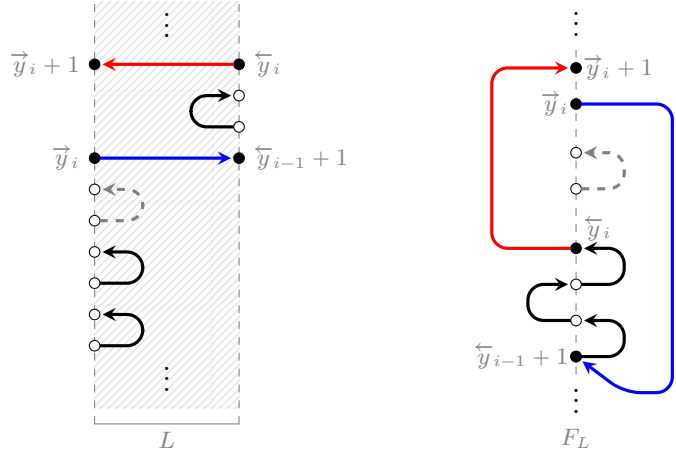



The crux is thus to show that the nodes visited by every cycle of the flow
(or, equally,  
every component) form an interval. For this, we use an induction based 
on portions of the cycle, namely, on {\sl paths} of the flow. 
The difficulty underlying the formalization of the inductive invariant comes from the
fact that, differently from cycles, paths of a flow may visit sets of levels that do 
not form intervals.
An example is given in Figure \ref{fig:edges}, which represents some edges of a flow
forming a path from $\ort y_i$ to $\ort y_i +1$ and covering a non-convex set of nodes: 
note that there could be a large gap between the nodes $\ort y_i$ and $\olft y_i$ due
to the unbalanced numbers of $\LL$-factors and $\RR$-factors below $\olft y_i$.

Essentially, the first part of the proof of the lemma amounts at identifying the sources 
$\ort y_i$ (resp.~$\olft y_i$) of the $\LR$-factors (resp.~$\RL$-factors), and at showing 
that the latter factors are precisely of the form $\ort y_i \rightarrow \olft y_{i-1}+1$
(resp.~$\olft y_i \rightarrow \ort y_i +1$). 
Once these nodes are identified, we show by induction on $i$ that every two 
consecutive nodes $\ort y_i$ and $\ort y_i +1$ must be connected by a path whose 
intermediate nodes form the interval $[\olft y_{i-1}+1,\olft y_i]$. 
Finally, we argue that every cycle (or component) $C$ visits all and only 
the nodes in the interval $[\min(C),\max(C)]$.
\end{proof}

The next lemma describes the precise shape and order of the intercepted factors 
when the loop $L$ is idempotent. 

\begin{lem}\label{lem:component2}
If $C$ is a left-to-right (resp.~right-to-left) component 
of an {\sl idempotent} loop $L$, then the $(L,C)$-factors are in the following order: 
$k$ $\LL$-factors (resp.~$\RR$-factors), followed by one $\LR$-factor (resp.~$\RL$-factor), 
followed by $k$ $\RR$-factors (resp.~$\LL$-factors), for some $k \ge 0$. 
\end{lem}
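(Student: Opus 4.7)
The plan is to prove Lemma~\ref{lem:component2} in three steps: a degree-counting argument giving $\#\LL=\#\RR$, an idempotency argument giving $\#\LR=1$ and $\#\RL=0$, and a physical argument on the run to read off the claimed order. I focus on the left-to-right case; the right-to-left case is symmetric.

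For the counting step, by Lemma~\ref{lem:component} the nodes of $C$ form the interval $[\min(C),\max(C)]$ with $\min(C)$ even, and in the cycle $C$ every node has in-degree and out-degree equal to $1$. Writing $a,b,c,d$ for the numbers of LR-, RL-, LL-, and RR-edges respectively, the out-edges from even nodes (of type LL or LR) sum to $c+a$, while the in-edges to even nodes (of type LR or RR) sum to $a+d$; both equal the number of even nodes in $C$, so $c=d=:k$.

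The key step uses the idempotency $F_L = F_L \circ F_L$ to show $a=1$ and $b=0$. By Definition~\ref{def:product}, $(F_L \circ F_L)_\LR = F_\LR \cdot (F_\LL \cdot F_\RR)^* \cdot F_\LR$, and analogous identities hold for the other edge types. Since every node has in/out-degree at most one in $F_L$, each composed edge must match an edge already present in $F_L$ without producing new ones. I argue by contradiction: if $a\ge 2$ or $b\ge 1$, then tracking how the extra LR- or RL-edge combines with the LL/RR alternations inside $C$ yields, via the composition formulas, a composed edge whose source or target either collides with an already existing edge (violating in/out-degree at most one) or falls outside $[\min(C),\max(C)]$ (contradicting Lemma~\ref{lem:component}), forcing $a=1$ and $b=0$.

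Finally, the run order follows from the physical constraint that consecutive factors of the run share sides: after an L-ending factor the run exits $L$ on the left and re-enters on the left, so the next $(L,C)$-factor is L-starting, and likewise on the right. Among the $(L,C)$-factors, those preceding the unique LR-factor must therefore be L-starting and L-ending, i.e.~LL, while those following it must be R-starting and R-ending, i.e.~RR, yielding the claimed order of $k$ LL-factors, one LR-factor, and $k$ RR-factors in run order. The main obstacle will be the idempotency step: turning $F_L=F_L\circ F_L$ into a local contradiction on the cycle structure of $C$ requires carefully tracking the endpoints of edges through the iterated alternations $(F_\LL\cdot F_\RR)^*$ and $(F_\RR\cdot F_\LL)^*$.
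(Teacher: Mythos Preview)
Your counting step and your side-sharing argument (steps 1 and 3) are essentially what the paper uses implicitly. The genuine gap is step 2, which you correctly flag as the main obstacle but do not carry out. Worse, the failure mechanism you propose is not right: $F_L\circ F_L$ is always a valid flow (every node has in/out-degree at most one), so there is no ``collision'', and by Lemma~\ref{lem:component} the components of any flow are intervals, so no composed edge ``falls outside $[\min(C),\max(C)]$''. The actual contradiction with idempotency is that some \emph{specific} node of $C$ has a \emph{different} out-edge in $F_L\circ F_L$ than in $F_L$; the real work is to name that node and compute the two edges.

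The paper supplies exactly this identification, but via the pumped run $\pump_L^2(\rho)$ rather than the abstract composition formulas, and it interleaves your step~3 with the contradiction instead of postponing it. It first uses the side-sharing observation to see that the $(L,C)$-factors in run order begin with $k$ $\LL$-factors $\beta_1,\dots,\beta_k$, followed by a first non-$\LL$ factor $\gamma$, necessarily $\LR$. Assuming for contradiction that some later factor $\zeta$ is neither $\RR$ nor $\LL$ (hence $\RL$), with $\delta_1,\dots,\delta_{k'}$ the intervening $\RR$-factors and $k'<k$ (this bound uses Lemma~\ref{lem:component}), one doubles $L$ to $L'$: the $(k{+}1)$-st $(L',C')$-factor is then the concatenation $\gamma\,\beta_1\,\delta_1\,\beta_2\cdots\delta_{k'}\,\beta_{k'+1}\,\zeta$, which is an $\LL$-factor. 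Hence the $(k{+}1)$-st edge of $C'$ has type $\LL$ while that of $C$ has type $\LR$, contradicting idempotency. This names the offending node (the source of $\gamma$) and the discrepancy, which is precisely what your sketch leaves unspecified. A minor additional gap: in step~3 you should also argue that the \emph{first} $(L,C)$-factor is L-starting (it begins at $(x_1,\min(C))$), to initialize your side-sharing induction.
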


\begin{proof}
Suppose that $C$ is a left-to-right component of $L$.
We show by way of contradiction that $C$ has only one $\LR$-factor
and no $\RL$-factor. By Lemma~\ref{lem:component} this will yield
the claimed shape. Figure~\ref{fig:notidem} can be used as a reference
example for the arguments that follow.

We begin by listing the $(L,C)$-factors.
As usual, we order them based on their occurrences in the run $\rho$.
Let $\gamma$ be the first $(L,C)$-factor that is not an $\LL$-factor, 
and let $\beta_1,\dots,\beta_k$ be the $(L,C)$-factors that precede $\gamma$ 
(these are all $\LL$-factors).
Because $\gamma$ starts at an even level, it must be an $\LR$-factor.
Suppose that there is another $(L,C)$-factor, say $\zeta$, that comes 
after $\gamma$ and it is neither an $\RR$-factor nor an $\LL$-factor. 
Because $\zeta$ starts at an odd level, it must be an $\RL$-factor. 
Further let $\delta_1,\dots,\delta_{k'}$ be the intercepted $\RR$-factors 
that occur between $\gamma$ and $\zeta$.
We claim that $k'<k$, namely, that the number of $\RR$-factors between 
$\gamma$ and $\zeta$ is strictly less than the number of $\LL$-factors 
before $\gamma$. Indeed, if this were not the case, then, by Lemma 
\ref{lem:component}, the level where $\zeta$ starts would not belong to
the component $C$.

Now, consider the pumped run $\rho'=\pump_L^2(\rho)$, obtained by adding a new
copy of $L$. Let $L'$ be the loop of $\rho'$ obtained from the union
of $L$ and its copy. Since $L$ is idempotent, the components of $L$ are isomorphic
to the components of $L'$. In particular, we can denote by $C'$ the component
of $L'$ that is isomorphic to $C$. 
Let us consider the $(L',C')$-factors of $\rho'$. The first $k$ factors
are isomorphic to the $k$ $\LL$-factors $\beta_1,\ldots,\beta_k$ from $\rho$.
However, the $(k+1)$-th element has a different shape: it is isomorphic to
$\gamma~\beta_1~\delta_1~\beta_2~\cdots~\delta_{k'}~\beta_{k'+1}~\zeta$,
and in particular it is an $\LL$-factor.
This implies that the $(k+1)$-th edge of $C'$ is of the form $(y,y+1)$,
while the $(k+1)$-th edge of $C$ is of the form $(y,y-2k)$. 
This contradiction comes from having assumed the existence of 
the $\RL$-factor $\zeta$, and is illustrated in Figure~\ref{fig:notidem}.
\end{proof}

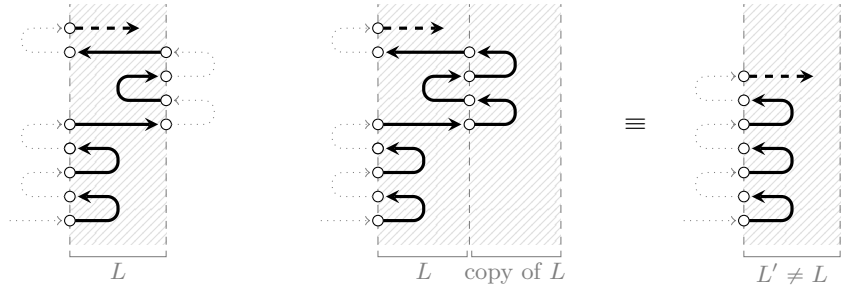
\begin{figure}[!t]
\centering
\begin{tikzpicture}[baseline=0, inner sep=0, outer sep=0, minimum size=0pt, scale=0.32]
  \tikzstyle{dot} = [draw, circle, fill=white, minimum size=4pt]
  \tikzstyle{fulldot} = [draw, circle, fill=black, minimum size=4pt]
  \tikzstyle{grayfactor} = [->, shorten >=1pt, rounded corners=4, gray, thin, dotted]
  \tikzstyle{factor} = [->, shorten >=1pt, rounded corners=4]
  \tikzstyle{dotfactor} = [->, shorten >=1pt, dotted, rounded corners=4]
  \tikzstyle{fullfactor} = [->, >=stealth, shorten >=1pt, very thick, rounded corners=4]
  \tikzstyle{dotfullfactor} = [->, >=stealth, shorten >=1pt, dotted, very thick, rounded corners=4]

\begin{scope}[xshift=-10cm]
  \fill [pattern=north east lines, pattern color=gray!25]
        (4,-1) rectangle (8,9);
  \draw [dashed, thin, gray] (4,-1) -- (4,9);
  \draw [dashed, thin, gray] (8,-1) -- (8,9);
  \draw [gray] (4,-1.25) -- (4,-1.5) -- (8,-1.5) -- (8,-1.25);
  \draw [gray] (6,-1.75) node [below] {\footnotesize $L$};

  \draw (1.5,0) node (node0) {};
  \draw (4,0) node [dot] (node1) {};
  \draw (6,0) node (node2) {};
  \draw (6,1) node (node3) {};
  \draw (4,1) node [dot] (node4) {};
  \draw (2,1) node (node5) {};
  \draw (2,2) node (node6) {};
  \draw (4,2) node [dot] (node7) {};
  \draw (4,3) node [dot] (node8) {};
  \draw (10,2) node (node9) {};
  \draw (10,3) node (node10) {};
  \draw (4,3) node [dot] (node12) {};
  \draw (2,3) node (node13) {};
  \draw (2,4) node (node14) {};
  \draw (4,4) node [dot] (node15) {};
  \draw (8,4) node [dot] (node16) {};
  \draw (10,4) node (node17) {};
  \draw (10,5) node (node18) {};
  \draw (8,5) node [dot] (node19) {};
  \draw (6,5) node (node20) {};
  \draw (6,6) node (node21) {};
  \draw (8,6) node [dot] (node22) {};
  \draw (8,7) node  [dot] (node23) {};
  \draw (10,6) node (node24) {};
  \draw (10,7) node (node25) {} ;
  \draw (4,7) node [dot] (node26) {};
  \draw (4,8) node [dot] (node27) {};
  \draw (2,7) node (node28) {};
  \draw (2,8) node (node29) {};
  \draw (7,8) node (node30) {};
  \draw (6,2) node (node77) {};
  \draw (6,3) node (node88) {};

  \draw [grayfactor] 
  (node0) -- (node1);
  \draw [fullfactor] (node1) -- (node2.center) -- 
                     (node3.center) -- (node4); 
  \draw [fullfactor] (node7) -- (node77.center) -- 
  (node88.center) -- (node8);
  
  \draw [grayfactor] (node4) -- (node5.center) -- (node6.center) -- (node7); 
  
  \draw [grayfactor] (node12) -- (node13.center) -- (node14.center) -- (node15);
  \draw [fullfactor] (node15) --  (node16);
  \draw [grayfactor] (node16) -- (node17.center) -- (node18.center) -- (node19);
  \draw [fullfactor] (node19) -- (node20.center) -- 
                     (node21.center) -- (node22);
  \draw [grayfactor] (node22) -- (node24.center) -- (node25.center) -- (node23);
  \draw [fullfactor] (node23) -- (node26) ;
  \draw [grayfactor] (node26) -- (node28.center) -- (node29.center) -- (node27);
  \draw [fullfactor, dashed] (node27) -- (node30) ;
\end{scope}

\begin{scope}[xshift=18cm]
  \fill [pattern=north east lines, pattern color=gray!25]
        (4,-1) rectangle (8,9);
  \draw [dashed, thin, gray] (4,-1) -- (4,9);
  \draw [dashed, thin, gray] (8,-1) -- (8,9);
  \draw [gray] (4,-1.25) -- (4,-1.5) -- (8,-1.5) -- (8,-1.25);
  \draw [gray] (6,-1.75) node [below] {\footnotesize $L' \neq L$};

  \draw (1.5,0) node (node0) {};
  \draw (4,0) node [dot] (node1) {};
  \draw (6,0) node (node2) {};
  \draw (6,1) node (node3) {};
  \draw (4,1) node [dot] (node4) {};
  \draw (2,1) node (node5) {};
  \draw (2,2) node (node6) {};
  \draw (4,2) node [dot] (node7) {};
  \draw (4,3) node [dot] (node8) {};
  \draw (4,4) node [dot] (node9) {};
  \draw (4,5) node [dot] (node10) {};
  \draw (6,4) node  (node11) {};
  \draw (6,5) node  (node12) {};
  \draw (2,3) node (node13) {};
  \draw (2,4) node (node14) {};
  \draw (4,6) node [dot] (node15) {};
  \draw (2,5) node (node17) {};
  \draw (2,6) node (node18) {};
  \draw (6,5) node (node20) {};
  \draw (6,6) node (node21) {};
  \draw (10.5,6) node (node23) {};
  \draw (6,2) node (node77) {};
  \draw (6,3) node (node88) {};

\draw (-0.5,4) node { $\equiv$}; 
  \draw [grayfactor] 
  (node0) -- (node1);
  \draw [fullfactor] (node1) -- (node2.center) -- 
                     (node3.center) -- (node4); 
  \draw [grayfactor] (node4) -- (node5.center) -- (node6.center) -- (node7); 
  \draw [fullfactor] (node7) -- (node77.center) --
  (node88.center) -- (node8);
  \draw [grayfactor] (node8) -- (node13.center) -- (node14.center) -- (node9);
  \draw [fullfactor] (node9) -- (node11.center) -- (node12.center) -- (node10);
  \draw [grayfactor] (node10) -- (node17.center) -- (node18.center) -- (node15);
  \draw [fullfactor, dashed] (node15) -- (7,6);
\end{scope}

\begin{scope}[xshift=3cm, xscale=0.95]
  \fill [pattern=north east lines, pattern color=gray!25]
        (4,-1) rectangle (8,9);
  \fill [pattern=north east lines, pattern color=gray!25]
        (8,-1) rectangle (12,9);
  \draw [dashed, thin, gray] (4,-1) -- (4,9);
  \draw [dashed, thin, gray] (8,-1) -- (8,9);
  \draw [dashed, thin, gray] (12,-1) -- (12,9);
  \draw [gray] (4,-1.25) -- (4,-1.5) -- (7.9,-1.5) -- (7.9,-1.25);
  \draw [gray] (8.1,-1.25) -- (8.1,-1.5) -- (12,-1.5) -- (12,-1.25);
  \draw [gray] (6,-1.75) node [below] {\footnotesize $L$};
  \draw [gray] (10,-1.75) node [below] {\footnotesize copy of $L$};

  \draw (1.5,0) node (node0) {};
  \draw (4,0) node [dot] (node1) {};
  \draw (6,0) node (node2) {};
  \draw (6,1) node (node3) {};
  \draw (4,1) node [dot] (node4) {};
  \draw (2,1) node (node5) {};
  \draw (2,2) node (node6) {};
  \draw (4,2) node [dot] (node7) {};
  \draw (4,3) node [dot] (node8) {};
  \draw (10,2) node (node9) {};
  \draw (10,3) node (node10) {};
  \draw (4,3) node [dot] (node12) {};
  \draw (2,3) node (node13) {};
  \draw (2,4) node (node14) {};
  \draw (4,4) node [dot] (node15) {};
  \draw (8,4) node [dot] (node16) {};
  \draw (10,4) node (node17) {};
  \draw (10,5) node (node18) {};
  \draw (8,5) node [dot] (node19) {};
  \draw (6,5) node (node20) {};
  \draw (6,6) node (node21) {};
  \draw (8,6) node [dot] (node22) {};
  \draw (8,7) node  [dot] (node23) {};
  \draw (10,6) node (node24) {};
  \draw (10,7) node (node25) {} ;
  \draw (4,7) node [dot] (node26) {};
  \draw (4,8) node [dot] (node27) {};
  \draw (2,7) node (node28) {};
  \draw (2,8) node (node29) {};
  \draw (7,8) node (node30) {};
  \draw (6,2) node (node77) {};
  \draw (6,3) node (node88) {};

  \draw [grayfactor] 
  (node0) -- (node1);
  \draw [fullfactor] (node1) -- (node2.center) -- 
                     (node3.center) -- (node4); 
  \draw [fullfactor] (node7) -- (node77.center) -- 
  (node88.center) -- (node8);
  
  \draw [grayfactor] (node4) -- (node5.center) -- (node6.center) -- (node7); 
  
  \draw [grayfactor] (node12) -- (node13.center) -- (node14.center) -- (node15);
  \draw [fullfactor] (node15) --  (node16);
  \draw [fullfactor] (node16) -- (node17.center) -- (node18.center) -- (node19);
  \draw [fullfactor] (node19) -- (node20.center) -- 
                     (node21.center) -- (node22);
  \draw [fullfactor] (node22) -- (node24.center) -- (node25.center) -- (node23);
  \draw [fullfactor] (node23) -- (node26) ;
  \draw [grayfactor] (node26) -- (node28.center) -- (node29.center) -- (node27);
  \draw [fullfactor, dashed] (node27) -- (node30);
\end{scope}

\end{tikzpicture}
\caption{Pumping a loop $L$ with a wrong shape and showing it is not
  idempotent.}\label{fig:notidem} 
\end{figure}


\begin{rem}
Note that every loop in the sweeping case  is
idempotent. Moreover, the $(L,C)$-factors are precisely the factors
intercepted by the loop $L$.
\end{rem}

\medskip
\subsection*{Pumping idempotent loops.}
To describe in a formal way the run obtained by pumping an idempotent loop,
we need to generalize the notion of anchor point in the two-way case
(the reader may compare this with the analogous definitions in 
Section~\ref{sec:combinatorics-sweeping} for the sweeping case). 
Intuitively, the anchor point of a component $C$ of an idempotent loop $L$ 
is the source location of the unique $\LR$- or $\RL$-factor intercepted by 
$L$ that corresponds to an edge of $C$ (recall Lemma~\ref{lem:component2}):

\begin{defi}\label{def:anchor} 
Let $C$ be a component of an idempotent loop $L = [x_1,x_2]$.
The \emph{anchor point} of $C$ inside $L$, denoted%
\footnote{In denoting the anchor point --- and similarly the trace --- of a component $C$ 
          inside a loop $L$, we omit the annotation specifying $L$, since this is 
          often understood from the context.}
$\an{C}$, is either the location $(x_1,\max(C))$ or the location 
$(x_2,\max(C))$, depending on whether $C$ is left-to-right or right-to-left.
\end{defi}

\noindent
We will usually depict anchor points by black circles (like, for instance, in Figure \ref{fig:many-components}).

It is also convenient to redefine the notation $\tr{\ell}$ for representing 
an appropriate sequence of transitions associated with each anchor point 
$\ell$ of an idempotent loop:

\begin{defi}\label{def:trace} 
Let $C$ be a component of some idempotent loop $L$, let $\ell=\an{C}$
be the anchor point of $C$ inside $L$, and let
$i_0 \mapsto i_1 \mapsto i_2 \mapsto \dots \mapsto i_k \mapsto i_{k+1}$ 
be a cycle of $C$, where $i_0=i_{k+1}=\max(C)$. 
For every $j=0,\dots,k$, further let $\beta_j$ be the factor intercepted 
by $L$ that corresponds to the edge $i_j \mapsto i_{j+1}$ of $C$.

The \emph{trace} of $\ell$ inside $L$ is the run $\tr{\ell} = \beta_0 ~ \beta_1 ~ \cdots ~ \beta_k$.
\end{defi}

\noindent
Note that $\tr{\ell}$ is not necessarily a factor of the original run
$\rho$. However, $\tr{\ell}$ is indeed a run, since $L$ is a loop and
the factors $\b_i$ are concatenated according to the flow. As we will see below, 
$\tr{\ell}$ will appear as (iterated) factor of the pumped version of $\rho$, 
where the loop $L$ is iterated.

As an example, by referring again to the components $C_1,C_2,C_3$ of 
Figure~\ref{fig:many-components}, we have the following traces:
$\tr{\an{C_1}}=\alpha_2\:\alpha_1\:\alpha_3$,
$\tr{\an{C_2}}=\beta_2\:\beta_1\:\beta_3$, and 
$\tr{\an{C_3}}=\gamma_1$. 

The next proposition shows the effect of pumping idempotent loops. The
reader can note the similarity with the sweeping case.

\begin{prop}\label{prop:pumping-twoway}
Let $L$ be an idempotent loop of $\rho$ with components $C_1,\dots,C_k$, 
listed according to the order of their anchor points:
$\ell_1=\an{C_1} \lesstime \cdots \lesstime \ell_k=\an{C_k}$. 
For all $n\in\bbN$, we have 
\[
  \pump_L^{n+1}(\rho) ~=~ 
  \rho_0 ~ \tr{\ell_1}^n ~ \rho_1 ~ \cdots ~ \rho_{k-1} ~ \tr{\ell_k}^n ~ \rho_k
\]
where 
\begin{itemize}
  \item $\rho_0$ is the prefix of $\rho$ that ends at the first anchor point $\ell_1$, 
  \item $\rho_k$ is the suffix of $\rho$ that starts at the last anchor point $\ell_k$,
  \item $\rho_i$ is the factor $\rho[\ell_i,\ell_{i+1}]$, for all $1\le i<k$.
\end{itemize}
\end{prop}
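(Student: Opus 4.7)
The plan is to proceed by induction on $n$. The base case $n=0$ is immediate: $\pump_L^1(\rho)=\rho$, and the right-hand side with each $\tr{\ell_j}^0=\emptystr$ reduces to $\rho_0\rho_1\cdots\rho_k$, the trivial factorization of $\rho$ at the anchor points $\ell_1 \lesstime \cdots \lesstime \ell_k$.

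For the inductive step, I would show that pumping one additional copy of $L$ inserts exactly one extra occurrence of $\tr{\ell_j}$ at the position of $\ell_j$ in $\rho$, for each component $C_j$. Fix one component $C$ of $L$ with anchor $\ell$, and let $L^{(m)}$ denote the loop obtained by concatenating $m$ copies of $L$. Since $L$ is idempotent, the flow product of Definition~\ref{def:product} gives $F_{L^{(m)}}=F_L$ up to node relabelling, so the component structure is preserved, and there is a natural component $C^{(m)}$ of $L^{(m)}$ corresponding to $C$. By Lemma~\ref{lem:component2} applied to $L^{(m)}$, each $C^{(m)}$ again has the symmetric shape of $k_m$ $\LL$-factors (resp.~$\RR$-factors), one crossing factor, and $k_m$ $\RR$-factors (resp.~$\LL$-factors). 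A direct count using the recursive form of $F\circ F'$ in Definition~\ref{def:product} shows that $k_{m+1}-k_m$ equals the number of intercepted factors in $\tr{\ell}$.

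The crux is then to verify that these extra intercepted factors appear together, in run order, precisely at the location of $\ell$, and that their sequence is exactly $\tr{\ell}$. For a left-to-right $C$ with $(L,C)$-factors $\beta_1,\ldots,\beta_{k_1},\gamma,\delta_1,\ldots,\delta_{k_1}$ given by Lemma~\ref{lem:component2}, the trace $\tr{\ell}$, as defined in Definition~\ref{def:trace}, is the sequence of intercepted factors obtained by one full traversal of the cycle of $C$ starting and ending at $\max(C)$. Since this starts and ends at $\ell$ on the left border of $L$, the run $\tr{\ell}$ can be spliced in at $\ell$ without disturbing the surrounding factors $\rho_{j-1}$ and $\rho_j$; the fact that the splicing produces a valid run of $\cT$ follows once more from the flow product construction, which encodes exactly how intercepted factors of adjacent copies of $L$ are glued.

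The main obstacle I anticipate is the bookkeeping of how intercepted factors coming from different copies of $L$ interleave to form the intercepted factors of $L^{(n+1)}$, and how multiple components $C_1,\ldots,C_k$ can be handled simultaneously. For the latter, the key observation is that distinct components of $F_L$ correspond to disjoint cycles and hence to disjoint sets of intercepted factors; consequently the insertions of the traces $\tr{\ell_1},\ldots,\tr{\ell_k}$ happen independently at their respective anchor points along $\rho$, producing the claimed factorization.
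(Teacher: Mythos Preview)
Your inductive framing is natural, but the proposal contains a concrete error and leaves the real work undone.

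The error is the claim that ``a direct count \dots\ shows that $k_{m+1}-k_m$ equals the number of intercepted factors in $\tr{\ell}$.'' Since $L$ is idempotent, $F_{L^{(m)}}=F_L$ for every $m\ge 1$, so the component $C^{(m)}$ has the \emph{same} flow as $C$; in particular $k_m=k_1$ for all $m$, and $k_{m+1}-k_m=0$. What grows with $m$ is not the number of $(L^{(m)},C^{(m)})$-factors (there are always $2k_1+1$ of them), but the \emph{content} of each such factor, which is itself a concatenation of $(L,C)$-factors coming from the individual copies of $L$. Your count is measuring the wrong quantity.

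More substantively, the inductive step as written is too vague to carry the argument. Saying that $\tr{\ell}$ ``can be spliced in at $\ell$ without disturbing the surrounding factors'' is exactly what needs to be proved, and proving it for a single added copy already requires analyzing how the $(L,C)$-factors from two adjacent copies of $L$ glue together---which is the full content of the proposition for $n=1$. The induction buys nothing: once the $n=1$ case is established, the general case follows, but the $n=1$ case is no easier than the direct argument.

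The paper instead works directly with all $n+1$ copies at once. It introduces the anchor points $\ell'_{i,j}$ of each component $C_j$ inside each copy $L'_i$ (well-defined because idempotency makes all copies share the flow of $L$), and factors the pumped run $\rho'$ at the full grid of locations $\{\ell'_{i,j}\}$. The prefix before $\ell'_{0,1}$ and the suffix after $\ell'_{n,k}$ are identified with $\rho_0$ and $\rho_k$; for each $j$, one of the factors $\rho'[\ell'_{0,j},\ell'_{0,j+1}]$ or $\rho'[\ell'_{n,j},\ell'_{n,j+1}]$ (depending on the parity/orientation of $C_j$) is identified with $\rho_j$. The crux is then a single local observation: each factor $\rho'[\ell'_{i,j},\ell'_{i+1,j}]$ is a concatenation of $(L,C_j)$-factors inside the $i$-th copy, beginning with the unique $\LR$- (or $\RL$-) factor at the anchor and then interleaving the $\LL$- and $\RR$-factors---and by Definition~\ref{def:trace} this is precisely $\tr{\ell_j}$. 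Concatenating the $n$ such pieces gives $\tr{\ell_j}^n$. This avoids any induction and makes explicit the one structural fact your ``splicing'' hides.
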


\begin{proof}
Along the proof we sometimes refer to Figure \ref{fig:many-components} to 
ease the intuition of some definitions and arguments.
For example, in the left hand-side of Figure \ref{fig:many-components}, 
the run $\rho_0$ goes until the first location marked by a black circle;
the runs $\rho_1$ and $\rho_2$, resp., are between the first and the 
second black dot, and the second and third black dot; finally, $\rho_3$ 
is the suffix starting at the last black dot. The pumped run 
$\pump_L^{n+1}(\rho)$ for $n=2$ is depicted to the right of the figure.

Let $L=[x_1,x_2]$ be an idempotent loop and, for all $i=0,\dots,n$, let 
$L'_i=[x'_i,x'_{i+1}]$ be the $i$-th copy of the loop $L$ in the pumped run
$\rho'=\pump_L^{n+1}(\rho)$, where $x'_i = x_1 + i\cdot (x_2-x_1)$
(the ``$0$-th copy of $L$'' is the loop $L$ itself). 
Further let $L'=L'_0\cup\dots\cup L'_n = [x'_0,x'_{n+1}]$, that is, 
$L'$ is the loop of $\rho'$ that spans across the $n+1$ occurrences of $L$.
As $L$ is idempotent, the loops $L'_0,\dots,L'_n$ and $L'$ have all the 
same effect as $L$.
In particular, the components of $L'_0,\dots,L'_n$, and $L'$ are isomorphic 
to and in same order as those of $L$.
We denote these components by $C_1,\dots,C_k$.

We let $\ell_j=\an{C_j}$ be the anchor point of each component $C_j$ inside 
the loop $L$ of $\rho$
(these locations are marked by black dots in the left hand-side 
of Figure \ref{fig:many-components}).
Similarly, we let $\ell'_{i,j}$ (resp.~$\ell'_j$) 
be the anchor point of $C_j$ inside the loop $L'_i$ (resp.~$L'$).
From Definition \ref{def:anchor}, we have that either $\ell'_j=\ell'_{1,j}$ or $\ell'_j=\ell'_{n,j}$, 
depending on whether $C_j$ is left-to-right or right-to-left (or, equally, on whether $j$ is odd or even).

Now, let us consider the factorization of the pumped run $\rho'$ 
induced by the locations $\ell'_{i,j}$, for all $i=0,\dots,n$ and for $j=1,\dots,k$ 
(these locations are marked by black dots in the right hand-side of the figure).
By construction, the prefix of $\rho'$ that ends at location $\ell'_{0,1}$ 
coincides with the prefix of $\rho$ that ends at $\ell_1$, 
i.e.~$\rho_0$ in the statement of the proposition.
Similarly, the suffix of $\rho'$ that starts at location $\ell'_{n,k}$ is isomorphic 
to the suffix of $\rho$ that starts at $\ell_k$, i.e. $\rho_k$ in the statement.
Moreover, for all odd (resp.~even) indices $j$, the factor
$\rho'[\ell'_{n,j},\ell'_{n,j+1}]$ (resp.~$\rho'[\ell_{0,j},\ell_{0,j+1}]$) is isomorphic 
to $\rho[\ell_j,\ell_{j+1}]$, i.e.~the $\rho_j$ of the statement. 

The remaining factors of $\rho'$ are those delimited by the pairs of locations 
$\ell'_{i,j}$ and $\ell'_{i+1,j}$, for all $i=0,\dots,n-1$ and all $j=1,\dots,k$. 
Consider one such factor $\rho'[\ell'_{i,j},\ell'_{i+1,j}]$, 
and assume that the index $j$ is odd (the case of an even $j$ is similar). 
This factor can be seen as a concatenation of factors intercepted by $L$ 
that correspond to edges of $C_j$ inside $L'_i$. 
More precisely, $\rho'[\ell'_{i,j},\ell'_{i+1,j}]$ is obtained by concatenating 
the unique $\LR$-factor of $C_j$ --- recall that by Lemma \ref{lem:component2} 
there is exactly one such factor --- with an interleaving of the $\LL$-factors 
and the $\RR$-factors of $C_j$. 
As the components are the same for all $L'_i$'s and for $L$, this corresponds 
precisely to the trace $\tr{\ell_j}$ (cf.~Definition \ref{def:trace}).
Now that we know that $\rho'[\ell'_{i,j},\ell'_{i+1,j}]$ is isomorphic to $\tr{\ell_j}$, 
we can conclude that
$\rho'[\ell'_{0,j},\ell'_{n,j}] \:=\: 
 \rho'[\ell'_{0,j},\ell'_{1,j}] ~ \dots ~ \rho'[\ell'_{n-1,j},\ell'_{n,j}]$
is isomorphic to $\tr{\ell_j}^n$. 
\end{proof}


\section{Combinatorics in the two-way case}\label{sec:combinatorics-twoway}

In this section we develop the main combinatorial techniques required
in the general case.
In particular, we will show how to derive the existence of idempotent 
loops with bounded outputs using Ramsey-based arguments, and we will
use this to derive periodicity properties for the outputs produced
between inversions.

As usual, $\rho$ is a fixed successful run of $\cT$ on some input word $u$.

\medskip
\subsection*{Ramsey-type arguments.}
We start with a technique used for bounding the lengths of the outputs 
of certain factors, or subsequences of a two-way run. This technique
is a Ramsey-type argument, more precisely it relies on
Simon's ``factorization forest'' theorem~\cite{factorization_forests,factorization_forests_for_words_paper},
which is recalled below. The classical version of Ramsey theorem would
yield a similar result, but without the tight bounds that we get here.

Let $X$ be a set of positions of $\rho$.
A \emph{factorization forest} for $X$ is an unranked tree, where the nodes are 
intervals $I$ with endpoints in $X$ and labeled with the corresponding effect $E_I$, 
the ancestor relation is given by the containment order on intervals, the leaves are the 
minimal intervals $[x_1,x_2]$, with $x_2$ successor of $x_1$ in $X$, and for every 
internal node $I$ with children $J_1,\dots,J_k$, we have:
\begin{itemize}
  \item $I=J_1\cup\dots\cup J_k$, 
  \item $E_I = E_{J_1}\odot\dots\odot E_{J_k}$, 
  \item if $k>2$, then $E_I = E_{J_1} = \dots = E_{J_k}$ 
        is an idempotent of the semigroup $(\cE,\odot)$.
\end{itemize}

Recall that in a normalized run there are at most $|Q|^{\hmax}$ distinct 
crossing sequences. Moreover, a flow contains at most $\hmax$ edges, 
and each edge has one of the 4 possible types $\LL,\LR,\RL,\RR$, so they are at most $4^{\hmax}$ different flows.
Hence, the effect semigroup $(\cE,\odot)$ has size at most 
$\emax=4^{\hmax}\cdot|Q|^{2\hmax}\cdot|Q|^{2\hmax} =(2|Q|)^{2\hmax}$.
\reviewTwo[inline]{the way you bound the number of flows could be explained more precisely.
Here you just state that there is at most H edges, and that each edge has one of the possible 4 types.
Then, you bound the number of flows by $4^{H}$, which seems to correspond to any choice of types for the H edges.
However, as you state, there is at most H edges, but there could be less.}%
\felix{done}%
Further recall that $\cmax$ is the maximum number of letters output by a 
single transition of $\cT$.
Like we did in the sweeping case, we define the constant
${\boldsymbol{B = \cmax \cdot \hmax \cdot (2^{3\emax}+4) + 4\cmax}}$
that will be used to bound the lengths of some outputs of $\cT$. 
Note that now $\boldsymbol{B}$ is doubly exponential with respect 
to the size of $\cT$, due to the size of the effect semigroup.
\felix{added this explanation, it could be misleading otherwise}%

\begin{thm}[Factorization forest theorem \cite{factorization_forests_for_words_paper,factorization_forests}]%
\label{th:simon}
For every set $X$ of positions of $\rho$, there is a factorization forest for $X$
of height at most $3\emax$.
\end{thm}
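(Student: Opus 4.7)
The plan is to reduce the statement to the classical factorization forest theorem of Simon (in Kufleitner's refined form, giving the bound $3|S|$ on the height for a semigroup of size $|S|$), applied to the sequence of elementary effects $E_{[x_1,x_2]}, E_{[x_2,x_3]}, \dots$ associated with consecutive positions $x_1 < x_2 < \dots < x_m$ in $X$. Because the map $I \mapsto E_I$ on intervals of $\rho$ is a semigroup homomorphism into $(\cE, \odot)$ (as noted immediately after Definition~\ref{def:product}), any Simon-style factorization tree built over this sequence of base effects lifts to a factorization forest in our setting: internal nodes correspond to unions of adjacent intervals, their labels are the $\odot$-products of the children's labels, and in the high-arity case the common children label is idempotent, matching exactly the three bullets preceding the theorem.

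To give the height bound, I would proceed by induction on $\emax=|\cE|$. The base case handles a sequence of effects whose generated subsemigroup is trivial, for which a single flat node of arbitrary arity suffices. For the inductive step, I would first look at the subsemigroup $S$ of $\cE$ generated by the elementary effects appearing in the sequence, pick an element $e \in S$ that lies in some $\mathcal{J}$-minimal ideal, partition the sequence at the occurrences of $e$ into blocks whose effects lie in a strictly smaller subsemigroup of $\cE$, and recurse on each block. The blocks themselves multiply to $e$ and can be grouped as children of a single internal node with idempotent label whenever $e$ is idempotent, or else combined by a bounded number of binary nodes. A careful accounting of the recursive depth incurred at each $\mathcal{J}$-class yields the stated $3\emax$ bound, as done in~\cite{factorization_forests,factorization_forests_for_words_paper}.

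The main subtlety in transferring Simon's theorem to our setting is that $\odot$ is only a \emph{partial} operation on $\cE$: it evaluates to $\bot$ whenever the crossing sequence at the right end of the first effect disagrees with the one at the left end of the second. However, since the sequence of elementary effects $E_{[x_i,x_{i+1}]}$ arises from a single successful run $\rho$, consecutive effects agree on the shared crossing sequence $\rho|x_{i+1}$, so every contiguous product along the sequence is well-defined and lies in $\cE \setminus \{\bot\}$. Consequently the induction can be run inside the genuine (non-$\bot$) part of the semigroup, and the height bound $3\emax$ applies verbatim. The hardest bookkeeping is the verification of the $3\emax$ constant rather than the plain existence of a bounded-height forest, but this is precisely what the cited refinement of Simon's theorem provides.
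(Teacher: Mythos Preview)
Your proposal is correct and matches the paper's treatment: the theorem is stated there as a direct citation of Simon's factorization forest theorem (in Kufleitner's $3|S|$ form), instantiated with the effect semigroup $(\cE,\odot)$ via the homomorphism $I\mapsto E_I$, and no separate proof is given. Your extra sketch of the $\mathcal{J}$-class induction and the remark that $\bot$ never appears along a single run are accurate but go beyond what the paper does.
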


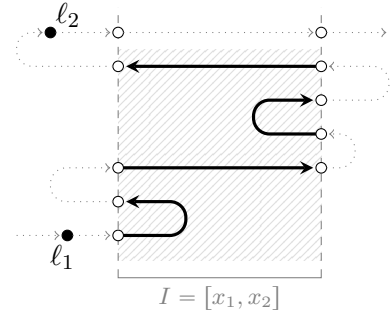
\begin{wrapfigure}{r}{5cm}
\vspace{-4mm}
\begin{tikzpicture}[baseline=0, inner sep=0, outer sep=0, minimum size=0pt, scale=0.45]
  \tikzstyle{dot} = [draw, circle, fill=white, minimum size=4pt]
  \tikzstyle{fulldot} = [draw, circle, fill=black, minimum size=4pt]
  \tikzstyle{grayfactor} = [->, shorten >=1pt, rounded corners=6, gray, thin, dotted]
  \tikzstyle{factor} = [->, shorten >=1pt, rounded corners=6]
  \tikzstyle{dotfactor} = [->, shorten >=1pt, dotted, rounded corners=6]
  \tikzstyle{fullfactor} = [->, >=stealth, shorten >=1pt, very thick, rounded corners=6]
  \tikzstyle{dotfullfactor} = [->, >=stealth, shorten >=1pt, dotted, very thick, rounded corners=6]

  \fill [pattern=north east lines, pattern color=gray!25]
        (6,-0.75) rectangle (12,5.5);
  \draw [dashed, thin, gray] (6,-0.75) -- (6,6.75);
  \draw [dashed, thin, gray] (12,-0.75) -- (12,6.75);
  \draw [gray] (6,-1) -- (6,-1.25) -- (12,-1.25) -- (12,-1);
  \draw [gray] (9,-1.5) node [below] {\footnotesize $I=[x_1,x_2]$};

  \draw (3,0) node (node0) {};
  \draw (4.5,0) node [fulldot] (node0') {};
  \draw (6,0) node [dot] (node1) {};
  \draw (8,0) node (node2) {};
  \draw (8,1) node (node3) {};
  \draw (6,1) node [dot] (node4) {};
  \draw (4,1) node (node5) {};
  \draw (4,2) node (node6) {};
  \draw (6,2) node [dot] (node7) {};
  \draw (12,2) node [dot] (node8) {};
  \draw (13,2) node (node9) {};
  \draw (13,3) node (node10) {};
  \draw (12,3) node [dot] (node11) {};
  \draw (10,3) node (node12) {};
  \draw (10,4) node (node13) {};
  \draw (12,4) node [dot](node14) {};
  \draw (14,4) node (node15) {};
  \draw (14,5) node (node16) {};
  \draw (12,5) node [dot] (node17) {};
  \draw (6,5) node [dot] (node18) {};
  \draw (3,5) node (node19) {};
  \draw (3,6) node (node20) {};
  \draw (4,6) node [fulldot] (node20') {};
  \draw (6,6) node [dot] (node21) {};
  \draw (12,6) node [dot] (node22) {};
  \draw (14,6) node (node23) {};

  \draw [grayfactor] (node0) -- (node0');
  \draw [grayfactor] (node0') -- (node1);
  \draw [fullfactor] (node1) -- (node2.center) -- (node3.center) -- (node4); 
  \draw [grayfactor] (node4) -- (node5.center) -- (node6.center) -- (node7); 
  \draw [fullfactor] (node7) -- (node8);
  \draw [grayfactor] (node8) -- (node9.center) -- (node10.center) -- (node11); 
  \draw [fullfactor] (node11) -- (node12.center) -- (node13.center) -- (node14); 
  \draw [grayfactor] (node14) -- (node15.center) -- (node16.center) -- (node17); 
  \draw [fullfactor] (node17) -- (node18);
  \draw [grayfactor] (node18) -- (node19.center) -- (node20.center) -- (node20'); 
  \draw [grayfactor] (node20') -- (node21); 
  \draw [grayfactor] (node21) -- (node22);
  \draw [grayfactor] (node22) -- (node23);
  
  \draw (node0') node [below = 1.2mm] {$\ell_1~~$};
  \draw (node20') node [above right = 1mm] {$\ell_2$};
\end{tikzpicture}
\caption{A subrun $\rho|Z$.} 
\label{fig:ramsey}

\end{wrapfigure} 
The above theorem can be used to show that if $\rho$
produces an output longer than $\bound$, then it contains an idempotent 
loop and a trace with non-empty output.  Below, we present a result 
in the same spirit, but refined in a way that it can be used to 
find anchor points inside specific intervals.
To formally state the result, we consider subsequences 
of $\r$ induced by sets of locations that are not necessarily
contiguous.
Recall the notation $\rho|Z$ introduced on page~\pageref{rhoZ}: 
$\rho|Z$ is the subsequence of $\rho$ induced by the location set $Z$. 
For example, Figure \ref{fig:ramsey} depicts a set $Z=[\ell_1,\ell_2]\cap (I\times\bbN)$
by a hatched area, together with the induced subrun $\rho|Z$, represented by
thick arrows.

\smallskip

\begin{thm}\label{thm:simon2}
Let $I=[x_1,x_2]$ be an interval of positions, $K=[\ell_1,\ell_2]$ 
an interval of locations, and $Z = K \:\cap\: (I\times\bbN)$.
If $|\out{\rho|Z}| > \bound$, 
then there exists an idempotent loop $L$ and an anchor point $\ell$ of $L$ such that
\begin{enumerate}
  \item $x_1 < \min(L) < \max(L) < x_2$ (in particular, $L\subsetneq I$),
  \item $\ell_1 \lesstime \ell \lesstime \ell_2$ (in particular, $\ell \in K$),
  \item $\out{\tr{\ell}} \neq \emptystr$.
\end{enumerate}
\end{thm}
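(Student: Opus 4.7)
The plan is to apply Simon's factorization forest theorem (Theorem~\ref{th:simon}) to the set of positions $X = \{x_1,\ldots,x_2\}$, obtaining a forest of intervals of height at most $3\emax$ whose internal nodes are either binary splits (two children of arbitrary effects) or idempotent splits (more than two children, all sharing the same idempotent effect). I would then descend from the root to isolate an idempotent internal node whose interval yields the desired loop $L$.

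First, I would reduce to a single ``heavy'' contiguous factor of $\rho|Z$. Transitions of $\rho|Z$ whose source or target is the border position $x_1$ or $x_2$ together produce at most $4\cmax$ symbols (the additive term in $\bound$). After ignoring these, the remaining output of $\rho|Z$ is split into at most $\hmax$ maximal contiguous factors of $\rho|Z$ intercepted by the interior of $I$ (bounded by the length of the crossing sequences). By pigeonhole, at least one such factor $\alpha$ produces output of length strictly greater than $\cmax\cdot 2^{3\emax}$.

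Next, I would descend in the forest starting from the root, maintaining the invariant that the transitions of $\alpha$ falling inside the current node $J_t$ contribute output exceeding $\cmax\cdot 2^{3\emax-t}$ at depth $t$. At a binary internal node, at least one of the two children inherits half of the current contribution, so the invariant passes to that child at depth $t+1$. As soon as the descent reaches an idempotent internal node, I would stop: its interval $L$ has idempotent effect, so it is an idempotent loop, and $L \subsetneq I$ gives condition~(1). The descent must terminate at such a node strictly before depth $3\emax$: otherwise, at a leaf the invariant would require more than $\cmax$ output from a single transition, which is impossible since leaves correspond to single positions contributing at most $\cmax$ symbols per transition.

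Finally, at the idempotent loop $L$ thus obtained, the transitions of $\alpha$ inside $L$ produce non-empty output. These transitions belong to factors intercepted by $L$ corresponding to edges of the flow $F_L$, so some component $C$ of $L$ contains an $(L,C)$-factor with non-empty output. By Definition~\ref{def:trace}, the trace $\tr{\an{C}}$ concatenates all $(L,C)$-factors, hence $\out{\tr{\an{C}}}\neq\emptystr$ (condition~(3)). For condition~(2), since $\alpha$ is a contiguous factor of $\rho|Z$ lying entirely within $K$, any $(L,C)$-factor appearing in $\alpha$ has both endpoints in $K$; by the cyclic structure of $C$ established in Lemma~\ref{lem:component2}, the run $\rho$ must traverse the entire cycle of $C$ (and in particular visit $\an{C}$) between the relevant entries and exits of $L$ during $\alpha$, yielding $\ell_1 \lesstime \an{C} \lesstime \ell_2$. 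The main obstacle I anticipate is making this last step fully rigorous: one must carefully argue that within the time window $K$ the subrun $\rho$ visits enough of $L$'s boundary at level $\max(C)$ for the anchor point to fall between $\ell_1$ and $\ell_2$, which requires exploiting the specific order of $(L,C)$-factors in idempotent loops described by Lemma~\ref{lem:component2} and Proposition~\ref{prop:pumping-twoway}.
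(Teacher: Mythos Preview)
The gap you flag in condition (2) is real and cannot be patched with Lemma~\ref{lem:component2}. Suppose $C$ is a left-to-right component of your loop $L$; by that lemma its $(L,C)$-factors appear in $\rho$ as $k$ $\LL$-factors, then one $\LR$-factor, then $k$ $\RR$-factors. Your contiguous factor $\alpha$ may enter $L$ from the left, traverse only the first $\LL$-factor (producing non-empty output), exit to the left of $L$, and terminate at $\ell_2$ still to the left of $L$; then $\an{C}$, the source of the unique $\LR$-factor, is visited by $\rho$ strictly after $\ell_2$, so $\an{C}\notin K$. Lemma~\ref{lem:component2} only fixes the order in which $\rho$ traverses the $(L,C)$-factors; it does not force a proper subrun that touches $C$ to reach the anchor point.

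The paper avoids this by applying the factorization forest not to all positions of $I$ but to the set $X_y$ of positions $x$ such that $(x,y)$ is the source of a non-empty-output transition of $\rho|Z'$ (with $Z'$ the set $Z$ minus the border locations), where $y$ is a single level chosen by pigeonhole so that $|X_y|>2^{3\emax}$. The height bound then forces an idempotent node with at least three children, giving two adjacent idempotent loops $L_1=[x,x']$, $L_2=[x',x'']$ with $x,x',x''\in X_y$. The decisive point is that the locations $(x,y)$ and $(x'',y)$ are now guaranteed to lie in $Z'\subseteq K\setminus\{\ell_1,\ell_2\}$, on opposite borders of $L_1\cup L_2$, and in the same component $C$ (the one containing level $y$); the portion of $\rho$ between them necessarily contains the unique $\LR$- or $\RL$-factor of $C$ in $L_1\cup L_2$, which passes through the anchor points of $C$ inside both $L_1$ and $L_2$. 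One of these two loops then serves as the desired $L$. Two smaller issues with your outline: your descent may stop at the root of the forest (if it is already idempotent), giving $L=I$ and violating the strict inclusion in (1); and a leaf $[x,x+1]$ can carry up to $\hmax$ transitions of $\alpha$, so your termination bound at the leaves is off by a factor of $\hmax$.
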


\begin{proof}
Let $I$, $K$, $Z$ be as in the statement, 
and suppose that $\big|\out{\rho| Z}\big| > \bound$.
We define 
$Z' = Z ~\setminus~ (\{\ell_1,\ell_2\} \cup \{x_1,x_2\}\times\bbN)$
and we observe that there are at most $2\hmax +2$ locations
that are missing from $Z'$. This means that $\rho| Z'$ contains 
all but $4\hmax+4$ 
transitions of $\rho|Z$, 
and because each transition outputs at most $\cmax$ letters, we have
$|\out{\rho| Z'}| > \bound - 4\cmax\cdot\hmax -4\cmax = \cmax\cdot\hmax\cdot 2^{3\emax}$.

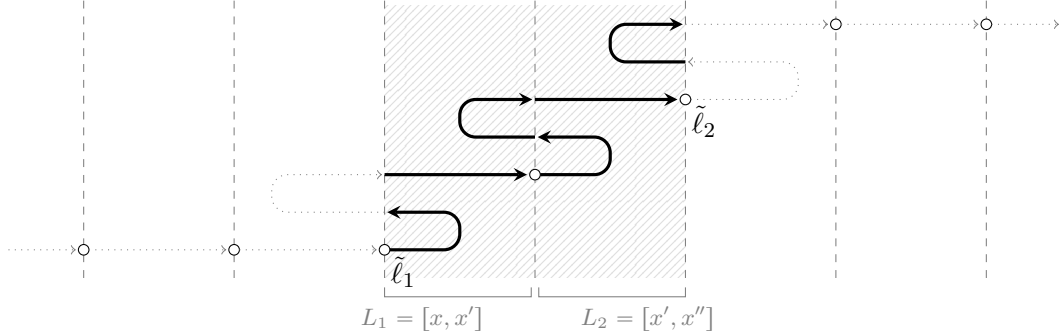
\begin{figure}[!t]
\centering
\begin{tikzpicture}[baseline=0, inner sep=0, outer sep=0, minimum size=0pt, scale=0.5]
  \tikzstyle{dot} = [draw, circle, fill=white, minimum size=4pt]
  \tikzstyle{fulldot} = [draw, circle, fill=black, minimum size=4pt]
  \tikzstyle{grayfactor} = [->, shorten >=1pt, rounded corners=6, gray, thin, dotted]
  \tikzstyle{factor} = [->, shorten >=1pt, rounded corners=6]
  \tikzstyle{dotfactor} = [->, shorten >=1pt, dotted, rounded corners=6]
  \tikzstyle{fullfactor} = [->, >=stealth, shorten >=1pt, very thick, rounded corners=6]
  \tikzstyle{dotfullfactor} = [->, >=stealth, shorten >=1pt, dotted, very thick, rounded corners=6]

  \fill [pattern=north east lines, pattern color=gray!25]
        (10,-0.75) rectangle (18,6.5);
  \draw [dashed, thin, gray] (2,-0.75) -- (2,6.75);
  \draw [dashed, thin, gray] (6,-0.75) -- (6,6.75);
  \draw [dashed, thin, gray] (10,-0.75) -- (10,6.75);
  \draw [dashed, thin, gray] (14,-0.75) -- (14,6.75);
  \draw [dashed, thin, gray] (18,-0.75) -- (18,6.75);
  \draw [dashed, thin, gray] (22,-0.75) -- (22,6.75);
  \draw [dashed, thin, gray] (26,-0.75) -- (26,6.75);
  \draw [gray] (10,-1) -- (10,-1.25) -- (13.9,-1.25) -- (13.9,-1);
  \draw [gray] (14.1,-1) -- (14.1,-1.25) -- (18,-1.25) -- (18,-1);
  \draw [gray] (11,-1.5) node [below] {\footnotesize $L_1=[x,x']$};
  \draw [gray] (17,-1.5) node [below] {\footnotesize $L_2=[x',x'']$};

  \draw (0,0) node (node0) {};
  \draw (2,0) node [dot] (node1) {};
  \draw (6,0) node [dot] (node2) {};
  \draw (10,0) node  [dot] (node3) {};
  \draw (12,0) node (node4) {};
  \draw (12,1) node (node5) {};
  \draw (10,1) node (node6) {};
  \draw (7,1) node (node7) {};
  \draw (7,2) node (node8) {};
  \draw (10,2) node (node9) {};
  \draw (14,2) node [dot] (node10) {};
  \draw (16,2) node (node11) {};
  \draw (16,3) node (node12) {};
  \draw (14,3) node (node13) {};
  \draw (12,3) node (node14) {};
  \draw (12,4) node (node15) {};
  \draw (14,4) node (node16) {};
  \draw (18,4) node  [dot] (node17) {};
  \draw (21,4) node (node18) {};
  \draw (21,5) node (node19) {};
  \draw (18,5) node (node20) {};
  \draw (16,5) node (node21) {};
  \draw (16,6) node (node22) {};
  \draw (18,6) node (node23) {};
  \draw (22,6) node [dot] (node24) {};
  \draw (26,6) node [dot] (node25) {};
  \draw (28,6) node (node26) {};

  \draw [grayfactor] (node0) -- (node1);
  \draw [grayfactor] (node1) -- (node2);
  \draw [grayfactor] (node2) -- (node3);
  \draw [fullfactor] (node3) -- (node4.center) -- (node5.center) -- (node6);
  \draw [grayfactor] (node6) -- (node7.center) -- (node8.center) -- (node9);
  \draw [fullfactor] (node9) -- (node10);
  \draw [fullfactor] (node10) -- (node11.center) -- (node12.center) -- (node13);
  \draw [fullfactor] (node13) -- (node14.center) -- (node15.center) -- (node16);
  \draw [fullfactor] (node16) -- (node17);
  \draw [grayfactor] (node17) -- (node18.center) -- (node19.center) -- (node20);
  \draw [fullfactor] (node20) -- (node21.center) -- (node22.center) -- (node23);
  \draw [grayfactor] (node23) -- (node24);
  \draw [grayfactor] (node24) -- (node25);
  \draw [grayfactor] (node25) -- (node26);
  
  \draw (node3) node [below right = 1.2mm] {$\tilde\ell_1$};
  \draw (node17) node [below right = 1.2mm] {$\tilde\ell_2$};
\end{tikzpicture}
\caption{Two consecutive idempotent loops with the same effect.}%
\label{fig:ramsey-proof}
\end{figure}


For every level $y$, let $X_y$ be the set of positions $x$ such that
$(x,y)$ is the source location of some transition of $\rho|Z'$ 
that produces non-empty output.
For example, if we refer to Figure~\ref{fig:ramsey-proof},
the vertical dashed lines represent the positions 
of $X_y$ for a particular level $y$; accordingly, the circles 
in the figure represent the locations of the form $(x,y)$, for 
all $x\in X_y$.
Since each transition outputs at most $\cmax$ letters, 
we have $\sum_y |X_y| > \hmax\cdot 2^{3\emax}$.
Moreover, since there are at most $\hmax$ levels, 
there is a level $y$ (which we fix hereafter) such that $|X_y| > 2^{3\emax}$.

We now prove the following:

\begin{clm}
There are two consecutive loops $L_1=[x,x']$ and $L_2=[x',x'']$ 
with endpoints $x,x',x''\in X_y$ and such that $E_{L_1}=E_{L_2}=E_{L_1\cup L_2}$.
\end{clm}

\begin{proof}
By Theorem \ref{th:simon}, 
there is a factorization forest for $X_y$ of height at most $3\emax$.
Since $\rho$ is a valid run, the dummy element $\bot$ of the effect 
semigroup does not appear in this factorization forest. 
Moreover, since $|X_y| > 2^{3\emax}$, we know that the factorization 
forest contains an internal node $L'=[x'_1,x'_{k+1}]$ with $k > 2$ children, 
say $L_1=[x'_1,x'_2], \dots, L_k=[x'_k,x'_{k+1}]$. 
By definition of factorization forest, the effects 
$E_{L'}$, $E_{L_1}$, \dots, $E_{L_k}$ are all equal and idempotent.
In particular, the effect $E_{L'}=E_{L_1}=\dots=E_{L_k}$ 
is a triple of the form $(F_{L'},c_1,c_2)$,
where $c_i=\rho|x_i$ is the crossing sequence at $x'_i$. 
Finally, since $E_{L'}$ is idempotent, we have that $c_1 = c_2$ and 
this is equal to the crossing sequences of $\rho$ at the positions 
$x'_1,\dots,x'_{k+1}$. This shows that $L_1,L_2$ are idempotent loops.
\end{proof}

Turning back to the proof of the theorem, we know from the above claim that 
there are two consecutive idempotent loops $L_1=[x,x']$ and $L_2=[x',x'']$ with the 
same effect and with endpoints $x,x',x''\in X_y \subseteq I \:\setminus\: \{x_1,x_2\}$
(see again Figure~\ref{fig:ramsey-proof}).

Let $\tilde\ell_1=(x,y)$ and $\tilde\ell_2=(x'',y)$, and observe that 
$\tilde\ell_1,\tilde\ell_2\in Z'$. In particular, $\tilde\ell_1$ and $\tilde\ell_2$ 
are strictly between $\ell_1$ and $\ell_2$. 
Suppose by symmetry that $\tilde\ell_1 \leqtime \tilde\ell_2$.
Further let $C$ be the component of $L_1\cup L_2$ (or, equally, of $L_1$ or $L_2$) 
that contains the node $y$.
Below, we focus on the factors of $\rho[\tilde\ell_1,\tilde\ell_2]$ 
that are intercepted by $L_1\cup L_2$: these are represented in 
Figure~\ref{fig:ramsey-proof} by the thick arrows. 
By Lemma~\ref{lem:component2} all these factors correspond to edges
of the same component $C$, namely, they are $(L_1 \cup L_2,C)$-factors.

Let us fix an arbitrary factor $\alpha$ of $\rho[\tilde\ell_1,\tilde\ell_2]$ 
that is intercepted by $L_1\cup L_2$, and assume that $\alpha=\beta_1 \cdots \beta_k$, 
where $\beta_1,\dots,\beta_k$ are the factors intercepted by either $L_1$ or $L_2$. 

\begin{clm}
If $\beta,\beta'$ are two factors intercepted by $L_1=[x,x']$ and $L_2=[x',x'']$,
with $E_{L_1}=E_{L_2}=E_{L_1\cup L_2}$, and $\beta,\beta'$ are adjacent in the run
$\rho$ (namely, they share an endpoint at position $x'$), then $\beta,\beta'$
correspond to edges in the same component of $L_1$ (or, equally, $L_2$).
\end{clm}

\begin{proof}
Let $C$ be the component of $L_1$ and $y_1 \mapsto y_2$ the edge of $C$ 
that corresponds to the factor $\beta$ intercepted by $L_1$. 
Similarly, let $C'$ be the component of $L_2$ and $y_3 \mapsto y_4$ the edge of $C'$
that corresponds to the factor $\beta'$ intercepted by $L_2$. 
Since $\beta$ and $\beta'$ share an endpoint at position $x'$, 
we know that $y_2=y_3$. This shows that $C \cap C' \neq \emptyset$, 
and hence $C=C'$.
\end{proof}

The above claim shows that any two adjacent factors $\beta_i,\beta_{i+1}$ 
correspond to edges in the same component of $L_1$ and $L_2$, respectively.
Thus, by transitivity, all factors $\beta_1,\dots,\beta_k$ correspond to 
edges in the same component, say $C'$. 
We claim that $C'=C$. Indeed, if $\beta_1$ is intercepted by $L_1$, 
then $C'=C$ because $\alpha$ and $\beta_1$ start from the same location
and hence they correspond to edges of the flow that depart from the 
same node. The other case is where $\beta_1$ is intercepted by $L_2$,
for which a symmetric argument can be applied.

So far we have shown that every factor of $\rho[\tilde\ell_1,\tilde\ell_2]$ intercepted 
by $L_1\cup L_2$ can be factorized into some $(L_1,C)$-factors and some $(L_2,C)$-factors.
We conclude the proof with the following observations:
\begin{itemize}
  \item By construction, both loops $L_1,L_2$ are contained in the interval of positions $I=[x_1,x_2]$,
        and have endpoints different from $x_1,x_2$.
  \item Both anchor points of $C$ inside $L_1,L_2$ belong to the interval of locations 
        $K\:\setminus\:\{\ell_1,\ell_2\}$. 
        This holds because
        $\rho[\tilde\ell_1,\tilde\ell_2]$ contains a factor $\alpha$ that is intercepted 
        by $L_1\cup L_2$ and spans across all the positions from $x$ to $x''$, namely,
        an $\LR$-factor. 
        This factor starts at the anchor point of $C$ inside $L_1$ 
        and visits the anchor point of $C$ inside $L_2$. 
        Moreover, by construction, $\alpha$ is also a factor of the subsequence $\rho|Z'$.
        This shows that the anchor points of $C$ inside $L_1$ and $L_2$ belong to $Z'$, 
        and in particular to $K\:\setminus\:\{\ell_1,\ell_2\}$.
  \item The first factor of $\rho[\tilde\ell_1,\tilde\ell_2]$ that is intercepted by 
        $L_1\cup L_2$ starts at $\tilde\ell_1=(x,y)$, which by construction is the 
        source location of some transition producing non-empty output.
        By the previous arguments, this factor is a concatenation of $(L_1,C)$-factors 
        and $(L_2,C)$-factors. This implies that the trace of the anchor point 
        of $C$ inside $L_1$, or the trace of $C$ inside $L_2$ produces non-empty output.
\qedhere
\end{itemize}        
\end{proof}

\medskip
\subsection*{Inversions and periodicity.}
The first important notion that is used to characterize one-way definability 
is that of inversion. It turns out that 
the
definition of inversion in the sweeping case (see
page~\pageref{page-def-inversion})  can be reused almost verbatim 
in the two-way setting. The only difference is that here we require the loops
to be idempotent and we do not enforce output-minimality (we will discuss this
latter choice further below, with a formal definition of output-minimality at hand).

\begin{defi}\label{def:inversion-twoway}
An \emph{inversion} of the run $\rho$ is a tuple $(L_1,\ell_1,L_2,\ell_2)$ such that
\begin{enumerate}
  \item $L_1,L_2$ are idempotent loops,
  \item $\ell_1=(x_1,y_1)$ and $\ell_2=(x_2,y_2)$ 
        are anchor points inside $L_1$ and $L_2$, respectively,
  \item $\ell_1 \lesstime \ell_2$ and $x_1 > x_2$, 
  \item for both $i=1$ and $i=2$, $\out{\tr{\ell_i}}\neq\emptystr$. 
\end{enumerate}
\end{defi}

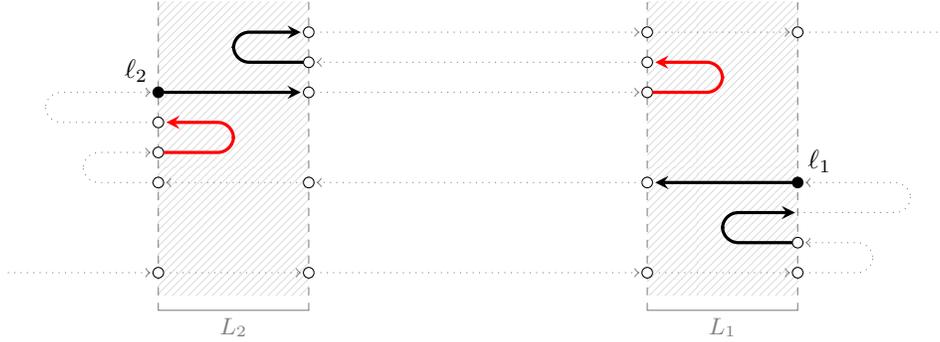
\begin{figure}[!t]
\centering
\begin{tikzpicture}[baseline=0, inner sep=0, outer sep=0, minimum size=0pt, scale=0.5, yscale=0.8]
  \tikzstyle{dot} = [draw, circle, fill=white, minimum size=4pt]
  \tikzstyle{fulldot} = [draw, circle, fill=black, minimum size=4pt]
  \tikzstyle{grayfactor} = [->, shorten >=1pt, rounded corners=6, gray, thin, dotted]
  \tikzstyle{factor} = [->, shorten >=1pt, rounded corners=6]
  \tikzstyle{dotfactor} = [->, shorten >=1pt, dotted, rounded corners=6]
  \tikzstyle{fullfactor} = [->, >=stealth, shorten >=1pt, very thick, rounded corners=6]
  \tikzstyle{dotfullfactor} = [->, >=stealth, shorten >=1pt, dotted, very thick, rounded corners=6]

  \fill [pattern=north east lines, pattern color=gray!25]
        (4,-0.75) rectangle (8,9);
  \draw [dashed, thin, gray] (4,-0.75) -- (4,9);
  \draw [dashed, thin, gray] (8,-0.75) -- (8,9);
  \draw [gray] (4,-1) -- (4,-1.25) -- (8,-1.25) -- (8,-1);
  \draw [gray] (6,-1.5) node [below] {\footnotesize $L_2$};

  \fill [pattern=north east lines, pattern color=gray!25]
        (17,-0.75) rectangle (21,9);
  \draw [dashed, thin, gray] (17,-0.75) -- (17,9);
  \draw [dashed, thin, gray] (21,-0.75) -- (21,9);
  \draw [gray] (17,-1) -- (17,-1.25) -- (21,-1.25) -- (21,-1);
  \draw [gray] (19,-1.5) node [below] {\footnotesize $L_1$};

  \draw (0,0) node (node0) {};
  \draw (4,0) node [dot] (node1) {};
  \draw (8,0) node [dot] (node2) {};
  \draw (17,0) node [dot] (node3) {};
  \draw (21,0) node [dot] (node4) {};
  \draw (23,0) node (node5) {};
  \draw (23,1) node (node6) {};

  \draw (21,1) node [dot] (node7) {};
  \draw (19,1) node (node8) {};
  \draw (19,2) node (node9) {};
  \draw (21,2) node (node10) {};
  \draw (24,2) node (node11) {};
  \draw (24,3) node (node12) {};
  \draw (21,3) node [fulldot] (node13) {};
  \draw (17,3) node [dot] (node14) {};

  \draw (8,3) node [dot] (node15) {};
  \draw (4,3) node [dot] (node16) {};
  \draw (2,3) node (node17) {};
  \draw (2,4) node (node18) {};

  \draw (4,4) node [dot] (node19) {};
  \draw (6,4) node (node20) {};
  \draw (6,5) node (node21) {};
  \draw (4,5) node [dot] (node22) {};
  \draw (1,5) node (node23) {};
  \draw (1,6) node (node24) {};
  \draw (4,6) node [fulldot] (node25) {};
  \draw (8,6) node [dot] (node26) {};

  \draw (17,6) node [dot] (node27) {};
  \draw (19,6) node (node28) {};
  \draw (19,7) node (node29) {};
  \draw (17,7) node [dot] (node30) {};

  \draw (8,7) node [dot] (node31) {};
  \draw (6,7) node (node32) {};
  \draw (6,8) node (node33) {};
  \draw (8,8) node [dot] (node34) {};

  \draw (17,8) node [dot] (node35) {};
  \draw (21,8) node [dot] (node36) {};
  \draw (25,8) node (node37) {};

  \draw [grayfactor] (node0) -- (node1);
  \draw [grayfactor] (node1) -- (node2);
  \draw [grayfactor] (node2) -- (node3);
  \draw [grayfactor] (node3) -- (node4);
  \draw [grayfactor] (node4) -- (node5.center) -- (node6.center) -- (node7);
  
  \draw [fullfactor] (node7) -- (node8.center) -- (node9.center) -- (node10);
  \draw [grayfactor] (node10) -- (node11.center) -- (node12.center) -- (node13);
  \draw [fullfactor] (node13) -- (node14);

  \draw [grayfactor] (node14) -- (node15);
  \draw [grayfactor] (node15) -- (node16);
  \draw [grayfactor] (node16) -- (node17.center) -- (node18.center) -- (node19);

  \draw [fullfactor, red] (node19) -- (node20.center) -- (node21.center) -- (node22);
  \draw [grayfactor] (node22) -- (node23.center) -- (node24.center) -- (node25);
  \draw [fullfactor] (node25) -- (node26);
  
  \draw [grayfactor] (node26) -- (node27);
  \draw [fullfactor, red] (node27) -- (node28.center) -- (node29.center) -- (node30);
  \draw [grayfactor] (node30) -- (node31);
  \draw [fullfactor] (node31) -- (node32.center) -- (node33.center) -- (node34);

  \draw [grayfactor] (node34) -- (node35);
  \draw [grayfactor] (node35) -- (node36);
  \draw [grayfactor] (node36) -- (node37);
  
  \draw (node13) node [above right=2mm] {\small $\ell_1$};
  \draw (node25) node [above left=2mm] {\small $\ell_2$};
\end{tikzpicture}
\caption{An example of an inversion $(L_1,\ell_1,L_2,\ell_2)$ of a two-way run.}\label{fig:inversion-twoway}
\end{figure}

\noindent
Figure \ref{fig:inversion-twoway} gives an example of an inversion involving
the idempotent loop $L_1$ with anchor point $\ell_1$, and the idempotent 
loop $L_2$ with anchor point $\ell_2$. The intercepted factors that form 
the corresponding traces are represented by thick arrows; the one highlighted in red 
are those that produce non-empty output.

The implication \PR1 $\Rightarrow$ \PR2 of Theorem \ref{thm:main2} in the two-way
case is formalized below exactly as in Proposition~\ref{prop:periodicity-sweeping},
and the proof is very similar to the sweeping case. 
More precisely, it can be checked that the proof of the first claim in
Proposition~\ref{prop:periodicity-sweeping} was shown independently of the sweeping assumption 
--- one just needs to replace the use of Equation \ref{eq:pumped-run}
with Proposition \ref{prop:pumping-twoway}.
The sweeping assumption was used only for deriving the notion of \emph{output-minimal} 
factor, which was crucial to conclude that the period $p$ is bounded by the specific 
constant $\bound$.
In this respect, the proof of Proposition~\ref{prop:periodicity-twoway} 
requires a different argument for showing that $p \le\bound$: 

\begin{prop}\label{prop:periodicity-twoway}
If $\cT$ is one-way definable, then the following property \PR2 holds:
\begin{quote}
  For all inversions $(L_1,\ell_1,L_2,\ell_2)$ of $\rho$, 
  the period $p$ of the word
\[
  \out{\tr{\ell_1}} ~ \out{\rho[\ell_1,\ell_2]} ~ \out{\tr{\ell_2}}
\]
divides both $|\out{\tr{\ell_1}}|$ and $|\out{\tr{\ell_2}}|$.
Moreover, $p \le \bound$.
\end{quote}

\end{prop}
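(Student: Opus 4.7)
The proof has two components: the divisibility of $p$ by the trace lengths, and the bound $p \le \bound$. The divisibility part can be obtained by transplanting the proof of Proposition~\ref{prop:periodicity-sweeping} with only two minor modifications: replace Equation~\ref{eq:pumped-run} by the structural description of pumped runs given by Proposition~\ref{prop:pumping-twoway}, and exploit the fact that Definition~\ref{def:inversion-twoway} requires $L_1, L_2$ to be idempotent (so that Proposition~\ref{prop:pumping-twoway} applies). With those changes, pumping the loops by parameters $n_1, n_2$ and comparing the produced output against the unique accepting run of an equivalent unambiguous one-way transducer $\cT'$ yields a word equation of exactly the form required by Lemma~\ref{lem:oneway-vs-twoway}. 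A final application of Lemma~\ref{lem:oneway-vs-twoway} and Theorem~\ref{thm:fine-wilf} (Fine-Wilf) delivers that the word $\out{\tr{\ell_1}} \cdot \out{\rho[\ell_1,\ell_2]} \cdot \out{\tr{\ell_2}}$ has period $p$ dividing both $|\out{\tr{\ell_1}}|$ and $|\out{\tr{\ell_2}}|$.

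The new ingredient is the bound $p \le \bound$: in the sweeping case this was automatic from the output-minimality clause of Definition~\ref{def:inversion-sweeping} together with Lemma~\ref{lem:output-minimal-sweeping}, but Definition~\ref{def:inversion-twoway} drops the output-minimality requirement. My plan is to use Theorem~\ref{thm:simon2} as a substitute, by descent to a sub-inversion with bounded trace. If $|\out{\tr{\ell_1}}| \le \bound$ then $p$ divides a quantity $\le \bound$ and we are done, so assume otherwise. Choose an interval $K \subseteq [\ell_1,\ell_2]$ for which the set $Z = K \cap (L_1 \times \bbN)$ gathers enough of the intercepted factors of the component of $\ell_1$ in $L_1$ to ensure $|\out{\rho|Z}| > \bound$; the structural description of idempotent components in Lemma~\ref{lem:component2} guarantees that such a choice exists whenever $|\out{\tr{\ell_1}}| > \bound$. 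Theorem~\ref{thm:simon2} then produces an idempotent loop $L'_1 \subsetneq L_1$ with an anchor point $\ell'_1 \in [\ell_1,\ell_2]$ whose trace has non-empty output. Since $L'_1 \subsetneq L_1$ remains strictly to the right of $L_2$, the tuple $(L'_1, \ell'_1, L_2, \ell_2)$ is again an inversion with a strictly smaller first loop. Iterating on both sides (the loops strictly shrink at every step, so descent terminates) eventually yields an inversion $(\tilde L_1, \tilde\ell_1, \tilde L_2, \tilde\ell_2)$ in which one of the traces, say $\tr{\tilde\ell_1}$, has output of length $\le \bound$; applying the already-proved divisibility to this terminal sub-inversion gives that its period $\tilde p$ divides $|\out{\tr{\tilde\ell_1}}| \le \bound$, so $\tilde p \le \bound$.

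The main obstacle is transferring this bound back to the period $p$ of the original inversion. The two output words $w$ and $\tilde w$ share a long common factor of the form $\out{\rho[\tilde\ell_1,\tilde\ell_2]} \cdot \out{\tr{\tilde\ell_2}}$, whose length is at least $|\out{\tr{\tilde\ell_2}}|$ and hence (because both $p$ and $\tilde p$ divide this quantity, by repeated applications of the divisibility) at least $p + \tilde p - \gcd(p,\tilde p)$. Theorem~\ref{thm:fine-wilf} then gives that the shared factor admits period $\gcd(p,\tilde p)$, and a further appeal to Lemma~\ref{lem:periods} propagates this common period across the iterated structure of the pumped loops back to the whole word $w$. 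This yields $p \le \gcd(p,\tilde p) \le \tilde p \le \bound$ as required. The most delicate technical point is the first step of the descent: confirming that, whenever $|\out{\tr{\ell_1}}| > \bound$, one can always find an interval $K \subseteq [\ell_1,\ell_2]$ such that $|\out{\rho|Z}| > \bound$ --- in particular handling the case where most of the output of $\tr{\ell_1}$ is produced by intercepted factors of the component of $\ell_1$ that happen to lie outside the interval $[\ell_1,\ell_2]$ in run order.
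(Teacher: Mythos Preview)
Your treatment of the divisibility claim is correct and matches the paper: the proof of Proposition~\ref{prop:periodicity-sweeping} carries over verbatim once Equation~\eqref{eq:pumped-run} is replaced by Proposition~\ref{prop:pumping-twoway}.

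The descent argument for the bound $p\le\bound$, however, has a genuine gap --- precisely the one you flag at the end as ``the most delicate technical point'', and it is not a technicality but a real obstruction. The output of $\tr{\ell_1}$ is, by Definition~\ref{def:trace}, a concatenation of \emph{all} the $(L_1,C_1)$-factors, not just those that occur in the run interval $[\ell_1,\ell_2]$. In particular (cf.\ Lemma~\ref{lem:component2}) the $\LL$- or $\RR$-factors of $C_1$ may well occur \emph{after} $\ell_2$ in the run order, and they may carry \emph{all} of the output of $\tr{\ell_1}$. The paper makes this explicit just before the proof (Figure~\ref{fig:inversion-twoway}): if the only producing factor of $C_1$ is an $\LL$-factor lying after $\ell_2$, then any idempotent sub-loop $L'_1$ found inside it has its anchor point $\ell'_1$ strictly after $\ell_2$, so $(L'_1,\ell'_1,L_2,\ell_2)$ is \emph{not} an inversion. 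Equivalently, for every $K\subseteq[\ell_1,\ell_2]$ the set $Z=K\cap(L_1\times\bbN)$ may produce output of length $\le\bound$ even when $|\out{\tr{\ell_1}}|>\bound$, so Theorem~\ref{thm:simon2} gives you nothing usable. The descent simply cannot start.

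The paper's fix is not to descend in $\rho$ but to first \emph{pump} $L_1$ once, passing to the run $\rho^{(2,1)}=\pump_{L_1}^2(\rho)$. This creates two disjoint copies $\olft L_1$ and $\ort L_1$ of $L_1$, with $L_2$ to the left of both. One then picks an output-minimal pair $(L_0,C_0)\leqpair(\olft L_1,C_1)$ with non-empty trace (Definition~\ref{def:output-minimal-twoway}); by Lemma~\ref{lem:output-minimal-twoway} this forces $|\out{\tr{\ell_0}}|\le\bound$. The anchor $\ell_0$ may land before or after $\ell_2$ in $\rho^{(2,1)}$, but \emph{either way} one obtains an inversion: $(L_0,\ell_0,L_2,\ell_2)$ in the former case, and $(\ort L_1,\ort\ell_1,L_0,\ell_0)$ in the latter (since $\ort\ell_1\lesstime\ell_2\leqtime\ell_0$ and $L_0\subseteq\olft L_1$ sits to the left of $\ort L_1$). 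The divisibility part applied to this auxiliary inversion gives a period $p'\le\bound$, and a three-parameter pumping argument together with Fine--Wilf and Lemma~\ref{lem:periods} transfers $\gcd(p,p')$ back to the original word. The essential idea that your approach is missing is this ``sandwiching'' trick: duplicating $L_1$ so that the output-minimal sub-loop is guaranteed to participate in \emph{some} inversion, against $L_2$ on one side or against the fresh copy $\ort L_1$ on the other.
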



We only need to show here that $p \le \bound$.  Recall that in the
sweeping case we relied on the assumption that the factors
$\tr{\ell_1}$ and $\tr{\ell_2}$ of an inversion are output-minimal,
and on Lemma~\ref{lem:output-minimal-sweeping}.  In the general case
we need to replace output-minimality by the following notion:

\begin{defi}\label{def:output-minimal-twoway}
Consider pairs $(L,C)$ consisting of an idempotent loop $L$ 
and a component $C$ of $L$.
\begin{enumerate}
\item On such pairs, define the relation 
      $\lesspair$ by $(L',C') \lesspair (L,C)$ if
      $L'\subsetneq L$ and at least one $(L',C')$-factor 
      is contained in some $(L,C)$-factor. 
\item A pair $(L,C)$ is \emph{output-minimal} if 
      $(L',C') \lesspair (L,C)$ implies $\out{\tr{\an{C'}}}=\emptystr$.
\end{enumerate}
\end{defi}

\noindent
Note that the relation $\lesspair$ is not a partial order in general
(it is however antisymmetric).
Moreover, it is easy to see that the notion of output-minimal
pair $(L,C)$ generalizes that of output-minimal factor introduced
in the sweeping case: indeed, if $\ell$ is the anchor point of a 
loop $L$ of a sweeping transducer and $\tr{\ell}$ satisfies 
Definition \ref{def:output-minimal-sweeping}, then the pair
$(L,C)$ is output-minimal, where $C$ is the unique component
whose edge corresponds to $\tr{\ell}$.

The following lemma bounds the length of the output trace 
$\out{\tr{\an{C}}}$ for an output-minimal pair $(L,C)$:

\begin{lem}\label{lem:output-minimal-twoway}
For every output-minimal pair $(L,C)$, $|\out{\tr{\an{C}}}| \le \bound$.
\end{lem}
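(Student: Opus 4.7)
The plan is to mimic the Ramsey-style argument of Theorem~\ref{thm:simon2}, applied this time to the sequence of transitions making up the trace $\tr{\an{C}}$ itself, and then to extract an idempotent sub-loop whose chosen component has a witnessing factor nested inside some $(L,C)$-factor. Suppose toward a contradiction that $|\out{\tr{\an{C}}}| > \bound$, and write $L=[x_1,x_2]$. Every transition in $\tr{\an{C}}$ belongs to some $(L,C)$-factor $\beta_i$, so its source lies in $L\times\bbN$; in particular all transitions we will count originate inside $L$.

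The first stage is the counting step. Removing from $\tr{\an{C}}$ the transitions whose source or target is either at a boundary position in $\{x_1,x_2\}$ or at an endpoint of the trace discards at most $4\hmax+4$ transitions, by normalization. What remains produces more than $\bound-(4\hmax+4)\cmax=\cmax\,\hmax\,2^{3\emax}$ letters. Pigeon-holing over the $\le\hmax$ levels, and using that each transition outputs at most $\cmax$ letters, provides a level $y$ such that the set $X_y$ of interior positions $x\in(x_1,x_2)$ with $(x,y)$ the source of a non-empty-output transition of $\tr{\an{C}}$ satisfies $|X_y|>2^{3\emax}$. Applying Theorem~\ref{th:simon} to $X_y$ produces a factorization forest of height at most $3\emax$; since $|X_y|>2^{3\emax}$, the forest must contain an internal node with at least three children sharing a common idempotent effect. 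Taking one such child yields an idempotent loop $L'\subsetneq L$ whose endpoints lie in $X_y$.

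The second stage identifies the component $C'$ of $L'$ that makes $(L',C')\lesspair(L,C)$. Let $x$ be the left endpoint of $L'$, so $x\in X_y$. The transition from $(x,y)$ belongs to some $(L,C)$-factor $\beta_i$. Let $\alpha$ be the maximal factor of $\rho$ intercepted by $L'$ that contains this transition. Since $L'\subsetneq L$, the factor $\alpha$ visits only positions of $L'\subseteq L$ and cannot extend beyond the time range of $\beta_i$ (otherwise it would exit $L$), so $\alpha$ is a sub-run of $\beta_i$. Now $\alpha$ corresponds to a unique edge of the flow $F_{L'}$; let $C'$ be the component of $F_{L'}$ containing this edge. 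Then $\alpha$ is an $(L',C')$-factor contained in the $(L,C)$-factor $\beta_i$, so $(L',C')\lesspair(L,C)$. Output-minimality of $(L,C)$ forces $\out{\tr{\an{C'}}}=\emptystr$; but $\tr{\an{C'}}$ concatenates one $(L',C')$-factor per edge of $C'$, and in particular contains $\alpha$, whose output is non-empty because it includes the non-empty-output transition from $(x,y)$. This yields the desired contradiction.

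The main obstacle, to my mind, is the choice of $C'$. It is tempting, by analogy with the proof of Theorem~\ref{thm:simon2}, to take $C'$ as the component of $F_{L'}$ containing the node $y$; but in general $y$ may be merely an intermediate level visited by $\alpha$ and need not be one of the two endpoint nodes of $\alpha$'s edge in $F_{L'}$. The correct choice is the component containing the edge associated with $\alpha$ itself, which allows $\alpha$ to play the dual role of nested witness of $(L',C')\lesspair(L,C)$ and of non-empty summand of $\out{\tr{\an{C'}}}$.
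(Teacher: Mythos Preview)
Your argument is essentially correct, but there is one small technical slip in the second stage. You fix $x$ to be the \emph{left} endpoint of $L'$ and then take the maximal factor $\alpha$ intercepted by $L'$ that \emph{contains} the outgoing transition from $(x,y)$. When $y$ is odd this transition may be a leftward move to position $x-1$, hence a ``left transition from position $x$'' that is excluded from every factor intercepted by $L'=[x,x']$; in that case no such $\alpha$ exists. The fix is immediate: both endpoints of $L'$ lie in $X_y$ (they are shared with neighbouring children in the factorization forest), so choose the left endpoint when $y$ is even and the right endpoint when $y$ is odd; the outgoing transition then stays inside $L'$ and your $\alpha$ is well defined.

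With this fix your proof is correct, but it takes a different route from the paper. The paper does not redo the factorization-forest argument: it observes that the locations visited by the $(L,C)$-factors form a set $Z=[\ell_1,\ell_2]\cap(L\times\bbN)$ with $|\out{\rho|Z}|=|\out{\tr{\an{C}}}|>\bound$, and then invokes Theorem~\ref{thm:simon2} as a black box to produce $L'\subsetneq L$ and an anchor point $\ell\in[\ell_1,\ell_2]$ with $\out{\tr{\ell}}\neq\emptystr$; the $\LR$/$\RL$-factor starting at this anchor point is the required witness of $(L',C')\lesspair(L,C)$. Your approach avoids the black-box call but duplicates its internal Ramsey step, trading modularity for a more hands-on endgame: you pick $C'$ as the component of a concrete intercepted factor $\alpha$ with non-empty output, so that $\alpha$ simultaneously witnesses both $(L',C')\lesspair(L,C)$ and $\out{\tr{\an{C'}}}\neq\emptystr$. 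Your closing remark about why ``the component containing node $y$'' would be the wrong choice is apt and not addressed in the paper.
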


\begin{proof}
Consider a pair $(L,C)$ consisting of an idempotent loop $L=[x_1,x_2]$ and a component $C$ of $L$. 
Suppose by contradiction that $|\out{\tr{\an{C}}}|>\bound$. 
We will show that $(L,C)$ is not output-minimal.

Recall that $\tr{\an{C}}$ is a concatenation of $(L,C)$-factors, say,
$\tr{\an{C}}=\beta_1\cdots\beta_k$. Let $\ell_1$ (resp.~$\ell_2$) be the 
first (resp.~last) location that is visited by these factors. Further let
$K = [\ell_1,\ell_2]$ and $Z = K \:\cap\: (L\times\bbN)$. 
By construction, the subsequence $\rho|Z$ can be seen as a concatenation
of the factors $\beta_1,\dots,\beta_k$, possibly in a different order than
that of $\tr{\an{C}}$. This implies that $|\out{\rho|Z}| > \bound$.

By Theorem \ref{thm:simon2}, we know that there exist an idempotent 
loop $L'\subsetneq L$ and a component $C'$ of $L'$ such that 
$\an{C'} \in K$ and $\out{\tr{\an{C'}}}\neq\emptystr$.
Note that the $(L',C')$-factor that starts at the anchor point
$\an{C'}$ (an $\LR$- or $\RL$-factor) is entirely contained 
in some $(L,C)$-factor. 
This implies that $(L',C') \lesspair (L,C)$, and thus $(L,C)$ is not output-minimal.
\end{proof}

We remark that the above lemma cannot be used directly to bound the period 
of the output produced amid an inversion. The reason is that we cannot 
restrict ourselves to inversions $(L_1,\ell_1,L_2,\ell_2)$ that induce 
output-minimal pairs $(L_i,C_i)$ for $i=1,2$, where $C_i$ is the unique 
component of the anchor point $\ell_i$.
An example is given in Figure~\ref{fig:inversion-twoway},
assuming that the factors depicted in red are the only ones that 
produce non-empty output, and the lengths of these outputs exceed $\bound$. 
On the one hand $(L_1,\ell_1,L_2,\ell_2)$ is an inversion, but 
$(L_1,C_1)$ is not output-minimal. 
On the other hand, it is possible that $\rho$ contains no other 
inversion than $(L_1,\ell_1,L_2,\ell_2)$: any loop strictly contained
in the red factor in $L_1$ will have the anchor point \emph{after}
$\ell_2$. 

We are now ready to show the second claim of
Proposition~\ref{prop:periodicity-twoway}. 

\begin{figure}[!t]
\centering
\begin{tikzpicture}[baseline=0, inner sep=0, outer sep=0, minimum size=0pt, scale=0.39, xscale=0.6]
  \tikzstyle{dot} = [draw, circle, fill=white, minimum size=4pt]
  \tikzstyle{fulldot} = [draw, circle, fill=black, minimum size=4pt]
  \tikzstyle{grayfactor} = [->, shorten >=1pt, rounded corners=6, gray, thin, dotted]
  \tikzstyle{factor} = [->, shorten >=1pt, rounded corners=6]
  \tikzstyle{dotfactor} = [->, shorten >=1pt, dotted, rounded corners=6]
  \tikzstyle{fullfactor} = [->, >=stealth, shorten >=1pt, very thick, rounded corners=6]
  \tikzstyle{dotfullfactor} = [->, >=stealth, shorten >=1pt, dotted, very thick, rounded corners=6]

\begin{scope}
  \fill [pattern=north east lines, pattern color=gray!25]
        (4,-0.75) rectangle (10,9);
  \draw [dashed, thin, gray] (4,-0.75) -- (4,9);
  \draw [dashed, thin, gray] (10,-0.75) -- (10,9);
  \draw [gray] (4,-1) -- (4,-1.25) -- (10,-1.25) -- (10,-1);
  \draw [gray] (7,-1.5) node [below] {\footnotesize $L_2$};

  \fill [pattern=north east lines, pattern color=gray!25]
        (16,-0.75) rectangle (22,9);
  \draw [dashed, thin, gray] (16,-0.75) -- (16,9);
  \draw [dashed, thin, gray] (22,-0.75) -- (22,9);
  \draw [gray] (16,-1) -- (16,-1.25) -- (22,-1.25) -- (22,-1);
  \draw [gray] (19,-1.5) node [below] {\footnotesize $L_1$};

  \draw (0,0) node (node0) {};
  \draw (4,0) node [dot] (node1) {};
  \draw (10,0) node [dot] (node2) {};
  \draw (16,0) node [dot] (node3) {};
  \draw (22,0) node [dot] (node4) {};
  \draw (24,0) node (node5) {};
  \draw (24,1) node (node6) {};

  \draw (22,1) node [dot] (node7) {};
  \draw (19,1) node (node8) {};
  \draw (19,2) node (node9) {};
  \draw (22,2) node (node10) {};
  \draw (25,2) node (node11) {};
  \draw (25,3) node (node12) {};
  \draw (22,3) node [fulldot] (node13) {};
  \draw (16,3) node [dot] (node14) {};

  \draw (10,3) node [dot] (node15) {};
  \draw (4,3) node [dot] (node16) {};
  \draw (2,3) node (node17) {};
  \draw (2,4) node (node18) {};

  \draw (4,4) node [dot] (node19) {};
  \draw (7,4) node (node20) {};
  \draw (7,5) node (node21) {};
  \draw (4,5) node [dot] (node22) {};
  \draw (1,5) node (node23) {};
  \draw (1,6) node (node24) {};
  \draw (4,6) node [fulldot] (node25) {};
  \draw (10,6) node [dot] (node26) {};

  \draw (16,6) node [dot] (node27) {};
  \draw (20,6) node (node28) {};
  \draw (20,7) node (node29) {};
  \draw (16,7) node [dot] (node30) {};

  \draw (10,7) node [dot] (node31) {};
  \draw (6,7) node (node32) {};
  \draw (6,8) node (node33) {};
  \draw (10,8) node [dot] (node34) {};

  \draw (16,8) node [dot] (node35) {};
  \draw (22,8) node [dot] (node36) {};
  \draw (26,8) node (node37) {};

  \draw [grayfactor] (node0) -- (node1);
  \draw [grayfactor] (node1) -- (node2);
  \draw [grayfactor] (node2) -- (node3);
  \draw [grayfactor] (node3) -- (node4);
  \draw [grayfactor] (node4) -- (node5.center) -- (node6.center) -- (node7);
  
  \draw [fullfactor] (node7) -- (node8.center) -- (node9.center) -- (node10);
  \draw [grayfactor] (node10) -- (node11.center) -- (node12.center) -- (node13);
  \draw [fullfactor] (node13) -- (node14);

  \draw [grayfactor] (node14) -- (node15);
  \draw [grayfactor] (node15) -- (node16);
  \draw [grayfactor] (node16) -- (node17.center) -- (node18.center) -- (node19);

  \draw [fullfactor, red] (node19) -- (node20.center) -- (node21.center) -- (node22);
  \draw [grayfactor] (node22) -- (node23.center) -- (node24.center) -- (node25);
  \draw [fullfactor] (node25) -- (node26);
  
  \draw [grayfactor] (node26) -- (node27);
  \draw [fullfactor, red] (node27) -- (node28.center) -- (node29.center) -- (node30);
  \draw [grayfactor] (node30) -- (node31);
  \draw [fullfactor] (node31) -- (node32.center) -- (node33.center) -- (node34);

  \draw [grayfactor] (node34) -- (node35);
  \draw [grayfactor] (node35) -- (node36);
  \draw [grayfactor] (node36) -- (node37);

  \draw (node13) node [above right=1.5mm] {\footnotesize $\ell_1$};
  \draw (node25) node [above left=1.5mm] {\footnotesize $\ell_2$};
\end{scope}

\begin{scope}[xshift=32cm]
  \fill [pattern=north east lines, pattern color=gray!25]
        (4,-0.75) rectangle (10,11);
  \draw [dashed, thin, gray] (4,-0.75) -- (4,11);
  \draw [dashed, thin, gray] (10,-0.75) -- (10,11);
  \draw [gray] (4,-1) -- (4,-1.25) -- (10,-1.25) -- (10,-1);
  \draw [gray] (7,-1.5) node [below] {\footnotesize $L_2$};

  \fill [pattern=north east lines, pattern color=gray!25]
        (16,-0.75) rectangle (22,11);
  \fill [pattern=north east lines, pattern color=gray!25]
        (22,-0.75) rectangle (28,11);
  \draw [dashed, thin, gray] (16,-0.75) -- (16,11);
  \draw [dashed, thin, gray] (22,-0.75) -- (22,11);
  \draw [dashed, thin, gray] (28,-0.75) -- (28,11);
  \draw [gray] (16,-1) -- (16,-1.25) -- (22,-1.25) -- (22,-1);
  \draw [gray] (22,-1) -- (22,-1.25) -- (28,-1.25) -- (28,-1);
  \draw [gray] (22,-1.5) node [below] {\footnotesize two copies of $L_1$};

  \fill [pattern=north east lines, pattern color=red!25]
        (17.5,-0.75) rectangle (19.5,11);
  \draw [dashed, thin, red!25] (17.5,-0.75) -- (17.5,11);
  \draw [dashed, thin, red!25] (19.5,-0.75) -- (19.5,11);
  \draw [red!25] (17.5,11.25) -- (17.5,11.5) -- (19.5,11.5) -- (19.5,11.25);
  \draw [red!50] (18.5,11.75) node [above] {\footnotesize $L_0$};
 

  \draw (0,0) node (node0) {};
  \draw (4,0) node [dot] (node1) {};
  \draw (10,0) node [dot] (node2) {};
  \draw (16,0) node [dot] (node3) {};
  \draw (22,0) node [dot] (node4) {};
  \draw (28,0) node [dot] (node5) {};
  \draw (30,0) node (node6) {};
  \draw (30,1) node (node7) {};

  \draw (28,1) node [dot] (node8) {};
  \draw (25,1) node (node9) {};
  \draw (25,2) node (node10) {};
  \draw (28,2) node [dot] (node11) {};

  \draw (31,2) node (node12) {};
  \draw (31,3) node (node13) {};
  \draw (28,3) node [fulldot] (node14) {};
  \draw (22,3) node [dot] (node15) {};
  \draw (19,3) node (node16) {};
  \draw (19,4) node (node17) {};
  \draw (22,4) node [dot] (node18) {};
  \draw (26,4) node (node19) {};
  \draw (26,5) node (node20) {};
  \draw (22,5) node [fulldot] (node21) {};
  \draw (16,5) node [dot] (node22) {};

  \draw (10,5) node [dot] (node23) {};
  \draw (4,5) node [dot] (node24) {};
  \draw (2,5) node (node25) {};
  \draw (2,6) node (node26) {};
  \draw (4,6) node [dot] (node27) {};

  \draw (7,6) node (node28) {};
  \draw (7,7) node (node29) {};
  \draw (4,7) node [dot] (node30) {};
  \draw (1,7) node (node31) {};
  \draw (1,8) node (node32) {};
  \draw (4,8) node [fulldot] (node33) {};
  \draw (10,8) node [dot] (node34) {};

  \draw (16,8) node [dot] (node35) {};
  \draw (20,8) node (node36) {};
  \draw (20,9) node (node37) {};
  \draw (16,9) node [dot] (node38) {};

  \draw (10,9) node [dot] (node39) {};
  \draw (6,9) node (node40) {};
  \draw (6,10) node (node41) {};
  \draw (10,10) node [dot] (node42) {};

  \draw (16,10) node [dot] (node43) {};
  \draw (22,10) node [dot] (node44) {};
  \draw (28,10) node [dot] (node45) {};
  \draw (32,10) node (node46) {};

  \draw [grayfactor] (node0) -- (node1);
  \draw [grayfactor] (node1) -- (node2);
  \draw [grayfactor] (node2) -- (node3);
  \draw [grayfactor] (node3) -- (node4);
  \draw [grayfactor] (node4) -- (node5);
  \draw [grayfactor] (node5) -- (node6.center) -- (node7.center) -- (node8);
  
  \draw [fullfactor] (node8) -- (node9.center) -- (node10.center) -- (node11);
  \draw [grayfactor] (node11) -- (node12.center) -- (node13.center) -- (node14);
  \draw [fullfactor] (node14) -- (node15);
  \draw [fullfactor] (node15) -- (node16.center) -- (node17.center) -- (node18);
  \draw [fullfactor, red] (node18) -- (node19.center) -- (node20.center) -- (node21);
  \draw [fullfactor] (node21) -- (node22);

  \draw [grayfactor] (node22) -- (node23);
  \draw [grayfactor] (node23) -- (node24);
  \draw [grayfactor] (node24) -- (node25.center) -- (node26.center) -- (node27);

  \draw [fullfactor, red] (node27) -- (node28.center) -- (node29.center) -- (node30);
  \draw [grayfactor] (node30) -- (node31.center) -- (node32.center) -- (node33);
  \draw [fullfactor] (node33) -- (node34);

  \draw [grayfactor] (node34) -- (node35);
  \draw [fullfactor, red] (node35) -- (node36.center) -- (node37.center) -- (node38);

  \draw [grayfactor] (node38) -- (node39);
  \draw [fullfactor] (node39) -- (node40.center) -- (node41.center) -- (node42);

  \draw [grayfactor] (node42) -- (node43);
  \draw [grayfactor] (node43) -- (node44);
  \draw [grayfactor] (node44) -- (node45);
  \draw [grayfactor] (node45) -- (node46);
\end{scope}
\end{tikzpicture}
\caption{An inversion $(L_1,\ell_1,L_2,\ell_2)$ with non output-minimal pairs $(L_i,C_i)$.}
\label{fig:non-output-minimal-inversion-full}
\end{figure}
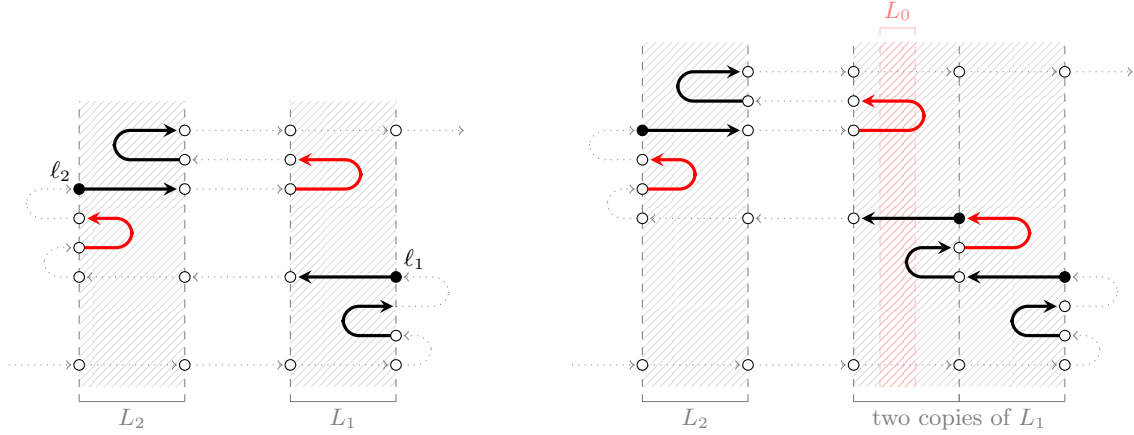


\begin{proof}[Proof of Proposition~\ref{prop:periodicity-twoway}]
The proof of the second claim requires a refinement of the arguments that involve 
pumping the run $\rho$ simultaneously on three different loops. As usual, we 
assume that the loops $L_1,L_2$ of the inversion are disjoint (otherwise,
we preliminarily pump one of the two loops a few times).

Recall that the word 
\[
  \out{\tr{\ell_1}} ~ \out{\rho[\ell_1,\ell_2]} ~ \out{\tr{\ell_2}}
\] 
has period $p = \gcd\big( |\out{\tr{\ell_1}}|, |\out{\tr{\ell_2}}|
\big)$, but that we cannot bound $p$ by assuming that $(L_1,\ell_1,L_2,\ell_2)$ 
is output-minimal.
However, in the pumped run $\rho^{(2,1)}$ we do find inversions with
output-minimal pairs.
For example, as depicted in the right part of Figure~\ref{fig:non-output-minimal-inversion-full},
we can consider the left and right copy of $L_1$ in $\rho^{(2,1)}$, 
denoted by $\olft L_1$ and $\ort L_1$, respectively.
Accordingly, we denote by $\olft\ell_1$ and $\ort\ell_1$ the left and
right copy of $\ell_1$ in $\rho^{(2,1)}$. 
Now, let $(L_0,C_0)$ be any {\sl output-minimal} pair such that $L_0$ 
is an idempotent loop, $\out{\tr{\an{C_0}}}\neq\emptystr$, and either 
$(L_0,C_0)=(\olft L_1,C_1)$ or $(L_0,C_0) \lesspair (\olft L_1,C_1)$.
Such a loop $L_0$ is  represented in  
Figure~\ref{fig:non-output-minimal-inversion-full} by 
the red vertical stripe. Further let $\ell_0=\an{C_0}$.

We claim that either $(L_0,\ell_0,L_2,\ell_2)$ or $(\ort L_1,\ort\ell_1,L_0,\ell_0)$ 
is an inversion of the run $\rho^{(2,1)}$, depending on whether $\ell_0$ occurs 
before or after $\ell_2$.
First, note that all the loops $L_0$, $L_2$, $\ort L_1$ are idempotent
and non-overlapping; 
more precisely, we have 
$\max(L_2) \le \min(L_0)$ and $\max(L_0) \le \min(\ort L_1)$. 
Moreover, the outputs of the traces $\tr{\ell_0}$, $\tr{\ort\ell_1}$, 
and $\tr{\ell_2}$ are all non-empty.
So it remains to distinguish two cases based on the ordering of the anchor
points $\ell_0$, $\ort\ell_1$, $\ell_2$.
If $\ell_0 \lesstime \ell_2$, then $(L_0,\ell_0,L_2,\ell_2)$ is an inversion.
Otherwise, because $(\ort L_1,\ort\ell_1,L_2,\ell_2)$ is an inversion, we know that 
$\ort\ell_1 \lesstime \ell_2 \leqtime \ell_0$, and hence $(\ort L_1,\ort\ell_1,L_0,C_0)$
is an inversion.

Now, we know that
$\rho^{(2,1)}$ contains the inversion $(\ort L_1,\ort \ell_1,L_2,\ell_2)$, but also 
an inversion involving the output-minimal pair $(L_0,C_0)$, with $L_0$ strictly 
between $\ort L_1$ and $L_2$.
For all $m_0,m_1,m_2$, we define $\rho^{(m_0,m_1,m_2)}$ as the run obtained from 
$\rho^{(2,1)}$ by pumping $m_0,m_1,m_2$ times the loops $L_0,\ort L_1,L_2$, respectively.
By reasoning as we did in the proof of Proposition~\ref{prop:periodicity-sweeping}
(cf.~{\em Periodicity of outputs of pumped runs}), one can show that there are 
arbitrarily large output factors of $\rho^{(m_0,m_1,m_2)}$ that are produced
within the inversion on $\ell_0$ (i.e.~either $(L_0,\ell_0,L_2,\ell_2)$ or $(\ort L_1,\ort\ell_1,L_0,\ell_0)$)
and that are periodic with period $p'$ that divides $|\out{\tr{\ell_0}}|$.
In particular, by Lemma~\ref{lem:output-minimal-twoway}, we know that 
$p' \le \bound$.
Moreover, large portions of these factors are also produced within the 
inversion $(\ort L_1,\ort\ell_1,L_2,\ell_2)$, and hence 
by Theorem~\ref{thm:fine-wilf} they have period $\gcd(p,p')$.

To conclude the proof we need to transfer the periodicity property
from the pumped runs $\rho^{(m_0,m_1,m_2)}$ to the original run $\rho$.
This is done exactly like in Proposition~\ref{prop:periodicity-sweeping}
by relying on 
\gabriele{Usual replacement of Saarela...} 
Lemma \ref{lem:periods}:
we observe that the periodicity
property holds for large enough parameters $m_0,m_1,m_2$, hence for
all values of the parameters, and in particular 
for $m_0 = m_1 = m_2 = 1$. This shows that the word 
\[ 
  \out{\tr{\ell_1}} ~ \out{\rho[\ell_1,\ell_2]} ~ \out{\tr{\ell_2}}
\]
has period $\gcd(p,p') \le \bound$.
\end{proof}

So far we have shown that the output produced amid every inversion of a
run of a one-way definable two-way transducer is periodic, with period
bounded by $\bound$ and dividing the lengths of the trace outputs of 
the inversion. This basically proves the implication \PR1 $\Rightarrow$ \PR2
of Theorem \ref{thm:main2}.
In the next section we will follow a line of arguments similar to that of
Section \ref{sec:characterization-sweeping} to prove the remaining
implications \PR2 $\Rightarrow$ \PR3 $\Rightarrow$ \PR1.


\section{The characterization in the two-way case}\label{sec:characterization-twoway}

In this section we generalize the characterization of one-way
definability of sweeping transducers to the general two-way case. As
usual, we fix through the rest of the section a successful run $\rho$
of $\cT$ on some input word $u$.

\medskip
\subsection*{From periodicity of inversions to existence of decompositions.}
We continue by proving the second implication \PR2 $\Rightarrow$ \PR3 of 
Theorem \ref{thm:main2} in the two-way case. This requires showing the
existence of a suitable decomposition of a run $\rho$ that 
\emph{satisfies} property \PR2. Recall that \PR2 says that for every inversion 
$(L_1,\ell_1,L_2,\ell_2)$, the period of the word 
$\out{\tr{\ell_1}} ~ \out{\rho[\ell_1,\ell_2]} ~ \out{\tr{\ell_2}}$
divides $\gcd(|\out{\tr{\ell_1}}|, |\out{\tr{\ell_2}}|) \le \bound$.
The definitions underlying the decomposition of $\rho$ are similar
to those given in the sweeping case:

\begin{defi}\label{def:factors-twoway} 
Let $\rho[\ell,\ell']$ be a factor of a run $\rho$ of $\cT$, 
where $\ell=(x,y)$, $\ell'=(x',y')$, and $x\le x'$.
We call $\rho[\ell,\ell']$
\begin{itemize}
  \medskip
  \item \parbox[t]{\dimexpr\textwidth-\leftmargin}{
        \vspace{-2.75mm}
        \begin{wrapfigure}{r}{8cm}
        \vspace{-6mm}
\centering
\begin{tikzpicture}[baseline=0, inner sep=0, outer sep=0, minimum size=0pt, xscale=0.5, yscale=0.36]
\begin{scope}
  \tikzstyle{dot} = [draw, circle, fill=white, minimum size=4pt]
  \tikzstyle{fulldot} = [draw, circle, fill=black, minimum size=4pt]
  \tikzstyle{grayfactor} = [->, shorten >=1pt, rounded corners=4, gray, thin, dotted]
  \tikzstyle{factor} = [->, shorten >=1pt, rounded corners=4]
  \tikzstyle{dotfactor} = [->, shorten >=1pt, dotted, rounded corners=4]
  \tikzstyle{fullfactor} = [->, >=stealth, shorten >=1pt, very thick, rounded corners=4]
  \tikzstyle{dotfullfactor} = [->, >=stealth, shorten >=1pt, dotted, very thick, rounded corners=4]

  \fill [pattern=north east lines, pattern color=gray!25]
        (1,-0.75) rectangle (8,2.5);
  \fill [pattern=north east lines, pattern color=gray!25]
        (8,1.5) rectangle (17,6.75);
  \draw [dashed, thin, gray] (8,-0.75) -- (8,6.75);
 
  \draw (2,0) node (node-1) {} ;
  \draw (6,0) node [fulldot] (node0) {};
  \draw (8,0) node (node1) {};
  \draw (14,0) node (node2) {};
  \draw (14,1) node (node3) {};
  \draw (8,1) node (node4) {};
  \draw (2,1) node (node5) {};
  \draw (2,2) node (node6) {};
  \draw (8,2) node [dot] (node7) {};
  \draw (12,2) node (node8) {};
  \draw (12,3) node (node9) {};
  \draw (8,3) node (node10) {};
  \draw (6,3) node (node11) {};
  \draw (6,4) node (node12) {};
  \draw (8,4) node (node13) {};
  \draw (14,4) node (node14) {};
  \draw (14,5) node (node15) {};
  \draw (8,5) node (node16) {};
  \draw (4,5) node (node17) {};
  \draw (4,6) node (node18) {};
  \draw (8,6) node (node19) {};
  \draw (12,6) node [fulldot] (node20) {};
  \draw (16,6) node (node21) {} ;

  \draw [grayfactor] (node-1) -- (node0);
  \draw [fullfactor] (node0) -- (node1); 
  \draw [dotfullfactor] (node1) -- (node2.center) -- (node3.center) -- (node4); 
  \draw [fullfactor] (node4) -- (node5.center) -- (node6.center) -- (node7); 
  \draw [fullfactor] (node7) -- (node8.center) -- (node9.center) -- (node10);
  \draw [dotfullfactor] (node10) -- (node11.center) -- (node12.center) -- (node13); 
  \draw [fullfactor] (node13) -- (node14.center) -- (node15.center) -- (node16); 
  \draw [dotfullfactor] (node16) -- (node17.center) -- (node18.center) -- (node19); 
  \draw [fullfactor] (node19) -- (node20);
  \draw [grayfactor] (node20) -- (node21);
  
  \draw (node0) node [below left = 1.2mm] {$\ell$};
  \draw (node7) node [rectangle, fill=white, above left = 1.8mm] 
        {$\phantom{i}\ell_z$};
  \draw (node20) node [above right = 1mm] {$\ell'$};

  \draw (15.75,0.4) node {$Z_{\ell_z}^\lowerright$};
  \draw (2.25,5) node {$Z_{\ell_z}^\upperleft$};
\end{scope}
\end{tikzpicture}
\vspace{-2mm}
\caption{A diagonal.}\label{fig:diagonal-twoway}

        \vspace{-5mm}
        \end{wrapfigure} 
        a \emph{$\bound$-diagonal} 
        if for all $z\in[x,x']$, there is a location $\ell_z$ at position $z$
        such that $\ell \leqtime \ell_z \leqtime \ell'$ and the words
        $\out{\rho|Z_{\ell_z}^\upperleft}$ and 
        $\out{\rho|Z_{\ell_z}^\lowerright}$ 
        have length at most $\bound$,
        where $Z_{\ell_z}^\upperleft = [\ell_z,\ell'] \:\cap\: \big([0,z]\times\bbN\big)$
        and $Z_{\ell_z}^\lowerright = [\ell,\ell_z] \:\cap\: \big([z,\omega]\times\bbN\big)$;
        }
  \bigskip
  \medskip
  \item \parbox[t]{\dimexpr\textwidth-\leftmargin}{%
        \vspace{-2.75mm}
        \begin{wrapfigure}{r}{8cm}
        \vspace{-6mm}
\centering
\begin{tikzpicture}[baseline=0, inner sep=0, outer sep=0, minimum size=0pt, xscale=0.5, yscale=0.36]
\begin{scope} 
  \tikzstyle{dot} = [draw, circle, fill=white, minimum size=4pt]
  \tikzstyle{fulldot} = [draw, circle, fill=black, minimum size=4pt]
  \tikzstyle{grayfactor} = [->, shorten >=1pt, rounded corners=4, gray, thin, dotted]
  \tikzstyle{factor} = [->, shorten >=1pt, rounded corners=4]
  \tikzstyle{dotfactor} = [->, shorten >=1pt, dotted, rounded corners=4]
  \tikzstyle{fullfactor} = [->, >=stealth, shorten >=1pt, very thick, rounded corners=4]
  \tikzstyle{dotfullfactor} = [->, >=stealth, shorten >=1pt, dotted, very thick, rounded corners=4]

  \fill [pattern=north east lines, pattern color=gray!25]
        (6,6.75) rectangle (12,-0.75);
  \draw [dashed, thin, gray] (6,-0.75) -- (6,6.75);
  \draw [dashed, thin, gray] (12,-0.75) -- (12,6.75);
 
  \draw (2,0) node (node0) {} ;
  \draw (6,0) node [fulldot] (node1) {};
  \draw (8,0) node (node2) {};
  \draw (8,1) node (node3) {};
  \draw (6,1) node (node4) {};
  \draw (2,1) node (node5) {};
  \draw (2,2) node (node6) {};
  \draw (6,2) node (node7) {};
  \draw (12,2) node (node8) {};
  \draw (16,2) node (node9) {};
  \draw (16,3) node (node10) {};
  \draw (12,3) node (node11) {};
  \draw (10,3) node (node12) {};
  \draw (10,4) node (node13) {};
  \draw (12,4) node (node14) {};
  \draw (14,4) node (node15) {};
  \draw (14,5) node (node16) {};
  \draw (12,5) node (node17) {};
  \draw (6,5) node (node18) {};
  \draw (4,5) node (node19) {};
  \draw (4,6) node (node20) {};
  \draw (6,6) node (node21) {};
  \draw (12,6) node [fulldot] (node22) {};
  \draw (16,6) node (node23) {} ;

  \draw [grayfactor] (node0) -- (node1);
  \draw [fullfactor] (node1) -- (node2.center) -- (node3.center) -- (node4); 
  \draw [dotfullfactor] (node4) -- (node5.center) -- (node6.center) -- (node7); 
  \draw [fullfactor] (node7) -- (node8);
  \draw [dotfullfactor] (node8) -- (node9.center) -- (node10.center) -- (node11); 
  \draw [fullfactor] (node11) -- (node12.center) -- (node13.center) -- (node14); 
  \draw [dotfullfactor] (node14) -- (node15.center) -- (node16.center) -- (node17); 
  \draw [fullfactor] (node17) -- (node18);
  \draw [dotfullfactor] (node18) -- (node19.center) -- (node20.center) -- (node21); 
  \draw [fullfactor] (node21) -- (node22);
  
  \draw (node1) node [below left = 1.2mm] {$\ell$};
  \draw (node22) node [above right = 1mm] {$\ell'$};

  \draw (2.5,5) node {$Z^\leftshort$};
  \draw (15.5,0.4) node {$Z^\rightshort$};
  \draw [grayfactor] (node22) -- (node23) ;
  
\end{scope}
\end{tikzpicture}
\vspace{-2mm}
\caption{A \rightward{block.}\phantom{diagonal.}}\label{fig:block-twoway}

        \vspace{-6mm}
        \end{wrapfigure} 
        a \emph{$\bound$-block} if the word
        $\out{\rho[\ell,\ell']}$ is almost periodic with bound $\bound$, 
        and 
        $\out{\rho|Z^\leftshort}$ and 
        $\out{\rho|Z^\rightshort}$ have length at most $\bound$,
        where $Z^\leftshort = [\ell,\ell'] \:\cap\: \big([0,x]\times \bbN\big)$
        and $Z^\rightshort = [\ell,\ell'] \:\cap\: \big([x',\omega]\times \bbN\big)$.
        }
        \vspace{8mm}
\end{itemize}
\end{defi}


The definition of $\bound$-decomposition is copied verbatim from the sweeping case,
but uses the new notions of $\bound$-diagonal and $\bound$-block:

\begin{defi}\label{def:decomposition-twoway}
A \emph{$\bound$-decomposition} of a run $\rho$ of $\cT$ is a factorization 
$\prod_i\,\rho[\ell_i,\ell_{i+1}]$ of $\rho$ into $\bound$-diagonals and $\bound$-blocks.
\end{defi}

\noindent
To provide further intuition,
we consider the transduction of Example~\ref{ex:running} 
and the two-way transducer $\cT$ that implements it in the most natural way.
Figure~\ref{fig:decomposition-twoway} shows an example of a run of $\cT$ on
an input of the form $u_1 \:\#\: u_2 \:\#\: u_3 \:\#\: u_4$, where
$u_2,\, u_4 \in (abc)^*$, $u_1,\,u_3\nin (abc)^*$, and $u_3$ has even length. 
The factors of the run that produce long outputs are highlighted 
by the bold arrows. The first and third factors of the decomposition, 
i.e.~$\rho[\ell_1,\ell_2]$ and $\rho[\ell_3,\ell_4]$, are diagonals
(represented by the blue hatched areas); the second and fourth factors 
$\rho[\ell_2,\ell_3]$ and $\rho[\ell_4,\ell_5]$ are blocks
(represented by the red hatched areas). 

To identify the blocks of a possible decomposition of $\rho$, 
we reuse the equivalence relation $\simeq$ introduced in Definition \ref{def:crossrel}.
Recall that this is the reflexive and transitive closure of the relation $\crossrel$ 
that groups any two locations $\ell,\ell'$ that occur between $\ell_1,\ell_2$, for some
inversion $(L_1,\ell_1,L_2,\ell_2)$.

The proof that the output produced inside each $\simeq$-equivalence class is periodic,
with period at most $\bound$ (Lemma \ref{lem:overlapping}) carries over in the 
two-way case without modifications.
Similarly, every $\simeq$-equivalence class can be extended to the left and to the
right by using Definition \ref{def:bounding-box-sweeping}, which 
we report here verbatim for the sake of readability, together with an exemplifying figure.

\begin{figure}[!t]
\centering
\begin{tikzpicture}[baseline=0, inner sep=0, outer sep=0, minimum size=0pt, xscale=0.5, yscale=0.4]
  \tikzstyle{dot} = [draw, circle, fill=white, minimum size=4pt]
  \tikzstyle{fulldot} = [draw, circle, fill=black, minimum size=4pt]
  \tikzstyle{grayfactor} = [->, shorten >=1pt, rounded corners=6, gray, thin, dotted]
  \tikzstyle{factor} = [->, shorten >=1pt, rounded corners=6]
  \tikzstyle{dotfactor} = [->, shorten >=1pt, dotted, rounded corners=6]
  \tikzstyle{fullfactor} = [->, >=stealth, shorten >=1pt, very thick, rounded corners=6]
  \tikzstyle{dotfullfactor} = [->, >=stealth, shorten >=1pt, dotted, very thick, rounded corners=6]

  \fill [pattern=north east lines, pattern color=blue!18]
        (0.5,-.5) rectangle (6.5,2.5);
  \fill [pattern=north east lines, pattern color=red!20]
        (6.5,1.5) rectangle (12.5,4.5);
  \fill [pattern=north east lines, pattern color=blue!18]
        (12.5,3.5) rectangle (18.5,6.5);
  \fill [pattern=north east lines, pattern color=red!20]
        (18.5,5.5) rectangle (24.5,8.5);
  \draw [dashed, thin, gray] (0.5,-1) -- (0.5,9);
  \draw [dashed, thin, gray] (6.5,-1) -- (6.5,9);
  \draw [dashed, thin, gray] (12.5,-1) -- (12.5,9);
  \draw [dashed, thin, gray] (18.5,-1) -- (18.5,9);
  \draw [dashed, thin, gray] (24.5,-1) -- (24.5,9);
  \draw [gray] (1,-1.25) -- (1,-1.5) -- (6,-1.5) -- (6,-1.25);
  \draw [gray] (3.5,-2) node [below] {\small $u_1$};
  \draw [gray] (6.5,-1.75) node [below] {\footnotesize $\#$};
  \draw [gray] (7,-1.25) -- (7,-1.5) -- (12,-1.5) -- (12,-1.25);
  \draw [gray] (9.5,-2) node [below] {\small $u_2$};
  \draw [gray] (12.5,-1.75) node [below] {\footnotesize $\#$};
  \draw [gray] (13,-1.25) -- (13,-1.5) -- (18,-1.5) -- (18,-1.25);
  \draw [gray] (15.5,-2) node [below] {\small $u_3$};
  \draw [gray] (18.5,-1.75) node [below] {\footnotesize $\#$};
  \draw [gray] (19,-1.25) -- (19,-1.5) -- (24,-1.5) -- (24,-1.25);
  \draw [gray] (21.5,-2) node [below] {\small $u_4$};

  \draw (0.5,0) node [dot] (node0) {};  

  \draw (2,0) node (node1) {};
  \draw (5.5,0) node (node2) {};
  \draw (12,0) node (node3) {};
  \draw (12,1) node (node4) {};
  \draw (5,1) node (node5) {};
  \draw (5,2) node (node6) {};

  \draw (6.5,2) node [dot] (node10') {};

  \draw (8,2) node (node11) {};
  \draw (11.5,2) node (node12) {};
  \draw (18,2) node (node13) {};
  \draw (18,3) node (node14) {};
  \draw (17,3) node (node15) {};
  \draw (8,3) node (node16) {};
  \draw (7,3) node (node17) {};
  \draw (7,4) node (node18) {};
  \draw (8,4) node (node19) {};
  \draw (11.5,4) node (node20) {};

  \draw (12.5,4) node [dot] (node20') {};

  \draw (14,4) node (node21) {};
  \draw (17.5,4) node (node22) {};
  \draw (24,4) node (node23) {};
  \draw (24,5) node (node24) {};
  \draw (17,5) node (node25) {};
  \draw (17,6) node (node26) {};

  \draw (18.5,6) node [dot] (node30') {};

  \draw (20,6) node (node31) {};
  \draw (23.5,6) node (node32) {};
  \draw (24,6) node (node33) {};
  \draw (24,7) node (node34) {};
  \draw (23,7) node (node35) {};
  \draw (20,7) node (node36) {};
  \draw (19,7) node (node37) {};
  \draw (19,8) node (node38) {};
  \draw (20,8) node (node39) {};
  \draw (23.5,8) node (node40) {};

  \draw (24.5,8) node [dot] (node40') {};

  \draw [grayfactor] (node0) -- (node1);
  \draw [fullfactor] (node1) -- (node2);
  \draw [grayfactor] (node2) -- (node3.center) -- (node4.center) 
                         -- (node5.center) -- (node6.center) -- (node10');

  \draw [grayfactor] (node10') -- (node11);
  \draw [fullfactor] (node11) -- (node12);
  \draw [grayfactor] (node12) -- (node13.center) -- (node14.center) -- (node15) -- 
                 (node15) -- (node16) --
                 (node16) -- (node17.center) -- (node18.center) -- (node19); 
  \draw [fullfactor] (node19) -- (node20); 
  \draw [grayfactor] (node20) -- (node20');

  \draw [grayfactor] (node20') -- (node21);
  \draw [fullfactor] (node21) -- (node22);
  \draw [grayfactor] (node22) -- (node23.center) -- (node24.center) 
                          -- (node25.center) -- (node26.center) -- (node30');

  \draw [grayfactor] (node30') -- (node31);
  \draw [fullfactor] (node31) -- (node32);
  \draw [grayfactor] (node32) -- (node33.center) -- (node34.center) -- (node35) --
                 (node35) -- (node36) -- 
                 (node36) -- (node37.center) -- (node38.center) -- (node39);
  \draw [fullfactor] (node39) -- (node40);
  \draw [grayfactor] (node40) -- (node40');

  \draw (node0) node [left = 1.5mm] {$\ell_1$};
  \draw (node10') node [above left = 2mm] {$\ell_2$};
  \draw (node20') node [above right = 2mm] {$\ell_3$};
  \draw (node30') node [above left = 2mm] {$\ell_4$};
  \draw (node40') node [right = 1.5mm] {$\ell_5$};
\end{tikzpicture}
\caption{A decomposition of a run of a two-way transducer.}\label{fig:decomposition-twoway}
\end{figure}
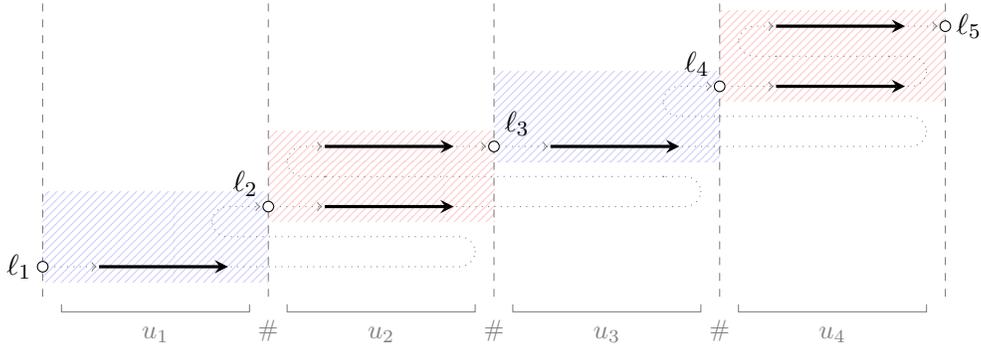


\bigskip\noindent
\begin{minipage}[l]{\textwidth-5.8cm}
\begin{defi}\label{def:bounding-box-twoway}
Consider a non-singleton $\simeq$-equivalence class $K=[\ell,\ell']$. 
Let $\an{K}$
be the restriction of $K$ to the anchor points occurring in some inversion, 
and $X_{\an{K}} = \{x \::\: \exists y\: (x,y)\in \an{K}\}$ 
be the projection of $\an{K}$ on positions.
We define $\block{K}=[\tilde\ell,\tilde\ell']$, where 
\begin{itemize}
  \item $\tilde\ell$ is the latest location $(\tilde x,\tilde y) \leqtime \ell$ 
        such that $\tilde x = \min(X_{\an{K}})$, 
  \item $\tilde\ell'$ is the earliest location $(\tilde x,\tilde y) \geqtime \ell'$ 
        such that $\tilde x = \max(X_{\an{K}})$
\end{itemize}  
(note that the locations $\tilde\ell,\tilde\ell'$ exist since $\ell,\ell'$ 
are anchor points in some inversion).
\end{defi}
\end{minipage}
\begin{minipage}[r]{5.7cm}
\vspace{-2mm}
\centering
\scalebox{0.9}{
\begin{tikzpicture}[baseline=0, inner sep=0, outer sep=0, minimum size=0pt, scale=0.52, yscale=1.2]
  \tikzstyle{dot} = [draw, circle, fill=white, minimum size=4pt]
  \tikzstyle{fulldot} = [draw, circle, fill=black, minimum size=4pt]
  \tikzstyle{grayfactor} = [->, shorten >=1pt, rounded corners=8, gray, thin, dotted]
  \tikzstyle{factor} = [->, shorten >=1pt, rounded corners=8]
  \tikzstyle{dotfactor} = [->, shorten >=1pt, dotted, rounded corners=8]
  \tikzstyle{fullfactor} = [->, >=stealth, shorten >=1pt, very thick, rounded corners=8]
  \tikzstyle{dotfullfactor} = [->, >=stealth, shorten >=1pt, dotted, very thick, rounded corners=8]

  \fill [pattern=north east lines, pattern color=gray!25]
        (2,-0.325) rectangle (8,4.325);
  \draw [dashed, thin, gray] (2,-1.25) -- (2,5);
  \draw [dashed, thin, gray] (8,-1.25) -- (8,5);
  \draw (2,-1.5) node [below] {\footnotesize $\min(X_{\an{K}})$};
  \draw (8,-1.5) node [below] {\footnotesize $\max(X_{\an{K}})$};

  \draw (0,0) node (node0) {};
  \draw (2,0) node [dot] (node1) {};
  \draw (6,0) node (node3) {};
  \draw (6,1) node (node4) {};
  \draw (4,1) node [fulldot] (node6) {};
  \draw (2,1) node (node7) {};
  \draw (0,1) node (node8) {};
  \draw (0,2) node (node9) {};
  \draw (2,2) node (node10) {};
  \draw (8,2) node (node11) {};
  \draw (10,2) node (node12) {};
  \draw (10,3) node (node13) {};
  \draw (8,3) node (node14) {};
  \draw (6,3) node [fulldot] (node15) {};
  \draw (4,3) node (node17) {};
  \draw (4,4) node (node18) {};
  \draw (8,4) node [dot] (node20) {};
  \draw (10,4) node (node21) {};

  \draw [grayfactor] (node0) -- (node1); 
  \draw [dotfullfactor] (node1) -- (node3.center) -- (node4.center) -- (node6); 
  \draw [fullfactor] (node6) -- (node7.center) -- (node8.center) -- (node9.center) -- (node10.center)
                  -- (node11.center) -- (node12.center) -- (node13.center) -- (node14.center) -- (node15); 
  \draw [dotfullfactor] (node15) -- (node17.center) -- (node18.center) -- (node20); 
  \draw [grayfactor] (node20) -- (node21); 
  
  \draw (node1) node [below left = 2mm] {$\tilde\ell\ $};
  \draw (node6) node [below = 2mm] {$\ell$};
  \draw (node15) node [above = 2mm] {$\ell'$};
  \draw (node20) node [above right = 2mm] {$\ \tilde\ell'$};
\end{tikzpicture}
}
\captionof{figure}{Block construction.\label{fig:block-construction-twoway}}
\end{minipage}


\medskip
As usual, we call \emph{$\simeq$-block} any factor of $\rho$ of the form $\rho|\block{K}$ 
that is obtained by applying the above definition to a non-singleton $\simeq$-class $K$.
Lemma \ref{lem:bounding-box-sweeping}, which shows that $\simeq$-blocks can indeed
be used as $\bound$-blocks in a decomposition of $\rho$, generalizes easily to the
two-way case:

\reviewOne[inline]{Most proofs are quite notation-heavy. A small figure to represent the positions and factors involved is a good thing to have. For example, Lemma 5.8 has Fig.5, but the corresponding (and slightly more involved) Lemme 8.4 hasn't. This seems like a worthwile investment for any long proof, or any time a huge number of notations are involved. 
}%

\gabriele[inline]{I think the reviewer is referring to the lack of a figure for the previous definition, 
exactly like for Definition 5.7. I put it.}%

\reviewTwo[inline]{
-p.37 l.21-22 the $\ell$ should be replaced with $\tilde{\ell}$
\olivier[inline]{I didn't find where}%
\felix[inline]{second and third line of the list of properties we will prove. I agree.}%
}%

\begin{lem}\label{lem:bounding-box-twoway}
If $K$ is a non-singleton $\simeq$-equivalence class, 
then $\rho|\block{K}$ is a $\bound$-block.
\end{lem}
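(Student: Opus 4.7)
The plan is to mimic the structure of the sweeping proof (Lemma~\ref{lem:bounding-box-sweeping}), but replace every appeal to output-minimality (which in the sweeping case yielded the bound via Lemma~\ref{lem:output-minimal-sweeping}) with an application of the Ramsey-type result Theorem~\ref{thm:simon2}. Concretely, I would need to verify the three clauses in Definition~\ref{def:factors-twoway}: that $\tilde x\le \tilde x'$, that $\out{\rho[\tilde\ell,\tilde\ell']}$ is almost periodic with bound $\bound$, and that $\out{\rho|Z^\leftshort}$ and $\out{\rho|Z^\rightshort}$ are both of length at most $\bound$. The first point is immediate from Definition~\ref{def:bounding-box-twoway} because both $\ell$ and $\ell'$ project to positions in $X_{\an{K}}$ and $\ell\leqtime\ell'$.

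For almost periodicity, I factor the output as $\out{\rho[\tilde\ell,\ell]}\cdot\out{\rho[\ell,\ell']}\cdot\out{\rho[\ell',\tilde\ell']}$. Lemma~\ref{lem:overlapping} (whose proof is insensitive to the sweeping assumption) already gives period at most $\bound$ for the middle factor. It thus remains to bound the prefix and suffix. I would handle the prefix $\out{\rho[\tilde\ell,\ell]}$ as follows: first record the observation --- central to the whole argument --- that if $\tilde\ell\neq\ell$, then every location strictly between $\tilde\ell$ and $\ell$ sits at position strictly greater than $\tilde x$; this holds because if the run crossed $\tilde x$ between $\tilde\ell$ and $\ell$ it would create a new location at position $\tilde x$, contradicting the maximality of $\tilde\ell$ in Definition~\ref{def:bounding-box-twoway}. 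Now assume $|\out{\rho[\tilde\ell,\ell]}|>\bound$. Apply Theorem~\ref{thm:simon2} to $I=[\tilde x,\omega]$ and $K'=[\tilde\ell,\ell]$ (note $Z$ then coincides with $[\tilde\ell,\ell]$ up to finitely many transitions of bounded total output). This yields an idempotent loop $L$ with $\tilde x<\min(L)$ and an anchor point $\ell''$ strictly between $\tilde\ell$ and $\ell$, with $\out{\tr{\ell''}}\neq\emptystr$. Since $\tilde x=\min(X_{\an{K}})$, there is some $\hat\ell\in\an{K}$ at position $\tilde x$, and (by Definition~\ref{def:crossrel} applied to a witnessing inversion for $\hat\ell$) $\hat\ell$ is an anchor of an idempotent loop $L_{\hat\ell}$ with non-empty trace. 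Because $\ell''\lesstime\ell\leqtime\hat\ell$ and the position of $\ell''$ exceeds $\tilde x$, the tuple $(L,\ell'',L_{\hat\ell},\hat\ell)$ is an inversion, so $\ell''\simeq\hat\ell\simeq\ell$. This contradicts $\ell$ being the $\leqtime$-first element of $K$. The suffix $\out{\rho[\ell',\tilde\ell']}$ is bounded symmetrically.

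For the third clause, suppose $|\out{\rho|Z^\leftshort}|>\bound$ and apply Theorem~\ref{thm:simon2} to $I=[0,\tilde x]$ and $K'=[\tilde\ell,\tilde\ell']$. This produces an idempotent loop $L$ with $\max(L)<\tilde x$ and an anchor point $\ell''\in (\tilde\ell,\tilde\ell')$ with $\out{\tr{\ell''}}\neq\emptystr$ and at position strictly less than $\tilde x$. By the observation above, $\ell''$ cannot sit strictly between $\tilde\ell$ and $\ell$, so it must lie in $[\ell,\tilde\ell']$. I distinguish two subcases. If $\ell\leqtime\ell''\leqtime\ell'$, then $\ell''\in K$ by convexity of $\simeq$-classes, and the inversion $(L_\ell,\ell,L,\ell'')$ --- built from a loop $L_\ell$ witnessing $\ell\in\an{K}$ --- shows $\ell''\in\an{K}$. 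If $\ell'\lesstime\ell''\leqtime\tilde\ell'$, then the symmetric inversion $(L_{\ell'},\ell',L,\ell'')$ built from a loop witnessing $\ell'\in\an{K}$ likewise places $\ell''$ in $\an{K}$. Either way the position of $\ell''$ belongs to $X_{\an{K}}$, contradicting $\tilde x=\min(X_{\an{K}})$. The bound for $Z^\rightshort$ follows by the same argument with left/right and min/max swapped.

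The main obstacle, relative to the sweeping case, is that factors of a two-way run are no longer confined to one or two levels, so the naive output-minimality argument fails and we have to pump subsequences of $\rho$ rather than contiguous factors. The Ramsey-based Theorem~\ref{thm:simon2} delivers exactly the right replacement; the only delicate point is then the careful case split on the run-order position of the anchor $\ell''$ produced by the theorem, where in each case the correct choice of partner anchor (the leftmost $\hat\ell\in\an{K}$, or $\ell$, or $\ell'$) must be made so that the resulting tuple is actually an inversion in the sense of Definition~\ref{def:inversion-twoway}.
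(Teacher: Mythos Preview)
Your proof is correct and follows essentially the same approach as the paper: replace output-minimality by Theorem~\ref{thm:simon2}, use Lemma~\ref{lem:overlapping} for the periodic core, and derive contradictions by exhibiting inversions that either extend $K$ beyond its first element or place an anchor to the left of $\min(X_{\an{K}})$.

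Two minor remarks. First, your justification of $\tilde x\le\tilde x'$ is muddled: the relation $\ell\leqtime\ell'$ is run order, not position order, so it does not help; the inequality holds simply because $\tilde x=\min(X_{\an{K}})$ and $\tilde x'=\max(X_{\an{K}})$. Second, the case distinction on whether $\ell''\leqtime\ell'$ in the $Z^\leftshort$ argument is unnecessary. The paper uniformly pairs $\ell''$ with $\ell$: once you have $\ell\leqtime\ell''$ and the position of $\ell''$ strictly below $\tilde x\le$ position of $\ell$, the tuple $(L_\ell,\ell,L,\ell'')$ is already an inversion, forcing $\ell''\simeq\ell$ and hence $\ell''\in\an{K}$, regardless of where $\ell''$ sits relative to $\ell'$. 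Your second subcase still works, but it is extra effort.
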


\begin{proof}
The proof is similar to that of Lemma \ref{lem:bounding-box-sweeping}. The main
difference is that here we will bound the lengths of some outputs using 
a Ramsey-type argument (Theorem \ref{thm:simon2}), instead of output-minimality
of factors (Lemma \ref{lem:output-minimal-sweeping}). To follow the various constructions
and arguments the reader can refer to Figure \ref{fig:block-construction-twoway}.

Let $K=[\ell,\ell']$, $\an{K}$, $X_{\an{K}}$, and $\block{K}=[\tilde\ell,\tilde\ell']$
be as in Definition \ref{def:bounding-box-twoway}, where
$\tilde\ell=(\tilde x,\tilde y)$, $\tilde\ell'=(\tilde x',\tilde y')$,
$\tilde x=\min(X_{\an{K}})$, and $\tilde x'=\max(X_{\an{K}})$.
We need to verify that $\rho|\block{K}$ is a $\bound$-block, namely, that:
\begin{itemize}
  \item $\tilde x \le \tilde x'$,
  \item $\out{\rho[\tilde \ell,\tilde \ell']}$ is almost periodic with bound $\bound$, 
  \item $\out{\rho|Z^\leftshort}$ and $\out{\rho|Z^\rightshort}$ have length at most $\bound$,
        where $Z^\leftshort = [\tilde \ell,\tilde \ell'] \:\cap\: \big([0,x]\times \bbN\big)$
        and $Z^\rightshort = [\tilde \ell,\tilde \ell'] \:\cap\: \big([x',\omega]\times \bbN\big)$.
\end{itemize}
The first condition $\tilde x \le \tilde x'$ follows immediately from 
$\tilde x=\min(X_{\an{K}})$ and $\tilde x'=\max(X_{\an{K}})$.

Next, we prove that the output produced by the factor 
$\rho[\tilde\ell,\tilde\ell']$ is almost periodic with bound $\bound$. 
By Definition \ref{def:bounding-box-twoway}, we have 
$\tilde\ell \leqtime \ell \lesstime \ell' \leqtime \tilde\ell'$,
and by Lemma \ref{lem:overlapping} 
we know that $\out{\rho[\ell,\ell']}$ is periodic with 
period at most $\bound$. So it suffices to bound the length 
of the words $\out{\rho[\tilde\ell,\ell]}$ and $\out{\rho[\ell',\tilde\ell']}$. 
We shall focus on the former word, as the arguments for the latter 
are similar. 

First, we show that the factor $\rho[\tilde\ell,\ell]$ 
lies entirely to the right of position $\tilde x$ 
(in particular, it starts at an even level $\tilde y$).
Indeed, if this were not the case, there would exist another location 
$\ell''=(\tilde x,\tilde y + 1)$, on the same position as $\tilde\ell$, 
but at a higher level, such that $\tilde\ell \lesstime \ell'' \leqtime \ell$. 
But this would contradict Definition \ref{def:bounding-box-twoway}
($\tilde\ell$ is the \emph{latest} location $(x,y) \leqtime \ell$ 
such that $x = \min(X_{\an{K}})$).

Suppose now that the length of $|\out{\rho[\tilde\ell,\ell]}| >
\bound$.  We head towards a contradiction by finding a location
$\ell'' \lesstime \ell$ that is $\simeq$-equivalent to the first
location $\ell$ of the $\simeq$-equivalence class $K$.  Since the
factor $\rho[\tilde\ell,\ell]$ lies entirely to the right of position
$\tilde x$, it is intercepted by the interval $I=[\tilde x,\omega]$.
So $|\out{\rho[\tilde\ell,\ell]}| > \bound$ is equivalent to saying
$|\out{\rho|Z}| > \bound$, where $Z = [\tilde\ell,\ell] \:\cap\:
\big([\tilde x,\omega]\times\bbN\big)$.  Then, Theorem
\ref{thm:simon2} implies the existence of an idempotent loop $L$ and
an anchor point $\ell''$ of $L$ such that
\begin{itemize}
  \item $\min(L) > \tilde x$,
  \item $\tilde\ell \lesstime \ell'' \lesstime \ell$,
  \item $\out{\tr{\ell''}}\neq\emptystr$.
\end{itemize}
Further recall that $\tilde x=\min(X_{\an{K}})$ is the leftmost position of 
locations in the class $K=[\ell,\ell']$ that are also anchor points of inversions. 
In particular, there is an inversion $(L_1,\ell''_1,L_2,\ell''_2)$, with 
$\ell''_2=(\tilde x,y''_2) \in K$. 
Since $\ell'' \lesstime \ell \leqtime \ell''_2$ 
and the position of $\ell''$ is to the right of the position of $\ell''_2$, 
we know that $(L,\ell'',L_2,\ell''_2)$ is also an inversion, 
and hence $\ell'' \simeq \ell''_2 \simeq \ell$.
But since $\ell'' \neq \ell$, we get a contradiction with the 
assumption that $\ell$ is the first location of the $\simeq$-class $K$. 
In this way we have shown that $|\out{\rho[\tilde\ell,\ell]}| \le \bound$.

It remains to bound the lengths of the outputs produced 
by the subsequences $\rho|Z^\leftshort$ and $\rho|Z^\rightshort$, 
where $Z^\leftshort=[\tilde\ell,\tilde\ell'] \:\cap\: \big([0,\tilde x]\times\bbN\big)$ 
and $Z^\rightshort=[\tilde\ell,\tilde\ell'] \:\cap\: \big([\tilde x',\omega]\times\bbN\big)$.
As usual, we consider only one of the two symmetric cases.
Suppose, by way of contradiction, that $|\out{\rho|Z^\leftshort}| > \bound$.
By Theorem \ref{thm:simon2}, there exist an idempotent loop $L$ 
and an anchor point $\ell''$ of $L$ such that
\begin{itemize}
  \item $\max(L) < \tilde x$,
  \item $\tilde\ell \lesstime \ell'' \lesstime \tilde\ell'$,
  \item $\out{\tr{\ell''}}\neq\emptystr$.
\end{itemize}
By following the same line of reasoning as before, we recall that
$\ell$ is the first location of the non-singleton class $K$.
From this we derive the existence an inversion $(L_1,\ell''_1,L_2,\ell''_2)$ 
where $\ell''_1 = \ell$.
We claim that $\ell \leqtime \ell''$.
Indeed, if this were not the case, then, because $\ell''$ is strictly to the 
left of $\tilde x$ and $\ell$ is to the right of $\tilde x$, there would exist 
a location $\ell'''$ between $\ell''$ and $\ell$ that lies at position $\tilde x$. 
But $\tilde\ell \lesstime \ell'' \leqtime \ell''' \leqtime \ell$ would 
contradict the fact that $\tilde\ell$ is the {\sl latest} location before 
$\ell$ that lies at the position $\tilde x$.
Now that we know that $\ell \leqtime \ell''$ and that $\ell''$ is to the left of $\tilde x$, 
we observe that $(L_1,\ell''_1,L,\ell'')$ is also an inversion, and hence $\ell''\in \an{K}$. 
Since $\ell''$ is strictly to the left of $\tilde x$,
we get a contradiction with the definition of $\tilde x$ as leftmost 
position of the locations in $\an{K}$.
So we conclude that $|\out{\rho|Z^\leftshort}| \le \bound$.
\end{proof}

The proof of Lemma \ref{lem:consecutive-blocks-sweeping}, which 
shows that $\simeq$-blocks do not overlap along the input axis, 
carries over in the two-way case, again without modifications.
Finally, we generalize Lemma \ref{lem:diagonal-sweeping} to the new
definition of diagonal, which completes the construction of a 
$\bound$-decomposition for the run $\rho$:

\begin{lem}\label{lem:diagonal-twoway}
Let $\rho[\ell,\ell']$ be a 
factor of $\rho$,
with $\ell=(x,y)$, $\ell'=(x',y')$, and $x\le x'$,
that does not overlap any $\simeq$-block.
Then $\rho[\ell,\ell']$ is a $\bound$-diagonal.
\end{lem}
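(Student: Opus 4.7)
I would prove the contrapositive: if $\rho[\ell,\ell']$ is not a $\bound$-diagonal then it overlaps some $\simeq$-block, contradicting the hypothesis. By Definition~\ref{def:factors-twoway}, a failure to be a $\bound$-diagonal means there is a position $z\in[x,x']$ such that every location $(z,y)$ visited by $\rho[\ell,\ell']$ satisfies either $f(y):=|\out{\rho|Z_{(z,y)}^\upperleft}|>\bound$ or $g(y):=|\out{\rho|Z_{(z,y)}^\lowerright}|>\bound$. Note that $f$ is non-increasing in $y$ and $g$ is non-decreasing, since increasing $y$ shrinks $[(z,y),\ell']$ and enlarges $[\ell,(z,y)]$.

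First I would locate a single level $y^\star$ at which both $f(y^\star)>\bound$ and $g(y^\star)>\bound$ hold. Let $y$ be the largest level with $f(y)>\bound$ and $y'$ the smallest level with $g(y')>\bound$; I would show $y\ge y'$. If $y<y'-1$, any intermediate level violates both inequalities, contradicting the contradiction hypothesis, so $y\ge y'-1$. A parity argument then rules out $y=y'-1$: were $y$ even, the outgoing transition of $(z,y)$ would be rightward and leave $[0,z]$, so stepping from $y$ to $y+1$ would not remove any transition from $\rho|Z^\upperleft$, forcing $f(y+1)=f(y)>\bound$ and contradicting maximality of $y$. Symmetrically, $y'$ must be odd, and since two odd levels cannot differ by one, $y\ge y'$. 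Setting $y^\star:=y$ gives $f(y^\star)>\bound$ by construction and $g(y^\star)\ge g(y')>\bound$ by monotonicity.

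Next I would apply Theorem~\ref{thm:simon2} twice at the pivot location $(z,y^\star)$. With interval of locations $[\ell,(z,y^\star)]$ and interval of positions $[z,\omega]$, the bound $g(y^\star)>\bound$ yields an idempotent loop $L_1$ with $\min(L_1)>z$ and an anchor point $\ell_1^\star$ satisfying $\ell\lesstime\ell_1^\star\lesstime(z,y^\star)$ and $\out{\tr{\ell_1^\star}}\ne\emptystr$. Symmetrically, $f(y^\star)>\bound$ yields an idempotent loop $L_2$ with $\max(L_2)<z$ and an anchor $\ell_2^\star$ with $(z,y^\star)\lesstime\ell_2^\star\lesstime\ell'$ and $\out{\tr{\ell_2^\star}}\ne\emptystr$. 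Since $\ell_1^\star\lesstime(z,y^\star)\lesstime\ell_2^\star$ in run order while the position of $\ell_1^\star$ exceeds $z$ and that of $\ell_2^\star$ is strictly less than $z$, the tuple $(L_1,\ell_1^\star,L_2,\ell_2^\star)$ meets Definition~\ref{def:inversion-twoway} and is an inversion. Hence $\ell_1^\star\simeq\ell_2^\star$ and the associated block $[\tilde\ell,\tilde\ell']:=\block{K}$ of their non-singleton $\simeq$-class $K$ satisfies $\tilde\ell\leqtime\ell_1^\star$ and $\ell_2^\star\leqtime\tilde\ell'$. The intersection $[\ell,\ell']\cap[\tilde\ell,\tilde\ell']$ contains $\ell_1^\star$ and is thus non-empty; moreover $\tilde\ell\ne\ell'$ since $\tilde\ell\leqtime\ell_1^\star\lesstime\ell'$, and $\ell\ne\tilde\ell'$ since $\ell\lesstime\ell_2^\star\leqtime\tilde\ell'$. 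Thus $\rho[\ell,\ell']$ overlaps the $\simeq$-block $\rho|\block{K}$, the desired contradiction.

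The delicate step is the parity analysis used to secure the pivot level $y^\star$: one must carefully check that incrementing an even level removes only the location itself from $Z^\upperleft$ (distinguishing U-turn from non-U-turn outgoing transitions) and dispose of the edge cases in which $y$ or $y'$ coincide with the extremal levels of the range visited at position $z$ within $[\ell,\ell']$. The remainder of the argument is a direct adaptation of the sweeping proof of Lemma~\ref{lem:diagonal-sweeping}, with Theorem~\ref{thm:simon2} playing the role that Lemma~\ref{lem:output-minimal-sweeping} played there.
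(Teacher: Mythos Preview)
Your proposal is correct and follows essentially the same approach as the paper's proof: assume a bad position $z$, locate a single pivot level where both the upper-left and lower-right outputs exceed $\bound$ via the parity argument, then apply Theorem~\ref{thm:simon2} on each side to manufacture an inversion, contradicting the non-overlap hypothesis. One small point: you tacitly assume that a largest level with $f>\bound$ and a smallest level with $g>\bound$ both exist, whereas the paper first argues this explicitly (at the highest even level at position $z$ the set $Z^\upperleft$ is empty, so condition~(1) fails and condition~(2) must hold; symmetrically for the lowest even level) --- you may want to add that sentence.
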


\begin{proof}
Suppose by way of contradiction that there is some $z \in [x,x']$
such that, for all locations $\ell''$ at position $z$ and between $\ell$ and $\ell'$, 
one of the two conditions holds:
\begin{enumerate}
  \item $|\out{\rho|Z_{\ell''}^\upperleft}| > \bound$, 
        where $Z_{\ell''}^\upperleft = [\ell'',\ell'] \:\cap\: \big([0,z]\times\bbN\big)$,
  \item $|\out{\rho|Z_{\ell''}^\lowerright}| > \bound$, 
        where $Z_{\ell''}^\lowerright = [\ell,\ell''] \:\cap\: \big([z,\omega]\times\bbN\big)$.
\end{enumerate}
First, we claim that \emph{each} of the two conditions above are satisfied at
some locations $\ell''\in [\ell,\ell']$ at position $z$. 
Consider the highest even level $y''$ 
such that $\ell''=(z,y'') \in[\ell,\ell']$
(use Figure \ref{fig:diagonal-twoway} as a reference).
Since $z\le x'$, the outgoing transition at $\ell''$ is rightward oriented, 
and the set $Z_{\ell''}^\upperleft$ is empty. This means that 
condition (1) is trivially violated at $\ell''$, 
and hence condition (2) holds at $\ell''$ by the initial assumption. 
Symmetrically, condition (1) holds at the location $\ell''=(z,y'')$,
where $y''$ is the lowest even level with $\ell'' \in[\ell,\ell']$.

Let us now compare the levels where the above conditions hold.
Clearly, the lower the level of location $\ell''$, 
the easier it is to satisfy condition (1), and symmetrically for condition (2).
So, let $\ell^+=(z,y^+)$ (resp.~$\ell^-=(z,y^-)$) be the highest (resp.~lowest) 
location in $[\ell,\ell']$ at position $z$ that satisfies
condition (1) (resp.~condition (2)). 

We claim that $y^+ \ge y^-$.
For this, we first observe that $y^+ \ge y^- - 1$, since otherwise there 
would exist a location $\ell''=(z,y'')$, with $y^+ < y'' < y^-$, that 
violates both conditions (1) and (2).
Moreover, $y^+$ must be odd, otherwise the transition departing from 
$\ell^+ = (z,y^+)$ would be rightward oriented and the location $\ell'' = (z,y^+ + 1)$ 
would still satisfy condition (1), contradicting the definition of highest location $\ell^+$. 
For similar reasons, $y^-$ must also be odd, otherwise there would be a location 
$\ell'' = (z,y^- - 1)$ below $\ell^-$ that satisfies condition (2).
But since $y^+ \ge y^- - 1$ and both $y^+$ and $y^-$ are odd, 
we need to have $y^+ \ge y^-$.

In fact, from the previous arguments we know that the location $\ell''=(z,y^+)$ 
(or equally the location $(x,y^-)$) 
satisfies {\sl both} conditions (1) and (2). We can thus apply Theorem \ref{thm:simon2} to the 
sets $Z_{\ell''}^\lowerright$ and $Z_{\ell''}^\upperleft$, deriving the existence of
two idempotent loops $L_1,L_2$ and two anchor points $\ell_1,\ell_2$ of $L_1,L_2$, 
respectively, such that
\begin{itemize}
  \item $\max(L_2) < z < \min(L_1)$,
  \item $\ell \lesstime \ell_1 \lesstime \ell'' \lesstime \ell_2 \lesstime \ell'$, 
  \item $\out{\tr{\ell_1}},\out{\tr{\ell_2}}\neq\emptystr$.
\end{itemize}
In particular, since $\ell_1$ is to the right of $\ell_2$ w.r.t.~the order
of positions, we know that $(L_1,\ell_1,L_2,\ell_2)$ is an inversion, and 
hence $\ell_1 \simeq \ell_2$. But this contradicts the assumption that 
$\rho[\ell,\ell']$ does not overlap with any $\simeq$-block.
\end{proof}

\medskip
\reviewOne[inline]{Part 5, "From existence of decompositions to an equivalent one-way transducer": so far the paper did a great job hinting at the proof that remains. Here an intuition is given for the construction later detailed in 8.6, but no mention of an intuition for Proposition 9.2: if it is at all possible to do so, this might help the paper's overall flow.
  \olivier[inline]{answered in review1-answers.txt}
}%
\subsection*{From existence of decompositions to an equivalent one-way transducer.}
It remains to prove the last implication \PR3 $\Rightarrow$ \PR1 of Theorem~\ref{thm:main2},
which amounts to construct a one-way transducer $\cT'$ equivalent to $\cT$.

Hereafter, we denote by $D$ the language of words $u\in\dom(\cT)$ such that 
{\sl all} successful runs of $\cT$ on $u$ admit a $\bound$-decomposition.
So far, we know that if $\cT$ is one-way definable (\PR1), 
then $D=\dom(\cT)$ (\PR3). 
As a matter of fact, this reduces the one-way definability problem 
for $\cT$ to the containment problem $\dom(\cT) \subseteq D$.
\label{testing-containment}
We will see later (in Section~\ref{sec:complexity})
how the latter problem can be decided 
in double exponential space 
by further reducing it to checking the emptiness of the 
intersection of the languages $\dom(\cT)$ and $D^\complement$, 
where $D^\complement$ is the complement of $D$.

Below, we show how to construct a one-way transducer 
$\cT'$ of triple exponential size such that 
$\cT' \subseteq \cT$ and
$\dom(\cT')$ is the set of all input words that have 
{\sl some} successful run admitting a $\bound$-decomposition
(hence $\dom(\cT')\supseteq D$).
In particular, we will have that
\[
  \cT|_D \:\subseteq\: \cT' \:\subseteq\: \cT.
\]
Note that this will prove \PR3 to \PR1, as well as the second 
item of Theorem~\ref{thm:main}, since $D=\dom(\cT)$ 
if and only if $\cT$ is one-way definable.
A sketch of the proof of this construction when
$\cT$ is a sweeping transducer was given at the
end of Section \ref{sec:characterization-sweeping}.

\begin{prop}\label{prop:construction-twoway}
Given a functional two-way transducer $\cT$, 
a one-way transducer $\cT'$ 
satisfying 
\[\cT' \subseteq \cT \quad \text{ and } \quad \dom(\cT') \supseteq D\] can be constructed in $3\exptime$.
Moreover, if $\cT$ is sweeping, then $\cT'$
can be constructed in $2\exptime$.
\end{prop}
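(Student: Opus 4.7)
The plan is to describe $\cT'$ as a one-way nondeterministic transducer that, while reading $u$ from left to right, guesses on-the-fly a successful run $\rho$ of $\cT$ on $u$ together with a $\bound$-decomposition of $\rho$, and emits the output of $\rho$ in the correct order. At each input position $x$, $\cT'$ commits to the crossing sequence $\rho|x$ (one of at most $|Q|^{\hmax}$ choices, hence singly exponential in $|\cT|$) and checks local compatibility with $\rho|{x{-}1}$ against the transition rules of $\cT$. In parallel, $\cT'$ tracks which factor of the decomposition is currently traversed, of which type (diagonal or block), together with the bounded-size data needed to reproduce its contribution to the output.

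For a $\bound$-diagonal $\rho[\ell,\ell']$, $\cT'$ maintains, at each intermediate position $z$, a guessed location $\ell_z$ together with the two buffers $\out{\rho|Z_{\ell_z}^\upperleft}$ and $\out{\rho|Z_{\ell_z}^\lowerright}$, each of length at most $\bound$ by Definition~\ref{def:factors-twoway}. Moving from $z$ to $z+1$ updates these buffers by shifting the transitions that cross between the two regions, and the letters that migrate into the ``already emitted'' side are produced as output. For a $\bound$-block $\rho[\ell,\ell']$, $\cT'$ guesses once and for all, when entering the block, the almost-periodic shape of $\out{\rho[\ell,\ell']}$ (period $p\le\bound$, repeating pattern, bounded prefix and suffix) together with the two short buffers $\out{\rho|Z^\leftshort}$ and $\out{\rho|Z^\rightshort}$. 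As $\cT'$ scans the block it emits the correct amount of the periodic output per input letter, exactly in the style illustrated by Example~\ref{ex:running}, together with the prefix/suffix corrections at the boundaries.

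Correctness splits into two parts. First, $\cT'\subseteq\cT$: any successful computation of $\cT'$ on $u$ certifies a valid run of $\cT$ on $u$ (through the guessed crossing sequences) producing the same output, so the transduction of $\cT'$ is contained in that of $\cT$; by functionality of $\cT$ the two coincide on $\dom(\cT')$. Second, $\dom(\cT')\supseteq D$: by definition of $D$ and property \PR3, every $u\in D$ admits a successful run of $\cT$ with a $\bound$-decomposition, which $\cT'$ can guess and faithfully simulate.

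Regarding complexity, the state space of $\cT'$ stores one crossing sequence together with $O(\hmax)$ words over $\Gamma$ of length $O(\bound)$, plus a constant-size descriptor of the current factor type, giving $|\Gamma|^{O(\hmax\cdot\bound)}$ states. In the sweeping case $\bound$ is singly exponential in $|\cT|$, so $|\cT'|$ is doubly exponential; in the general case $\bound$ is doubly exponential, owing to the effect semigroup of size $\emax$, so $|\cT'|$ is triply exponential. The main subtlety is verifying that the ``global'' constraints of a $\bound$-decomposition (that a guessed block really is almost periodic with the prescribed bound, that a guessed diagonal admits the required location $\ell_z$ at every position) reduce to local checks performable by a one-way machine; this is precisely what the incremental update of the bounded buffers, committed crossing sequences, and periodic pattern achieves.
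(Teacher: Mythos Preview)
Your proposal is correct and follows essentially the same approach as the paper: guess the run via crossing sequences, guess the $\bound$-decomposition, and simulate each factor in a diagonal mode (tracking the locations $\ell_z$ together with bounded buffers for $\out{\rho|Z_{\ell_z}^\upperleft}$ and $\out{\rho|Z_{\ell_z}^\lowerright}$) or a block mode (tracking the repeating pattern of the almost-periodic output plus the bounded words $\out{\rho|Z^\leftshort}$, $\out{\rho|Z^\rightshort}$), with the complexity analysis matching exactly. The paper is somewhat more explicit about one implementation point you leave implicit: the bounded ``left'' and ``right'' buffers must be maintained not only for the currently active factor but also for factors of the decomposition whose position interval has not yet been entered (or has already been exited), since a diagonal or block may extend over positions covered by earlier or later factors; this does not change the space bound, but it is worth spelling out that the stored words remain of length at most $\bound$ across factor boundaries.
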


\begin{proof}
Given an input word $u$, the transducer $\cT'$ will guess (and check) 
a successful run $\rho$ of $\cT$ on $u$, together with a $\bound$-decomposition 
$\prod_i \rho[\ell_i,\ell_{i+1}]$.
The latter decomposition will be used by $\cT'$ to simulate the output of 
$\rho$ in left-to-right manner, thus proving that $\cT' \subseteq \cT$.
Moreover, $u\in D$ implies the existence of a successful run that can be 
decomposed, thus proving that $\dom(\cT') \supseteq D$.
We now provide the details of the construction of $\cT'$.

Guessing the run $\rho$ is standard (see, for instance, \cite{she59,HU79}): 
it amounts to guess the crossing sequences $\rho|x$ for 
each position $x$ of the input. Recall that this is a bounded
amount of information for each position $x$, since the run is normalized.
As concerns the decomposition of $\rho$, it can be encoded
by the endpoints $\ell_i$ of its factors, that is, by annotating 
the position of each $\ell_i$ as the level of $\ell_i$.
In a similar way $\cT'$ guesses the information of whether
each factor $\rho[\ell_i,\ell_{i+1}]$ is a $\bound$-diagonal or a $\bound$-block.

Thanks to the definition of decomposition 
(see Definition~\ref{def:decomposition-twoway} and Figure \ref{fig:decomposition-twoway}), 
every two distinct factors span across non-overlapping intervals of positions. 
This means that each position $x$ is covered by exactly one factor of 
the decomposition. We call this factor the \emph{active factor at position $x$}.
The mode of computation of the transducer will depend on 
the type of active factor: if the active factor is a diagonal
(resp.~a block), then we say that $\cT'$ is in \emph{diagonal mode} 
(resp.~\emph{block mode}). 
Below we describe the behaviour for these two modes of computation.

\smallskip 
\par\noindent\emph{Diagonal mode.}~
We recall the key condition satisfied by the diagonal 
$\rho[\ell,\ell']$ that is active at position $x$ 
(cf.~Definition~\ref{def:factors-twoway} and Figure~\ref{fig:diagonal-twoway}):
there is a location $\ell_x=(x,y_x)$ between $\ell$ and $\ell'$ such that the words
$\out{\rho|Z_{\ell_x}^\upperleft}$ and $\out{\rho|Z_{\ell_x}^\lowerright}$ 
have length at most $\bound$, where 
$Z_{\ell_x}^\upperleft = [\ell_x,\ell'] \:\cap\: \big([0,x]\times\bbN\big)$
and $Z_{\ell_x}^\lowerright = [\ell,\ell_x] \:\cap\: \big([x,\omega]\times\bbN\big)$.

Besides the run $\rho$ and the decomposition, the transducer $\cT'$ will
also guess the locations $\ell_x=(x,y_x)$, that is, will annotate each $x$
with the corresponding $y_x$.
Without loss of generality, we can assume that the function that 
associates each position $x$ with the guessed location $\ell_x=(x,y_x)$ 
is monotone, namely, $x\le x'$ implies $\ell_x\leqtime\ell_{x'}$.
While the transducer $\cT'$ is in diagonal mode, the goal is to preserve 
the following invariant: 

\begin{quote}
\em
After reaching a position $x$ covered by the active diagonal, 
$\cT'$ must have produced the output of $\rho$ up to location $\ell_x$. 
\end{quote}

\noindent
To preserve the above invariant when moving from $x$ to the next 
position $x+1$, the transducer should output the word 
$\out{\rho[\ell_x,\ell_{x+1}]}$. This word consists of
the following parts:
\begin{enumerate}
  \item The words produced by the single transitions of $\rho[\ell_x,\ell_{x+1}]$
        with endpoints in $\{x,x+1\}\times\bbN$. 
        Note that there are at most $\hmax$ such words, 
        each of them has length at most $\cmax$, and they can all be determined 
        using the crossing sequences at $x$ and $x+1$ and the information
        about the levels of $\ell_x$ and $\ell_{x+1}$.
        We can thus assume that this information is readily available
        to the transducer.
  \item The words produced by the factors of $\rho[\ell_x,\ell_{x+1}]$ 
        that are intercepted by the interval $[0,x]$. 
        Thanks to the definition of diagonal, we know that
        the total length of these words is at most $\bound$.
        These words cannot be determined from the information
        on $\rho|x$, $\rho|x+1$, $\ell_x$, and $\ell_{x+1}$
        alone, so they need to be constructed while scanning the input.
        For this, some additional information needs to be stored. 
        
        More precisely, at each position $x$ of the input, 
        the transducer stores all the outputs produced by the factors of 
        $\rho$ that are intercepted by $[0,x]$ and that occur {\sl after} 
        a location of the form $\ell_{x'}$, for any $x'\ge x$ that is 
        covered by a diagonal.
        This clearly includes the previous words when $x'=x$, but also 
        other words that might be used later for processing other diagonals.
        Moreover, by exploiting the properties of diagonals,
        one can prove that those words have length at most $\bound$, 
        so they can be stored with triply exponentially many states.
        Using classical techniques, the stored information
        can be maintained while scanning the input $u$ using the
        guessed crossing sequences of $\rho$.
  \item The words produced by the factors of $\rho[\ell_x,\ell_{x+1}]$ 
        that are intercepted by the interval $[x+1,\omega]$. 
        These words must be guessed, since they depend on a portion
        of the input that has not been processed yet. 
        Accordingly, the guesses need to be stored into memory,
        in such a way that they can be checked later. For this, the transducer 
        stores, for each position $x$, the guessed words that correspond 
        to the outputs produced by the factors of $\rho$ intercepted by 
        $[x,\omega]$ and occurring {\sl before} a location of the form 
        $\ell_{x'}$, for any $x'\le x$ that is covered by a diagonal.
\end{enumerate}

\smallskip
\par\noindent\emph{Block mode.}~
Suppose that the active factor $\rho[\ell,\ell']$ is a $\bound$-block.
Let $I=[x,x']$ be the set of positions covered by this factor.
Moreover, for each position $z\in I$, let 
$Z^\leftshort_z = [\ell,\ell'] \:\cap\: \big([0,z]\times \bbN\big)$
and $Z^\rightshort_z = [\ell,\ell'] \:\cap\: \big([z,\omega]\times \bbN\big)$.
We recall the key property of a block
(cf.~Definition~\ref{def:factors-twoway} and Figure~\ref{fig:block-twoway}): 
the word $\out{\rho[\ell,\ell']}$ is almost periodic with bound $\bound$, 
and the words $\out{\rho|Z^\leftshort_x}$ and $\out{\rho|Z^\rightshort_{x'}}$ 
have length at most $\bound$.

For the sake of brevity, suppose that $\out{\rho[\ell,\ell']} = w_1\,w_2\,w_3$,
where $w_2$ is periodic with period $\bound$ and $w_1,w_3$
\reviewOne[inline]{$w_1,w_3$ I assume.
  \olivier[inline]{fixed (was $w_1,w_2$)}%
}%
have length at most $\bound$.
Similarly, let $w_0 = \out{\rho|Z^\leftshort_x}$ and $w_4 = \out{\rho|Z^\rightshort_{x'}}$.
The invariant preserved by $\cT'$ in block mode is the following: 

\begin{quote}
\em
After reaching a position $z$ covered by the active block $\rho[\ell,\ell']$, 
$\cT'$ must have produced the output of the prefix of $\rho$
up to location $\ell$, followed by a prefix of $\out{\rho[\ell,\ell']} = w_1\,w_2\,w_3$
of the same length as $\out{\rho|Z^\leftshort_z}$.
\end{quote}

\noindent
The initialization of the invariant is done when reaching the left 
endpoint $x$. At this moment, it suffices that $\cT'$ outputs 
a prefix of $w_1\,w_2\,w_3$ of the same length as 
$w_0 = \out{\rho|Z^\leftshort_x}$, thus bounded by $\bound$.
Symmetrically, when reaching the right endpoint $x'$,
$\cT'$ will have produced almost the entire word 
$\out{\rho[\ell,\ell']} \, w_1 \, w_2 \, w_3$,
but without the suffix $w_4 = \out{\rho|Z^\rightshort_{x'}}$ 
of length at most $\bound$. 
Thus, before moving to the next factor of the decomposition, the transducer will 
produce the remaining suffix, so as to complete the output 
of $\rho$ up to location $\ell_{i_x+1}$.

It remains to describe how the above invariant can be maintained
when moving from a position $z$ to the next position $z+1$ inside $I=[x,x']$.
For this, it is convenient to succinctly represent the word $w_2$ 
by its repeating pattern, say $v$, of length at most $\bound$. 
To determine the symbols that have to be output at each step,
the transducer will maintain a pointer on either $w_1\,v$ or $w_3$.
The pointer is increased in a deterministic way, and precisely
by the amount $|\out{\rho|Z^\leftshort_{z+1}}| - |\out{\rho|Z^\leftshort_z}|$.
The only exception is when the pointer lies in $w_1\,v$, but its 
increase would go over $w_1\,v$: in this case the transducer has 
the choice to either bring the pointer back to the beginning of $v$ 
(representing a periodic output inside $w_2$), or move it to $w_3$. 
Of course, this is a non-deterministic choice, but it can be 
validated when reaching the right endpoint of $I$.
Concerning the number of symbols that need to be emitted at each
step, this can be determined from the crossing sequences at
$z$ and $z+1$, and from the knowledge of the lowest and highest 
levels of locations that are at position $z$ and between 
$\ell$ and $\ell'$. We denote the latter levels by
$y^-_z$ and $y^+_z$, respectively.

Overall, this shows how to maintain the invariant of the block mode,
assuming that the levels $y^-_z,y^+_z$ are known, as well as
the words $w_0,w_1,v,w_3,w_4$ of bounded length.
Like the mapping $z \mapsto \ell_z=(z,y_z)$ used in diagonal mode, 
the mapping $z \mapsto (y^-_z,y^+_z)$ can be guessed and checked 
using the crossing sequences.
Similarly, the words $w_1,v,w_3$ can be guessed just before
entering the active block, and can be checked along the process.
As concerns the words $w_0,w_4$, these can be guessed and checked 
in a way similar to the words that we used in diagonal mode.
More precisely, for each position $z$ of the input, the
transducer stores the following additional information:
\begin{enumerate}
  \item the outputs produced by the factors of $\rho$ that are 
        intercepted by $[0,z]$ and that occur after the beginning
        $\ell''$ of some block, with $\ell''=(x'',y'')$ and $x''\ge z$;
  \item the outputs produced by the factors of $\rho$ that are 
        intercepted by $[z,\omega]$ and that occur before the ending
        $\ell'''$ of a block, where $\ell'''=(x''',y''')$ 
        and $x'''\le z$.
\end{enumerate}
By the definition of blocks, the above words have length 
at most $\bound$ and can be maintained while processing the input
and the crossing sequences.
Finally, we observe that the words, together with the information 
given by the lowest and highest levels $y^-_z,y^+_z$, for both $z=x$ and
$z=x'$, are sufficient for determining the content of $w_0$ and $w_4$.

\smallskip
We have just shown how to construct a one-way transducer $\cT' \subseteq \cT$ 
such that $\dom(\cT') \supseteq D$. 
From the above construction it is easy to see that the number of states
and transitions of $\cT'$, as well as the number of letters emitted by
each transition, are at most exponential in $\bound$. Since $\bound$ is 
doubly exponential in the size of $\cT$, this shows that $\cT'$ can
be constructed from $\cT$ in $3\exptime$.
Note that the triple exponential 
complexity  comes from the lengths of the words that need to be guessed 
and stored in the control states, and these lengths are bounded by $\bound$.
However, if $\cT$ is a sweeping transducer, then, according to the results 
proved in Section \ref{sec:characterization-sweeping}, the bound $\bound$ 
is simply exponential. In particular, in the sweeping case 
we can construct the one-way transducer $\cT'$ in $2\exptime$.
\end{proof}

\medskip
\subsection*{Generality of the construction.}
We conclude the section with a discussion on the properties of the one-way
transducer $\cT'$ constructed from $\cT$. Roughly speaking, we would like
to show that, even when $\cT$ is not one-way definable, $\cT'$ is somehow
the {\sl best one-way under-approximation of $\cT$}.
However, strictly speaking, the latter terminology is meaningless: 
if $\cT'$ is a one-way transducer strictly contained in $\cT$, then 
one can always find a better one-way transducer $\cT''$ that satisfies 
$\cT' \subsetneq \cT'' \subsetneq \cT$, for instance by extending $\cT'$ 
with a single input-output pair. Below, we formalize in an appropriate
way the notion of ``best one-way under-approximation''.

We are interested in comparing the domains of transducers, but only up to 
a certain amount. In particular, we are interested in languages that are 
preserved under pumping loops of runs of $\cT$. Formally, given a language 
$L$, we say that $L$ is \emph{$\cT$-pumpable} if $L \subseteq \dom(\cT)$ and 
for all words $u\in L$, all successful runs $\rho$ of $\cT$ on $u$, all 
loops $L$ of $\rho$, and all positive numbers $n$, the word $\pump_L^n(u)$ 
also belongs to $L$.
Clearly, the domain $\dom(\cT)$ of a transducer $\cT$ is a regular $\cT$-pumpable language. 

Another noticeable example of $\cT$-pumpable regular language is the domain 
of the one-way transducer $\cT'$, as defined in Proposition \ref{prop:construction-twoway}.
Indeed, $\dom(\cT')$ consists of words $u\in\dom(\cT)$ that induce
successful runs with $\bound$-decompositions, and the property of 
having a $\bound$-decomposition is preserved under pumping.

The following result shows that $\cT'$ is the best under-approximation 
of $\cT$ within the class of one-way transducers with $\cT$-pumpable domains:

\begin{cor}\label{cor:best-underapproximation}
Given a functional two-way transducer $\cT$, one can construct a one-way transducer $\cT'$ such that
\begin{itemize}
  \item $\cT' \subseteq \cT$ and $\dom(\cT')$ is $\cT$-pumpable,
  \item for all one-way transducers $\cT''$, if $\cT'' \subseteq \cT$ and $\dom(\cT'')$ is $\cT$-pumpable,
        then $\cT'' \subseteq \cT'$.
\end{itemize}
\end{cor}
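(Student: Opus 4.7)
The plan is to take $\cT'$ to be the one-way transducer produced by Proposition~\ref{prop:construction-twoway}. The first bullet is then essentially immediate: the inclusion $\cT' \subseteq \cT$ is provided by that proposition, and $\dom(\cT')$ is $\cT$-pumpable because the existence of a $\bound$-decomposition of a run is a structural property preserved by pumping. More concretely, for any $u \in \dom(\cT')$, a witness successful run of $\cT$ on $u$ admitting a $\bound$-decomposition can be transformed, upon pumping any loop of any run of $\cT$ on $u$, into a successful run on the pumped input that still admits a $\bound$-decomposition, since diagonals and blocks (Definitions~\ref{def:factors-twoway} and \ref{def:decomposition-twoway}) are stable under the replication of intercepted factors controlled by loops of matching effect.

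For the maximality property, I would fix a one-way transducer $\cT''$ with $\cT'' \subseteq \cT$ and $\dom(\cT'')$ $\cT$-pumpable. Since both $\cT'$ and $\cT''$ are contained in $\cT$, on $\dom(\cT'')$ they compute the same function realized by $\cT$, and thus it suffices to prove $\dom(\cT'') \subseteq \dom(\cT')$. To that end, I would pick an arbitrary $u \in \dom(\cT'')$ and an arbitrary successful run $\rho$ of $\cT$ on $u$, and prove that $\rho$ admits a $\bound$-decomposition, which yields $u \in \dom(\cT')$.

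The crux is the observation that the proof of Proposition~\ref{prop:periodicity-twoway} (the implication \PR1 $\Rightarrow$ \PR2) is \emph{local} in the input $u$: its comparison between the outputs of $\cT$ and an equivalent one-way transducer is carried out only on inputs of the form $\pump^{(m_0,m_1,m_2)}(u)$ obtained by pumping three idempotent loops of $\rho$. Iterated use of $\cT$-pumpability of $\dom(\cT'')$ keeps all these pumped inputs inside $\dom(\cT'')$, where $\cT''$ coincides with $\cT$; this is precisely what the periodicity argument needs. Hence $\rho$ satisfies \PR2, and the implication \PR2 $\Rightarrow$ \PR3 from Theorem~\ref{thm:main2} (Proposition~\ref{prop:decomposition-sweeping} in the sweeping case, and its two-way generalization proved via Lemmas~\ref{lem:bounding-box-twoway} and \ref{lem:diagonal-twoway} in Section~\ref{sec:characterization-twoway}) yields the desired $\bound$-decomposition of $\rho$.

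The main obstacle I expect is the bookkeeping needed to make the locality of Proposition~\ref{prop:periodicity-twoway} explicit: the proof was written assuming a globally equivalent one-way transducer, but what is really used is only its agreement with $\cT$ on inputs obtained by pumping loops of $\rho$. A careful rereading suffices, and no new combinatorial idea is needed. A secondary, and more routine, obstacle is the formal verification of $\cT$-pumpability of $\dom(\cT')$ sketched above, which requires showing that a decomposable run on $u$ can be recombined with a pumped run on $\pump_L^n(u)$ so as to produce a decomposable run on $\pump_L^n(u)$.
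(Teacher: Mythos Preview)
Your proposal is correct and follows essentially the same approach as the paper. The paper also takes $\cT'$ from Proposition~\ref{prop:construction-twoway}, asserts $\cT$-pumpability of $\dom(\cT')$ via preservation of $\bound$-decompositions under pumping, and for maximality observes that the proof of Proposition~\ref{prop:periodicity-twoway} only needs the one-way witness on pumped inputs; the paper phrases this last step by restricting $\cT$ to $L=\dom(\cT'')$ and calling the restriction one-way definable, whereas you phrase it as ``locality'' of the periodicity proof, but the content is identical.
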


\begin{proof}
The transducer $\cT'$ is precisely the one defined in Proposition \ref{prop:construction-twoway}.
As already explained, its domain $\dom(\cT')$ is a $\cT$-pumpable language. In particular, $\cT'$
satisfies the conditions in the first item.

For the conditions in the second item, consider a one-way transducer $\cT'' \subseteq \cT$ 
with a $\cT$-pumpable domain $L=\dom(\cT'')$. Let $\tilde\cT$ be the transducer obtained from
$\cT$ by restricting its domain to $L$. Clearly, $\tilde\cT$ is one-way definable, and one 
could apply Proposition \ref{prop:periodicity-twoway} to $\tilde\cT$, using $\cT''$ as a 
witness of one-way definability. In particular, when it comes to comparing the outputs of the
pumped runs of $\tilde\cT$ and $\cT''$, one could exploit the fact that the domain $L$ of $\cT''$,
and hence the domain of $\tilde\cT$ as well, is $\cT$-pumpable. This permits to derive
periodicities of inversions with the same bound $\bound$ as before, but only restricted 
to the successful runs of $\cT$ on the input words that belong to $L$. 
As a consequence, one can define $\bound$-decompositions of successful runs of $\cT$ 
on words in $L$, thus showing that $L \subseteq \dom(\cT')$. This proves that $\cT'' \subseteq \cT'$.
\end{proof}


\section{Complexity of the one-way definability problem}\label{sec:complexity}

In this section we analyze the complexity of the problem of deciding whether
a transducer $\cT$ is one-way definable. We begin with the case of a functional
two-way transducer. In this case, thanks to the results presented in 
Section~\ref{sec:characterization-twoway} page \pageref{testing-containment},
we know that $\cT$ is one-way 
definable if and only if $\dom(\cT) \subseteq D$, where $D$ is the language of words
$u\in\dom(\cT)$ such that all successful runs of $\cT$ on $u$ admit a $\bound$-decomposition.
In particular, the one-way definability problem reduces to an emptiness problem
for the intersection of two languages:
\[
  \cT \text{ one-way definable}
  \qquad\text{if and only if}\qquad
  \dom(\cT) \cap D^\complement = \emptyset.
\]
The following lemma exploits the characterization of Theorem~\ref{thm:main2} 
to show that the language $D^\complement$ can be recognized by a non-deterministic
finite automaton $\cA$ of triply exponential size w.r.t.~$\cT$. In fact,
this lemma shows that the automaton recognizing $D^\complement$ can be constructed
using doubly exponential {\sl workspace}. As before, we gain an exponent when 
restricting to sweeping transducers.

\reviewOne[inline]{Lemma 9.1: D is a poor choice of notation for the automaton given that there already is a language D. Might A (same font as current D) be an adequate choice?
\gabriele[inline]{Done. BTW. if anyone has changed the macros \cA, \cT, etc. mind that
                  now they are rendered as simple capital letters instead as with mathcal.
                  I like this, and it is more uniform than using caligraphic A for automata
                  and normal T for transducers... But we should check we are consistent!}%
}%

\begin{lem}\label{lem:D-complement}
Given a functional two-way transducer $\cT$, an  NFA $\cA$ recognizing
$D^\complement$ can be constructed in $2\expspace$.
Moreover, when $\cT$ is sweeping, the 
NFA $\cA$ can be constructed in $\expspace$.
\end{lem}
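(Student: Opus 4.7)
My plan is to reformulate $D^\complement$ as an existentially verifiable property and then to assemble $\cA$ as a product of nondeterministic components scanning the input left to right. By Theorem~\ref{thm:main2} together with the individual-run versions of Proposition~\ref{prop:decomposition-sweeping} and its two-way counterpart, $u \in D^\complement \cap \dom(\cT)$ iff some successful run $\rho$ of $\cT$ on $u$ contains an inversion $(L_1,\ell_1,L_2,\ell_2)$ whose output word $w=\out{\tr{\ell_1}}\,\out{\rho[\ell_1,\ell_2]}\,\out{\tr{\ell_2}}$ admits no period $p\le\bound$ dividing both $|\out{\tr{\ell_1}}|$ and $|\out{\tr{\ell_2}}|$. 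The NFA $\cA$ guesses such a witness and certifies the non-periodicity on the fly.

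Concretely, I would build $\cA$ as the intersection of three nondeterministic components. Component~(i) guesses a successful run $\rho$ of $\cT$ by encoding the crossing sequences of $\rho$ at each position, a standard construction with $|Q|^{O(\hmax)}$ states, which is singly exponential in $|\cT|$. Component~(ii) guesses the two loops $L_1, L_2$, their anchor points $\ell_1, \ell_2$, and the trace outputs $\out{\tr{\ell_i}}$; loopness is certified by equality of crossing sequences at the endpoints, while in the two-way case idempotency is enforced by tracking the accumulated effect in the semigroup $(\cE,\odot)$ of size $\emax=(2|Q|)^{2\hmax}$, costing an extra exponential factor. Component~(iii) certifies the failure of the period condition: by Fine--Wilf (Theorem~\ref{thm:fine-wilf}), the set of periods of $w$ that divide both trace lengths is closed under gcd, and hence (when non-empty) admits a minimum element $p^*$ dividing every other good period; consequently the condition holds iff $p^*\le\bound$ and $p^*\mid d$ with $d=\gcd(|\out{\tr{\ell_1}}|,|\out{\tr{\ell_2}}|)$. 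Using Lemmas~\ref{lem:output-minimal-sweeping} and~\ref{lem:output-minimal-twoway} together with a loop-refinement step \`a la Proposition~\ref{prop:periodicity-twoway}, one restricts attention without loss of generality to output-minimal pairs, for which the trace outputs have length at most $\bound$ and hence $d\le\bound$; the failure then reduces to the existence of a single mismatch $w[i]\ne w[i+d]$, which $\cA$ guesses existentially.

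The main obstacle is to coordinate, while scanning $u$, the on-the-fly comparison of two letters of $w$ that are produced by transitions of the two-way run $\rho$ occurring at unrelated moments of the scan. This is the same coordination problem addressed in the construction of the one-way transducer $\cT'$ in Proposition~\ref{prop:construction-twoway}, and the same solution adapts: at each position of $u$, $\cA$ maintains the output fragments of total length at most $\bound$ contributed by the transitions crossing the current cut of the run, together with two pointers into $w$ marking the probed positions. The state space of $\cA$ is dominated by these fragments and has size $2^{O(\bound)}$. Since $\bound$ is singly exponential in $|\cT|$ in the sweeping case, $\cA$ has doubly exponential states and is constructible in $\expspace$; and since $\bound$ is doubly exponential in the general two-way case, $\cA$ has triply exponential states and is constructible in $2\expspace$.
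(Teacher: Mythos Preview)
Your overall architecture---guess a run via crossing sequences, guess an inversion, and certify a violation of the periodicity condition---matches the paper's. The divergence is in component~(iii). The paper does not reduce to a single mismatch; instead it guesses, for \emph{every} $p\in\{1,\dots,\bound\}$ in parallel, a position $f(p)$ witnessing $w_{\rho,\cI}[f(p)]\ne w_{\rho,\cI}[f(p)+p]$. Each such witness is stored as two output symbols together with a counter bounded by $\bound$ (for the offset~$p$), so the total state is still polynomial in $\bound$ and the claimed space bounds follow.

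Your single-mismatch shortcut rests on forcing $d=\gcd(|\out{\tr{\ell_1}}|,|\out{\tr{\ell_2}}|)\le\bound$, which you obtain by ``restricting without loss of generality to output-minimal pairs.'' In the sweeping case this is harmless: Definition~\ref{def:inversion-sweeping} already builds output-minimality into the notion of inversion, and Lemma~\ref{lem:output-minimal-sweeping} bounds the trace lengths. In the two-way case, however, the paper explicitly warns that one \emph{cannot} restrict to inversions whose components are output-minimal (see the discussion right after Lemma~\ref{lem:output-minimal-twoway} and Figure~\ref{fig:inversion-twoway}): the output-minimal refinement $(L_0,C_0)$ of $(L_1,C_1)$ may have its anchor located \emph{after} $\ell_2$ in run order, and then it does not form an inversion with $(L_2,\ell_2)$ at all. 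Proposition~\ref{prop:periodicity-twoway} circumvents this by first \emph{pumping} $\rho$ and locating an output-minimal inversion in the pumped run---but the NFA $\cA$ reads the original input $u$ and cannot pump it. Consequently, in the general two-way setting your trace lengths, and hence $d$, are unbounded, and the single-mismatch reduction breaks down. Checking all $p\le\bound$ in parallel, as the paper does, sidesteps the issue entirely because it never needs the trace lengths to be small.
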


\begin{proof}
Consider an input word $u$. By Theorem~\ref{thm:main2} we know that 
$u\in D^\complement$ iff there exist a successful run $\rho$ of $\cT$ 
on $u$ and an inversion $\cI=(L_1,\ell_1,L_2,\ell_2)$ of $\rho$ such that
no positive number $p \le \bound$ is a period of the word
\[
  w_{\rho,\cI} ~=~ 
  \outb{\tr{\ell_1}} ~ \outb{\rho[\ell_1,\ell_2]} ~ \outb{\tr{\ell_2}}.
\]
The latter condition on $w_{\rho,\cI}$ can be rephrased as follows: 
there is a function $f:\{1,\dots,\bound\} \rightarrow \{1,\dots,|w_{\rho,\cI}|\}$
such that $w_{\rho,\cI}\big(f(p)\big) \neq w_{\rho,\cI}\big(f(p)+p\big)$ 
for all positive numbers $p\le\bound$.
In particular, each of the images of the latter function $f$, that is,
$f(1),\dots,f(\bound)$, can be encoded by a suitable marking of the 
crossing sequences of $\rho$. This shows that the run $\rho$, the 
inversion $\cI$, and the function $f$ described above can all be 
guessed within space $\cO(\bound)$: $\r$ is guessed on-the-fly, the
inversion is guessed by marking the anchor points, and for $f$ we only
store two symbols and a counter $\le\bound$, for each $1 \le i \le \bound$.
That is, any state of $\cA$ requires doubly
exponential space, resp.~simply exponential space, depending on whether $\cT$ is arbitrary 
two-way or sweeping. 
\end{proof}

As a consequence of the previous lemma, the emptiness problem for the language 
$\dom(\cT) \cap D^\complement$, and thus the one-way definability problem for $\cT$,
can be decided in $2\expspace$ or $\expspace$, depending on whether $\cT$ is
two-way or sweeping:

\begin{prop}\label{prop:complexity}
The problem of deciding whether a functional two-way transducer 
$\cT$ is one-way definable is in $2\expspace$. When $\cT$ is 
sweeping, the problem is in $\expspace$.
\end{prop}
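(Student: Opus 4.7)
The plan is to reduce one-way definability of $\cT$ to an emptiness check on the intersection of two NFAs, and then to perform this check on-the-fly so that the space used stays within what the state descriptions of the two automata already demand. By Theorem~\ref{thm:main2} together with the definition of $D$ (all successful runs admit a $\bound$-decomposition), $\cT$ is one-way definable if and only if $\dom(\cT) \subseteq D$, equivalently, if and only if $\dom(\cT) \cap D^\complement = \emptyset$. So the algorithm will test the latter emptiness condition.

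For the two automata involved, I will combine Lemma~\ref{lem:D-complement} with a standard construction for $\dom(\cT)$. First, the NFA $\cA$ for $D^\complement$ from Lemma~\ref{lem:D-complement}: each of its states is described using doubly exponential space in general (resp.~simply exponential space in the sweeping case), and its transitions can be produced on-the-fly within the same space, since they only encode a single step of the guessed run $\rho$, of the inversion, and of the marker function $f$ witnessing non-periodicity. Second, an NFA $\cB$ for $\dom(\cT)$: using the classical Shepherdson-style translation of a two-way automaton to a one-way automaton through crossing sequences (as recalled throughout Section~\ref{sec:preliminaries}), we obtain $\cB$ whose states are crossing sequences of $\cT$, which are objects of polynomial size and thus a set of exponentially many states in $|\cT|$. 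Its transition relation is easily computed in polynomial space.

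The decision procedure then guesses, position by position on the input, a state of $\cB$ and a state of $\cA$ together with the letter read, checking consistency with the transition relations of both automata, and accepts if a joint accepting state is reached. This is the standard NLOGSPACE algorithm for emptiness of the product, applied on-the-fly so that only one pair of current states needs to be stored. In the general two-way case the dominant contribution is the state of $\cA$, which occupies $2\expspace$ workspace; the state of $\cB$ occupies only exponential space, which is negligible; so the whole procedure runs in $2\expspace$. In the sweeping case the state of $\cA$ only takes exponential space by Lemma~\ref{lem:D-complement}, so the procedure runs in $\expspace$. Finally, we invoke Savitch's theorem or simply observe that the procedures are non-deterministic, and use NSPACE $=$ coNSPACE closure under Immerman--Szelepcs\'enyi to remain inside the same deterministic space class.

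The main obstacle is really a bookkeeping one: we must argue that the transitions of $\cA$, which are defined in terms of guessing a normalized run $\rho$, an inversion $(L_1,\ell_1,L_2,\ell_2)$, and the function $f$ witnessing the failure of every candidate period $p \le \bound$, can actually be simulated step by step using only the space required to store a single state of $\cA$. This follows from the description in Lemma~\ref{lem:D-complement}, since crossing sequences are guessed and updated locally, the anchor points of the inversion are just flagged once when encountered and then remembered as two bits (``already passed $\ell_1$'', ``already passed $\ell_2$''), and the function $f$ is represented by $\bound$ counters of value at most $\bound$ together with $\bound$ stored symbols; all of this fits in the same doubly (resp.~simply) exponential workspace, so no blow-up occurs in the on-the-fly simulation and the claimed bounds follow.
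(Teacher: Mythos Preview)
Your proof is correct and follows exactly the approach the paper takes: reduce one-way definability to emptiness of $\dom(\cT)\cap D^\complement$ via Theorem~\ref{thm:main2}, use Lemma~\ref{lem:D-complement} to get an NFA for $D^\complement$ whose states are describable in doubly (resp.~simply) exponential space, and perform an on-the-fly product emptiness test. The paper states the proposition as an immediate consequence of Lemma~\ref{lem:D-complement} without spelling out the on-the-fly product argument or the Savitch step, so your write-up simply makes explicit the standard details that the paper leaves implicit.
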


\reviewOne[inline]{The transition from 9.2 to 9.3 is a tad abrupt. The kind of argument you make before Corollary 8.7 to explain that this makes your result more robust or tight than I expected in first approach could be useful here.
\gabriele[inline]{Done, see below}%
}%

\medskip
The last result of the section shows that functional two-way transducers 
are close to be the largest class for which a characterization 
of one-way definability is feasible: as soon as we consider 
arbitrary transducers (including non-functional ones),
the problem becomes undecidable.

\begin{prop}\label{prop:undecidability}
The one-way definability problem for \emph{non-functional} 
sweeping transducers is undecidable.
\end{prop}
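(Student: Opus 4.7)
The plan is to reduce from the Post Correspondence Problem (PCP), which is undecidable. Given a PCP instance $P = ((u_1, v_1), \ldots, (u_n, v_n))$, I will construct a non-functional sweeping transducer $\cT_P$ that is one-way definable if and only if $P$ has no solution.

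The transducer $\cT_P$ will operate on a suitably chosen encoding of candidate PCP solutions. The key design idea is to equip $\cT_P$ with two families of non-deterministic runs: (i) an always-enabled family of one-way runs that simply copy the input to the output, and (ii) a family of sweeping runs that succeed only on inputs encoding genuine PCP solutions, and on such inputs produce the reversed input as output. The sweeping power will be used in (ii) to verify the global consistency of the encoding, namely that the sequence of $u$-pair indices matches the sequence of $v$-pair indices and that the implied $u$- and $v$-concatenations agree. A careful choice of the encoding — including redundantly recording the pair structure at every position, together with an asymmetric end marker — will guarantee that the set of valid encodings forms an infinite family of non-palindromes whenever $P$ has a solution.

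If $P$ has no solution, the sweeping family (ii) cannot successfully verify any input, so the transduction of $\cT_P$ reduces to the identity on its regular domain of well-formed inputs, which is trivially one-way definable. Conversely, if $P$ has a solution, iterating the solution yields an infinite family of valid encodings $w$ of unbounded length, and on each such $w$ the transducer produces both $w$ and $w^R \neq w$. The resulting transduction then contains the reversal relation on an infinite, unbounded-length regular subset of its domain, and such a relation cannot be realized by any non-deterministic one-way transducer: a standard pumping argument on the finite state space shows that the initial portion of any output depends only on a short prefix of the input and so cannot simultaneously be a correct prefix of $w^R$ for arbitrarily long $w$ that reach the same state after reading a common prefix.

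The main technical obstacle will be the design of the encoding and the accompanying sweeping verification procedure. The crux is that the global equality $u_{i_1}\cdots u_{i_k} = v_{i_1}\cdots v_{i_k}$ is not a regular property and therefore cannot be checked directly by a sweeping machine; my plan is to absorb that global condition into the encoding itself, annotating each position with the current pair indices and intra-piece positions of both the $u$- and the $v$-partition of a common candidate word, so that consistency reduces to a conjunction of locally checkable and non-deterministically guessable alignment conditions that a sweeping transducer with finite state can verify across a bounded number of passes. This will close the reduction and yield undecidability.
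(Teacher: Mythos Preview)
Your reduction has a fundamental gap that cannot be repaired by any choice of encoding.

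You require family (ii) of runs to ``succeed only on inputs encoding genuine PCP solutions.'' But the set of inputs on which any family of runs of a sweeping (or two-way) finite-state transducer succeeds is its domain, and that is always a \emph{regular} language. So your construction would yield, effectively from the PCP instance $P$, a regular language $L_P$ that is non-empty if and only if $P$ has a solution. Emptiness of regular languages is decidable, so this would decide PCP --- a contradiction. No amount of annotation helps: the synchronisation condition ``the $u$-index sequence equals the $v$-index sequence'' is precisely what makes PCP undecidable, and any encoding that turns it into a locally checkable (hence regular) property would already decide PCP. Your closing paragraph essentially promises to do the impossible.

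The paper avoids this trap by reversing the polarity: instead of trying to recognise \emph{solutions}, it builds a sweeping transducer $\cS_\tau$ whose input--output \emph{relation} $B_\tau$ consists of the \emph{bad} (non-solution) encodings. An encoding $(w\cdot u,\, w\cdot \#^m)$ is bad iff $m\neq|u|$ or $u\neq f(w)$ or $u\neq g(w)$; each disjunct is an \emph{existential} error condition that a non-deterministic sweeping transducer can guess and certify, using the freedom in choosing the output length $m$ to witness the error. If $\tau$ has no solution, every encoding is bad, so $B_\tau$ equals the one-way-definable universal encoding relation $E_\tau$. If $\tau$ has a solution $w$, then for the family $\alpha_{n,m}=(w^n u^m,\, w^n\#^{m|u|})$ one has $\alpha_{n,m}\in B_\tau$ iff $n\neq m$, and a pumping argument on a hypothetical one-way realiser forces some $\alpha_{m,m}$ into $B_\tau$, a contradiction. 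The key point is that the non-regular content lives in the \emph{relation}, not in the domain.
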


\begin{proof}
The proof uses some ideas and variants of constructions provided in \cite{Ibarra78},
concerning the proof of undecidability of the equivalence problem for one-way
non-functional transducers.

We show a reduction from the Post Correspondence Problem (PCP).
A \emph{PCP instance} is described by two finite alphabets $\Sigma$ and $\Delta$
and two morphisms $f,g:\Sigma^*\then\Delta^*$. A \emph{solution} of such an instance
is any non-empty word $w\in\Sigma^+$ such that $f(w)=g(w)$. We recall that the problem
of testing whether a PCP instance has a solution is undecidable. 

Below, we fix a tuple $\tau=(\Sigma,\Delta,f,g)$ describing a PCP instance and we 
show how to reduce the problem of testing the {\sl non-existence of solutions} of 
$\tau$ to the problem of deciding {\sl one-way definability} of a relation computed 
by a sweeping transducer.
Roughly, the idea is to construct a relation $B_\tau$ between words over a suitable 
alphabet $\Gamma$ that encodes all the {\sl non-solutions} to the PCP instance 
$\tau$ (this is simpler than encoding solutions because the presence of errors 
can be easily checked). The goal is to have a relation $B_\tau$ that
(i) can be computed by a sweeping transducer and (ii) coincides with 
a trivial one-way definable relation when $\tau$ has no solution.

We begin by describing the encodings for the solutions of the PCP instance.
We assume that the two alphabets of the PCP instance, $\Sigma$ and $\Delta$, 
are disjoint and we use a fresh symbol $\#\nin \Sigma\cup\Delta$.
We define the new alphabet $\Gamma = \Sigma\cup\Delta\cup\{\#\}$ that will
serve both as input alphabet and as output alphabet for the transduction.
We call \emph{encoding} any pair of words over $\Gamma$ of the form
$(w\cdot u,w\cdot v)$, where $w\in\Sigma^+$, $u\in\Delta^*$, and $v\in\{\#\}^*$.
We will write the encodings as vectors to improve readability, e.g., as
\[
  \lbinom{w\cdot u}{w\cdot v} \ .
\]
We denote by $E_\tau$ the set of all encodings and we observe that $E_\tau$ 
is computable by a one-way transducer (note that this transducer needs
$\varepsilon$-transitions).
We then restrict our attention to the pairs in $E_\tau$ that are encodings
of valid solutions of the PCP instance. 
Formally, we call \emph{good encodings} the pairs in $E_\tau$ of the form 
\[
  \lbinom{w\cdot u}{w\cdot\#^{|u|}}
  \qquad\qquad\text{where } u = f(w) = g(w) \ .
\]
All the other pairs in $E_\tau$ are called \emph{bad encodings}. 
Of course, the relation that contains the good encodings is not computable 
by a transducer. On the other hand, we can show that the complement 
of this relation w.r.t.~$E_\tau$ is computable by a sweeping transducer. 
Let $B_\tau$ be the set of all bad encodings. 
Consider  $(w\cdot u,w\cdot \#^m)\in E_\tau$, with $w\in\Sigma^+$,
$u\in\Delta^*$, and $m\in\bbN$, and we observe that this pair belongs to 
$B_\tau$ if and only if one of the following conditions is satisfied:
\begin{enumerate}
  \item $m<|u|$,
        \label{enc1}
  \item $m>|u|$,
        \label{enc2}
  \item $u\neq f(w)$,
        \label{enc3}
  \item $u\neq g(w)$.
        \label{enc4}
\end{enumerate}
We explain how to construct a sweeping transducer $\cS_\tau$ that computes $B_\tau$.
Essentially, $\cS_\tau$ guesses which of the above conditions holds and processes 
the input accordingly. More precisely, if $\cS_\tau$ guesses that the first condition 
holds, then it performs a single left-to-right pass, first copying the prefix $w$ 
to the output and then producing a block of occurrences of the symbol $\#$ that is
shorter than the suffix $u$. This task can be easily performed while reading 
$u$: it suffices to emit at most one occurrence of $\#$ for each position in $u$, 
and at the same time guarantee that, for at least one such position, no occurrence 
of $\#$ is emitted. The second condition can be dealt with by a similar strategy:
first copy the prefix $w$, then output a block of $\#$ that is longer than
the suffix $u$. To deal with the third condition, the transducer $\cS_\tau$ 
has to perform two left-to-right passes, interleaved by a backward pass that 
brings the head back to the initial position.
During the first left-to-right pass, $\cS_\tau$ copies the prefix $w$ to the output. 
During the second left-to-right pass, it reads again the prefix $w$, but this time
he guesses a factorization of it of the form $w_1\:a\:w_2$. 
On reading $w_1$, $\cS_\tau$ will output $\#^{|f(w_1)|}$. 
After reading $w_1$, $\cS_\tau$ will store the symbol $a$ and move to the position
where the suffix $u$ begins. From there, it will guess a factorization of $u$
of the form $u_1\:u_2$, check that $u_2$ does not begin with $f(a)$, and
emit one occurrence of $\#$ for each position in $u_2$.
The number of occurrences of $\#$ produced in the output is thus
$m=|f(w_1)| + |u_2|$, and the fact that $u_2$ does not begin with $f(a)$
ensures that the factorizations of $w$ and $u$ do not match, i.e.
\[ m\neq|f(w)| \]
Note that the described behaviour does not immediately guarantee that $u\neq f(w)$.
Indeed, it may still happen that $u=f(w)$, but as a consequence $m\neq |u|$.
This case is already covered by the first and second condition, 
so the computation is still correct in the sense that it produces only 
bad encodings.
On the other hand, if $m$ happens to be the same as $|u|$,
then $|u| = m \neq |f(w)|$ and thus $u\neq f(w)$.
A similar behaviour can be used to deal with the fourth condition.

\smallskip
We have just shown that there is a sweeping non-functional transducer $\cS_\tau$
that computes the relation $B_\tau$ containing all the bad encodings.
Note that, if the PCP instance $\tau$ admits no solution, then all encodings 
are bad, i.e., $B_\tau=E_\tau$, and hence $B_\tau$ is one-way definable. 
It remains to show that when $\tau$ has a solution, $B_\tau$ is not one-way 
definable. Suppose that $\tau$ has solution $w\in\Sigma^+$ and let 
$\big(w\cdot u,\:w\cdot \#^{|u|}\big)$ be the corresponding good encoding, 
where $u=f(w)=g(w)$. 
Note that every exact repetition of $w$ is also a solution, and hence 
the pairs $\big(w^n\cdot u^n,\:w^n\cdot \#^{n\cdot|u|}\big)$ are also 
good encodings, for all $n\ge 1$.
 
Suppose, by way of contradiction, that there is a one-way transducer $\cT$ 
that computes the relation $B_\tau$. 
For every $n,m\in\bbN$, we define the encoding 
\[
  \alpha_{n,m} ~=~
  \lbinom{w^n\cdot u^m}{w^n\cdot \#^{m\cdot|u|}}
\]
and we observe that $\alpha_{n,m} \in B_\tau$ if and only if $n\neq m$
(recall that $w\neq\emptystr$ is the solution of the PCP instance $\tau$ and $u=f(w)=g(w)$).
Below, we consider bad encodings like the above ones,
where the parameter $n$ is supposed to be large enough.
Formally, we define the set $I$ of all pairs of indices $(n,m)\in\bbN^2$ 
such that (i) $n\neq m$ (this guarantees that $\alpha_{n,m}\in B_\tau$)
and (ii) $n$ is larger than the number $|Q|$ of states of $\cT$.

We consider some pair $(n,m)\in I$ and we choose a successful 
run $\rho_{n,m}$ of $\cT$ that witnesses the membership of 
$\alpha_{n,m}$ in $B_\tau$, namely, that reads the input 
$w^n\cdot u^m$ and produces the output $w^n\cdot \#^{m\cdot|u|}$.
We can split the run $\rho_{n,m}$ into a prefix $\olft\rho_{n,m}$ 
and a suffix $\ort\rho_{n,m}$ in such a way that $\olft\rho_{n,m}$ 
consumes the prefix $w^n$ and $\ort\rho_{n,m}$ consumes the remaining 
suffix $u^m$.
Since $n$ is larger than the number of state of $\cT$, we can find a
factor $\hat\rho_{n,m}$ of $\olft\rho_{n,m}$ that starts and ends with 
the same state and consumes a non-empty exact repetition of $w$, 
say $w^{n_1}$, for some $1\le n_1\le |Q|$.
We claim that the output produced by the factor $\hat\rho_{n,m}$ 
must coincide with the consumed part $w^{n_1}$ of the input.
Indeed, if this were not the case, then deleting the factor $\hat\rho_{n,m}$ 
from $\rho_{n,m}$ would result in a new successful run that reads
$w^{n-n_1}\cdot u^m$ and produces $w^{n-n_2}\cdot \#^{m\cdot|u|}$
as output, for some $n_2\neq n_1$. This however would contradict
the fact that, by definition of encoding, the possible outputs produced 
by $\cT$ on input $w^{n-n_1}\cdot u^m$ must agree on the prefix $w^{n-n_1}$. 
We also remark that, even if we do not annotate this explicitly, the number
$n_1$ depends on the choice of the pair $(n,m)\in I$. This number, however,
range over the fixed finite set $J = \big[1,|Q|\big]$.

We can now pump the factor $\hat\rho_{n,m}$ of the run $\rho_{n,m}$ any 
arbitrary number of times. In this way, we obtain new successful runs of 
$\cT$ that consume inputs of the form 
$w^{n+k\cdot n_1}\cdot u^m$ and produce outputs of the form
$w^{n+k\cdot n_1}\cdot \#^m$, for all $k\in\bbN$.
In particular, we know that $B_\tau$ contains all 
pairs of the form $\alpha_{n+k\cdot n_1,m}$.
Summing up, we can claim the following:

\begin{clm}
There is a function $h:I\then J$ such that, for all pairs $(n,m)\in I$,
\[
  \big\{ (n+k\cdot h(n,m),m) ~\big|~ k\in\bbN \big\} \:\subseteq\: I \ .
\]
\end{clm}

\noindent
We can now head towards a contradiction. Let $\tilde n$ 
be the maximum common multiple of the numbers $h(n,m)$, 
for all $(n,m)\in I$. Let $m=n+\tilde n$ and observe that 
$n\neq m$, whence $(n,m)\in I$. Since $\tilde n$ is a multiple
of $h(n,m)$, we derive from the above claim that the pair
$(n+\tilde n,m) = (m,m)$ also belongs to $I$.
However, this contradicts the definition of $I$, since we
observed earlier that $\alpha_{n,m}$ is a bad encoding if
and only if $n\neq m$.
We conclude that $B_\tau$ is not one-way definable 
when $\tau$ has a solution.
\end{proof}

\section{Conclusions}\label{sec:conclusions}

It was shown in \cite{fgrs13} that it is decidable whether 
a given two-way transducer can be implemented by some one-way
transducer. However, the provided algorithm has non-elementary complexity.

The main contribution of our paper is a new algorithm that solves 
the above question with elementary complexity, precisely in $2\expspace$.
The algorithm is based on a characterization of those transductions, 
given as two-way transducers, that can be realized by one-way 
transducers. The flavor of our characterization is 
different from that of \cite{fgrs13}. The approach of
\cite{fgrs13} is based on a variant of Rabin and Scott's construction 
\cite{RS59} of one-way automata, and on local modifications of
the two-way run. Our approach instead relies on the global notion
of \emph{inversion} and more involved combinatorial arguments.
The size of the one-way transducer that we obtain is triply exponential 
in the general case, and doubly exponential in the sweeping case, and 
the latter is optimal.
\felix{add last sentence. Maybe we can do more on future work ? The characterization of sweeping in term of rational relations maybe ?}%
\gabriele{I did not like the sentence in the end, so I moved it here and rephrased (and it was not future work!)}%
The approach described here was adapted to characterize functional two-way transducers 
that are equivalent to some sweeping transducer with either known or
unknown number of passes (see~\cite{bgmp16}, \cite{Bas17} for details).

\reviewOne[inline]{The conclusion gives no clue towards future works, either current, considered or at all possible.
\olivier[inline]{added the following paragraph}%
}%
Our procedure considers non-deterministic transducers,
both for the initial two-way transducer, and
for the equivalent one-way transducer, if it exists.
\gabriele{Improved and mentioned that determinism does not bring much to our problem}%
Deterministic two-way transducers are as expressive as
non-deterministic functional ones. This means that 
starting from deterministic two-way transducers would address 
the same problem in terms of transduction classes, but could 
in principle yield algorithms with better complexity.
\gabriele{Added this tiny remark}%
We also recall that Proposition \ref{prop:lower-bound} 
gives a tight lower bound for the size of a one-way transducer 
equivalent to a given deterministic sweeping transducer. This
latter result basically shows that there is no advantage in
considering one-way definability for deterministic variants 
of sweeping transducers, at least with respect to the size
of the possible equivalent one-way transducers.
 
\gabriele{slightly rephrased}%
A variant of the one-way definability problem asks whether a given
two-way transducer is equivalent to some \emph{deterministic} one-way transducer.
A decision procedure for this latter problem is obtained by combining 
our characterization with the classical  algorithm that determines whether a one-way 
transducer can be determinized \cite{cho77,bealcarton02,weberklemm95}.
In terms of complexity, it is conceivable that a better algorithm 
may exist for deciding definability by deterministic one-way transducers,
since in this case one can rely on structural properties 
 that characterize 
deterministic transducers. \anca{I deleted the last phrase (``on the
  contrary, our approach etc'') because at this point, the reader
  should have remembered that we do only guessing.}


\section*{Acknowledgment}
\noindent The authors wish to acknowledge fruitful discussions with
Emmanuel Filiot, Isma\"el Jecker and Sylvain Salvati. We also thank
the referees for their very careful reading and the suggestions for improvement.

\bibliographystyle{abbrv}
\bibliography{biblio}

\newpage
\appendix
\section{}\label{app:proof-component}

Here we give a fully detailed proof of Lemma \ref{lem:component},
for which we recall the statement below:

\medskip\noindent
{\bfseries Lemma \ref{lem:component}.}
{\em 
Let $C$ be a component of a loop $L=[x_1,x_2]$. 
The nodes of $C$ are precisely the levels in the interval $[\min(C),\max(C)]$.
Moreover, if $C$ is left-to-right (resp.~right-to-left), then $\max(C)$ 
is the smallest level $\ge \min(C)$ 
such that between $(x_1,\min(C))$ and $(x_2,\max(C))$ (resp.~$(x_2,\min(C))$ and $(x_1,\max(C))$)
there are equally many $\LL$-factors and $\RR$-factors intercepted by $L$.
%
}
\medskip

\begin{proof}
To ease the understanding the reader may refer to Figure~\ref{fig:edges}, 
that shows some factors intercepted by $L$ and the corresponding edges in the flow.

We begin the proof by partitioning the set of levels of the flow into 
suitable intervals as follows.
We observe that every loop $L=[x_1,x_2]$ intercepts equally many
$\LL$-factors and $\RR$-factors. This is so because the crossing
sequences at $x_1,x_2$ have the same length $h$. 
We also observe that the sources of the factors intercepted by $L$
are either of the form $(x_1,y)$, with $y$ even, or $(x_2,y)$, with $y$ odd.
For any location $\ell\in\{x_1,x_2\}\times\bbN$ that is the source
of an intercepted factor, we define $d_{\ell}$ to be the difference 
between the number of $\LL$-factors and the number of $\RR$-factors 
intercepted by $L$ that {\sl end} at a location {\sl strictly before} $\ell$.
Intuitively, $d_{\ell}=0$ when the prefix of the run up to location $\ell$ 
has visited equally many times the position $x_1$ and the position $x_2$. 
For the sake of brevity, we let $d_y=d_{(x_1,y)}$ for an even level $y$,
and $d_y=d_{(x_2,y)}$ for an odd level $y$. 
Note that $d_0=0$. We also let $d_{h+1}=0$, by convention.

We now consider the numbers $z$'s, with $0\le z\le h+1$, such that 
$d_z=0$, that is: $0 = z_0 < z_1 < \dots < z_k = h+1$.  
Using a simple induction, we prove that for all $i \le k$, 
the parity of $z_i$ is the same as the parity of its index $i$. 
The base case $i = 0$ is trivial, since $z_0 = 0$. For the inductive
case, suppose that $z_i$ is even (the case of $z_i$ odd is similar). 
We prove that $z_{i+1}$ is odd by a case distinction based on the 
type of factor intercepted by $L$ that starts at level $z_i$. 
If this factor is an $\LR$-factor, then it ends at the same level $z_i$,
and hence $d_{z_i+1} = d_{z_i} = 0$, which implies that $z_{i+1} = z_i +1$ is odd. 
Otherwise, if the factor is an $\LL$-factor, then for all levels $z$ strictly
between $z_i$ and $z_{i+1}$, we have $d_z > 0$, and since $d_{z_{i+1}} =0$, 
the last factor before $z_{i+1}$ must decrease $d_z$, that is, must be an $\RR$-factor. 
This implies that $(x_2,z_{i+1})$ is the source of an intercepted factor, 
and thus $z_{i+1}$ is odd.

The levels $0 = z_0 < z_1 < \dots < z_k = h+1$ induce a partition of
the set of nodes of the flow into intervals of the form $Z_i=[z_i,z_{i+1}-1]$.
To prove the lemma, it is suffices to show that the subgraph of the flow 
induced by each interval $Z_i$ is connected. Indeed, because the union of 
the previous intervals covers all the nodes of the flow, and because each 
node has one incoming and one outgoing edge, this will imply that the 
intervals coincide with the components of the flow.

\medskip
Now, let us fix an interval of the partition, which we denote by $Z$ 
to avoid clumsy notation. Hereafter, we will focus on the edges of subgraph
of the flow induced by $Z$ (we call it \emph{subgraph of $Z$} for short). 
We prove a few basic properties of these edges.
For the sake of brevity, we call $\LL$-edges the edges of the subgraph of $Z$
that correspond to the $\LL$-factors intercepted by $L$, and similarly for 
the $\RR$-edges, $\LR$-edges, and $\RL$-edges.

We make a series of assumption to simplify our reasoning.
First, we assume that the edges are ordered based on the occurrences of the 
corresponding factors in the run. For instance, we may say the first, 
second, etc.~$\LR$-edge (of the subgraph of $Z$) --- from now on, we 
tacitly assume that the edges are inside the subgraph of $Z$.
Second, we assume that the first edge of the subgraph of $Z$ starts 
at an even node, namely, it is an $\LL$-edge or an $\LR$-edge 
(if this were not the case, one could apply symmetric arguments to prove the lemma).
From this it follows that the subgraph contains $n$ $\LR$-edges interleaved by
$n-1$ $\RL$-edges, for some $n>0$. 
Third, we assume that $\min(Z)=0$, in order to avoid clumsy notations 
(otherwise, we need to add $\min(Z)$ to all the levels considered hereafter).

Now, we observe that, by definition of $Z$, there are equally 
many $\LL$-edges and $\RR$-edges: 
indeed, the difference between the number of $\LL$-edges and the number
of $\RR$-edges at the beginning and at the end of $Z$ is the same, namely,
$d_z = 0$ for both $z=\min(Z)$ and $z=\max(Z)$.
It is also easy to see that the $\LL$-edges and the $\RR$-edges 
are all of the form $y \rightarrow y+1$, for some level $y$.
We call these edges \emph{incremental edges}.

For the other edges, we denote by $\ort y_i$ (resp.~$\olft y_i$) 
the source level of the $i$-th $\LR$-edge (resp.~the $i$-th $\RL$-edge). 
Clearly, each $\ort y_i$ is even, and each $\olft y_i$ is odd,
and $i\le j$ implies $\ort y_i < \ort y_j$ and $\olft y_i < \olft y_j$.
Consider the location $(x_1,\ort y_i)$, which is the source of the 
$i$-th $\LR$-edge (e.g.~the edge in blue in the figure).  
The latest location at position $x_2$ that 
precedes $(x_1,\ort y_i)$ must be of the form 
$(x_2,\olft y_{i-1})$, provided that $i>1$.
This implies that, for all $1<i\le n$, the $i$-th $\LR$-edge 
is of the form $\ort y_i \rightarrow \olft y_{i-1} + 1$.
For $i=1$, we recall that $\min(Z)=0$ and observe that the 
first location at position $x_2$ that occurs after the location 
$(x_1,0)$ is $(x_2,0)$, and thus the
first $\LR$-edge has a similar form: $\ort y_1 \rightarrow \olft y_0 + 1$,
where $\olft y_0 = -1$ by convention.

Using symmetric arguments, we see that the $i$-th $\RL$-edge 
(e.g.~the one in red in the figure) is of the form 
$\olft y_i \rightarrow \ort y_i + 1$. In particular,
the last $\LR$-edge starts at the level $\ort y_n = \max(Z)$.

Summing up, we have just seen that the edges of the subgraph of $Z$ are of the following forms:
\begin{itemize}
  \item \rightward{$y \rightarrow y+1$}                      
        \hspace{25mm} (incremental edges),
  \item \rightward{$\ort y_i \rightarrow \olft y_{i-1} + 1$} 
        \hspace{25mm} ($i$-th $\LR$-edge, for $i=1,\dots,n$),
  \item \rightward{$\olft y_i \rightarrow \ort y_i + 1$}     
        \hspace{25mm} ($i$-th $\RL$-edge, for $i=1,\dots,n-1$).
\end{itemize}
In addition, we have $\ort y_i +1 = \olft y_i + 2d_{\olft y_i}$.
Since $d_z > 0$ for all $\min(Z) < z < \max(Z)$, this implies that 
$\ort y_i > \olft y_i$.

\medskip
The goal is to prove that the subgraph of $Z$ is strongly connected, 
namely, it contains a cycle that visits all its nodes. As a matter of fact, 
because components are also strongly connected subgraphs, and because every 
node in the flow has in-/out-degree $1$, this will imply that the considered 
subgraph coincides with a component $C$, thus implying that the nodes in $C$ 
form an interval.
Towards this goal, we will prove a series of claims that aim at
identifying suitable sets of nodes that are covered by paths in 
the subgraph of $Z$.
Formally, we say that a path \emph{covers} a set $Y$ 
if it visits all the nodes in $Y$, and possibly other nodes.
As usual, when we talk of edges or paths, we tacitly 
understand that they occur inside the subgraph of $Z$.
On the other hand, we do not need to assume $Y\subseteq Z$,
since this would follow from the fact that $Y$ is covered by
a path inside $Z$.
For example, the right hand-side of Figure~\ref{fig:edges} shows 
a path from $\ort y_i$ to $\ort y_i+1$ that covers the set 
$Y=\{\ort y_i,\ort y_i+1\} \cup [\olft y_{i-1}+1,\olft y_i]$.

The covered sets will be intervals of the form
\[
  Y_i ~=~ [\olft y_{i-1}+1,\olft y_i].
\]
Note that the sets $Y_i$ are well-defined for all $i=1,\dots,n-1$, but not 
for $i=n$ since $\olft y_n$ is not defined either (the subgraph of $Z$
contains only $n-1$ $\RL$-edges). 

\begin{clm}
For all $i=1,\dots,n-1$, there is a path from $\ort y_i$ to $\ort y_i+1$ 
that covers $Y_i$ (for short, we call it an \emph{incremental path}).
\end{clm}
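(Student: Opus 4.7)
The claim asks for an incremental path from $\ort y_i$ to $\ort y_i + 1$ that covers the interval $Y_i = [\olft y_{i-1}+1, \olft y_i]$. My plan is to proceed by strong induction on $i$ and to construct the path explicitly: take the $i$-th $\LR$-edge down from $\ort y_i$ to the bottom of $Y_i$, walk upward one level at a time via incremental edges, and leave $Y_i$ at the top by means of the $i$-th $\RL$-edge. Whenever the upward walk hits a level that is not the source of an incremental edge, I will splice in a shorter path provided by the induction hypothesis.

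First I would pin down which levels inside $Y_i$ fail to have an outgoing incremental edge, since these are the only places the upward walk can get stuck. By the classification of edges in the subgraph of $Z$, such an obstruction occurs precisely when the level equals some $\ort y_j$ or some $\olft y_j$. Using the monotonicity of both sequences together with the key inequality $\ort y_k > \olft y_k$, I would show (i) that any odd $y \in Y_i$ of the form $\olft y_j$ must have $j = i$, so this obstruction is precisely the top of $Y_i$, and (ii) that any even $y \in Y_i$ of the form $\ort y_j$ must have $j < i$, so the induction hypothesis already provides a detour path from $\ort y_j$ to $\ort y_j+1$. In particular, throughout the open interior of $Y_i$ the only obstructions are even-level $\ort y_j$ with $j < i$.

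With these facts in hand, the path $\pi_i$ is built as follows. Start at $\ort y_i$, traverse the $i$-th $\LR$-edge to $\olft y_{i-1}+1 \in Y_i$, and walk upward by incremental edges one level at a time. Whenever the current level is some $\ort y_j$ with $j < i$, splice in the IH path $\pi_j$, which goes through $Y_j$ and returns to $\ort y_j + 1$; since $Y_j$ lies strictly below $Y_i$ (as $\olft y_j \le \olft y_{i-1}$ for $j < i$), the detour leaves $Y_i$, covers $Y_j$, and re-enters $Y_i$ at the next level up. Upon reaching $\olft y_i$, traverse the $i$-th $\RL$-edge to $\ort y_i + 1$. Coverage of $Y_i$ is then immediate: the main upward walk visits every integer in $[\olft y_{i-1}+1, \olft y_i]$, and each detour leaves a level $\ort y_j$ and returns at $\ort y_j + 1$ without skipping levels in $Y_i$.

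The step I expect to require the most care is the verification that the spliced detours really combine into a single path in the subgraph of $Z$. I would argue that the successive detours, for distinct obstructions $\ort y_{j_1} < \ort y_{j_2} < \cdots$ in $Y_i$, live in the pairwise disjoint intervals $Y_{j_1}, Y_{j_2}, \dots$, all strictly below $Y_i$; hence they neither overlap with one another nor with the main upward walk, and each is by induction a bona fide path in the subgraph of $Z$. The base case $i = 1$ requires no real work: $Y_1 = [0, \olft y_1]$, the condition $j < i = 1$ is vacuous, so no detour is needed, and the path is just the first $\LR$-edge, a chain of incremental edges from $0$ to $\olft y_1$, and the first $\RL$-edge. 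This will establish the claim and, together with the subsequent arguments, yield that the subgraph of $Z$ is a single cycle covering an interval of nodes.
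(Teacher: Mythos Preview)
Your proposal is correct and follows essentially the same approach as the paper: induction on $i$, starting with the $\LR$-edge $\ort y_i \rightarrow \olft y_{i-1}+1$, ending with the $\RL$-edge $\olft y_i \rightarrow \ort y_i+1$, and filling the interval $Y_i$ by incremental edges while splicing in inductively obtained paths at each $\ort y_j$ encountered, after checking that necessarily $j<i$. Your extra care about the detours living in pairwise disjoint intervals below $Y_i$ is not needed (every node has out-degree~$1$, so the walk is automatically simple until it closes), but it does no harm.
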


\begin{proof}
We prove the claim by induction on $i$. The base case $i=1$ is rather easy. 
Indeed, we recall the convention that $\olft y_0+1 = \min(Z)=0$. 
In particular, the node $\olft y_0+1$ is the target of the first 
$\LR$-edge of the subgraph of $Z$.
Before this edge, according to the order induced by the run, 
we can only have $\LL$-edges of the form $y \rightarrow y+1$, 
with $y=0,2,\dots,\ort y_1-2$. Similarly, after the $\LR$-edge we have 
$\RR$-edges of the form $y \rightarrow y+1$, with $y=1,3,\dots,\olft y_1-2$.
Those incremental edges can be connected to form the path 
$\olft y_{i-1}+1 \rightarrow^* \olft y_1$ that covers 
the interval $[\olft y_0+1,\olft y_1]$ . 
By prepending to this path the $\LR$-edge $\ort y_1 \rightarrow \olft y_0 + 1$,
and by appending the $\RL$-edge $\olft y_1 \rightarrow \ort y_1 + 1$,
we get a path from $\ort y_1$ to $\ort y_1+1$ that covers 
the interval $[\olft y_0+1,\olft y_1]$. The latter interval is precisely 
the set $Y_1$.

For the inductive step, we fix $1<i<n$ and we construct the desired
path from $\ort y_i$ to $\ort y_i+1$. The initial edge of this path
is defined to be the $\LR$-edge $\ort y_i \rightarrow \olft y_{i-1} + 1$.
Similarly, the final edge of the path will be the $\RL$-edge 
$\olft y_i \rightarrow \ort y_i + 1$, which exists since $i<n$.
It remains to connect $\olft y_{i-1} + 1$ to $\olft y_i$.
For this, we consider the edges that depart from nodes strictly 
between $\olft y_{i-1}$ and $\olft y_i$.

Let $y$ be an arbitrary node in $[\olft y_{i-1}+1,\olft y_i-1]$. 
Clearly, $y$ cannot be of the form $\olft y_j$, for some $j$, 
because it is strictly between $\olft y_{i-1}$ and $\olft y_i$.
So $y$ cannot be the source of an $\RL$-edge.
Moreover, recall that the $\LL$-edges and the $\RR$-edges are the
of the form $y \rightarrow y+1$. As these incremental edges 
do not pose particular problems for the construction of the path, 
we focus mainly on the $\LR$-edges that depart from nodes inside 
$[\olft y_{i-1}+1,\olft y_i-1]$.

Let $\ort y_j \rightarrow \olft y_{j-1}+1$ be such an $\LR$-edge, 
for some $j$ such that $\ort y_j \in [\olft y_{i-1}+1,\olft y_i-1]$.
If we had $j\ge i$, then we would have 
$\ort y_j \ge \ort y_i > \olft y_i$, but this would contradict
the assumption that $\ort y_j \in [\olft y_{i-1}+1,\olft y_i-1]$.
So we know that $j<i$. 
This enables the use of the inductive hypothesis, which implies
the existence of an incremental path from $\ort y_j$ to $\ort y_j+1$ 
that covers the interval $Y_j$. 

Finally, by connecting the above paths using the 
incremental edges, and by adding the initial and 
final edges $\ort y_i \rightarrow \olft y_{i-1} + 1$ and
$\olft y_i \rightarrow \ort y_i + 1$, we obtain a path 
from $\ort y_i$ to $\ort y_i+1$. It is easy to see that this
path covers the interval $Y_i$.
\end{proof}

Next, we define 
\[
  Y ~=~ [\olft y_{n-1}+1,\ort y_n] ~\cup \bigcup_{1\le i<n} Y_i.
\]
We prove a claim similar to the previous one, 
but now aiming to cover $Y$ with a {\sl cycle}.
Towards the end of the proof we will argue that the set 
$Y$ coincides with the full interval $Z$, 
thus showing that there is a component $C$ 
whose set of notes is precisely $Z$.

\begin{clm}
There is a cycle that covers $Y$.
\end{clm}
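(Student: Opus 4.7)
The plan is to build the cycle by closing the incremental-path construction from the previous claim with the last $\LR$-edge $\ort y_n \to \olft y_{n-1}+1$, which is the unique $\LR$-edge not followed by a matching $\RL$-edge. Specifically, starting at $\ort y_n$ I would first take this $\LR$-edge to reach $\olft y_{n-1}+1$, then walk back up to $\ort y_n$ along the trunk interval $[\olft y_{n-1}+1,\,\ort y_n]$, using incremental edges $y \to y+1$ whenever possible and substituting the incremental path of the previous claim whenever I reach a node of the form $\ort y_j$ with $j < n$. The walk then returns to $\ort y_n$ and hence forms a cycle.

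It remains to argue that this cycle covers $Y$. With the convention $\olft y_0 = -1$, the union $\bigcup_{1 \le i < n} Y_i$ equals $[0,\olft y_{n-1}]$, so $Y = [0,\ort y_n]$, which moreover coincides with $Z$ because $\min(Z) = 0$ and $\max(Z) = \ort y_n$. The construction manifestly covers the trunk $[\olft y_{n-1}+1,\,\ort y_n]$, and it also covers $Y_j$ for every index $j < n$ such that $\ort y_j$ is visited somewhere along the walk --- either directly on the trunk, or recursively inside a previously taken detour, because the inductive construction of the incremental path inside $Y_i$ itself recurses on every $\ort y_k \in Y_i$.

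The main obstacle will be to verify that \emph{every} $j \in \{1,\dots,n-1\}$ is actually encountered, since a priori some $\ort y_j$ may lie below $\olft y_{n-1}+1$ and therefore outside the trunk. For this, the plan is to first note that $\ort y_{n-1}$ satisfies $\olft y_{n-1} < \ort y_{n-1} < \ort y_n$ (the left inequality is $\ort y_i > \olft y_i$ from the previous step, the right one is strict monotonicity of $i \mapsto \ort y_i$), so $\ort y_{n-1}$ lies in the trunk and triggers a detour through $Y_{n-1}$. For smaller indices, assign to each $j$ with $\ort y_j \le \olft y_{n-1}$ the unique $f(j)$ with $\ort y_j \in Y_{f(j)}$, which exists because $[0,\olft y_{n-1}] = \bigcup_{i < n} Y_i$. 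Combining $\ort y_j > \olft y_j$ with $\ort y_j \le \olft y_{f(j)}$ gives $\olft y_j < \olft y_{f(j)}$, and hence $j < f(j)$ by strict monotonicity of $\olft y$. Iterating $f$ thus yields a strictly increasing sequence that must terminate at some index whose $\ort\cdot$ value lies in the trunk; a downward induction along these ancestor chains then shows that every $\ort y_j$ is visited, so every $Y_j$ is covered.

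Piecing the trunk traversal together with all the detours therefore yields a closed walk through every node of $Y$, i.e., a cycle covering $Y$. Since each node of the flow has in-degree and out-degree exactly one, this cycle exhausts the subgraph induced by $Z$, which will then be a single strongly connected component $C$ with $\min(C) = 0$ and $\max(C) = \ort y_n = \max(Z)$, completing the proof of the lemma.
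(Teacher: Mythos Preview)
Your proof is correct and follows essentially the same approach as the paper: both start the cycle with the last $\LR$-edge $\ort y_n \to \olft y_{n-1}+1$, walk back along the trunk using incremental edges and the incremental paths from the previous claim, and then argue that every $\ort y_j$ with $j<n$ is eventually visited. Your explicit function $f$ with $j<f(j)$ and the ascending-chain argument is just the constructive mirror of the paper's proof by contradiction (take the largest unvisited $i$, locate it inside some $Y_{j+1}$ with $j+1>i$, and derive a contradiction); the underlying combinatorics is identical. Your early observation that $Y=[0,\ort y_n]=Z$ and the concluding paragraph about the cycle being a full component are deferred by the paper to just after this claim, but are correct and harmless here.
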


\begin{proof}
It is convenient to construct our cycle starting from the last 
$\LR$-edge, that is, $\ort y_n \rightarrow \olft y_{n-1} + 1$, 
since this will cover the upper node $\ort y_n=\max(Z)$.
From there we continue to add edges and incremental paths, 
following an approach similar to the proof of the previous 
claim, until we reach the node $\ort y_n$ again.
More precisely, we consider the edges that depart from nodes strictly 
between $\olft y_{n-1}$ and $\ort y_n$. As there are only $n-1$ $\RL$-edges,
we know that every node in the interval $[\olft y_{n-1}+1,\ort y_n-1]$
must be source of an $\LL$-edge, an $\RR$-edge, or an $\LR$-edge.
As usual, incremental edges do not pose particular problems for 
the construction of the cycle, so we focus on the $\LR$-edges.
Let $\ort y_i \rightarrow \olft y_{i-1}+1$ be such an $\LR$-edge,
with $\ort y_i \in [\olft y_{n-1}+1,\ort y_n-1]$. Since $i < n$, we know 
from the previous claim that there is a path from $\ort y_i$ to $\ort y_i+1$
that covers $Y_i$.
We can thus build a cycle $\pi$ by connecting the above  
paths using the incremental edges and the $\LR$-edge 
$\ort y_n \rightarrow \olft y_{n-1} + 1$.

By construction, the cycle $\pi$ covers 
the interval $[\olft y_{n-1}+1,\ort y_n]$, and for every
$i<n$, if $\pi$ visits $\ort y_i$, then $\pi$ covers $Y_i$.
So to complete the proof --- namely, to show that $\pi$ covers the entire set $Y$ ---
it suffices to prove that $\pi$ visits each node $\ort y_i$, with $i<n$.

Suppose, by way of contradiction, that $\ort y_i$ is the node 
with the highest index $i<n$ that is not visited by $\pi$.
Recall that $\ort y_i > \olft y_i$. This shows that
\[
  \ort y_i ~\in~ [\olft y_i+1,\ort y_n] ~= \bigcup_{i\le j<n-1}[\olft y_j+1,\olft y_{j+1}] ~\cup~ [\olft y_{n-1}+1,\ort y_n].
\]
As we already proved that $\pi$ covers the interval $[\olft y_{n-1}+1,\ort y_n]$,
we know that $\ort y_i \in [\olft y_j+1,\olft y_{j+1}]$ for some $j$ with $i\le j<n-1$.
Now recall that $\ort y_i$ is the highest node that is not visited by $\pi$. This means
that $\ort y_{j+1}$ is visited by $\pi$. Moreover, since $j+1<n$, we know 
that $\pi$ uses the incremental path from $\ort y_{j+1}$ to $\ort y_{j+1}+1$, which
covers $Y_{j+1} = [\olft y_j+1,\olft y_{j+1}]$. But this contradicts the fact that $\ort y_i$ 
is not visited by $\pi$, since $\ort y_i \in [\olft y_j+1,\olft y_{j+1}]$.
\end{proof}

We know that the set $Y$ is covered by a cycle of the subgraph of $Z$, and that
$Z$ is an interval whose endpoints are consecutive levels $z<z'$, with $d_z=d_{z'}=0$.
For the homestretch, we prove that $Y=Z$. This will imply that the nodes of 
the cycle are precisely the nodes of the interval $Z$. 
Moreover, because the cycle must coincide with a component $C$ of the flow 
(recall that all the nodes have in-/out-degree $1$), this will show that 
the nodes of $C$ are precisely those of $Z$. 

To prove $Y=Z$ it suffices to recall its definition as the union of 
the interval $[\olft y_{n-1}+1,\ort y_n]$ with the sets $Y_i$, for all $i=1,\dots,n-1$.
Clearly, we have that $Y\subseteq Z$.
For the converse inclusion, we also recall that 
$\olft y_0+1 = 0 = \min(Z)$ and $\ort y_n=\max(Z)$. 
Consider an arbitrary level $z\in Z$. Clearly, we have either 
$z\le \olft y_i$, for some $1\le i<n$, or $z > \olft y_n$. 
In the former case, by choosing the smallest index $i$ such that $z \le \olft y_i$, 
we get $z\in [\olft y_{i-1}+1,\olft y_i]$, whence $z\in Y_i \subseteq Y$.
In the latter case, we immediately have $z\in Y$, by construction.
\end{proof}


\end{document}